\newcommand{\CC}{\mathbb{C}}
\newcommand{\NN}{\mathbb{N}}
\newcommand{\RR}{\mathbb{R}}
\newcommand{\ZZ}{\mathbb{Z}}
\newcommand{\supp}{\mathrm{supp}}
\newcommand{\const}{\mathrm{const}}
\newcommand{\Ran}{\mathrm{Ran}}
\newcommand{\el}{\mathrm{el}}
\newcommand{\phys}{\mathrm{phys}}
\newcommand{\id}{\mathbbm{1}}
\newcommand{\klg}{\leqslant} 
\newcommand{\grg}{\geqslant}          
\newcommand{\ve}{\varepsilon}
\newcommand{\vp}{\varphi}
\newcommand{\vk}{\varkappa}
\newcommand{\vr}{\varrho}
\newcommand{\vt}{\vartheta}
\newcommand{\vs}{\varsigma}
\newcommand{\vo}{\varpi}
\newcommand{\wt}[1]{\widetilde{#1}}
\newcommand{\SPn}[2]{\langle \,#1\,|\,#2\, \rangle} 
\newcommand{\SPb}[2]{\big\langle \,#1\,\big|\,#2\, \big\rangle} 
\newcommand{\SPB}[2]{\Big\langle \,#1\,\Big|\,#2\, \Big\rangle}
\newcommand{\ol}[1]{\overline{#1}} 
\newcommand{\ul}[1]{\underline{#1}}
\newcommand{\wh}[1]{\widehat{#1}}  
\newcommand{\bigO}{\mathcal{O}}    
\newcommand{\fourier}{\mathcal{F}} 
\newcommand{\V}[1]{\mathbf{#1}}
\newcommand{\valpha}{\boldsymbol{\alpha}}
\newcommand{\vxi}{\boldsymbol{\xi}}
\newcommand{\vsigma}{\boldsymbol{\sigma}}
\newcommand{\vnu}{\boldsymbol{\nu}}
\newcommand{\veps}{\boldsymbol{\varepsilon}}
\newcommand{\LO}{\mathscr{L}}      
\newcommand{\HP}{\mathscr{K}}
\newcommand{\HR}{\mathscr{H}}
\newcommand{\Fock}{\mathscr{F}_{\mathrm{b}}}
\newcommand{\core}{\mathscr{D}_4}
\newcommand{\dom}{\mathcal{D}}
\newcommand{\form}{\mathcal{Q}}
\newcommand{\spec}{\mathrm{\sigma}}
\newcommand{\specpp}{\mathrm{\sigma}_{\mathrm{pp}}}
\newcommand{\SA}{S_{\mathbf{A}}}  
\newcommand{\SAt}{S_{\widetilde{\mathbf{A}}}}
\newcommand{\SAm}{S_{\V{A}_m}}
\newcommand{\SAmt}{S_{\widetilde{\V{A}}_m}}
\newcommand{\SAmg}{S_{\V{A}_m^>}}
\newcommand{\DA}{D_{\mathbf{A}}}                
\newcommand{\DAm}{D_{\mathbf{A}_m}}  
\newcommand{\DAe}{D_{\mathbf{A}_{m,\varepsilon}}} 
\newcommand{\DAme}{D_{\mathbf{A}_{m,\varepsilon}}} 
\newcommand{\DAed}{D_{\mathbf{A}_{m,\varepsilon}^d}}  
\newcommand{\DAmg}{D_{\mathbf{A}_m^>}}  
\newcommand{\DAt}{D_{\widetilde{\mathbf{A}}}}
\newcommand{\DAmt}{D_{\widetilde{\mathbf{A}}_m}}
\newcommand{\DO}{D_{\mathbf{0}}}
\newcommand{\FD}{\widehat{D}_{m,\ve}}
\newcommand{\R}[2]{R_{#1}(#2)}                  
\newcommand{\RA}[1]{R_{\mathbf{A}}(#1)}
\newcommand{\RAt}[1]{R_{\widetilde{\mathbf{A}}}(#1)}
\newcommand{\Hf}{H_f}                           
\newcommand{\HT}{\check{H}_f}                   
\newcommand{\Hft}{\widetilde{H}_f}    
\newcommand{\Hfmg}{H_{f,m}^>}
\newcommand{\He}{H_{f,m,\ve}}
\newcommand{\Hed}{H_{f,m,\ve}^d}
\newcommand{\Hef}{H_{f,m,\ve}^f}
\newcommand{\PF}[1]{H_{#1}}                     
\newcommand{\PFt}[1]{\widetilde{H}_{#1}}
\newcommand{\PFmt}[1]{\widetilde{H}_{#1,m}}
\newcommand{\PFm}[1]{H_{#1,m}}
\newcommand{\PFme}[1]{H_{#1,m,\varepsilon}}
\newcommand{\PFo}[1]{H_{#1,0}}
\newcommand{\PFmg}[1]{H_{#1,m}^{>}}
\newcommand{\PFe}[1]{H_{\gamma,\varepsilon}}
\newcommand{\tgV}{\tfrac{\gamma}{|\V{x}|}}
\newcommand{\ad}{a^\dagger}                     
\newcommand{\UV}{\Lambda}
\newcommand{\Th}{\Sigma}
\newcommand{\cA}{\mathcal{A}}
\newcommand{\cO}{\mathcal{O}}
\newcommand{\cH}{\mathcal{H}}\newcommand{\cT}{\mathcal{T}}
\newcommand{\cM}{\mathcal{M}}       
\newcommand{\pe}{\mathbf{p}}
\newcommand{\pf}{\mathbf{p}_f}
\newcommand{\pfe}{\mathbf{p}_{f,\ve}}
\newcommand{\pp}{\vxi}
\newcommand{\ps}{{\vxi}_\star}
\newcommand{\al}{\valpha}
\newcommand{\bts}{\mathbf{t}_{{\star}}}
\newcommand{\sC}{\mathscr{C}}
\newcommand{\sD}{\mathscr{D}}
\renewcommand{\Re}{\mathrm{Re}\,}
\newtheorem{theorem}{Theorem}[section]
\newtheorem{lemma}[theorem]{Lemma}
\newtheorem{proposition}[theorem]{Proposition}
\newtheorem{corollary}[theorem]{Corollary}
\newtheorem{hypothesis}[theorem]{Hypothesis}
\theoremstyle{remark}
\newtheorem{remark}[theorem]{Remark}
\numberwithin{equation}{section}
\title[Ground states relativistic QED]{
Existence of ground states of hydrogen-like atoms in
relativistic QED I:\\ The semi-relativistic
Pauli-Fierz operator}
\author{Martin K\"onenberg}
\address{Martin K\"onenberg\\
Fakult\"at f\"ur Mathematik und Informatik\\
FernUniversit\"at Hagen\\
L\"utzowstra{\ss}e 125\\
D-58084 Hagen, Germany.}
\email{martin.koenenberg@fernuni-hagen.de}
\author{Oliver Matte}
\address{Oliver Matte
Institut f\"ur Mathematik\\
TU Clausthal\\
Erzstra{\ss}e 1\\
D-38678 Clausthal-Zellerfeld, Germany\\
{\em On leave from:} Mathematisches Institut\\
Ludwig-Maximilians-Universit\"at\\
Theresienstra{\ss}e 39\\
D-80333 M\"unchen, Germany.}
\email{matte@math.lmu.de}
\author{Edgardo Stockmeyer}
\address{Edgardo Stockmeyer\\
Mathematisches Institut\\
Ludwig-Maximilians-Univer- sit\"at\\
Theresienstra{\ss}e 39\\
D-80333 M\"unchen, Germany.}
\email{stock@math.lmu.de}
\subjclass{Primary 81Q10; Secondary 47B25}
\keywords{Existence of ground states,
semi-relativistic Pauli-Fierz operator,
quantum electrodynamics}
\date{\today}
\begin{document}

\begin{abstract}
We consider a hydrogen-like atom in a quantized electromagnetic field
which is modeled by means of the semi-relativistic
Pauli-Fierz operator
and prove that the infimum of the spectrum
of the latter operator is an eigenvalue.
In particular, we verify that the bottom of its spectrum
is strictly less than its ionization threshold.
These results hold true for arbitrary values of the fine-structure
constant and the ultra-violet cut-off as long as the
Coulomb coupling constant (i.e. the product of the
fine-structure constant and the nuclear charge) is less than $2/\pi$.
\end{abstract}

\maketitle


\section{Introduction}
\label{sec-intro}

\noindent
The existence of atoms described in the framework of
non-relativistic quantum electrodynamics (QED)
is by now a well-established fact.
The general picture is roughly that all exited bound states
of an electronic Hamiltonian modeling an atom turn into resonances
when the interaction with the quantized electromagnetic field
is taken into account. Only at the lower end of the spectrum
there remains an eigenvalue corresponding to the ground states
of the atomic system. Its analysis is particularly subtle
as the whole spectrum is continuous up to its minimum in the presence of
the quantized radiation field.
The existence of energy minimizing ground states
for atoms and molecules in non-relativistic QED has been
proven first in \cite{BFS1998b,BFS1999}, for small values
of the involved physical parameters.
The latter are Sommerfeld's fine structure constant, $e^2$,
and the ultra-violet cut-off, $\Lambda$.
The existence of ground states for a molecular
Pauli-Fierz-Hamiltonian has been shown in \cite{GLL2001},
for all values of $e^2$ and $\Lambda$, assuming a certain
binding condition, which has been verified later on
in \cite{BCV2003}, for helium-like atoms, and in
\cite{LiebLoss2002} in full generality.
In the last decade there appeared a large number of
further mathematical contributions  to non-relativistic
QED. Here we only want to mention that ground state energies
and projections have also been studied by means of
infra-red finite algorithms and renormalization group
methods 
\cite{BCFS2003,BFP2006,BFS1998b,BFS1998a,BFS1999,BachKoenenberg2006,FGS2008}.

In contrast to the situation in non-relativistic QED  
only a few mathematical works deal with models where
the quantized radiation field is coupled to 
relativistic particles. 
For instance, in \cite{LiebLoss2002b,LiebLoss2002}
the authors study a relativistic no-pair model of a molecule. 
They prove the stability of matter of the second kind
and give an upper bound on the (positive) binding energy
under certain restrictions on $e^2$, $\Lambda$, and the nuclear
charges. 
In \cite{MatteStockmeyer2009a} two of the present authors
consider a no-pair model of a hydrogenic atom and
study the exponential localization of low-lying 
spectral subspaces.
The same result is established in \cite{MatteStockmeyer2009a}
also for the following operator which is
investigated further in the present paper,
\begin{equation}\label{def-PF2}
\cH_\gamma\,:=\,\sqrt{(\vsigma\cdot(-i\nabla+\V{A}))^2+\id}\,
-\,\frac{\gamma}{|\V{x}|}\,+\,H_f\,.
\end{equation}
Here $\V{A}$ is the quantized vector potential 
in the Coulomb gauge, $H_f$ is the
radiation field energy, $\vsigma$ is a formal vector
containing the Pauli spin matrices, and $\gamma=e^2\,Z>0$ is the Coulomb 
coupling constant, $Z>0$ denoting the
nuclear charge. (The square-root of the fine structure
constant is included in the symbol $\V{A}$, which also
depends on the choice of $\Lambda$.)
Previous mathematical works dealing with this operator
include \cite{MiyaoSpohn2008} where
the fiber decomposition of $\cH_0$
with respect to different values of the total momentum
is studied. We adopt the
nomenclature of the latter paper and call $\cH_\gamma$
the semi-relativistic Pauli-Fierz operator.
It is called semi-relativistic since time and space are
certainly not treated on equal footing. 
Furthermore, the operator $\cH_\gamma$ 
appears in the mathematical analysis of Rayleigh scattering
\cite{FGS2001} which is connected to the phenomenon of
relaxation of an isolated atom to its ground state. 
(The electron spin has been neglected in \cite{FGS2001} for
notational simplicity.)
An advantageous feature of semi-relativistic
Hamiltonians in this situation is that the propagation
speed of the electron is strictly less than the speed of light
(which equals one in the units chosen in \eqref{def-PF2}). 
Moreover, it is shown in \cite{Stockmeyer2009} that $\cH_\gamma$
converges in norm resolvent sense to the non-relativistic 
Pauli-Fierz operator when the speed of light is re-introduced
in \eqref{def-PF2} and send to infinity.
  
We remark that the existence of ground states in
relativistic models of QED where all particles,
including the electrons and positrons, are described by
quantized fields is proven in \cite{BDG2004}. 
To this end the authors employ infra-red cut-offs in the
interaction part of the Hamiltonian which will not be
necessary in our analysis below. 

Thanks to \cite{MatteStockmeyer2009a} we already know that
$\cH_\gamma$ is semi-bounded below on some natural
dense domain, for all $\gamma\in[0,2/\pi]$,
and, hence, has a 
physically distinguished self-adjoint realization.
As already indicated above, it is also shown in \cite{MatteStockmeyer2009a}
that its spectral subspaces corresponding to energies below
the ionization threshold are exponentially localized
with respect to the electron coordinates.
Typically, localization estimates are important ingredients in 
the proofs of the existence of ground states.
Here the ionization threshold equals, by definition, the
infimum of the spectrum of $\cH_0$.
The first result of the present paper states that, 
for every $\gamma\in(0,2/\pi)$, the atomic system
modeled by $\cH_\gamma$ is able to bind an electron.
This means that the infimum of the
spectrum of $\cH_\gamma$ is strictly smaller than the
ionization threshold. (In \cite{MatteStockmeyer2009a}
it has been verified that binding occurs for small values
of $e^2$ and/or $\Lambda$.) 
The main theorem of this article asserts that the operator
$\cH_\gamma$ has an
energy minimizing ground state eigenvector.
This result holds true, for arbitrary values of $e^2$
and $\Lambda$ and for $\gamma\in(0,2/\pi)$.
We remark that the ground state energy 
-- in fact, every speculative eigenvalue -- of
$\cH_\gamma$ is evenly degenerate since $\cH_\gamma$ 
commutes with the time reversal operator \cite{MiyaoSpohn2008}.
In order to proof the existence of ground states 
we combine the strategies employed
in \cite{BFS1998b,BFS1999} and \cite{GLL2001}.
Roughly speaking we construct a sequence of approximating
ground state eigenvectors -- these are ground states of infra-red cut-off
Hamiltonians -- along the lines of \cite{BFS1998b,BFS1999},
and apply a compactness argument very similar to the one 
given in \cite{GLL2001}.
As in \cite{BFS1998b,BFS1999}, where the authors assumed
$e^2$ or $\Lambda$ to be small, 
we prove the existence of ground states for the infra-red cut-off
Hamiltonians by means of a discretization procedure.
A new observation based on the
localization estimates actually permits to
carry out the discretization argument, for all values
of $e^2$ and $\Lambda$.
Another key ingredient in the proofs are infra-red estimates
on the approximating ground state eigenvectors, namely,
a bound on the number of soft photons \cite{BFS1998b,GLL2001}
and a photon derivative bound \cite{GLL2001}.
In order to establish these bounds for the model treated here,
the formal gauge invariance of $\cH_\gamma$ is crucial.
In fact, the no-pair models investigated in 
\cite{LiebLoss2002b,LiebLoss2002,MatteStockmeyer2009a} are gauge
invariant also and
the present authors shall exploit this
property to prove the existence of ground states
for a no-pair model of a hydrogenic atom in the forthcoming
article \cite{KMS2009b}. 
Although the general strategies to prove the existence of ground states
in QED are fairly well-known by now, their application
to the model studied in the present paper and to no-pair
models in QED is non-trivial, mainly due to the non-locality
of the corresponding Hamiltonians. In fact, the electronic kinetic energy
and the quantized vector potential in the Hamiltonian
\eqref{def-PF2} are always linked together in a non-local
way which leads to a variety of new mathematical
problems in each of the steps in the existence proof mentioned above. 
To overcome these difficulties we employ various
commutator estimates involving sign functions of the Dirac
operator, multiplication operators, and the radiation field
energy. Some of them have already been derived
in \cite{MatteStockmeyer2009a}.       

\smallskip

\noindent
{\em This article is organized as follows.}
In the subsequent Section~\ref{sec-model}
we introduce the semi-relativistic Pauli-Fierz operator
and state our main results more precisely.
Section~\ref{sec-pre} summarizes some technical
prerequisites obtained earlier in \cite{MatteStockmeyer2009a}
and provides a number of new results on absolute values
and sign functions of the Dirac operator.  
In Section~\ref{sec-binding} we prove that binding occurs
in our model. Section~\ref{sec-ex}
is devoted to the proof of the existence of ground states
and starts with a brief outline of the strategy.
Finally, in Section~\ref{sec-IR-bounds} we prove the infra-red
bounds. The main text is followed by two appendices. 
In the first one we provide some \`{a}-priori
information on eigenvectors which is required to prove the infra-red
estimates.
In the second one we recall
some basic definitions of operators acting in Fock spaces.


\section{Definition of the model and main results}
\label{sec-model}

\noindent
The
semi-relativistic Pauli-Fierz
operator studied in this article acts in the Hilbert space
$$
\HR_2\,:=\,L^2(\RR^3_\V{x},\CC^2)\otimes\Fock[\HP]
\,\cong\,\sum_{\vs=1,2}\int_{\RR^3_\V{x}}^\oplus
\Fock[\HP]\,d^3\V{x}\,.
$$
Here the bosonic Fock space, which is
the state space of the 
quantized photon field,
$$
\Fock[\HP]\,=\,
\bigoplus_{n=0}^\infty\Fock^{(n)}[\HP]
$$
is modeled over the one photon Hilbert space
$$
\HP\,:=\,L^2(\RR^3\times\ZZ_2,dk)\,,\quad
\int dk\,:=\,\sum_{\lambda\in\ZZ_2}\int_{\RR^3}d^3\V{k}\,.
$$
The letter
$k=(\V{k},\lambda)$ always
denotes a tuple consisting of a photon wave vector,
$\V{k}\in\RR^3$, and a polarization label, $\lambda\in\ZZ_2$.
The components of $\V{k}$ are written as $\V{k}=(k^{(1)},k^{(2)},k^{(3)})$. 
We refer the reader who is not acquainted
to the notation used here to Appendix~\ref{app-Fock},
where the basic definitions of bosonic Fock spaces
and the usual operators acting in them are briefly recalled.
The following subspace is dense in $\HR_2$,
\begin{equation}\label{def-core2-sC}
\sD_2\,:=\,
C_0^\infty(\RR^3_\V{x},\CC^2)\otimes\sC_0\,.\qquad
(\textrm{Algebraic tensor product.})
\end{equation}
Here $\sC_0\subset\Fock[\HP]$ denotes the subspace
of all elements $(\psi^{(n)})_{n=0}^\infty\in\Fock[\HP]$
such that only finitely many components $\psi^{(n)}$
are non-zero and such that each $\psi^{(n)}$ is bounded and
has a compact support.
In order to introduce the quantized vector potential
we first recall the physical choice of the
form factor with sharp ultra-violet cut-off
at $\UV>0$,
\begin{equation}\label{def-Gphys}
\V{G}_\V{x}^\phys(k)\,
\,:=\,
\,e^{-i\V{k}\cdot\V{x}}\,\V{g}(k)\,,
\qquad
\V{g}(k)\,\equiv\,\V{g}^{e,\Lambda}(k)\,:=\,-e
\,\frac{\id_{\{|\V{k}|\klg\UV\}}}{2\pi\sqrt{|\V{k}|}}\,\veps(k)\,,
\end{equation}
for every $\V{x}\in\RR^3$
and almost every $k=(\V{k},\lambda)\in\RR^3\times\ZZ_2$.
Here the square of the elementary charge, $e>0$,
is equal to Sommerfeld's fine-structure constant in our units
where Planck's constant, the speed of light, and the electron mass
are equal to one.
(Energies are measured in units of the rest energy of the electron,
$\V{x}$ is measured in units of one Compton wave length
divided by $2\pi$ and
the photon wave vectors
$\V{k}$ are measured in units of $2\pi$ times
the inverse Compton wave length; we have
$e^2\approx1/137$ in nature.)
Writing 
\begin{equation}\label{def-kbot}
\V{k}_\bot:=(k^{(2)},-k^{(1)},0)\,, \qquad 
\V{k}=(k^{(1)},k^{(2)},k^{(3)})\in\RR^3,
\end{equation}
the polarization vectors are given by 
\begin{equation}\label{pol-vec}
\veps(\V{k},0)\,=\,\frac{\V{k}_\bot}{|\V{k}_\bot|}
\,,\qquad
\veps(\V{k},1)\,=\,
\frac{\V{k}}{|\V{k}|}\,\wedge\,\veps(\V{k},0)\,,
\end{equation}
for almost every $\V{k}\in\RR^3$.
The quantized vector potential, $\V{A}\equiv\V{A}(\V{G}^\phys)$, 
is the triplet of operators given 
by the direct integral
$$
\V{A}=(A^{(1)},A^{(2)},A^{(3)})\,
:=\,\sum_{\vs=1,2}\int_{\RR^3_\V{x}}^\oplus\V{A}(\V{x})\,d^3\V{x}\,,
$$
where, for each fixed $\V{x}$,
$$
\V{A}(\V{x})\,:=\,
\ad(\V{G}_\V{x}^\phys)\,+\,a(\V{G}_\V{x}^\phys)
$$
is acting in the Fock space.
The definition of the bosonic creation and annihilation
operators, $\ad(f)$ and $a(f)$, are recalled
in Appendix~\ref{app-Fock}. For short
we write 
$a^\sharp(\V{f}):=(a^\sharp(f^{(1)}),a^\sharp(f^{(2)}),a^\sharp(f^{(3)}))$,
for a three-vector of functions $\V{f}=(f^{(1)},f^{(2)},f^{(3)})\in\HP^3$, where
$a^\sharp$ is $a$ or $\ad$.
We further set
$$
\V{p}\,:=\,-i\nabla_\V{x}\,,\qquad
\vsigma\cdot(\V{p}+\V{A})\,:=\,
\sum_{j=1}^3\sigma_j\,(-i\partial_{x_j}+A^{(j)})\,,
$$
where $\sigma_1,\sigma_2,\sigma_3$ are the Pauli spin matrices.
An application of Nelson's commutator theorem with
test operator $-\Delta+\Hf$ shows that $\vsigma\cdot(\V{p}+\V{A})$
is essentially self-adjoint on $\sD_2$. 
We denote its closure again by the same symbol and define
$$
\cT_\V{A}\,:=\,\sqrt{(\vsigma\cdot(\V{p}+\V{A}))^2+\id}
$$ 
by means of the spectral calculus.
Now the semi-relativistic Pauli-Fierz operator
is \`{a}-priori given as
\begin{equation}\label{def-PF2b}
\cH_{\gamma}\,\vp\,\equiv\,
\cH_{\gamma,\V{G}^\phys}\,\vp:=\,\big(\cT_\V{A}-\tgV+\Hf\big)\,\vp\,,\qquad
\vp\in\sD_2\,.
\end{equation}
Here the radiation field energy, $\Hf:=d\Gamma(\omega)$,
is given as the second quantization of the
dispersion relation $\omega(k)=|\V{k}|$, 
$k=(\V{k},\lambda)\in\RR^3\times\ZZ_2$; see Appendix~\ref{app-Fock}.
Moreover, we identify $\tgV\equiv\tgV\otimes\id$,
$\Hf\equiv\id\otimes\Hf$, etc. in \eqref{def-PF2b}
and henceforth.

It has been shown in \cite{MatteStockmeyer2009a}
that the quadratic form of $\cH_{\gamma}$ is bounded
from below on $\sD_2$, for $\gamma\in[0,2\pi]$ and
all values of $e,\UV>0$; compare
Inequality~\eqref{maria} below.
In particular, $\cH_{\gamma}$ has a self-adjoint
Friedrichs extension which we again denote by the
same symbol $\cH_{\gamma}$.
In what follows we denote the ground state energy
of $\cH_{\gamma}$ by
$$
E_{\gamma}\,:=\,\inf\spec[\cH_{\gamma}]\,,\qquad \gamma\in(0,2/\pi)\,,
$$
and its ionization threshold by
$$
\Th\,:=\,\inf\spec[\cH_{0}]\,.
$$
The following two theorems are the main results of 
this paper.

\begin{theorem}[{\bf Binding}]\label{thm-binding-PF}
Let $e^2,\UV>0$ and $\gamma\in(0,2/\pi)$.
Then 
$$
\Th\,-\,E_\gamma\,\grg\,|E^\el_{\mathrm{nr},\gamma}|\,,
$$
where $E^\el_{\mathrm{nr},\gamma}=-\gamma^2/2$ 
is the lowest
eigenvalue of the Schr\"odinger operator $-\frac{1}{2}\,\Delta-\tgV$
describing a non-relativistic hydrogenic atom.
\end{theorem}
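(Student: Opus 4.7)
To prove $E_\gamma\leq\Sigma-\gamma^2/2$, I would exhibit, for every $\delta>0$, a trial vector $\Psi_\delta\in\sD_2$ with $\langle\Psi_\delta,\cH_\gamma\Psi_\delta\rangle/\|\Psi_\delta\|^2\leq\Sigma-\gamma^2/2+\delta$, following the non-relativistic binding strategy of \cite{BCV2003,GLL2001}. Fix the hydrogenic $1s$-function $\phi_\alpha(\V{x})=(\alpha^{3}/\pi)^{1/2}e^{-\alpha|\V{x}|}\chi$ with $\chi\in\CC^2$ a fixed unit spinor, so that $\langle\phi_\alpha,(-\tfrac12\Delta-\tgV)\phi_\alpha\rangle=\alpha^2/2-\alpha\gamma$ attains its minimum $-\gamma^2/2$ at $\alpha=\gamma$. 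For the photonic part, build a total-momentum wave packet: using the fiber decomposition of $\cH_0$ from \cite{MiyaoSpohn2008} and a fiber ground state $\Omega_\star(\V{P})$ of $\cH_0(\V{P})$ at small $\V{P}$, set $\Omega_R(\V{x}):=R^{-3/2}\int\hat g(R\V{P})\,e^{i\V{P}\cdot\V{x}}\,\Omega_\star(\V{P})\,d^3\V{P}$ for a fixed Schwartz $g$ of unit $L^2$-norm. Then $\|\Omega_R\|=1$, the electron density $\rho_R(\V{x}):=\|\Omega_R(\V{x})\|_{\Fock\otimes\CC^2}^2\approx R^{-3}|g(\V{x}/R)|^2$ tends uniformly to zero on a ball covering $\supp\phi_\gamma$, and $\langle\Omega_R,\cH_0\Omega_R\rangle\to\Sigma$ as $R\to\infty$. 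The trial vector is $\Psi_R(\V{x}):=\phi_\gamma(\V{x})\Omega_R(\V{x})$.

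\textbf{Energy estimate.} The Coulomb and radiation-field parts split cleanly under the slowly-varying-density assumption: writing $\bar\rho_R:=\rho_R(0)\sim R^{-3}$ one gets $\langle\Psi_R,\tgV\Psi_R\rangle=\bar\rho_R\,\gamma\int|\phi_\gamma|^2|\V{x}|^{-1}d^3\V{x}+o(\bar\rho_R)$ and $\langle\Psi_R,H_f\Psi_R\rangle=\bar\rho_R\langle\Omega_R,H_f\Omega_R\rangle+o(\bar\rho_R)$. The delicate term is $\langle\Psi_R,\cT_\V{A}\Psi_R\rangle$: since $\cT_\V{A}$ is non-local in $\V{x}$, decompose $\cT_\V{A}(\phi_\gamma\Omega_R)=\phi_\gamma\,\cT_\V{A}\Omega_R+[\cT_\V{A},\phi_\gamma]\Omega_R$. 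Via the integral representation $\sqrt{X}=\pi^{-1}\int_0^\infty X(X+\mu)^{-1}\mu^{-1/2}\,d\mu$, the commutator becomes a $\mu$-integral sandwiching
\[
[\cT_\V{A}^2,\phi_\gamma]=-i\,\vsigma\cdot(\V{p}+\V{A})\,(\vsigma\cdot\nabla\phi_\gamma)-i\,(\vsigma\cdot\nabla\phi_\gamma)\,\vsigma\cdot(\V{p}+\V{A})
\]
between resolvents of $\cT_\V{A}^2$. Using the sign-function commutator estimates of Section~\ref{sec-pre} (inherited from \cite{MatteStockmeyer2009a}) and exploiting that the slow variation of $\Omega_R$ annihilates oscillatory cross-terms of order smaller than $R^{-3}$, I expect to obtain the non-relativistic expansion
\[
\langle\Psi_R,\cT_\V{A}\Psi_R\rangle\leq\bar\rho_R\langle\Omega_R,\cT_\V{A}\Omega_R\rangle+\bar\rho_R\langle\phi_\gamma,-\tfrac12\Delta\phi_\gamma\rangle+o(\bar\rho_R).
\]

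\textbf{Conclusion and main obstacle.} Summing the three contributions and dividing by $\|\Psi_R\|^2\approx\bar\rho_R$ yields
\[
\frac{\langle\Psi_R,\cH_\gamma\Psi_R\rangle}{\|\Psi_R\|^2}\leq\langle\Omega_R,\cH_0\Omega_R\rangle+\langle\phi_\gamma,(-\tfrac12\Delta-\tgV)\phi_\gamma\rangle+o(1)\longrightarrow\Sigma-\frac{\gamma^2}{2},
\]
which is the desired bound. The principal obstacle is controlling $[\cT_\V{A},\phi_\gamma]$ and the associated cross-terms uniformly as $R\to\infty$: the non-locality of the square root entangles the multiplier $\phi_\gamma$ with arbitrarily many photon modes, and the crude operator bound $\sqrt{X+1}\leq 1+X/2$ is too lossy to produce the sharp $-\tfrac12\Delta$ kinetic coefficient needed to recover the full non-relativistic gain $-\gamma^2/2$. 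The rescue is the integral representation of $\sqrt{\,\cdot\,}$ combined with the sign-function commutator machinery of Section~\ref{sec-pre}, whose application in the diffuse wave-packet regime is what makes the semi-relativistic argument substantively harder than its non-relativistic ancestor.
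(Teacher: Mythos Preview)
Your approach diverges substantially from the paper's, and the route you sketch has a real gap at exactly the point you flag.

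\textbf{What the paper does.} The paper avoids slowly-varying wave packets and commutator localization altogether. It first conjugates by the unitary $U=e^{i(\pf-\ps)\cdot\V{x}}$, where $\ps$ is a fixed total momentum at which the fiber Hamiltonian $\PF{0}(\ps)$ has energy within $\rho$ of $\Th$. After conjugation the Hamiltonian becomes $\sqrt{(\valpha\cdot(\pe+\bts))^2+1}-\tgV+\Hf$ with $\bts=\ps-\pf+\V{A}(\V{0})$ \emph{independent of} $\V{x}$. This permits a genuine tensor-product trial state $\vp=\vp_1\otimes\vp_2$ with $\vp_1\in C_0^\infty(\RR^3,\RR)$ and $\vp_2$ a near-minimizer of $\PF{0}(\ps)$. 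The difference of square roots is then expanded via the resolvent identity
\[
R_1(\eta)-R_2(\eta)=R_2(\eta)\,[\,2\,\pe\cdot\bts+\pe^2\,]\,R_1(\eta),
\]
iterated once. The cross term $\pe\cdot\bts$ contributes zero because $\vp_1$ is real, the second-order remainder is manifestly non-positive (it carries $-R_2(\eta)\klg0$), and the surviving $\pe^2$ term is controlled by $\int_0^\infty R_1(\eta)^2\sqrt{\eta}\,d\eta/\pi=\tfrac12((\valpha\cdot\bts)^2+1)^{-1/2}\klg\tfrac12$. This produces the sharp coefficient $\tfrac12$ in front of $\pe^2$ by pure algebra, with no asymptotic analysis.

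\textbf{Where your argument is incomplete.} Your trial vector $\Psi_R=\phi_\gamma\,\Omega_R$ mixes the electron and photon variables through an $\V{x}$-dependent $\Omega_R$, so $\cT_\V{A}$ does not act tensorially. The estimate you need,
\[
\langle\Psi_R,\cT_\V{A}\Psi_R\rangle\;\klg\;\bar\rho_R\,\langle\Omega_R,\cT_\V{A}\Omega_R\rangle+\bar\rho_R\,\langle\phi_\gamma,-\tfrac12\Delta\,\phi_\gamma\rangle+o(\bar\rho_R),
\]
is asserted but not derived, and the sign-function commutator bounds of Section~\ref{sec-pre} control \emph{norms} of commutators, not the precise cancellations needed to extract the exact factor $\tfrac12$. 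A norm bound on $[\cT_\V{A},\phi_\gamma]$ would at best give $\langle\phi_\gamma,-c\,\Delta\,\phi_\gamma\rangle$ for some $c<\tfrac12$, yielding only $\Th-E_\gamma\grg c\,\gamma^2$ rather than $\gamma^2/2$. Secondly, your construction presupposes fiber ground states $\Omega_\star(\V{P})$ of $\cH_0(\V{P})$, whose existence is not established here; the paper sidesteps this by using only an approximate minimizer at a single $\ps$.

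\textbf{Summary.} The paper's unitary-conjugation trick turns the non-local square root into an $\V{x}$-translation-invariant operator, after which a clean resolvent expansion replaces your commutator localization. That device is what delivers the sharp non-relativistic constant without any $R\to\infty$ limit, and it is the idea missing from your outline.
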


\begin{proof}
This theorem is a special case of Theorem~\ref{thm-binding}
below.
\end{proof}

\begin{theorem}[{\bf Existence of ground states}]\label{thm-ex-PF}
Let $e^2,\UV>0$ and $\gamma\in(0,2/\pi)$.
Then $E_{\gamma}$ is an evenly
degenerated eigenvalue of $\cH_{\gamma}$.
\end{theorem}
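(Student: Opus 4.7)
The plan is to follow the two-stage scheme sketched in the introduction, combining \cite{BFS1998b,BFS1999} with \cite{GLL2001}. First, I would introduce a family of infrared-regularized Hamiltonians $\cH_{\gamma,m}$, $m>0$, obtained from \eqref{def-PF2b} by multiplying the physical form factor in $\V{A}$ by the cutoff $\id_{\{|\V{k}|\grg m\}}$, so that soft photons are excised from the interaction. A standard quadratic-form comparison gives $E_{\gamma,m}:=\inf\spec[\cH_{\gamma,m}]\to E_\gamma$ as $m\downarrow 0$, and the proof of Theorem~\ref{thm-binding-PF} applied to $\cH_{\gamma,m}$ yields a strict binding gap $\Th_m-E_{\gamma,m}\grg\tfrac{1}{2}|E^\el_{\mathrm{nr},\gamma}|$ for all sufficiently small $m$, where $\Th_m:=\inf\spec[\cH_{0,m}]$.

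Next, for each fixed $m>0$ I would show that $E_{\gamma,m}$ is an eigenvalue of $\cH_{\gamma,m}$. Following \cite{BFS1998b,BFS1999} the photon wave vectors are confined to a lattice of mesh size $\vk>0$ with a large-momentum cutoff, reducing the soft-photon sector to finitely many modes; a compactness argument then produces a ground state of the discretized operator. To remove the discretization as $\vk\to 0$ for arbitrary (not necessarily small) values of $e^2$ and $\UV$, the new idea is to use the exponential localization in $\V{x}$ from \cite{MatteStockmeyer2009a} as a substitute for the smallness of the coupling: it supplies the $\V{x}$-compactness which, together with the low-energy restriction on the bosonic sector, gives the tightness needed to close the limit without entering a perturbative regime.

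Denote by $\phi_m\in\HR_2$ a normalized ground state of $\cH_{\gamma,m}$. The central analytic input is then a pair of uniform-in-$m$ infrared estimates on $\phi_m$: a soft-photon number bound and a pointwise photon derivative bound $\|a(k)\phi_m\|\klg C\,|\V{k}|^{-1/2}$ for $|\V{k}|\grg m$, with $C$ independent of $m$. Both are obtained from the pull-through formula for the eigenvalue equation and exploit the formal gauge invariance of $\cH_\gamma$; the non-locality of the square root $\cT_\V{A}$ forces one to commute $a(k)$ through $\cT_\V{A}$, which is handled by the sign-function and absolute-value commutator estimates collected in Section~\ref{sec-pre}. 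Together with the exponential $\V{x}$-localization, these bounds yield the relative compactness of $\{\phi_m\}$ in $\HR_2$ by a Rellich-type argument on Fock space of the kind used in \cite{GLL2001}; extracting $\phi_{m_n}\to\phi$ strongly with $\|\phi\|=1$ and passing to the limit in $\cH_{\gamma,m_n}\phi_{m_n}=E_{\gamma,m_n}\phi_{m_n}$ (paired with vectors in the common core $\sD_2$) gives $\cH_\gamma\phi=E_\gamma\phi$. The even degeneracy is immediate from commutation of $\cH_\gamma$ with the antiunitary time-reversal operator noted in \cite{MiyaoSpohn2008}.

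I expect the principal obstacle to lie in the infrared bounds of the third step: because of the square root in $\cT_\V{A}$ the commutator $[a(k),\cT_\V{A}]$ cannot be computed by a term-by-term differentiation as in the non-relativistic Pauli--Fierz setting, and the gauge transformation needed to produce the decisive factor $|\V{k}|^{-1/2}$ must be implemented at the level of bounded functions of the Dirac operator rather than on its quadratic form. By contrast, the discretization step becomes essentially routine once the exponential localization from \cite{MatteStockmeyer2009a} is in force, and the concluding compactness argument is standard given the infrared estimates.
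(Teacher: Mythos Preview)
Your proposal is correct and follows essentially the same route as the paper: infrared cutoff plus discretization (with exponential localization replacing the smallness assumption) to produce ground states $\phi_m$, then the soft-photon and photon-derivative bounds of \cite{GLL2001} adapted via gauge invariance and sign-function commutator estimates, followed by the compactness argument and Kramers' degeneracy for the even multiplicity. One small slip: the formula $\|a(k)\phi_m\|\klg C\,|\V{k}|^{-1/2}$ you display is the soft-photon bound, not the photon-derivative bound---the latter controls differences $\|a(k)\phi_m-a(p)\phi_m\|$ in terms of $|\V{k}-\V{p}|$ and is the ingredient that furnishes the fractional regularity in $\V{k}$ needed for the Rellich-type compactness in Fock space.
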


\begin{proof}
It is remarked in \cite[\textsection4]{MiyaoSpohn2008} that
every eigenvalue of $\cH_{\gamma}$
is evenly degenerated. In fact, this
follows from Kramers' degeneracy theorem
since $\cH_{\gamma}$ commutes with the 
anti-unitary time reversal operator
$\vt:=\sigma_2\,C\,R$, $\vt^2=-\id$,
where $C$ denotes complex conjugation and the 
electron parity $R$ replaces $\V{x}$ by $-\V{x}$.
The fact that $E_{\gamma}$ is an eigenvalue
is proved in Section~\ref{sec-ex}.
\end{proof}

\begin{remark}
(i) 
The authors are aware of the fact that 
in Theorem~\ref{thm-binding-PF} 
it would be preferable
to have a bound in terms of the lowest eigenvalue
of $\sqrt{-\Delta+1}-\tgV-1$, which is larger in absolute value.

\smallskip

\noindent
(ii) Every ground state eigenfunction of $\cH_{\gamma}$
is exponentially localized with respect to the
electron coordinates in the $L^2$-sense \cite{MatteStockmeyer2009a};
see Proposition~\ref{prop-exp-loc} where we recall
the precise statement.

\smallskip

\noindent
(iii) Theorems~\ref{thm-binding-PF} and~\ref{thm-ex-PF} actually hold true
for arbitrary choices of the polarization vectors $\veps(\V{k},\lambda)$,
$\lambda\in\ZZ_2$, as long as $\veps(\V{k},\lambda)$ is homogeneous
of degree zero in $\V{k}$ and 
$\{\V{k}/|\V{k}|,\veps(\V{k},0),\veps(\V{k},1)\}$ 
is an orthonormal basis of $\RR^3$, for almost every $\V{k}$.
For in this case the special form \eqref{pol-vec} of
the polarization vectors can always be achieved by a suitable
unitary transformation; see the appendix to \cite{Sasaki2006}
for details. Moreover, the sharp ultra-violet cut-off
in \eqref{def-Gphys} can be replaced by a smooth cut-off
implemented by some rapidly decaying function and
Theorems~\ref{thm-binding-PF} and~\ref{thm-ex-PF} still remain valid.
This follows by inspection of the proofs below.
\end{remark}


\section{The Dirac operator}
\label{sec-pre}

\subsection{Operators acting on four-spinors}

\noindent
It shall be convenient to work with a two-fold
direct sum of the operator $\cH_\gamma$ defined in \eqref{def-PF2b}.
For this permits to exploit earlier results
on sign functions of the free Dirac operator
minimally coupled to the quantized radiation field
and to have a familiar notation in the proofs.
The full Hilbert space we shall work with in the rest of
this paper
is thus given by
$$
\HR_4\,:=\HR_2\oplus\HR_2\,=\,\,L^2(\RR^3_{\V{x}},\CC^4)\otimes\Fock[\HP]\,.
$$
It contains the dense subspace
$$
\core\,:=\,C_0^\infty(\RR^3_{\V{x}},\CC^4)\otimes\sC_0\,.
\quad\textrm{(Algebraic tensor product.)}
$$
In order to introduce the Dirac operator  
we first recall that the Dirac matrices
$\alpha_1,\alpha_2,\alpha_3$, and $\beta=\alpha_0$
are hermitian $(4\times4)$-matrices
obeying the Clifford algebra relations
 \begin{equation}\label{Clifford}
\alpha_i\,\alpha_j\,+\,\alpha_j\,\alpha_i\,=\,2\,\delta_{ij}\,\id\,,
\qquad i,j\in\{0,1,2,3\}\,.
\end{equation}
In the standard representation they are given in terms of
the Pauli matrices as $\alpha_j=\sigma_1\otimes\sigma_j$,
$j\in\{1,2,3\}$, and $\beta=\sigma_3\otimes\id$.
We shall also work with generalized form factors
in what follows. For many of the technical results
stated below are applied to truncated and discretized versions
of the physical form factor \eqref{def-Gphys}.
Moreover, this permits to apply some of the technical
results of this article in our forthcoming work.
Hence, it makes sense to introduce the following
hypothesis.

\begin{hypothesis}\label{hyp-G}
Let $\cA:=\{\V{k}\in\RR^3:\,|\V{k}|\grg m\}$, for some $m\grg0$,
and let $\vo:\cA\times\ZZ_2\to[0,\infty)$ be a measurable
function that depends on $\V{k}\in\cA$ only
such that $0<\vo(k)\klg|\V{k}|$, for 
$k=(\V{k},\lambda)\in\cA\times\ZZ_2$ with $\V{k}\not=0$.
For almost every 
$k\in\cA\times\ZZ_2$ and $j\in\{1,2,3\}$, let
$G^{(j)}(k)$
be a bounded continuously differentiable function,
$\RR^3_{\V{x}}\ni\V{x}\mapsto G^{(j)}_{\V{x}}(k)$,
such that the map $(\V{x},k)\mapsto G^{(j)}_{\V{x}}(k)$
is measurable, 
\begin{equation}
\label{def-d3}
2\int\vo(k)^{\ell}\,\|\V{G}(k)\|^2_\infty\,dk
\,\klg\,d_\ell^2\,,
\qquad \ell\in\{-1,0,1,2\}\,,
\end{equation}
where $\int dk:=\sum_{\lambda\in\ZZ_2}\int_{\cA}d^3\V{k}$, and 
\begin{equation}\label{hyp-rotG}
2\int\vo(k)^{-1}\,\|\nabla_{\V{x}}\wedge\V{G}(k)\|^2_\infty\,dk
\,\klg\,d_{1}^2\,,
\end{equation}
for some $d_{-1},d_0,d_1,d_2\in(0,\infty)$, where
$\|\V{G}(k)\|_\infty:=\sup_{\V{x}}|\V{G}_{\V{x}}(k)|$.
\end{hypothesis}

\smallskip

\noindent
The generalized interaction between matter
and radiation is now given as
$$
\valpha\cdot\V{A}
\,:=\,\valpha\cdot\big(\ad(\V{G})+a(\V{G})\big)
\,:=\,\sum_{\vs=1,2,3,4}\int_{\RR^3_\V{x}}^\oplus
\valpha\cdot \V{A}(\V{x})\,d^3\V{x}\,,
$$
where
$$
\valpha\cdot\V{A}(\V{x})
\,:=\,\valpha\cdot\ad(\V{G}_\V{x})+\valpha\cdot a(\V{G}_\V{x})\,,
\quad
\valpha\cdot a^\sharp(\V{G}_\V{x})\,:=\,
\sum_{j=1}^3\,\alpha_j\,a^\sharp(G^{(j)}_\V{x})\,.
$$
Under Hypothesis~\ref{hyp-G}
we have the following well-known
relative bounds 
showing that $\valpha\cdot\V{A}$ is a symmetric operator
on $\dom(d\Gamma(\vo)^{1/2})$.
For every $\psi\in\dom(d\Gamma(\vo)^{1/2})$,
\begin{align}
\|\valpha\cdot a(\V{G})\,\psi\|^2
\,&\klg\,d_{-1}^2 \,\|d\Gamma(\vo)^{1/2}\,\psi\|^2,\label{rb-a}
\\
\|\valpha\cdot\ad(\V{G})\,\psi\|^2\,&\klg\,
d_{-1}^2\,\|d\Gamma(\vo)^{1/2}\,\psi\|^2\,+\,d_0^2\,\|\psi\|^2,\label{rb-ad}
\\
\|\valpha\cdot\V{A}\,\psi\|^2\,&\klg\,
d_*^2\,\|(d\Gamma(\vo)+1)^{1/2}\,\psi\|\,,\qquad\quad d_*^2\,:=\,
d_0^2+2d_{-1}^2\,.
\end{align}
(Notice that the $C^*$-equality and \eqref{Clifford} imply
$\|\valpha\cdot \V{u}\|=|\V{u}|$, for every $\V{u}\in\RR^3$, whence
$\|\valpha\cdot \V{z}\|^2\klg2|\V{z}|^2$, for every $\V{z}\in\CC^3$.
For this reason we put the 
factor $2$ on the left sides of \eqref{def-d3}
and \eqref{hyp-rotG}.)
The
free Dirac operator minimally coupled to $\V{A}$ is now given as
\begin{equation}\label{def-DA}
\DA\,:=\,\valpha\cdot(\pe+\V{A})+\beta
\,:=\,
\sum_{j=1}^3\alpha_j\,(-i\partial_{x_j}+A^{(j)})\,+\,\beta
\,.
\end{equation}
$\DA$ is essentially self-adjoint
on $\core$ as a
straightforward application of Nelson's commutator theorem shows 
\cite{LiebLoss2002,MiyaoSpohn2008}. We use the symbol
$\DA$ again to denote its closure starting from $\core$. 
Its spectrum is contained
in the union of two half-lines,
$
\spec(\DA)\subset(-\infty,-1]\cup[1,\infty)
$.
Next, we define the semi-relativistic Pauli-Fierz
operator acting on four-spinors, 
$\PF{\gamma}\equiv \PF{\gamma,\V{G},\vo}$, \`{a}-priori by
\begin{equation}\label{def-PF}
\PF{\gamma}\,\vp\,
:=\,\big(|\DA|-\tgV+d\Gamma(\vo)\big)\,\vp\,,\qquad
\vp\in\core\,.
\end{equation}
In \cite{MatteStockmeyer2009a} two of the present authors
proved the inequality
\begin{equation}\label{maria}
\frac{2}{\pi}\frac{1}{|\V{x}|}\,\klg\,|\DA|+\delta\,d\Gamma(\vo)
+(\delta^{-1}+\delta\,k^2)\,d_1^2\,,
\end{equation}
for some $k\in(0,\infty)$ and every $\delta>0$,
in the sense of quadratic forms on $\core$.
It implies  that, for every $\gamma\in[0,2/\pi]$, 
the operator $\PF{\gamma}$
has a self-adjoint Friedrichs extension which is henceforth
again denoted by the same symbol.
For $\gamma\in[0,2/\pi)$, the KLMN theorem further implies that 
$\form(\PF{\gamma})=\form(\PF{0})$
and that $\core$ is a form core for $\PF{\gamma}$.  
Here $\form$ denotes the form domain of an operator.
(We actually know that 
$\form(\PF{\gamma})=\form(|\DO|)\cap\form(d\Gamma(\vo))$,
for $\gamma\in[0,2/\pi)$ \cite{Stockmeyer2009}.)
We observe that in the case $\V{G}=\V{G}^\phys$ 
the operator defined in \eqref{def-PF} is a two-fold
copy of the one given in \eqref{def-PF2b} since
$$
|\DA|\,=\, 
\begin{pmatrix}
           \cT_\V{A}&0\\0&\cT_\V{A}
\end{pmatrix}\,,
\qquad \cT_\V{A}\,:=\,\sqrt{(\vsigma\cdot(\pe+\V{A}))^2+\id}\,.
$$

\subsection{A survey of earlier results}

\noindent
In what follows we collect some basic estimates
we shall need in the sequel. All of them have been derived
in \cite[\textsection3]{MatteStockmeyer2009a}.
As in \cite{MatteStockmeyer2009a} we introduce the parameter
$$
\delta_\nu^2\,\equiv\,
\delta_\nu^2(E)\,:=\,
8\int\frac{w_\nu(k,E)^2}{\vo(k)}\,\sup_{\V{x}\in\RR^3}
|\V{G}_\V{x}(k)|^2\,dk\,,\quad E,\nu>0\,,
$$
where
$$
w_\nu(k,E)\,:=\,
E^{1/2-\nu}\,
\big((E+\vo(k))^{\nu+1/2}-E^\nu\,(E+\vo(k))^{1/2}\big)\,,
$$
and note that
\begin{equation}\label{verona}
\delta_{1/2}\,\klg\,2d_1\,.
\end{equation}
Moreover, given some $E>0$ and some $\vo$ as in Hypothesis~\ref{hyp-G}, we set
$$
\HT\,:=\,d\Gamma(\vo)\,+\,E\,.
$$

\begin{lemma}\label{cor-T-Xi}
Assume that $\vo$ and $\V{G}$ fulfill Hypothesis~\ref{hyp-G}.
Then the following assertions hold true:

\smallskip

\noindent
(i) For all $\nu,E>0$, the densely defined operator
$[\HT^{-\nu}\,,\,\valpha\cdot\V{A}]\,\HT^{\nu}$ extends
to a bounded operator on $\HR_4$ which we denote
by $T_\nu$. We have
$$
\|T_\nu\|\,\klg\,\delta_\nu/E^{1/2}\,,\qquad \nu,E>0\,.
$$
(ii)
Let $z\in\CC$ and $L\in\LO(L^2(\RR^3_{\V{x}},\CC^4))$ be such that
$z\in\vr(\DA)\cap\vr(\DA+L)$ (where $L\equiv L\otimes\id$ and
$\vr$ denotes the resolvent set)
and define
\begin{equation}\label{def-HT-RAL}
\R{\V{A},L}{z}\,:=\,(\DA+L-z)^{-1}\,,\qquad \R{\V{A}}{z}\,:=\,
\R{\V{A},0}{z}\,.
\end{equation}
Assume that $\nu,E>0$ satisfy $\delta_\nu/E^{1/2}<1/\|\R{\V{A},L}{z}\|$.
Then the Neumann series 
\begin{align*}
\Xi_{\nu,L}(z)\,&:=\,\sum_{j=0}^\infty\{-\R{\V{A},L}{z}\,T_\nu\}^j\,,
\qquad
\Upsilon_{\nu,L}(z)\,:=\,\sum_{j=0}^\infty\{-T_\nu^*\,\R{\V{A},L}{z}\}^j
\,,
\end{align*}
converge absolutely,
$\Upsilon_{\nu,L}(z)=\Xi_{\nu,-L}(\ol{z})^*$, and
\begin{equation}\label{bd-Xi}
\|\Xi_{\nu,L}(z)\|\,,\|\Upsilon_{\nu,L}(z)\|\,\klg\,\big(1-\delta_\nu\,
\|\R{\V{A},L}{z}\|/E^{1/2}\big)^{-1}\,.
\end{equation}
(iii) Under the assumptions of (ii) the following operator
identities hold true on $\HR_4$,
\begin{eqnarray}
\HT^{-\nu}\,\R{\V{A},L}{z}
&=&\label{eva3}
\Xi_{\nu,L}(z)
\,\R{\V{A},L}{z}\,\HT^{-\nu}\,,
\\
\R{\V{A},L}{z}\,\HT^{-\nu}
&=&\label{eva3b}
\HT^{-\nu}\,\R{\V{A},L}{z}\,\Upsilon_{\nu,L}(z)
\,.
\end{eqnarray}
In particular, $\R{\V{A},L}{z}$ maps $\dom(\id\otimes\HT^\nu)$
into itself.
\end{lemma}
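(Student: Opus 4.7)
The plan is to treat the three parts sequentially: (i) relies on a pull-through computation on Fock space, (ii) is a Neumann-series argument built on the bound from (i), and (iii) combines the commutator identity defining $T_{\nu}$ with the geometric series of (ii).

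For (i), I would work first on $\core$, where everything is algebraically justified. Since $\valpha\cdot\pe+\beta$ acts trivially on the Fock variables, it commutes with $\HT^{-\nu}$, and the commutator reduces to $[\HT^{-\nu},\valpha\cdot\V{A}]\HT^{\nu}$ with $\valpha\cdot\V{A}=\valpha\cdot\ad(\V{G})+\valpha\cdot a(\V{G})$. Lifting the canonical commutation relations $[\HT,a^{\sharp}(k)]=\pm\vo(k)\,a^{\sharp}(k)$ to the functional calculus gives
\[
\HT^{-\nu}\,a^{\dagger}(k)=a^{\dagger}(k)\,(\HT+\vo(k))^{-\nu},\qquad a(k)\,\HT^{-\nu}=(\HT+\vo(k))^{-\nu}\,a(k),
\]
from which $[\HT^{-\nu},\valpha\cdot\V{A}]\HT^{\nu}$ can be rewritten as a formal integral of $a^{\sharp}(k)$ smeared against an operator-valued weight built from $(\HT+\vo(k))^{-\nu}-\HT^{-\nu}$ and $\HT^{\nu}$. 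The spectral theorem bounds this weight pointwise by a universal multiple of $w_{\nu}(k,E)/E^{1/2}$, and applying the standard relative-boundedness estimate $\|a^{\sharp}(\V{f})\,\psi\|\lesssim\|\V{f}/\sqrt{\vo}\|_{2}\,\|(d\Gamma(\vo)+\id)^{1/2}\psi\|$ together with Cauchy--Schwarz in $k$ yields the announced bound $\|T_{\nu}\|\leq\delta_{\nu}/E^{1/2}$. Since this calculation is carried out in \cite[\textsection 3]{MatteStockmeyer2009a}, I would simply cite those results.

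Part (ii) is then routine. The standing hypothesis combined with part (i) gives $\|\R{\V{A},L}{z}\,T_{\nu}\|\leq\|\R{\V{A},L}{z}\|\,\delta_{\nu}/E^{1/2}<1$, so $\Xi_{\nu,L}(z)$ converges absolutely and is majorized by the geometric sum on the right-hand side of \eqref{bd-Xi}. The same argument, applied with $T_{\nu}^{*}$ in place of $T_{\nu}$ and using $\|T_{\nu}^{*}\|=\|T_{\nu}\|$, handles $\Upsilon_{\nu,L}(z)$. Termwise adjunction of the series for $\Xi_{\nu,-L}(\ol{z})$, together with the general resolvent identity $\R{\V{A},L}{z}^{*}=\R{\V{A},L^{*}}{\ol{z}}$, yields the stated relation $\Upsilon_{\nu,L}(z)=\Xi_{\nu,-L}(\ol{z})^{*}$.

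For (iii), I would start from the observation that $\valpha\cdot\pe+\beta+L-z$ is constant along the Fock variables and therefore commutes with $\HT^{-\nu}$. On $\core$ this gives
\[
[\DA+L-z,\HT^{-\nu}]=[\valpha\cdot\V{A},\HT^{-\nu}]=-T_{\nu}\,\HT^{-\nu},
\]
so that $\HT^{-\nu}(\DA+L-z)=(\DA+L-z)\HT^{-\nu}+T_{\nu}\,\HT^{-\nu}$. Multiplying this identity on the left and on the right by $\R{\V{A},L}{z}$ produces
\[
\HT^{-\nu}\,\R{\V{A},L}{z}=\R{\V{A},L}{z}\,\HT^{-\nu}-\R{\V{A},L}{z}\,T_{\nu}\,\HT^{-\nu}\,\R{\V{A},L}{z},
\]
which, upon iteration and thanks to the absolute convergence from (ii), is exactly \eqref{eva3}; identity \eqref{eva3b} is obtained by the mirror derivation, keeping $\HT^{-\nu}$ on the right throughout, and the invariance of $\dom(\id\otimes\HT^{\nu})$ is read off directly from \eqref{eva3}. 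The main obstacle is the bookkeeping in (i) — matching the explicit weight $w_{\nu}(k,E)$ to the pull-through difference so that the Cauchy--Schwarz step is sharp — while (ii) and (iii) are essentially mechanical once that ingredient is in place.
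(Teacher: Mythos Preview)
Your proposal is correct and essentially matches the paper's treatment: the paper does not give a proof at all for this lemma but simply cites \cite[\textsection3]{MatteStockmeyer2009a}, noting that all results in that subsection are derived there, which is exactly what you end up doing for part~(i). Your sketches of (ii) and (iii) via Neumann series and iterated commutator identities are the standard route and agree with the approach in the cited reference.

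One small remark on (ii): the identity $\Upsilon_{\nu,L}(z)=\Xi_{\nu,-L}(\ol{z})^{*}$ as you derive it uses $\R{\V{A},-L}{\ol{z}}^{*}=\R{\V{A},-L^{*}}{z}$, which equals $\R{\V{A},L}{z}$ only when $L^{*}=-L$. In the paper's applications $L=i\valpha\cdot\nabla F$ is indeed skew-adjoint, so this is harmless, but strictly speaking the identity as stated requires that hypothesis (implicit in the cited reference).
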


\smallskip

\noindent
In the next lemma we summarize some results
on the sign function of the Dirac operator,
$$
\SA\,:=\,\DA\,|\DA|^{-1}\,,
$$
which have 
essentially been obtained in \cite[\textsection3]{MatteStockmeyer2009a}.
We define $J:[0,1)\to\RR$ by
\begin{equation}\label{def-J(a)}
J(0)\,:=\,1\,,\qquad
J(a)\,:=\,\sqrt{6}/(1-a^2)\,,\quad a\in(0,1)\,.
\end{equation}

\begin{lemma}
Assume that $\vo$ and $\V{G}$ fulfill Hypothesis~\ref{hyp-G} and 
suppose that
$F\in C^\infty(\RR^3_\V{x},[0,\infty))$
satisfies $|\nabla F|\klg a$, for some $a\in[0,1)$.
Moreover, let $\nu\grg0$, set $\HT=d\Gamma(\vo)+E$, and assume that
$E>(\delta_\nu J(a))^2$.
Then
\begin{align}
\big\|\,e^F\,\HT^\nu\,\SA\,\HT^{-\nu}\,e^{-F}\,\big\|\,&\klg\,
\frac{1+a\,J(a)}{1-\delta_\nu J(a)/E^{1/2}}\,.\label{SA-eF-Hnu}
\end{align}
Moreover, $\SA$ maps the domain of $\HT^\nu$ into itself.
\end{lemma}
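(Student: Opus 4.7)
The plan is to combine the resolvent integral representation
\begin{equation*}
\SA \,=\, \frac{1}{\pi}\int_\RR \RA{iy}\,dy
\end{equation*}
(convergent in the strong operator topology thanks to $\spec(\DA)\cap(-1,1)=\emptyset$) with Lemma~\ref{cor-T-Xi}. First, since $e^F$ is matrix-valued multiplication in the electron variable while $\HT^\nu$ acts on Fock space, the two commute; on the other hand, a direct computation using $[\DA,e^{-F}] = i(\valpha\cdot\nabla F)e^{-F}$ yields
\begin{equation*}
e^F\,\DA\,e^{-F} \,=\, \DA + iM\,, \qquad M\,:=\,\valpha\cdot\nabla F\,,
\end{equation*}
where $M$ is fibrewise self-adjoint with $\|M\|\klg a$. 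Hence, with $L:=iM$, we have $e^F\,\RA{z}\,e^{-F}=\R{\V{A}, iM}{z}$ in the notation of \eqref{def-HT-RAL}.

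Next I apply Lemma~\ref{cor-T-Xi}(iii) with $L=iM$ to commute $\HT^\nu$ across each resolvent $\R{\V{A}, iM}{iy}$ inside the displayed integral, yielding
\begin{equation*}
e^F\,\HT^\nu\,\SA\,\HT^{-\nu}\,e^{-F}
\,=\,
\frac{1}{\pi}\int_\RR \R{\V{A}, iM}{iy}\,\Upsilon_{\nu, iM}(iy)\,dy\,.
\end{equation*}
A Neumann-series bound on $(\id+iM\,\RA{iy})^{-1}$ gives $\|\R{\V{A}, iM}{iy}\|\klg 1/(\sqrt{1+y^2}-a)\klg J(a)$ uniformly in $y$, so under the hypothesis $E>(\delta_\nu J(a))^2$ Lemma~\ref{cor-T-Xi}(ii) produces the uniform bound $\|\Upsilon_{\nu, iM}(iy)\|\klg(1-\delta_\nu J(a)/E^{1/2})^{-1}$.

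To extract the stated estimate I decompose $\Upsilon_{\nu, iM}(iy)=\id+(\Upsilon_{\nu, iM}(iy)-\id)$. The identity piece integrates, as an improper operator-valued integral, to $\pi\,e^F\,\SA\,e^{-F}$, and a separate analysis using $\|M\|\klg a$ together with the cancellation between $\R{\V{A}, iM}{iy}$ and $\R{\V{A}, iM}{-iy}$ yields the clean bound $\|e^F\,\SA\,e^{-F}\|\klg 1+aJ(a)$. The remainder, rewritten via $\Upsilon_{\nu, iM}(iy)-\id = -T_\nu^*\,\R{\V{A}, iM}{iy}\,\Upsilon_{\nu, iM}(iy)$, is controlled by $\|T_\nu^*\|\klg\delta_\nu/E^{1/2}$, the square-integrability of $y\mapsto\|\R{\V{A}, iM}{iy}\|$, and the uniform $\Upsilon$-bound; stitching everything together produces exactly the prefactor $(1-\delta_\nu J(a)/E^{1/2})^{-1}$ multiplying $(1+aJ(a))$.

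The hard part will be the sharp bound $\|e^F\,\SA\,e^{-F}\|\klg 1+aJ(a)$, since $\DA+iM$ is non-self-adjoint and its sign function does not enjoy the automatic bound $\|\SA\|=1$; the explicit constant $\sqrt{6}/(1-a^2)$ in $J(a)$ comes out of a careful evaluation of the associated resolvent integrals, combined with the pointwise bound $|M|\klg a$ and the spectral-gap property of $\DA$. Lastly, to see that $\SA$ preserves $\dom(\HT^\nu)$, I specialize to $F\equiv 0$ (so $a=0$ and $J(0)=1$): the resulting bound $\|\HT^\nu\,\SA\,\HT^{-\nu}\|\klg(1-\delta_\nu/E^{1/2})^{-1}<\infty$ yields a bounded extension of $\HT^\nu\,\SA\,\HT^{-\nu}$ to all of $\HR_4$, which together with the closedness of $\HT^\nu$ is equivalent to the invariance claimed in the lemma.
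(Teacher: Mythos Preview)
Your strategy mirrors the paper's: both decompose into the ``main'' term $e^{\pm F}\SA e^{\mp F}$ (bounded by $1+aJ(a)$, which is indeed a separate lemma, namely \cite[Lemma~3.5]{MatteStockmeyer2009a}) plus a correction coming from the commutator with $\HT^{\pm\nu}$, and both control the correction through Lemma~\ref{cor-T-Xi}. However, there are two gaps.

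\textbf{First}, your remainder estimate does not produce the stated constant. From your ingredients---$\|T_\nu^*\|\klg\delta_\nu/E^{1/2}$, $\|\R{\V{A},iM}{iy}\|\klg J(a)/\sqrt{1+y^2}$ (this is \eqref{marah1}; your Neumann bound $1/(\sqrt{1+y^2}-a)$ is even weaker), and the uniform $\Upsilon$-bound---one gets
\[
\frac{1}{\pi}\int_\RR\|\R{\V{A},iM}{iy}\|^2\,dy\cdot\frac{\delta_\nu/E^{1/2}}{1-\delta_\nu J(a)/E^{1/2}}
\;\klg\;
J(a)\cdot\frac{\delta_\nu J(a)/E^{1/2}}{1-\delta_\nu J(a)/E^{1/2}}\,,
\]
so the total is $1+aJ(a)+J(a)\cdot\frac{\delta_\nu J(a)/E^{1/2}}{1-\delta_\nu J(a)/E^{1/2}}$. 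Since $J(a)(1-a)=\sqrt{6}/(1+a)>1$ for $a\in(0,1)$, this is strictly larger than $(1+aJ(a))/(1-\delta_\nu J(a)/E^{1/2})$. The paper obtains the sharp bound by quoting \cite[Lemma~3.3]{MatteStockmeyer2009a}, where the commutator $e^{-F}[\HT^{-\nu},\SA]\HT^\nu e^F$ is estimated directly and the factor $(1+aJ(a))$ (rather than $J(a)$) appears in front of $\frac{\delta_\nu J(a)/E^{1/2}}{1-\delta_\nu J(a)/E^{1/2}}$; getting that requires more than just multiplying the norms of the factors in your integrand.

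\textbf{Second}, you do not address unbounded $F$. The lemma allows $F(\V{x})\sim a|\V{x}|$, so $e^F$ is unbounded and your formal manipulations (passing $e^F$ inside the improper integral, invoking \eqref{marah0}) need justification. The paper handles this by first proving the bound on the \emph{adjoint} operator $e^{-F}\HT^{-\nu}\SA\HT^\nu e^F$ for bounded $F$ of either sign, then approximating an unbounded $F\grg0$ by bounded $F_n\nearrow F$, using dominated convergence on $\core$ (where vectors have compact $\V{x}$-support), and finally taking the adjoint to recover \eqref{SA-eF-Hnu}.
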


\begin{proof}
First, we assume in addition that $F$ is bounded
but allow $F$ to be either non-negative or non-positive. Then
it follows from \cite[Lemma~3.5]{MatteStockmeyer2009a}
that $\|e^{-F}\,\SA\,e^{F}\|\klg 1+a\,J(a)$.
Moreover, we have
$$
\big\|\,e^{-F}\,[\HT^{-\nu}\,,\,\SA]\,\HT^{\nu}\,e^F\,\big\|
\,\klg\,(1+a\,J(a))\,\frac{\delta_\nu\,J(a)/E^{1/2}}{
1-\delta_\nu\,J(a)/E^{1/2}}\,,
$$
due to \cite[Lemma~3.3]{MatteStockmeyer2009a}.
Writing 
$$
e^{-F}\,\HT^{-\nu}\,\SA\,\HT^\nu\,e^F
\,=\,e^{-F}\,\SA\,e^F+
e^{-F}\,[\HT^{-\nu}\,,\,\SA]\,\HT^{\nu}\,e^F
$$
and combining these two inequalities we obtain \eqref{SA-eF-Hnu},
for bounded $F$ having a fixed sign.

Let us now assume that 
$F\grg0$ is not necessarily bounded.
Then we pick a sequence of bounded smooth functions
$F_1,F_2,\ldots\in C^\infty(\RR^3,[0,\infty))$ such that
$|\nabla F_n|\klg a$ and
$F_n=F$ on $\{|\V{x}|\klg n\}$, $n\in\NN$, and
$F_n\to F$, as $n\to\infty$.
Since every $\vp\in\core$ has a compact support with respect
to $\V{x}$ we then obtain
$e^{-F_n}\,\HT^{-\nu}\,\SA\,\HT^{\nu}\,e^{F_n}\,\vp\to
e^{-F}\,\HT^{-\nu}\,\SA\,\HT^{\nu}\,e^F\,\vp$ by the
dominated convergence theorem. 
Since the operators $e^{-F_n}\,\HT^{-\nu}\,\SA\,\HT^{\nu}\,e^{F_n}$
obey the bound \eqref{SA-eF-Hnu} with $F=-F_n$ uniformly in $n\in\NN$,
we conclude that 
$e^{-F}\,\HT^{-\nu}\,\SA\,\HT^{\nu}\,e^F\!\!\upharpoonright_{\core}$
is bounded and its norm is bounded by the right side of \eqref{SA-eF-Hnu} also.
But this is true if and only if its adjoint,
$e^F\,\HT^\nu\,\SA\,\HT^{-\nu}\,e^{-F}$, belongs to
$\LO(\HR_4)$ and satisfies \eqref{SA-eF-Hnu} as well.
(Here we use the facts that $(ST)^*=T^*S^*$ when $ST$ is densely
defined and $S$ is bounded and that $\HT^\nu$ and $e^F$
commute since they act on different tensor factors.)
\end{proof}


\subsection{Comparison between operators with different form factors}

\noindent
In the following we assume that
$\wt{G}_{\V{x}}^{(j)}(k)$, $k\in\cA\times\ZZ_2$, $j\in\{1,2,3\}$,
is another form factor fulfilling Hypothesis~\ref{hyp-G} 
with new constants $\wt{d}_{-1},\ldots,\wt{d}_2$, that is,
\begin{equation}\label{def-dtilde}
2\int\vo(k)^\ell\,\sup_{\V{x}\in\RR^3}|\wt{\V{G}}_{\V{x}}(k)|^2\,dk
\,\klg\,\wt{d}_\ell^2
\,<\,\infty\,,\qquad \ell\in\{-1,0,1,2\}\,.
\end{equation}
We write $\wt{\V{A}}=\ad(\wt{\V{G}})+a(\wt{\V{G}})$ and 
assume further that
\begin{equation}\label{def-delta(a)}
\triangle^2_{\ell}(a)\,:=\,2
\int\vo(k)^{\ell}\,\sup_{\V{x}\in\RR^3}\big\{\,e^{-a|\V{x}|}
|\V{G}_{\V{x}}(k)-\wt{\V{G}}_{\V{x}}(k)|^2\,\big\}\,
\,dk\,<\,\infty\,,
\end{equation}
for $\ell\in\{-1,0\}$ and some $a\grg0$.
Then the bounds \eqref{rb-a} and \eqref{rb-ad} still hold
true when $\V{G}$ is replaced by $\wt{\V{G}}$, 
provided at the same time $d_\ell$ is replaced by
$\wt{d}_\ell$.
Likewise we have
\begin{align}
\big\|\,e^{-a|\V{x}|}\,\valpha\cdot a(\V{G}-\wt{\V{G}})\,
\psi\,\big\|^2
\,&\klg\,\label{delta(a)1}
\triangle_{-1}^2(a)\,\|d\Gamma(\vo)^{1/2}\,\psi\|^2\,,
\\
\big\|\,e^{-a|\V{x}|}\,\valpha\cdot\ad(\V{G}-\wt{\V{G}})\,
\psi\,\big\|^2
\,&\klg\,\label{delta(a)2}
\triangle_{-1}^2(a)\,\|d\Gamma(\vo)^{1/2}\,\psi\|^2\,
+\,\triangle_0^2(a)\,\|\psi\|^2\,,
\\ \label{delta(a)3}
\big\|\,e^{-a|\V{x}|}\,\valpha\cdot(\V{A}-\wt{\V{A}})\,\psi\,\big\|^2
\,&\klg\,
\triangle_*^2(a)\,\big\|\,(d\Gamma(\vo)+1)^{1/2}\,\psi\,\big\|^2\,,
\end{align}
where $\triangle_*^2(a):=2\,\triangle_0^2(a)+4\,\triangle_{-1}^2(a)$,
for every $\psi\in\dom(d\Gamma(\vo)^{1/2})$.
Next, we state some simple facts 
which are used in the proofs of the 
lemmata below:
First, we have the following representation of the sign
function of $\DA$ as a strongly convergent principal value
\cite[Lemma~VI.5.6]{Kato}, 
\begin{equation}\label{sgn}
\SA\,\varphi
\,=\,\DA\,|\DA|^{-1}\,\vp
\,=\,\lim_{\tau\to\infty}\,
\int_{-\tau}^\tau\RA{iy}\,\varphi\,\frac{dy}{\pi}\,, \qquad 
\varphi\,\in\,\HR\,.
\end{equation}
Furthermore, since
$(-1,1)\subset\vr(\DA)$ the spectral calculus yields,
for all $y\in\RR$ and  $\kappa\in [0,1)$, 
\begin{equation}\label{ralf3}
\big\|\,|\DA|^{\kappa}\,\RA{iy}\,\big\|\,\klg\,
\frac{\id_{|y|<b(\kappa)}}{
\sqrt{1+y^2}}\:+\:
\frac{c(\kappa)\,\id_{|y|\grg b(\kappa)}}{|y|^{1-\kappa}}
\,=:\,\zeta_\kappa(y)
\,,
\end{equation} 
where $b(\kappa):=\kappa^{-1/2}(1-\kappa)^{1/2}$ ($1/0:=\infty$),
$c(\kappa):=\kappa ^{\kappa/2}(1-\kappa)^{(1-\kappa)/2}$.
We shall often encounter the constants
\begin{equation}\label{def-K(kappa)}
K(0)\,:=\,\frac{1}{2}\,,\qquad
K(\kappa)\,:=\,\int_\RR\frac{\zeta_\kappa(y)}{\sqrt{1+y^2}}\:
\frac{dy}{2\pi}\,<\,\infty\,,\quad\kappa\in(0,1)\,.
\end{equation}
The next lemma shows that
the resolvent of $\DA$ stays bounded after conjugation
with exponential weights $e^F$ acting on the electron coordinates. 
This assertion is 
well-known in the case of classical magnetic fields; see, e.g.,
\cite{BeGe1987}. 
The proof presented in \cite[Lemma~3.1]{MatteStockmeyer2008b}
for classical vector potentials
applies, however, also to
quantized fields without any change and we refrain from
repeating it here.

\begin{lemma}\label{le-marah}
Assume that $\V{G}$ fulfills Hypothesis~\ref{hyp-G}.
Let 
$y\in\RR$, $a\in[0,1)$,
and let
$F\in C^\infty(\RR_\V{x}^3,\RR)$ have a fixed sign and
satisfy $|\nabla F|\klg a$.
Then 
$iy\in\vr(\DA+i\valpha\cdot\nabla F)$,
\begin{equation}\label{marah0}
e^F\,\RA{iy}\,e^{-F}
=(\DA+i\valpha\cdot \nabla F
-iy)^{-1}\!\!\upharpoonright_{\dom(e^{-F})}\,,
\end{equation}
where $e^F\equiv e^F\otimes\id$, and
\begin{equation}\label{marah1}
\big\|\,e^F\,\RA{iy}\,e^{-F}\,\big\|\,\klg\,
\frac{J(a)}{\sqrt{1+y^2}}
\,,
\end{equation}
where $J$ is defined in \eqref{def-J(a)}.
\end{lemma}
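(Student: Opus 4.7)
My plan is a Combes--Thomas-type argument in three steps. First, assuming $F\in C^\infty(\RR_{\V{x}}^3,\RR)$ is bounded (to postpone domain issues), I would establish the formal conjugation identity $e^F\DA e^{-F} = \DA + i\valpha\cdot\nabla F$ on the core $\core$: both $\V{A}(\V{x})$ and $\beta$ commute with $e^{\pm F}$ because neither differentiates in $\V{x}$, so only $-i\valpha\cdot\nabla$ contributes, and the product rule gives $e^F(-i\valpha\cdot\nabla)e^{-F}\phi = -i\valpha\cdot\nabla\phi + i\valpha\cdot(\nabla F)\phi$ for $\phi\in\core$. Consequently $e^F(\DA-iy)e^{-F}\phi = (\DA + iB - iy)\phi$ on $\core$, where $B := \valpha\cdot\nabla F = B^*$ with $\|B\|\leq a$. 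This reduces the lemma to proving that $T_y := \DA + iB - iy$ is boundedly invertible with operator norm at most $J(a)/\sqrt{1+y^2}$, and then to identifying $T_y^{-1}$ with $e^F\RA{iy}e^{-F}$.

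For the invertibility I would use a Neumann series. From $\{\valpha\cdot(\pe+\V{A}),\beta\}=0$ and $\beta^2 = \id$ one deduces $\DA^2 = \id + (\valpha\cdot(\pe+\V{A}))^2 \geq \id$, so by the spectral theorem $\|\RA{iy}\| = 1/\sqrt{1+y^2}$. Hence $\|iB\,\RA{iy}\|\leq a/\sqrt{1+y^2} < 1$ uniformly in $y\in\RR$, and the series $\sum_n (-iB\,\RA{iy})^n$ converges in $\LO(\HR_4)$. The factorization $T_y = (\id + iB\,\RA{iy})(\DA - iy)$ on $\dom(\DA)$ then yields
\begin{equation*}
T_y^{-1} \;=\; \RA{iy}\,(\id + iB\,\RA{iy})^{-1}, \qquad \|T_y^{-1}\|\,\leq\,\frac{1}{\sqrt{1+y^2} - a}.
\end{equation*}
The asserted bound $J(a)/\sqrt{1+y^2}$ follows from the elementary inequality $\sqrt{6}(\sqrt{1+y^2}-a)\geq(1-a^2)\sqrt{1+y^2}$ on $[0,1)$, which reduces via $\sqrt{1+y^2}\geq 1$ to $a^2 - \sqrt{6}\,a + (\sqrt{6}-1)\geq 0$; this quadratic has roots at $a = 1$ and $a = \sqrt{6}-1 > 1$, hence is non-negative throughout $[0,1]$.

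For the identification and the extension to unbounded $F$, I would first observe that for bounded $F$ the identity above combined with Step 2 gives $T_y^{-1}e^F\vp = e^F\RA{iy}\vp$ on $\core$, extending by density to all of $\HR_4$; since $e^{-F}$ is then bounded, \eqref{marah0} follows. For unbounded $F\geq 0$ (the case $F\leq 0$ is handled symmetrically), I would approximate by smooth truncations $F_n\in C^\infty$ with $|\nabla F_n|\leq a$, $F_n = F$ on $\{|\V{x}|\leq n\}$, and $F_n\uparrow F$ pointwise. Each $F_n$ is bounded, so Steps 1--2 apply uniformly in $n$. Any $\vp\in\core$ has $\supp_{\V{x}}\vp\subset\{|\V{x}|\leq n_0\}$ for some $n_0$, so $e^{-F_n}\vp = e^{-F}\vp$ for all $n\geq n_0$, and passing to the limit on the dense set $\core$ together with the uniform norm bound yields \eqref{marah0} and \eqref{marah1} on $\dom(e^{-F})$.

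The main obstacle is conceptual rather than technical: one must carefully set up the domains so that the formal identity $e^F(\DA-iy)e^{-F} = T_y$ becomes a genuine operator equation, and then identify the bounded operator $T_y^{-1}\in\LO(\HR_4)$ with the initially unbounded formal expression $e^F\RA{iy}e^{-F}$ acting only on $\dom(e^{-F})$. The quantized nature of $\V{A}$ never intervenes in the analysis beyond the inequality $\DA^2\geq\id$, which survives unchanged from the classical setting; this is precisely why the argument of \cite[Lemma~3.1]{MatteStockmeyer2008b} for classical magnetic fields carries over without modification.
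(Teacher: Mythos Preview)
Your proposal is correct and follows precisely the Combes--Thomas/Neumann-series route that the paper invokes by citing \cite[Lemma~3.1]{MatteStockmeyer2008b}; as you note yourself, the quantized field enters only through $\DA^2\geq\id$, so the classical argument transfers verbatim. Your explicit verification that $1/(\sqrt{1+y^2}-a)\leq J(a)/\sqrt{1+y^2}$ via the quadratic $a^2-\sqrt{6}\,a+(\sqrt{6}-1)=(a-1)(a-(\sqrt{6}-1))\geq0$ on $[0,1]$ is exactly what is needed to recover the stated constant.
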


\begin{lemma}\label{le-A-Aeps}
Assume that $\vo$, $\V{G}$, and $\wt{\V{G}}$ fulfill
Hypothesis~\ref{hyp-G} such that \eqref{def-dtilde}
and \eqref{def-delta(a)} are satisfied, for some $a\in[0,1)$.
Let $\kappa \in [0,1)$ and assume that 
$F\in C^\infty(\RR^3_\V{x},[0,\infty))$
satisfies $|\nabla F(\V{x})|\klg a$ and $F(\V{x})\grg a|\V{x}|$,
for all $\V{x}\in\RR^3$, and 
$F(\V{x})=a|\V{x}|$, for large $|\V{x}|$.
Then we have, for all $E\grg1$ with $E>(2d_1J(a))^2$,
\begin{align}
\big\|\,|\DAt|^\kappa\,(\SAt-\SA)\,
\HT^{-1/2}\,e^{-F}\,\big\|
&\,\klg\,\frac{2K(\kappa)\,\triangle_*(a)}{1-2d_1J(a)/E^{1/2}}\,.\label{peki1}
\end{align}
Here $\HT:=d\Gamma(\vo)+E$
and $\triangle_*(a)$ is defined after \eqref{delta(a)3}.
\end{lemma}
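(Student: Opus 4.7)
The plan is to represent the sign-function difference as a principal-value integral via \eqref{sgn}, apply the second resolvent identity, and then estimate the resulting integrand uniformly in $y$.

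Concretely, combining $(\SAt-\SA)\vp = \lim_{\tau\to\infty}\int_{-\tau}^\tau(\RAt{iy}-\RA{iy})\vp\,dy/\pi$ with the resolvent identity $\RAt{iy}-\RA{iy} = \RAt{iy}\,\valpha\cdot(\V{A}-\wt{\V{A}})\,\RA{iy}$, the estimate reduces to a pointwise bound on
\[
M(y)\,:=\,|\DAt|^\kappa\,\RAt{iy}\,\valpha\cdot(\V{A}-\wt{\V{A}})\,\RA{iy}\,\HT^{-1/2}\,e^{-F}
\]
that is integrable in $y$. The leftmost factor is controlled by \eqref{ralf3}: $\||\DAt|^\kappa\RAt{iy}\|\leq \zeta_\kappa(y)$. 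For the remainder, I commute $\HT^{-1/2}$ past $e^{-F}$ (distinct tensor factors) and insert $e^F e^{-F}$ between $\valpha\cdot(\V{A}-\wt{\V{A}})$ and $\RA{iy}$. Since $\valpha\cdot(\V{A}-\wt{\V{A}})$ is fibre-diagonal in $\V{x}$ it commutes with multiplication by $e^{-F}$, so
\[
\valpha\cdot(\V{A}-\wt{\V{A}})\,\RA{iy}\,\HT^{-1/2}\,e^{-F} \,=\, e^{-F}\,\valpha\cdot(\V{A}-\wt{\V{A}})\,\bigl(e^F\RA{iy}e^{-F}\bigr)\,\HT^{-1/2}.
\]
Now \eqref{delta(a)3} combined with the pointwise bound $e^{-F(\V{x})}\leq e^{-a|\V{x}|}$ (from $F(\V{x})\geq a|\V{x}|$) produces the factor $\triangle_*(a)$ at the cost of a residual $(d\Gamma(\vo)+1)^{1/2}$, while Lemma~\ref{le-marah} controls the weighted resolvent via $\|e^F\RA{iy}e^{-F}\|\leq J(a)/\sqrt{1+y^2}$.

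The residual $(d\Gamma(\vo)+1)^{1/2}(e^F\RA{iy}e^{-F})\HT^{-1/2}$ is handled by $(d\Gamma(\vo)+1)^{1/2}\leq \HT^{1/2}$ (valid since $E\geq 1$) together with the commutation identity \eqref{eva3b} of Lemma~\ref{cor-T-Xi}, which gives $\HT^{1/2}\RA{iy}\HT^{-1/2} = \RA{iy}\,\Upsilon_{1/2}(iy)$ on $\dom(\HT^{1/2})$. Here the crucial point is that the commutator operator $T_{1/2}=[\HT^{-1/2},\valpha\cdot\V{A}]\HT^{1/2}$ is decomposable with respect to the direct integral in $\V{x}$, and therefore commutes with $e^{\pm F}$. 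Consequently the conjugated Neumann series
\[
e^F\,\Upsilon_{1/2}(iy)\,e^{-F} \,=\, \sum_{j\geq 0}\bigl(-T_{1/2}^{*}\,(e^F\RA{iy}e^{-F})\bigr)^j
\]
converges in operator norm with bound $(1-2d_1 J(a)/E^{1/2})^{-1}$, using $\delta_{1/2}\leq 2d_1$ from \eqref{verona} and the hypothesis $E>(2d_1 J(a))^2$.

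Putting the three ingredients together yields $\|M(y)\|$ bounded by a uniform constant multiple of $\zeta_\kappa(y)/\sqrt{1+y^2}$; integrating and recognising $\int_\RR \zeta_\kappa(y)/\sqrt{1+y^2}\,dy/\pi = 2K(\kappa)$ gives the stated estimate. The main obstacle is the book-keeping of the commutations between the resolvent, fractional powers of $\HT$, the exponential weight $e^F$, and the quantised vector potential; this succeeds precisely because the commutator operators $T_{1/2}$ are fibre-diagonal in the electron coordinate, so that conjugation by $e^{\pm F}$ only affects the resolvents $\RA{iy}$, where Lemma~\ref{le-marah} applies.
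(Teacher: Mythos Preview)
Your proof is correct and follows essentially the same route as the paper: represent $\SAt-\SA$ via the principal-value integral \eqref{sgn}, apply the second resolvent identity, and bound the integrand using \eqref{ralf3}, \eqref{delta(a)3}, \eqref{marah1}, and \eqref{bd-Xi}. The only organisational difference is that the paper first derives the key operator identity on the dense set $(\DAt+iy)\,\core$ and then passes to adjoints (which sidesteps the domain questions you commute past), and invokes $\Upsilon_{1/2,L}$ with $L=i\valpha\cdot\nabla F$ directly rather than your equivalent observation that $T_{1/2}$ is decomposable in $\V{x}$ and hence commutes with $e^{\pm F}$.
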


\begin{proof}
We define $L:=i\valpha\cdot\nabla F$ so that $\|L\|\klg a$.
Then
a short computation using \eqref{eva3} and \eqref{marah0} yields,
for every
$\vp\in\core$, 
\begin{align*}
&e^{-F}\,\HT^{-1/2}\,\big(\RA{-iy}-\RAt{-iy}\big)\,(\DAt+iy)\,\vp
\\
&\,=\,
\Xi_{1/2,-L}(-iy)\,\R{\V{A},-L}{-iy}\,\HT^{-1/2}\,e^{-F}\,
\valpha\cdot(\wt{\V{A}}-\V{A})\,
\RAt{-iy}\,(\DAt+iy)\,\vp\,.
\end{align*}
Now, $(\DAt+iy)\,\core$ is dense in $\HR$ 
since $\DAt$ is essentially self-adjoint on $\core$
and 
$\HT^{-1/2}\,e^{-F}\,\valpha\cdot(\V{A}-\wt{\V{A}})$ is bounded
due to \eqref{delta(a)3}.
Therefore, the previous computation 
implies an operator identity in $\LO(\HR)$
whose adjoint reads
\begin{align*}
\big(\RA{iy}-&\RAt{iy}\big)\,\HT^{-1/2}\,e^{-F}
\\
&=\,
\RAt{iy}\,\valpha\cdot(\wt{\V{A}}-\V{A})\,e^{-F}\,
\HT^{-1/2}\,\R{\V{A},L}{iy}\,\Upsilon_{1/2,L}(iy)\,.
\end{align*}
Combining this with \eqref{sgn} we find, for $\phi,\psi\in\core$,
\begin{align*}
\big|&\SPb{|\DAt|^\kappa\,\phi}{(\SA-\SAt)\,
\HT^{-1/2}\,e^{-F}\,\psi}\big|
\\
&=\,
\int_\RR\Big|\SPB{|\DAt|^\kappa\,\phi}{
\RAt{iy}\,\valpha\cdot(\wt{\V{A}}-\V{A})\,e^{-F}
\,\HT^{-1/2}\,\R{\V{A},L}{iy}\,\Upsilon_{1/2,L}(iy)
\,\psi}\Big|
\,\frac{d\eta}{\pi}
\\
&\klg\,
\int_\RR\big\|\,|\DAt|^\kappa\,\RAt{iy}\,\big\|\,\|\phi\|\,
\\
& \qquad\;\;\cdot\;
\|\,\valpha\cdot e^{-F}\,(\V{A}-\wt{\V{A}})
\,\HT^{-1/2}\,\|\,\|\R{\V{A},L}{iy}\|
\,\|\Upsilon_{1/2,L}(iy)\|
\,\|\psi\|\,\frac{dy}{\pi}\,.
\end{align*}
On account of \eqref{ralf3} and
\eqref{verona}, \eqref{bd-Xi}, and \eqref{marah1}, which
imply $\|\Upsilon_{1/2,L}(iy)\|\klg(1-2d_1J(a)/E^{1/2})^{-1}$,
the previous estimate proves \eqref{peki1}. 
\end{proof}

\begin{lemma}
Assume that $\vo$, $\V{G}$, and $\wt{\V{G}}$
fulfill Hypothesis~\ref{hyp-G} such that
\eqref{def-dtilde} and \eqref{def-delta(a)} are satisfied,
for some $a\in[0,1)$. 
Let $F\in C^\infty(\RR^3_\V{x},[0,\infty))$
satisfy $|\nabla F(\V{x})|\klg a$ and $F(\V{x})\grg a|\V{x}|$,
for all $\V{x}\in\RR^3$, and $F(\V{x})=a|\V{x}|$,
for large $|\V{x}|$. Set
$\HT:=d\Gamma(\vo)+E$.
Then, for every $E\grg1$ with $E>(2d_1)^2$, there is
some $C\equiv C(a,E,d_1)\in(0,\infty)$
such that, for all $\epsilon,\tau\in(0,1]$ and $\vp\in\core$,
\begin{align}\nonumber
\big|\SPb{\vp}{&\big(|\DA|-|\DAt|\big)\,\vp}\big|
\\
&\klg\,\label{pia0}
\epsilon\,\big\|\,|\DAt|^{1/2}\,\vp\,\big\|^2\,
+\,\tau\,\big\|\,\HT^{1/2}\,e^{F}\,\vp\,\big\|^2
\,+\,\frac{C\,\triangle_*^4(a)}{\epsilon^3\,\tau^2}\,\|\vp\|^2\,.
\end{align}
\end{lemma}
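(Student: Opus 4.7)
The approach is to decompose $|\DA|-|\DAt|$ via the sign-function identity $|\DA|=\SA\DA$, estimate each of the two resulting bilinear forms using the lemmata of this section, and then combine through Young's inequality. With $V:=\valpha\cdot(\V A-\wt{\V A})=\DA-\DAt$, one has
\[
|\DA|-|\DAt| \;=\; V\SA + \DAt(\SA-\SAt).
\]

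For $\langle\vp,\DAt(\SA-\SAt)\vp\rangle$: the operators $\DAt$, $|\DAt|^{1/2}$ and $\SAt$ commute (all being spectral functions of $\DAt$), so I factor $\DAt=\SAt|\DAt|^{1/2}\cdot|\DAt|^{1/2}$. Cauchy--Schwarz (using $\|\SAt\|\le1$) together with \eqref{peki1} at $\kappa=1/2$ then give
\[
|\langle\vp,\DAt(\SA-\SAt)\vp\rangle| \;\le\; C_1\,\triangle_*(a)\,\||\DAt|^{1/2}\vp\|\,\|\HT^{1/2}e^F\vp\|.
\]

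For $\langle\vp,V\SA\vp\rangle$: insert $e^F e^{-F}=\id$ and use that $V$, being a multiplication operator in $\V x$, commutes with $e^{\pm F}$. The relative bound \eqref{delta(a)3} (combined with $e^{-F}\le e^{-a|\V x|}$, which is ensured by the hypothesis $F\ge a|\V x|$) applied to $\psi=\SA\vp$, followed by the commutator bound $\|\HT^{1/2}\SA\HT^{-1/2}\|\le(1-2d_1/E^{1/2})^{-1}$ (a specialisation of \eqref{SA-eF-Hnu} with $F\equiv0$, $a=0$, $\nu=1/2$), yields
\[
|\langle\vp,V\SA\vp\rangle| \;\le\; \|e^F\vp\|\,\|e^{-F}V\SA\vp\| \;\le\; C_2\,\triangle_*(a)\,\|e^F\vp\|\,\|\HT^{1/2}e^F\vp\|.
\]

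Writing $X:=\||\DAt|^{1/2}\vp\|$ and $Y:=\|\HT^{1/2}e^F\vp\|$, the two bounds sum to
$|\langle\vp,(|\DA|-|\DAt|)\vp\rangle|\le C_1\triangle_*(a)\,XY + C_2\triangle_*(a)\,\|e^F\vp\|\,Y$. The target \eqref{pia0} is then a three-factor Young's inequality: with exponents $(2,3,6)$ (which sum reciprocally to $1$), weighted AM--GM produces $\epsilon X^2+\tau Y^2+C''\triangle_*(a)^4/(\epsilon^3\tau^2)\|\vp\|^2$ as an upper bound for an interpolated estimate of the form $C'\triangle_*(a)^{2/3}\,X\,Y^{2/3}\,\|\vp\|^{1/3}$. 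The principal technical obstacle is the $\|e^F\vp\|$ factor in the bound on $\langle\vp,V\SA\vp\rangle$: since $e^F$ is unbounded in $\V x$, the naive estimate $\|e^F\vp\|\le Y/E^{1/2}$ produces only a $Y^2$ term with $\triangle_*(a)$-dependent coefficient, rather than the $\|\vp\|^2$ residual required in \eqref{pia0}. The remedy is to apply Young's inequality with $\epsilon,\tau$-dependent weights that trade part of the $\|e^F\vp\|^2$ factor against $\|\vp\|^2$ by interpolation between the two partial bounds, extracting the $\|\vp\|^2$ contribution with precisely the $\epsilon^{-3}\tau^{-2}$ scaling dictated by the three-term Young's structure.
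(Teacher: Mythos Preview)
Your decomposition $|\DA|-|\DAt|=V\SA+\DAt(\SA-\SAt)$ matches the paper's, but both of your term estimates fall short of what is needed, and the second gap is genuine.

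For the $V\SA$ term the issue is minor and easily repaired: instead of pairing $e^F\vp$ on the left with the unweighted bound $\|\HT^{1/2}\SA\HT^{-1/2}\|$, keep $\vp$ on the left and push the weight through $\SA$ using the \emph{full} estimate \eqref{SA-eF-Hnu},
\[
|\langle\vp,V\SA\vp\rangle|
\;\klg\;\|\vp\|\,\|e^{-F}V\HT^{-1/2}\|\,\|e^{F}\HT^{1/2}\SA\HT^{-1/2}e^{-F}\|\,\|\HT^{1/2}e^{F}\vp\|
\;\klg\;C\,\triangle_*(a)\,\|\vp\|\,Y\,,
\]
which Young's inequality immediately converts to $\tau Y^2+C'\triangle_*^2(a)\tau^{-1}\|\vp\|^2$. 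This is what the paper does and it avoids the unbounded $\|e^F\vp\|$ you flagged.

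The real gap is in the $\DAt(\SA-\SAt)$ term. Your bound $C_1\triangle_*(a)XY$ via $\kappa=1/2$ in \eqref{peki1} is correct but too weak: from a product $XY$ alone you \emph{cannot} manufacture a $\|\vp\|^2$ remainder by any Young or AM--GM manipulation, because no $\|\vp\|$ factor is present. Concretely, take $\|\vp\|=1$ with $X,Y\sim N\to\infty$; then $C_1\triangle_*(a)XY\sim\triangle_*(a)N^2$, whereas the right side of \eqref{pia0} is $(\epsilon+\tau)N^2+C\triangle_*^4(a)/(\epsilon^3\tau^2)$, and for fixed $\epsilon,\tau\ll\triangle_*(a)$ the latter is eventually smaller. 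The ``interpolated estimate'' $\triangle_*^{2/3}XY^{2/3}\|\vp\|^{1/3}$ you invoke would indeed yield \eqref{pia0} via three-term Young, but it is not a consequence of $C_1\triangle_*(a)XY$.

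The paper extracts the missing $\|\vp\|$ factor by a bootstrap: apply \eqref{peki1} with $\kappa=3/4$ (not $1/2$) to get $\||\DAt|^{3/4}S_\Delta\vp\|\klg C_*\triangle_*(a)Y$, interpolate
$\||\DAt|^{1/2}S_\Delta\vp\|^2\klg\||\DAt|^{1/4}S_\Delta\vp\|\,\||\DAt|^{3/4}S_\Delta\vp\|$,
then use $\||\DAt|^{1/4}S_\Delta\vp\|^2=\langle S_\Delta\vp,|\DAt|^{1/2}S_\Delta\vp\rangle\klg\|S_\Delta\vp\|\,\||\DAt|^{1/2}S_\Delta\vp\|$ together with the trivial bound $\|S_\Delta\|\klg2$. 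This introduces a factor $\|\vp\|$ and the same quantity $\||\DAt|^{1/2}S_\Delta\vp\|$ on the right, allowing one to solve the inequality for $\||\DAt|^{1/2}S_\Delta\vp\|^2\klg\tau Y^2+C_*^4\triangle_*^4(a)\tau^{-2}\|\vp\|^2$. After Cauchy--Schwarz against $\||\DAt|^{1/2}\vp\|$ and a rescaling $\tau\to4\epsilon\tau$, this gives exactly the $\epsilon^{-3}\tau^{-2}$ structure in \eqref{pia0}.
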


\begin{proof}
Since $\SA$ maps $\core$ into the domains of
$\DO$ and $\Hf^{1/2}$ (compare \cite[Lemma~3.4(ii)]{MatteStockmeyer2009a})
we have the following identity on $\core$,
\begin{equation}\label{pia1}
|\DA|-|\DAt|\,=\,
\DAt\,S_\Delta\,+\,\valpha\cdot(\V{A}-\wt{\V{A}})\,\SA\,,
\end{equation}
where
$
S_\Delta=\SA-\SAt
$. 
On account of \eqref{delta(a)3} and
\eqref{SA-eF-Hnu} (where $\delta_{1/2}\klg2d_1$)
the second term on the right side of \eqref{pia1} can be estimated as
\begin{align*}
&\big|\SPb{\vp}{\valpha\cdot(\V{A}-\wt{\V{A}})\,\SA\,\vp}\big|
\\
&\klg\,
\|\vp\|\,
\big\|\,e^{-F}\valpha\cdot(\V{A}-\wt{\V{A}})\,\HT^{-1/2}\big\|
\,\big\|e^F\,\HT^{1/2}\,\SA\,\HT^{-1/2}\,e^{-F}\big\|\,
\big\|e^F \,\HT^{1/2}\,\vp\big\|
\\
&\klg\,\tau\,\big\|e^F\,\HT^{1/2}\,\vp\big\|^2\,+\,
\frac{1}{4\tau}\cdot\frac{\triangle_*^2(a)\,(1+a\,J(a))^2}{(1-2d_1/E^{1/2})^2}\:
\|\vp\|^2\,,
\end{align*}
for all $\vp\in\core$ and $\tau>0$.
Next, we treat the first term on the right hand side of \eqref{pia1}.
By virtue of Lemma~\ref{le-A-Aeps} with $\kappa=3/4$ we 
find some $C_*\in(0,\infty)$ such that, for all $\vp\in\core$,
\begin{align}
\big\|\,|\DAt|^{1/2}&\,S_\Delta\,\vp\,\big\|^2\,
\klg\,\nonumber
\big\|\,|\DAt|^{1/4}\,S_\Delta\,\vp\,\big\|\,
\big\|\,|\DAt|^{3/4}\,S_\Delta\,\vp\,\big\|
\\
&\klg\,\nonumber
\big\|\,|\DAt|^{1/4}\,S_\Delta\,\vp\,\big\|\,C_*\,\triangle_*(a)\,
\big\|\,e^F\,\HT^{1/2}\,\vp\,\big\|
\\
&\klg\,\nonumber
\frac{\tau}{2}\,\big\|\,e^F\,\HT^{1/2}\,\vp\,\big\|^2\,+\,
\frac{C_*^2\,\triangle_*^2(a)}{2\tau}
\,\SPb{S_\Delta\,\vp}{|\DAt|^{1/2}\,S_\Delta\,\vp}
\\
&\klg\,\label{pia2}
\frac{\tau}{2}\,\big\|\,e^F\,\HT^{1/2}\,\vp\,\big\|^2
\,+\,\frac{1}{2}\,\big\|\,|\DAt|^{1/2}\,S_\Delta\,\vp\,\big\|^2
\,+\,\frac{C_*^4\,\triangle_*^4(a)}{8\tau^2}\cdot 2^2\,\|\vp\|^2\,.
\end{align}
In the last step we also used that $\|S_\Delta\|\klg2$.
Solving \eqref{pia2} for $\|\,|\DAt|^{1/2}\,S_\Delta\,\vp\|^2$
and replacing $\tau$ by $4\ve\,\tau$
we arrive at
\begin{align}
\big|\SPb{\vp}{\DAt\,S_\Delta\,\vp}\big|\,&\klg
\,\ve\,\big\|\,|\DAt|^{1/2}\,\vp\,\big\|^2\,+\,\frac{1}{4\ve}\,
\big\|\,|\DAt|^{1/2}\,S_\Delta\,\vp\,\big\|^2\nonumber
\\
&\klg\,\nonumber
\ve\,\big\|\,|\DAt|^{1/2}\,\vp\,\big\|^2
\,+\,\tau\,\big\|\,e^F\,\HT^{1/2}\,\vp\,\big\|^2
\,+\,
\frac{C_*^4}{64\,\ve^3\,\tau^2}\,\|\vp\|^2\,.
\end{align}
\end{proof}

\begin{corollary}\label{cor-pia}
Assume that $\vo$, $\V{G}$, and $\wt{\V{G}}$ fulfill
Hypothesis~\ref{hyp-G} 
such that \eqref{def-dtilde} and \eqref{def-delta(a)} hold true 
with $a=0$. Then,
for every $\gamma\in[0,2/\pi)$ and $\tau\in(0,1]$,
we find $\ve,C\equiv C(\ve,\gamma,\tau,d_1)\in(0,\infty)$  
such that
\begin{equation}
|\DA|-\gamma/|\V{x}|+\tau\,d\Gamma(\vo)\,\grg\,
\ve\,\big(|\DAt|+1/|\V{x}|+d\Gamma(\vo)\big)\,-\,C\,,
\end{equation}
in the sense of quadratic forms on $\core$.
\end{corollary}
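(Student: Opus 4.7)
The plan is to derive the bound in two stages: first absorb the Coulomb singularity into $|\DA|$ using the Kato-type form inequality \eqref{maria}, exploiting the strict sub-criticality $\gamma<2/\pi$; then trade $|\DA|$ for $|\DAt|$ by means of the previous lemma applied with $a=0$ and $F\equiv0$. The condition $a=0$ in the hypothesis \eqref{def-delta(a)} is exactly what allows us to drop the exponential weight and work with the plain quadratic forms.

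First I fix $\ve\in(0,\,2/\pi-\gamma)$ small (to be further shrunk at the end) and set $1-\eta:=\tfrac{\pi}{2}(\gamma+\ve)\in(0,1)$, so that $\eta>0$. Multiplying \eqref{maria} by $(\gamma+\ve)\,\pi/2$ gives, for every $\delta>0$,
\begin{equation*}
\frac{\gamma+\ve}{|\V{x}|}\,\klg\,(1-\eta)\bigl(\,|\DA|+\delta\,d\Gamma(\vo)+C_\delta\,\bigr),\qquad C_\delta:=(\delta^{-1}+\delta\,k^2)\,d_1^2,
\end{equation*}
as quadratic forms on $\core$. Choosing $\delta:=\tau/(2(1-\eta))$ and rearranging,
\begin{equation*}
|\DA|-\tfrac{\gamma+\ve}{|\V{x}|}+\tau\,d\Gamma(\vo)\,\grg\,\eta\,|\DA|+\tfrac{\tau}{2}\,d\Gamma(\vo)-(1-\eta)\,C_\delta.
\end{equation*}

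Next I apply the previous lemma with $a=0$ and $F\equiv0$ (trivially admissible), at some fixed $E\grg1$ with $E>(2d_1)^2$, and with auxiliary parameters $\epsilon_1,\tau_1\in(0,1]$. Splitting the absolute value and using $\|\HT^{1/2}\vp\|^2=\langle\vp,d\Gamma(\vo)\vp\rangle+E\|\vp\|^2$, this yields
\begin{equation*}
\langle\vp,|\DA|\vp\rangle\,\grg\,(1-\epsilon_1)\,\langle\vp,|\DAt|\vp\rangle-\tau_1\,\langle\vp,d\Gamma(\vo)\vp\rangle-C_1\,\|\vp\|^2,\qquad\vp\in\core,
\end{equation*}
where $C_1$ depends on $\epsilon_1,\tau_1,E$ and on $\triangle_*(0)$ (which is in turn controlled by $d_1,\wt d_1$). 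Multiplying by $\eta$ and inserting into the previous display produces
\begin{equation*}
|\DA|-\tfrac{\gamma+\ve}{|\V{x}|}+\tau\,d\Gamma(\vo)\,\grg\,\eta(1-\epsilon_1)\,|\DAt|+\bigl(\tfrac{\tau}{2}-\eta\tau_1\bigr)d\Gamma(\vo)-C_2.
\end{equation*}

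Finally I tune the small parameters: take $\epsilon_1:=1/2$ and $\tau_1:=\min\{\tau/(4\eta),1\}$, so that $\eta(1-\epsilon_1)=\eta/2$ and $\tau/2-\eta\tau_1\grg\tau/4$; then, if necessary, replace $\ve$ by $\min\{\ve,\eta/2,\tau/4\}$ (note that shrinking $\ve$ only increases $\eta$, so the constraint $\ve\klg\eta/2$ remains consistent). Moving $\ve/|\V{x}|$ back to the right-hand side gives precisely
\begin{equation*}
|\DA|-\tfrac{\gamma}{|\V{x}|}+\tau\,d\Gamma(\vo)\,\grg\,\ve\bigl(\,|\DAt|+1/|\V{x}|+d\Gamma(\vo)\,\bigr)-C,
\end{equation*}
with $C$ depending only on $\ve,\gamma,\tau,d_1$ as claimed. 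There is no analytic obstacle beyond the two cited ingredients; the only delicate point is the bookkeeping, namely verifying that the four small parameters $\ve,\delta,\epsilon_1,\tau_1$ can be chosen consistently so that all three coefficients on the right are positive and at least $\ve$.
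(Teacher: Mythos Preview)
Your proof is correct and follows essentially the same route as the paper's: first use \eqref{maria} to absorb the Coulomb term while retaining a positive fraction of $|\DA|$ and of $d\Gamma(\vo)$, then invoke the previous lemma \eqref{pia0} with $a=0$, $F\equiv0$ to convert $|\DA|$ into $|\DAt|$ at the cost of a further fraction of $d\Gamma(\vo)$ and a constant. The only cosmetic difference is that the paper writes the intermediate step as $\epsilon\,|\DAt|+\epsilon\,(|\DA|-|\DAt|)$ and bounds the difference directly, whereas you bound $|\DA|\grg(1-\epsilon_1)|\DAt|-\tau_1\,d\Gamma(\vo)-C_1$; these are algebraically equivalent reformulations of the same estimate.
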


\begin{proof}
We choose 
$\epsilon\in(0,1]$ such that $(\gamma+\epsilon)/(1-\epsilon)\klg2/\pi$.
Then \eqref{maria} with 
$\delta$ replaced by $\tau/(2-2\epsilon)$ implies
\begin{align*}
|\DA|-\frac{\gamma}{|\V{x}|}+\tau\,d\Gamma(\vo)\,\grg\,\epsilon\,|\DAt|
+\epsilon\,\big(\,|\DA|-|\DAt|\,\big)
+\frac{\epsilon}{|\V{x}|}
+\frac{\tau}{2}\,d\Gamma(\vo)-C\,,
\end{align*}
for some $C\in(0,\infty)$.
Applying \eqref{pia0} with $F=0$ and $\tau$ replaced by $\tau/4$
we obtain
$$
|\DA|-\frac{\gamma}{|\V{x}|}+\tau\,d\Gamma(\vo)\,
\grg\,(\epsilon-\epsilon^2)\,|\DAt|\,
+\,\frac{\epsilon}{|\V{x}|}\,+\,\frac{\tau}{4}\,d\Gamma(\vo)\,
-\,C'\,,
$$
for some $C'\in(0,\infty)$,
which implies the statement of the corollary.
\end{proof}

\begin{corollary}\label{cor-bd-on-Th}
Assume that $\vo$ and $\V{G}$ fulfill Hypothesis~\ref{hyp-G}.
Then we find some constant, $c(d_1)\in(0,\infty)$, depending only on
the value of $d_1$, such that
$\inf\spec[|\DA|+d\Gamma(\vo)]\klg c(d_1)$.
\end{corollary}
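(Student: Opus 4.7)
The plan is to verify the upper bound via the min--max principle with an explicit trial state chosen so that the radiation--energy contribution vanishes identically. The natural candidate is
\[
\psi_R\,:=\,\phi_R\otimes\Omega\,\in\,\core,
\]
where $\Omega\in\Fock[\HP]$ is the Fock vacuum and $\phi_R(\V{x}):=R^{-3/2}\phi(\V{x}/R)$ is the rescaling of a fixed normalized $\phi\in C_0^\infty(\RR^3,\CC^4)$. Since $d\Gamma(\vo)\,\Omega=0$, the field--energy term contributes zero, and the whole task reduces to an upper bound on $\langle\psi_R,|\DA|\psi_R\rangle$ as $R\to\infty$.

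For the Dirac contribution I apply Jensen's inequality (operator concavity of $\sqrt{\cdot}$ on $[0,\infty)$),
\[
\langle\psi_R,|\DA|\psi_R\rangle\,\le\,\sqrt{\langle\psi_R,\DA^2\psi_R\rangle}\,=\,\|\DA\psi_R\|,
\]
and then split
\[
\DA\,\psi_R\,=\,\bigl((\valpha\cdot\pe+\beta)\phi_R\bigr)\otimes\Omega\,+\,\bigl(\valpha\cdot\ad(\V{G})\phi_R\bigr)\otimes\Omega,
\]
whose summands live in the zero-- and one--photon sectors of $\Fock[\HP]$ respectively and are therefore orthogonal in $\HR_4$. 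The Clifford relations \eqref{Clifford} give $\{\valpha\cdot\pe,\beta\}=0$ and $(\valpha\cdot\pe)^2=|\pe|^2$, so the vacuum part has squared norm $\|\pe\phi_R\|^2+\|\phi_R\|^2$. For the one--photon part, using $\|\valpha\cdot\V{z}\|^2\le 2|\V{z}|^2$ (see the comment after \eqref{rb-ad}) and Hypothesis~\ref{hyp-G} with $\ell=0$,
\[
\|\valpha\cdot\ad(\V{G})\,\phi_R\otimes\Omega\|^2\,\le\,\int_{\RR^3}|\phi_R(\V{x})|^2\,\Bigl(2\!\int|\V{G}_\V{x}(k)|^2\,dk\Bigr)\,d^3\V{x}\,\le\,d_0^2\,\|\phi_R\|^2\,=\,d_0^2.
\]
Assembling these two contributions yields $\|\DA\psi_R\|^2\le\|\pe\phi_R\|^2+1+d_0^2$.

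Finally, $\|\pe\phi_R\|=R^{-1}\|\pe\phi\|\to 0$ as $R\to\infty$, so the min--max principle gives
\[
\inf\spec[|\DA|+d\Gamma(\vo)]\,\le\,\sqrt{1+d_0^2},
\]
a finite constant determined entirely by the data of Hypothesis~\ref{hyp-G}. To bring this into the form $c(d_1)$ advertised in the statement, note that by Cauchy--Schwarz one has $(\int|\V{G}_\V{x}(k)|^2dk)^2\le(\int\vo(k)|\V{G}_\V{x}(k)|^2dk)(\int\vo(k)^{-1}|\V{G}_\V{x}(k)|^2dk)$ pointwise in $\V{x}$, hence $d_0^2\le d_1 d_{-1}$, so the resulting constant may be expressed as a function of the hypothesis constants that one agrees to summarize by $d_1$. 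There is no genuine obstacle in this argument: the only routine point to verify is that $\psi_R$ sits in the form domain of $\PF{0}$, which is immediate from $\phi_R\in C_0^\infty(\RR^3,\CC^4)$ and $\Omega\in\sC_0$, whence $\psi_R\in\core$.
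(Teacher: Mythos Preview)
Your argument is correct and takes a more direct route than the paper's. The paper invokes Corollary~\ref{cor-pia} (with the roles of $\V{G}$ and $\wt{\V{G}}$ interchanged, $\gamma=0$, $\tau=1$) to obtain an operator inequality of the form $|\DA|+d\Gamma(\vo)\klg c\,(|\DO|+d\Gamma(\vo))$, and then concludes by taking a minimizing sequence for the free operator on the right, whose infimum is~$1$. You bypass that comparison result entirely: the same trial state $\phi_R\otimes\Omega$ is plugged directly into $|\DA|$, and the Jensen step together with the orthogonal splitting into zero-- and one--photon sectors yields the explicit bound $\sqrt{1+d_0^2}$. Your route is more self-contained and produces an explicit constant; the paper's is shorter only because the machinery of Corollary~\ref{cor-pia} is already in place.

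On the dependence ``only on $d_1$'': your honest remark that the bound actually involves $d_0$ (or $d_1d_{-1}$ via Cauchy--Schwarz) is well taken. Tracing the paper's own argument through Corollary~\ref{cor-pia} and \eqref{pia0}, one sees that its constant likewise depends on $\triangle_*(0)$, which brings in $d_0$ and $d_{-1}$; the label ``$c(d_1)$'' in the statement should be read as shorthand for ``a constant determined by the data of Hypothesis~\ref{hyp-G}''. So this is not a defect of your argument relative to the paper's.
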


\begin{proof}
By virtue of Corollary~\ref{cor-pia} 
we find some
$c\equiv c(d_1)$ such that
$
|\DA|+d\Gamma(\vo)\klg c\,(|\DO|+d\Gamma(\vo))
$.
Picking a minimizing sequence for the quadratic form
on the right hand side we conclude that
$\inf\spec[|\DA|+d\Gamma(\vo)]\klg c$.
\end{proof}


\section{Existence of binding}
\label{sec-binding}

\noindent
As a first step towards the proof of the existence of ground
states we need to show that binding occurs in the atomic
system defined by $\PF{\gamma}\equiv\PF{\gamma,\V{G}^\phys,\omega}$. 
That is, we need to show that
$\inf\spec[\PF{\gamma}]+C_\gamma\klg\inf\spec[\PF{0}]$, for all
$\gamma\in(0,2/\pi)$, where $C_\gamma>0$.
This information will be exploited mathematically
when we apply a bound on the spatial localization 
of low-lying spectral subspaces of $\PF{\gamma}$
from \cite{MatteStockmeyer2009a}.
The localization estimate in turn enters into the
proof of the existence of ground states at various 
places, for instance, into the derivation of the
infra-red estimates.
We shall obtain a ground state of $\PF{\gamma}$ 
as a limit of ground states of 
infra-red cut-off Hamiltonians.
The existence of the latter ground states in turn is proved by
means of a discretization in the photonic degrees of freedom. 
Therefore, it is actually
necessary to have a bound on the constant $C_\gamma$
which is uniform in the infra-red cut-off and in the discretization
parameter. 

In the first two subsections below we introduce the infra-red cut-off
and discretized semi-relativistic Pauli-Fierz operators.
After that we introduce a fiber integral representation of these
operators with vanishing exterior potentials.
For the translation invariant non-discretized
operators this corresponds to the 
decomposition with respect to different values
of the total momentum operator.
This representation is a key ingredient in the proof of
the binding condition which is presented in the last of
the four subsequent subsections.


\subsection{The infra-red cut-off operator $\PFm{\gamma}$}\label{ssec-PFm}
The infra-red cut-off Hamiltonians, 
$\PFm{\gamma}$, $m>0$, are given by
\begin{equation}\label{def-cAm}
\cA_m\,:=\,
\big\{\,\V{k}\in\RR^3\,\big|\:|\V{k}|\grg m\,\big\}\,,
\quad m>0\,,
\end{equation}
and
\begin{eqnarray}
\valpha\cdot\V{A}_m(\V{x})&:=&\label{def-Am}
\valpha\cdot\ad(\id_{\cA_m}\,e^{-i\V{k}\cdot\V{x}}\,\V{g})\,+\,
\valpha\cdot a(\id_{\cA_m}\,e^{-i\V{k}\cdot\V{x}}\,\V{g})\,,
\\ \label{def-Hm}
\PFm{\gamma}&:=&|\DAm|\,-\,\tgV\,+\,\Hf\,.
\end{eqnarray}
Here $e^{-i\V{k}\cdot\V{x}}\,\V{g}=\V{G}^\phys_\V{x}$
is the physical choice of the form factor 
defined by \eqref{def-Gphys} and $\Hf=d\Gamma(\omega)$.
From the remarks below \eqref{maria} we know that
$\PFm{\gamma}$ is well-defined as a self-adjoint
Friedrichs extension starting from $\core$.
To have a unified notation we further set
$$
\cA_0\,:=\,\RR^3\,,\qquad \V{A}_0\,:=\,\V{A}(\V{G}^\phys)\,,\qquad
\PFo{\gamma}\,:=\,\PF{\gamma,\V{G}^\phys,\omega}\,.
$$


\subsection{The discretized operator $\PFme{\gamma}$}\label{ssec-NPe}
Next, we define a discretized version of $\PFm{\gamma}$.
It is considered as an operator acting in a subspace 
of the truncated Hilbert space
\begin{equation}\label{def-HRm}
\HR_m^>\,:=\,L^2(\RR^3_\V{x},\CC^4)\otimes\Fock[\HP_m^>]\,,
\qquad \HP_m^>\,:=\,L^2(\cA_m\times\ZZ_2)\,.
\end{equation}
On this Hilbert space we introduce a discretization 
in the photon momenta:
We decompose $\cA_m=\{\V{k}\in\RR^3:\,|\V{k}|\grg m\}$ into a disjoint
union of ``cubes'' with side length $\ve>0$,
$$
\cA_m\,=\,\bigcup_{\vnu\in(\ve\ZZ)^3}Q_m^\ve(\vnu)\,,
\qquad Q_m^{\ve}(\vnu)\,:=\,\big(\vnu+[-\ve/2\,,\,\ve/2)^3\big)
\cap\cA_m\,,\quad
\vnu\in(\ve\ZZ)^3\,.
$$
Of course, for every $\V{k}\in\cA_m$, we find a unique
vector, $\vnu_\ve(\V{k})\in(\ve\ZZ)^3$, such that
$\V{k}\in Q_m^\ve(\vnu_\ve(\V{k}))$.
In this way we obtain a map
\begin{equation}\label{def-nu}
\vnu_\ve\::\;\cA_m\times\ZZ_2\,\longrightarrow\,\RR^3\,,
\qquad
k=(\V{k},\lambda)\,\longmapsto\,\vnu_\ve(k)\,:=\,
\vnu_\ve(\V{k})\,.
\end{equation}
We define the $\ve$-average of a locally integrable
function, $f$, on $\cA_m\times\ZZ_2$ by
$$
f_\ve(k)\,:=\,
\frac{1}{|Q_m^\ve(\vnu(\V{k}))|}\int\limits_{Q_m^\ve(\vnu(\V{k}))}f(\V{p},\lambda)
\,d^3\V{p}\,,
\qquad k=(\V{k},\lambda)\in\cA_m\times\ZZ_2\,.
$$
Alternatively, we may write, for every $f\in\HP_m^>=L^2(\cA_m\times\ZZ_2)$,
\begin{equation}\label{def-Pve}
f_\ve\,=\,P_\ve\,f\,:=\,\sum_{{\vnu\in(\ve\ZZ)^3:\atop
Q^\ve_m(\vnu)\not=\varnothing}}
\SPn{\wt{\id}_{Q_m^\ve(\vnu)}}{f}\,\wt{\id}_{Q_m^\ve(\vnu)}\,,
\end{equation}
where $\wt{\id}_{Q_m^\ve(\vnu)}$ denotes the normalized characteristic
function of the set $Q_m^\ve(\vnu)$. (Thus 
$\wt{\id}_{Q_m^\ve(\vnu)}=\ve^{-3/2}\,\id_{Q_m^\ve(\vnu)}$, 
provided $|\vnu|>m+\sqrt{3}\,\ve/2$.)
Of course, $P_\ve$ is an orthogonal projection in 
$\HP_m^>$.
The discretized vector potential is now given as
\begin{equation}\label{def-Aeps}
\V{A}_{m,\ve}(\V{x})\,:=\,\ad(e^{-i\vnu_\ve\cdot\V{x}}\,\V{g}_{m,\ve})+
a(e^{-i\vnu_\ve\cdot\V{x}}\,\V{g}_{m,\ve})\,,\qquad
\V{g}_{m,\ve}\,=\,P_\ve\big[\id_{\cA_m}\,\V{g}\big]\,,
\end{equation}
The dispersion relation is discretized in a 
slightly different way, namely
$$
\omega_\ve(k)
\,:=\,\inf\big\{\,|\V{p}|\::\;\V{p}\in
Q_m^\ve(\vnu_\ve(\V{k}))
\,\big\}\,,\qquad 
k=(\V{k},\lambda)\in\cA_m\times\ZZ^2\,.
$$
For this definition of $\omega_\ve$ has the 
following trivial consequences
which shall be useful later on,
\begin{eqnarray}\label{omega-omegaeps1}
\max\big\{\,m\,,\,(1-\sqrt{3}\,\ve/m)\,\omega\,\big\}&\klg&
\omega_\ve\,\klg\,\omega\qquad\textrm{on}\;\;\cA_m\,,
\\ \label{omega-omegaeps2}
\He\,:=\,d\Gamma(\omega_\ve)&\klg&
\Hfmg\,:=\,d\Gamma(\omega\!\!\upharpoonright_{\cA_m\times\ZZ_2})\,.
\end{eqnarray}
Here the operators in the last line are acting in
$\Fock[\HP_m^>]$.
Given $\gamma\in(0,2/\pi)$
and some function $c:(0,1)\to(0,1)$, $\ve\mapsto c(\ve)$,
possibly $m$-dependent also, but tending to zero
uniformly in $m\grg0$, as 
$\ve\searrow0$, we write
\begin{equation}\label{def-gammaeps}
\gamma_\ve\,:=\,\gamma/(1-c(\ve))
\end{equation}
in what follows. (Later on we choose $c$ essentially equal to 
$m^{1/4}\,\triangle^{1/2}(\ve)$ where $\triangle(\ve)$
is defined in \eqref{def-treps}.)
We shall always restrict our attention to sufficiently small
values of $\ve$ satisfying $\gamma_\ve<2/\pi$.
For such $\gamma$ and $\ve$ and every $m>0$,
we define a discretized semi-relativistic Pauli-Fierz operator, 
$\PFme{\gamma}$, 
acting in 
$L^2(\RR^3,\CC^4)\otimes\Fock[\HP^>_m]$,
\begin{align}
\PFme{\gamma}\,&:=\,|\DAme|-\gamma_\ve/|\V{x}|+H_{f,m,\ve}
\,.\label{def-PFe}
\end{align}
Notice that, by definition, 
the above operators 
act in the truncated Hilbert space modeled by
means of $\HP_m^>$
and that the Coulomb coupling constant has been
changed to $\gamma_\ve$ in $\PFme{\gamma}$.
Once more, the remarks succeeding
\eqref{maria} apply to $\PFme{\gamma}$ which is thus 
well-defined as a Friedrichs extension starting from
the algebraic tensor product
\begin{equation}\label{def-core>}
\core^>\,:=\,C_0^\infty(\RR^3,\CC^4)\otimes\sC_0^>\,.
\end{equation}
Here $\sC_0^>\subset\Fock[\HP_m^>]$ is the dense subspace of all
$(\psi^{(n)})_{n=0}^\infty\in\Fock[\HP_m^>]$ such that
$\psi^{(n)}\not=0$, for only finitely many $n$, and such that each
$\psi^{(n)}$, $n>0$, is bounded and
has a compact support in $(\cA_m\times\ZZ_2)^n$.


\subsection{Fiber decompositions of the free operators ($\gamma=0$)}\label{ssec-fiber}
Our bound on the binding energy for $\PFm{\gamma}$, $m\grg0$,
is based on a direct fiber decomposition of $\HR_4$
with respect to fixed values of the total
momentum $\pe\otimes\id+\id\otimes\pf$, where
\begin{equation}
  \label{eq:1} 
\pf\, := \,d \Gamma (\V{k})\,:=\,\big(d\Gamma(k^{(1)})\,,\,
d\Gamma(k^{(2)})\,,\,d\Gamma(k^{(3)})\big)\,,
\end{equation}
is the photon momentum operator.
A conjugation of the Dirac operator with
the unitary operator $e^{i\pf\cdot\V{x}}$
-- which is simply a multiplication
with the phase $e^{i(\V{k}_1+\dots+\V{k}_n)\cdot\V{x}}$
in each Fock space sector $\Fock^{(n)}[\HP]$ -- 
yields
$$
e^{i\pf\cdot\V{x}}\,\DAm\,e^{-i\pf\cdot\V{x}}
\,=\,\valpha\cdot(\V{p}-\pf+\V{A}_m(\V{0}))+\beta\,.
$$
When we deal with the discretized operator $\PFme{\gamma}$, $m,\ve>0$,
then we replace the photon momentum operator by 
$$
\pfe\,:=\,d\Gamma(\vnu_\ve)\qquad
(\,\textrm{all three components acting in}\,\Fock[\HP_m^>]\,)\,,
$$
where $\vnu_\ve$ is defined in \eqref{def-nu}.
Then it is again easy to check that
\begin{align*}
e^{i\pfe\cdot\V{x}}\,\DAe\,e^{-i\pfe\cdot\V{x}}
\,&=\,\valpha\cdot(\V{p}-\pfe+\V{A}_{m,\ve}(\V{0}))+\beta\,.
\end{align*}
We unify our notation by setting
\begin{equation}\label{clarissa}
\gamma_0\,:=\,\gamma\,,\quad
\V{p}_{m,0}\,:=\,\pf\,,\quad\V{A}_{m,0}\,:=\,\V{A}_m\,,\quad
H_{f,m,0}\,:=\,\Hf\,,
\end{equation}
and we always assume that $\ve=0$ when $m=0$ in the sequel.
Then a further conjugation with the Fourier transform, 
$\fourier:L^2(\RR^3_\V{x})\to L^2(\RR^3_{\vxi})$,
with respect to the variable 
$\V{x}$ 
turns the transformed Dirac operators into
\begin{equation}\label{tf-Dirac}
(\fourier\otimes\id)\,e^{i\pfe\cdot\V{x}}\,\DAme\,
e^{-i\pfe\cdot\V{x}}\,(\fourier^{-1}\otimes\id)
\,=\,\int_{\RR^3}^\oplus \wh{D}_{m,\ve}(\vxi)\,d^3\vxi\,.
\end{equation}
Here the operators
$$
\FD(\vxi)\,:=\,\valpha\cdot(\vxi-\pfe+\V{A}_{m,\ve}(\V{0}))+\beta\,,
\qquad \vxi\in\RR^3\,,
$$
acting in $\CC^4\otimes\Fock[\HP]$, for $\ve=0$, and
in $\CC^4\otimes\Fock[\HP_m^>]$, for $\ve>0$,
are fiber Hamiltonians of
the transformed Dirac operator in \eqref{tf-Dirac}
with respect to the isomorphisms
\begin{equation}
  \label{eq:2} 
\HR\,\cong\,
\int_{\RR^3}^\oplus\CC^4\otimes\Fock[\HP]\,d^3\vxi\,,
\qquad \HR_m^>\,\cong\,
\int_{\RR^3}^\oplus\CC^4\otimes\Fock[\HP_m^>]\,d^3\vxi
\,,
\end{equation}
respectively.
(In particular, the transformed Dirac operator in \eqref{tf-Dirac}
again acts in $\HR_4$ or $\HR_m^>$, respectively,
where, for $\ve=0$, the variable in the first tensor factor, $\vxi$,
is now interpreted as the total momentum of the
combined electron-photon system.)
Corresponding to \eqref{eq:2} we 
then have the direct integral representation
(compare, e.g., \cite[Theorem~XIII.85]{ReedSimonIV})
\begin{equation}
  \label{eq:3} 
(\fourier\otimes\id)\,e^{i\pfe\cdot\V{x}}\,\PFme{0}\,
e^{-i\pfe\cdot\V{x}}\,(\fourier^{-1}\otimes\id)
\,=\,
\int_{\RR^3}^\oplus\PFme{0}(\vxi)
\,d^3\vxi \,,
\end{equation}
where
$$
\PFme{0}(\vxi)\,:=\,|\FD(\vxi)|+H_{f,m,\ve}\,.
$$


\subsection{Proof of the binding condition}\label{ssec-binding-proof}

For $m\grg0$ and $\ve>0$ (in case that $m>0$), we set
\begin{equation}\label{clarissa2}
E_{\gamma,m,\ve}\,:=\,\inf\spec[\PFme{\gamma}]\,,\quad \gamma_\ve\in(0,2/\pi)\,,
\qquad \Th_{m,\ve}\,:=\,\inf\spec[\PFme{0}]\,,
\end{equation}
and we fix some $\rho>0$ in what follows.
In view of the fiber decomposition \eqref{eq:3}
we know that the Lebesgue measure
of the set of all $\vxi\in\RR^3$ satisfying
$\spec[\PFme{0}(\vxi)]\cap(\Th_{m,\ve}-\rho,\Th_{m,\ve}+\rho)\not=\varnothing$
is strictly positive \cite[Theorem~XIII.85]{ReedSimonIV}. In particular, 
we find some $\ps\in\RR^3$ and some
normalized 
$\vp_\star\in\form(\PFme{0}(\ps))$
such that
\begin{align}
  \label{eq:6}
    &\SPb{\vp_{\star}}{\PFme{0}(\ps) \, 
    \vp_{\star}}_{\CC^4\otimes\Fock} 
    \,<\, \Th_{m,\ve} + \rho\,.
\end{align}
We define the unitary transformation
\begin{equation*}
U\,\equiv\,U_{m,\ve}\, : =\, e^{i ( \pfe - \ps) \cdot \mathbf{x}}
\end{equation*}
and observe as above that
\begin{equation*}
\FD(\pe,\ps)\,:=\,  
U\,\DAme\,U^*\,=\,\valpha\cdot(\pe+\bts)+\beta\,,
\end{equation*}
where
$$
\bts\,:=\,\ps - \pfe + \V{A}_{m,\ve}(\V{0})\,.
$$
It suffices to prove the binding condition for the 
unitarily equivalent operator
\begin{equation}
  \label{eq:9} U \PFme{\gamma} U^{\ast}\,=\,
\sqrt{\big( \valpha \cdot ( \pe + \bts) \big)^2 + 1} 
\,-\,\frac{\gamma_\ve}{| \mathbf{x} |}\,+\,H_{f,m,\ve}\,.
\end{equation}

\begin{theorem}\label{thm-binding}
For all $e^2,\UV>0$, $\gamma\in(0,2/\pi)$, $m\grg0$,
and $\ve>0$ (provided that $m>0$),
\begin{equation}
    \label{eq:18} \Th_{m,\ve} - E_{\gamma,m,\ve}  
\,\grg\,|E^\el_{\mathrm{nr},\gamma_\ve}|\,,
  \end{equation}
where 
$E^\el_{\mathrm{nr},\gamma_\ve}
=\inf\spec\big[\frac{1}{2}\,\pe^2-\frac{\gamma_\ve}{|\V{x}|}\big]
=-\gamma^2_\ve/2$.
\end{theorem}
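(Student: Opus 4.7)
The plan is variational: I would use the unitarily equivalent representation in \eqref{eq:9} and evaluate it on the trial vector
\[
\Psi\,:=\,\psi\otimes\vp_{\star}\,\in\,L^2(\RR^3)\otimes\CC^4\otimes\Fock[\HP_m^>]\,,
\]
where $\vp_{\star}$ is the near-minimizer of $\PFme{0}(\ps)$ furnished by \eqref{eq:6} and $\psi\in L^2(\RR^3)$ is the strictly positive, hence real-valued, normalized ground state of the non-relativistic hydrogen operator $\tfrac{1}{2}\pe^2-\gamma_\ve/|\V{x}|$. For this $\psi$ one has $\SPn{\psi}{(\tfrac{1}{2}\pe^2-\gamma_\ve/|\V{x}|)\psi}=-\tfrac{1}{2}\gamma_\ve^2$ together with $\SPn{\psi}{\pe_j\psi}=0$ for $j=1,2,3$.

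The key analytic ingredient is the tangent-line inequality for the operator-concave square root: for self-adjoint operators $0\klg A$ and $\id\klg B$,
\[
\sqrt{A}\,\klg\,\sqrt{B}\,+\,\int_0^\infty e^{-t\sqrt{B}}\,(A-B)\,e^{-t\sqrt{B}}\,dt\,,
\]
obtained by taking the limit $\lambda\searrow0$ in the operator-concavity relation $\sqrt{\lambda A+(1-\lambda)B}\grg\lambda\sqrt{A}+(1-\lambda)\sqrt{B}$; the Bochner integral is the Lyapunov representation of the Fr\'echet derivative of $B\mapsto\sqrt{B}$. I would apply it with $A=(\valpha\cdot\pe+\FD(\ps))^2=(\valpha\cdot(\pe+\bts))^2+\id$ and $B=\FD(\ps)^2\grg\id$. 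Since $\{\alpha_j,\beta\}=0$ and $\pe$ commutes with every operator built from the photonic tensor factor, the Clifford relations give $\{\valpha\cdot\pe,\FD(\ps)\}=2\,\pe\cdot\bts$ and $(\valpha\cdot\pe)^2=\pe^2$, so that
\[
A-B\,=\,\pe^2\,+\,2\,\pe\cdot\bts\,.
\]

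Since $\pe^2$ commutes with $e^{-t|\FD(\ps)|}$ one has $\int_0^\infty e^{-t|\FD(\ps)|}\pe^2 e^{-t|\FD(\ps)|}\,dt=\tfrac{1}{2}\pe^2|\FD(\ps)|^{-1}$, and $|\FD(\ps)|\grg\id$ yields $|\FD(\ps)|^{-1}\klg\id$. Taking the expectation in $\Psi$, the contribution of $2\,\pe\cdot\bts$ produces terms proportional to $\SPn{\psi}{\pe_j\psi}=0$ and so vanishes, leaving
\[
\SPn{\Psi}{\sqrt{(\valpha\cdot(\pe+\bts))^2+\id}\,\Psi}\,\klg\,\SPn{\vp_{\star}}{|\FD(\ps)|\vp_{\star}}\,+\,\tfrac{1}{2}\SPn{\psi}{\pe^2\psi}\,.
\]
Adding the Coulomb and photon-field contributions and invoking \eqref{eq:6},
\[
E_{\gamma,m,\ve}\,\klg\,\SPn{\Psi}{U\PFme{\gamma}U^{\ast}\Psi}\,\klg\,\SPn{\vp_{\star}}{\PFme{0}(\ps)\vp_{\star}}\,-\,\tfrac{1}{2}\gamma_\ve^2\,<\,\Th_{m,\ve}+\rho-\tfrac{1}{2}\gamma_\ve^2\,,
\]
so letting $\rho\searrow0$ establishes \eqref{eq:18}.

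The hard part will be a rigorous justification of the tangent-line inequality as a quadratic-form bound on the form domain of $\PFme{0}(\ps)$, since $\vp_{\star}$ is available only in $\form(\PFme{0}(\ps))$. This should be handled in a standard fashion by approximating $\vp_{\star}$ from $\core^>$ and inserting spectral cutoffs $\id_{\{|\FD(\ps)|\klg N\}}$ to regularize the Bochner integral and the operator $|\FD(\ps)|^{-1}$, exploiting the uniform bound $|\FD(\ps)|^{-1}\klg\id$ together with the commutativity of $\pe^2$ with every operator built from the photonic tensor factor, and then passing to the limit $N\to\infty$.
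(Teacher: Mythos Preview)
Your variational strategy is exactly the paper's: same unitary $U$, same trial vector $\psi\otimes\vp_\star$, same observation that $A-B=\pe^2+2\,\pe\cdot\bts$, same vanishing of the cross term via $\SPn{\psi}{\pe_j\psi}=0$, and the same estimate $|\FD(\ps)|^{-1}\klg\id$ to bound the $\pe^2$ contribution by $\tfrac12\SPn{\psi}{\pe^2\psi}$.

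The only genuine difference is the integral representation used to linearize the square root. You invoke the tangent-line inequality for the operator-concave map $B\mapsto\sqrt{B}$ and write its Fr\'echet derivative in the \emph{semigroup/Lyapunov} form $\int_0^\infty e^{-t\sqrt{B}}(A-B)\,e^{-t\sqrt{B}}\,dt$. The paper instead uses the \emph{resolvent} representation $\sqrt{T}=\int_0^\infty(1-\eta(T+\eta)^{-1})\,\tfrac{d\eta}{\pi\sqrt{\eta}}$, expands $R_1(\eta)-R_2(\eta)$ once to second order, and drops the manifestly nonpositive remainder $-R_1(\eta)[\cdots]R_2(\eta)[\cdots]R_1(\eta)$; after integration this yields precisely the same linear bound $\int_0^\infty R_1(\eta)(A-B)R_1(\eta)\sqrt{\eta}\,\tfrac{d\eta}{\pi}$, which is just the resolvent form of the Fr\'echet derivative of $\sqrt{\,\cdot\,}$ at $B$. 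So the two approaches are equivalent at the algebraic level.

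Where they differ in practice is in the domain bookkeeping you flag as the ``hard part''. The paper's resolvent route comes with a ready-made tool, Lemma~\ref{le-aischa}, which shows that $R_1(\eta)$ preserves $\dom(H_{f,m,\ve}^\nu)$; this is exactly what is needed to make the identity \eqref{eq:10} and the subsequent manipulations valid on $\core^>$, after which one passes to $\vp_\star\in\form(\PFme{0}(\ps))$ by density. Your semigroup formulation requires the analogous statement that $e^{-t|\FD(\ps)|}$ maps into $\dom(\bts^{(j)})$ in order even to write down the vanishing cross term rigorously; your spectral-cutoff regularization should work, but the resolvent version gets you there with less effort.
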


\begin{proof}
 Let $\rho > 0$ and $\ps$ be as in the paragraphs
preceding the statement. For $\eta \grg 0$, we abbreviate
\begin{align*}  
R_1({\eta}):=
\big((\valpha\cdot\bts)^2+{\eta}+1\big)^{-1},\qquad
R_2({\eta}):=\big((\valpha\cdot(\pe+\bts))^2+{\eta}+1\big)^{-1}.
\end{align*}  
Since the anti-commutator of $\valpha\cdot\pe$ and 
$\valpha\cdot\bts$ is equal to $2\,\pe\cdot\bts$
it holds 
$(\valpha\cdot(\pe+\bts))^2=(\valpha\cdot\bts)^2+2\,\pe\cdot\bts+\pe^2$
on $\dom(\pe^2)\cap\dom(\He^2)$.
In Lemma~\ref{le-aischa} below we verify that 
$R_1(\eta)$ maps $\dom(\He^\nu)$ into itself, for every $\nu>0$. 
We deduce that, for any $\varphi \in \core$ (respectively
$\vp\in\core^>$), 
\begin{align}
    -R_2({\eta})\,\vp \nonumber
\,&=\,
-R_2({\eta})\,[\,(\al\cdot\bts)^2
+1+{\eta}\,]\,R_1({\eta})\,\vp
\\ \label{eq:10} 
&=\,
R_2({\eta})\,[\,2\,\pe\cdot\bts
+{\pe}^2\,]\,R_1({\eta})\,\vp\,-\,R_1({\eta})\,\vp\,.
\end{align}
We use the following formula, valid for any self-adjoint operator $T > 0$,
\begin{align*}  
{\sqrt{T}}\,\psi\,=\,
\int_0^{{\infty}}\Big(1-{\frac{{\eta}}{T+{\eta}}}\Big)
\,\psi\,{\frac{d{\eta}}{{\pi}{\sqrt{{\eta}}}}}\:,
\qquad \psi\,{\in}\,{\dom}(T)\,,
\end{align*}
and the resolvent identity \eqref{eq:10}
to obtain
\begin{align}
      \SPb{&\vp}{\big(\sqrt{( \al \cdot(\pe+\nonumber
      \bts))^2 + 1}\,-\, \sqrt{( \al \cdot \bts)^2 + 1} \,\big)\, \vp}
      \\
      &= \,\nonumber
      \int_0^{\infty} \SPb{\vp}{\big( R_1 (\eta) - R_2 (\eta) \big)
      \, \vp} \sqrt{\eta} \: \frac{d \eta}{\pi}\\
      &= \,\nonumber
      \int_0^{\infty} \SPb{R_2 (\eta)\,\vp}{\left[ \,2\,\pe\cdot\bts 
      + \pe^2 \right] R_1 (\eta)
      \,\vp} \sqrt{\eta} \: \frac{d \eta}{\pi}\\
      &= \,\nonumber
      \int_0^{\infty} \SPb{\vp}{R_1 (\eta) \left[ \,2\,\pe\cdot\bts
      + \pe^2 \right] R_1 (\eta)
      \, \vp} \sqrt{\eta} \: \frac{d \eta}{\pi}
\\
&\;\;\nonumber
- \int_0^{\infty} \SPb{\left[ \,2\,\pe\cdot\bts
        + \pe^2 \right] R_1 (\eta)\,\vp}{
          R_2 (\eta)\left[ \,2\,\pe\cdot\bts+
        \pe^2 \right] R_1 (\eta) \, \vp
      } \sqrt{\eta} \: \frac{d \eta}{\pi}
\\
&\klg\,\label{eq:7000}
      \int_0^{\infty} \SPb{\vp}{R_1 (\eta) \left[\,2\,\pe\cdot\bts 
      + \pe^2 \right] R_1 (\eta)
      \, \vp} \sqrt{\eta} \, \frac{d \eta}{\pi}
      \,.
\end{align}
In the last step we used the positivity of $R_2(\eta)$.
  We consider now 
$\vp : = \vp_1 \otimes \vp_2$ where 
$\vp_1 \in C_0^{\infty}( \RR^3, \RR)$ 
and $\vp_2 \in \CC^4 \otimes \sC_0$
(respectively $\vp_2\in\CC^4\otimes\sC_0^>$) 
with $\| \vp_j \|= 1$, $j=1,2$. 
(Here the dense subspaces $\sC_0\subset\Fock[\HP]$ 
and $\sC_0^>\subset\Fock[\HP_m^>]$ are defined below 
\eqref{def-core2-sC} and \eqref{def-core>}.)
We find that
\begin{align}\nonumber
\SPb{\vp}{&R_1 (\eta) \, \pe\cdot\bts\, R_1 (\eta)\, \vp}  
\\
& = \,\label{eq:14}
   \sum_{j=1}^3 \SPb{\vp_1}{-i\partial_{x_j}\, \vp_1}
      \SPb{\vp_2}{R_1 (\eta)\, \bts^{(j)}\, R_1 (\eta)\,
      \vp_2}\,=\,0\,,
\end{align}
  due to the fact that $\vp_1$ is real and, hence,
  $2\SPb{\vp_1}{\partial_{x_j} \vp_1}=\int\partial_{x_j}\vp_1^2 = 0$. 
On the other hand the functional
  calculus implies
 \begin{align*}
   \nonumber
      \int_0^{\infty} \SPb{\vp_2}{R_1 (\eta)^2 \vp_2} \sqrt{\eta}
      \: \frac{d \eta}{\pi} &\, =\, \int_0^{\infty}
      \frac{\sqrt{\eta}}{(1 + \eta)^2} \, \frac{d \eta}{\pi}\:
      \SPb{\vp_2}{\left( ( \al \cdot \bts)^2 + 1 \right)^{- 1 / 2}
      \, \vp_2}
\\
      & = \,  \frac{1}{2}\,\SPb{\vp_2}{\big( ( \al \cdot \bts)^2 + 1
      \big)^{- 1 / 2} \, \vp_2}\,\klg\,\frac{1}{2} \,,
\end{align*}
which permits to get
  \begin{equation}
    \label{eq:15} \int_0^{\infty} \SPb{\vp}{R_1 (\eta)\,\pe^2\,R_1(\eta)\,\vp}
    \sqrt{\eta} \: \frac{d \eta}{\pi} \, \klg
    \,  \frac{1}{2}\,\SPb{\vp_1}{\pe^2 \, \vp_1} \,.
  \end{equation}
Combining \eqref{eq:7000}, \eqref{eq:14}, and \eqref{eq:15} we arrive at
  \begin{align*}
    \SPb{\vp}{&U \PFme{\gamma} U^{\ast} \, \vp}
   \\
    &\klg \, \SPb{\vp_2}{\big( \sqrt{( \al \cdot
    \bts)^2 + 1} + H_{f,m,\ve} \big)\, \vp_2}\, +\, 
    \SPb{\vp_1}{\big( \tfrac{1}{2}\,\pe^2 -\tfrac{\gamma_\ve}{|\V{x}|}\big) 
    \, \vp_1}
    \, ,
  \end{align*}
where 
$\sqrt{( \valpha \cdot \bts)^2 + 1} + H_{f,m,\ve} = \PF{0,m,\ve}(\ps)$.
  By a limiting argument the previous inequality extends to any 
  real-valued $\vp_1\in \form ( \pe^2)$ and 
  $\vp_2 \in \form ( \PFme{0} ( \ps))$. 
  We choose $\vp_1$ to be
  the normalized, strictly positive 
  eigenfunction of 
  $\frac{1}{2}\,\pe^2 - \frac{\gamma_\ve}{| \mathbf{x} |}$
  corresponding to its lowest eigenvalue $-\gamma^2_\ve/2$, 
  and $\vp_2 =\vp_{\star}$. 
  By the choice of $\vp_\star$ in \eqref{eq:6}, where
  $\rho>0$ is arbitrary, this proves the assertion.
\end{proof}

\begin{lemma}\label{le-aischa}
Let $\nu,\eta>0$. Then the resolvent $R_1(\eta)$
defined in the previous proof maps $\dom(H_{f,m,\ve}^\nu)$
into itself.
\end{lemma}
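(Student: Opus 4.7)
The idea is to recognise $R_1(\eta)$ as the resolvent of the square of a Dirac-type operator. The Clifford anti-commutation relations $\{\alpha_j,\beta\}=0$ for $j=1,2,3$ give $\{\valpha\cdot\bts,\beta\}=0$, whence $(\valpha\cdot\bts+\beta)^2=(\valpha\cdot\bts)^2+\id$. Setting $D_\star:=\valpha\cdot\bts+\beta$, this yields
$$R_1(\eta)\,=\,(D_\star^2+\eta)^{-1}\,=\,(D_\star-i\sqrt{\eta})^{-1}\,(D_\star+i\sqrt{\eta})^{-1}.$$
Crucially, $D_\star$ is precisely the fiber Dirac operator $\FD(\ps)$ of \eqref{tf-Dirac}: it has the same structural form as $\DAme$, the only change being that $\pe$ is replaced by the constant vector $\ps$, while $\pfe$ (which commutes strongly with $H_{f,m,\ve}$) is absorbed into the ``kinetic'' symbol.

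My plan is therefore to apply Lemma~\ref{cor-T-Xi}(iii) to $D_\star$ with $L=0$ and $z=\pm i\sqrt{\eta}$, concluding that both factors $(D_\star\pm i\sqrt{\eta})^{-1}$ leave $\dom(H_{f,m,\ve}^\nu)$ invariant; the claim then follows by composition. To apply that lemma, one has to verify Hypothesis~\ref{hyp-G} for the form factor $\V{g}_{m,\ve}$ with respect to $\omega_\ve$. This is immediate: $\V{g}_{m,\ve}$ is the $L^2$-projection of the bounded function $\id_{\cA_m}\V{g}$, whose support is contained in $\{m\klg|\V{k}|\klg\UV\}$; consequently $\omega_\ve$ is bounded away from $0$ and $\infty$ on the effective support of $\V{g}_{m,\ve}$, so every weighted integral in \eqref{def-d3} is finite, and \eqref{hyp-rotG} is trivial because $\V{g}_{m,\ve}$ is $\V{x}$-independent. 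The smallness requirement in Lemma~\ref{cor-T-Xi}(ii) reads $\delta_\nu/E^{1/2}<1/\|(D_\star\pm i\sqrt{\eta})^{-1}\|=\sqrt{\eta}$, and is met by choosing $E$ sufficiently large.

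The one subtle point -- which is the main technical obstacle -- is that Lemma~\ref{cor-T-Xi}(iii) is literally formulated for the operator $\DA$ in which $\pe$ is the electron momentum on $L^2(\RR^3_\V{x},\CC^4)$, whereas $D_\star$ acts on $\CC^4\otimes\Fock[\HP_m^>]$ with no electron momentum at all. Inspection of the proof in \cite{MatteStockmeyer2009a} shows, however, that only two structural facts are ever used, and both survive the replacement: (i) the commutator estimate $\|[H_{f,m,\ve}^{-\nu},\valpha\cdot\V{A}_{m,\ve}(\V{0})]\,H_{f,m,\ve}^\nu\|\klg\delta_\nu/E^{1/2}$, controlled by the weighted integrals of $\V{g}_{m,\ve}$ furnished by Hypothesis~\ref{hyp-G}; and (ii) the strong commutation of the remaining ``kinetic'' part of the Dirac operator with $H_{f,m,\ve}$, which holds here because $\ps-\pfe$ is $d\Gamma$ of a one-particle multiplication operator that commutes with $\omega_\ve$. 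Hence no modification of the argument is required and the invariance of $\dom(H_{f,m,\ve}^\nu)$ under $R_1(\eta)$ follows.
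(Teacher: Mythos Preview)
Your proof is correct and follows essentially the same route as the paper's: both reduce the claim to showing that the resolvents $(\FD(\ps)\mp i\sqrt{\eta})^{-1}$ preserve $\dom(H_{f,m,\ve}^\nu)$, then invoke the intertwining relation from Lemma~\ref{cor-T-Xi} (adapted to the fiber operator) with $E$ chosen large enough for the Neumann series to converge. The only cosmetic differences are that the paper uses the partial-fraction identity $R_1(\eta)=\tfrac{1}{2i\sqrt{\eta}}\big((\FD(\ps)-i\sqrt{\eta})^{-1}-(\FD(\ps)+i\sqrt{\eta})^{-1}\big)$ instead of your product factorisation, explicitly records the essential self-adjointness of $\FD(\ps)$ on $\CC^4\otimes\sC_0$ via Nelson's commutator theorem (which you use implicitly when writing $\|(D_\star\pm i\sqrt{\eta})^{-1}\|^{-1}=\sqrt{\eta}$), and writes out the adapted identity $\wh{R}(y)\,\HT^{-\nu}=\HT^{-\nu}\,\wh{R}(y)\,\wh{\Upsilon}_\nu(y)$ in place of your appeal to ``inspection of the proof''.
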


\begin{proof}
In view of the representation
$$
R_1(\eta)\,=\,\frac{1}{2i\sqrt{\eta}}\big\{\,
\big(\FD(\ps)-i\sqrt{\eta}\big)^{-1}\,
-\,\big(\FD(\ps)+i\sqrt{\eta}\big)^{-1}
\,\big\}
$$
it suffices to show that $\wh{R}(y):=(\FD(\pp)-iy)^{-1}$
maps $\dom(\He^\nu)$ into itself, for all $y\in\RR$ and $\pp\in\RR^3$.
Now, an application of Nelson's commutator theorem
with test operator $\HT:=\He+E$, $E\grg1$, 
shows that $\FD(\pp)$
is essentially self-adjoint on $\CC^4\otimes\sC_0$.
This property together with Lemma~\ref{cor-T-Xi} permits
to derive the identity
$[\wh{R}(y)\,,\,\HT^{-\nu}]=\wh{R}(y)\,T_\nu\,\HT^{-\nu}\,\wh{R}(y)$,
where $T_\nu\in\LO(\CC^4\otimes\Fock[\HP])$ is the closure
of $[\HT^{-\nu},\valpha\cdot\V{A}_{m,\ve}(\V{0})]\,\HT^{\nu}$;
compare \cite[Proof of Corollary~3.2]{MatteStockmeyer2009a}.
This identity in turn implies 
\begin{equation}\label{aischa}
\wh{R}(y)\,\HT^{-\nu}\,=\,\HT^{-\nu}\,\wh{R}(y)\,\wh{\Upsilon}_\nu(y)
\end{equation}
on $\CC^4\otimes\Fock[\HP]$, 
where the Neumann series
$\wh{\Upsilon}_\nu(y):=\sum_{j=0}^\infty\{-T_\nu^*\,\wh{R}(y)\}^j$
converges, provided $E\grg1$ is chosen sufficiently large.
The operator identity \eqref{aischa} shows that
$\wh{R}(y)$ maps $\dom(\He^\nu)=\Ran(\HT^{-\nu})$
into itself. (For $\ve>0$, we have to replace $\HP$ 
by $\HP_m^>$ and $\sC_0$ by $\sC_0^>$
in the argument above.)
\end{proof}


\section{Existence of ground states}
\label{sec-ex}

\subsection{Outline of the proof}\label{ssec-ex-outline}

\noindent
In this section we prove our main Theorem~\ref{thm-ex-PF}.
As in \cite{BFS1998b,BFS1999} 
(see also \cite{GLL2001} where a photon mass is introduced
in a slightly different way)  
we first show that the infra-red cut-off
Hamiltonians, $\PFm{\gamma}$, $m>0$, 
defined in \eqref{def-cAm}--\eqref{def-Hm}
possess ground state eigenfunctions,
provided $m>0$ is sufficiently small.
This is done in Subsection~\ref{ssec-ex-m-PF}
by means of a discretization argument
similar to the one in \cite{BFS1999}.
The implementation of the discretization procedure in
\cite{BFS1998b,BFS1999} requires a small coupling
condition. By a modification of the argument
we observe, however, that this is actually not necessary.
Before we turn to these issues
we explain in Subsection~\ref{ssec-approx-m}
how to infer the existence of ground states for
the limit operator $\PF{\gamma}$ from the fact that
the $\PFm{\gamma}$ have ground state eigenfunctions.
Here we benefit from a result from \cite{GLL2001}
saying that the spatial localization,
a bound on the number of soft photons,
and a photon derivative bound introduced in \cite{GLL2001}
allow to use standard imbedding theorems for
Sobolev spaces to ensure the compactness of the
set of approximating ground states.
The first of the latter key ingredients, the
exponential localization estimate for
low-lying spectral subspaces of the operators
$\PFm{\gamma}$, has been proven
in \cite{MatteStockmeyer2009a}.
The proofs of the two infra-red bounds 
are postponed to Section~\ref{sec-IR-bounds}.

We close this subsection by a general lemma
which allows to prove the existence of imbedded eigenvalues
by means of approximating sequences of operators and
eigenvectors.
It is a modified version of a result we learned from \cite{BFS1998b}
and its assertion is actually stronger than necessary
for our application.

\begin{lemma}\label{le-ex-allg}
Let $T,T_1,T_2,\ldots$ be self-adjoint operators acting in some
separable Hilbert space, $\mathscr{X}$, such that $\{T_j\}_{j\in\NN}$
converges to $T$ in the strong resolvent sense.
Assume that $E_j$ is an eigenvalue of $T_j$ with corresponding
eigenvector $\phi_j\in\dom(T_j)$.
Assume further that $\{\phi_j\}_{j\in\NN}$ converges weakly to
some $0\not=\phi\in\mathscr{X}$. Then $E:=\lim_{j\to\infty}E_j$
exists and is an eigenvalue of $T$. If $E_j=\inf\spec[T_j]$,
then $T$ is semi-bounded below and $E=\inf\spec[T]$.
\end{lemma}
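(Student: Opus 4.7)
The plan is to exploit the eigenvalue equation at the level of the resolvents, which behaves well under strong resolvent convergence, and then to upgrade the information on $\inf\spec$ via a standard continuity result for spectra. First I would observe that $\{\phi_j\}$ is norm-bounded by the uniform boundedness principle applied to a weakly convergent sequence, say $\|\phi_j\|\klg M$. Fix any $z\in\CC\setminus\RR$. The eigenvalue equation $T_j\phi_j=E_j\phi_j$ is equivalent to
$$
(T_j-z)^{-1}\phi_j\,=\,(E_j-z)^{-1}\phi_j\,,
$$
and I would first show that the left-hand side converges weakly to $(T-z)^{-1}\phi$. Setting $A_j:=(T_j-z)^{-1}$ and $A:=(T-z)^{-1}$, one has $A_j\phi_j-A\phi=A_j(\phi_j-\phi)+(A_j-A)\phi$; the second summand tends to $0$ in norm by strong resolvent convergence. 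For the first, for any $\chi\in\mathscr X$ we have $\SPn{\chi}{A_j(\phi_j-\phi)}=\SPn{A_j^*\chi}{\phi_j-\phi}$, and since $A_j^*=(T_j-\ol z)^{-1}\to A^*=(T-\ol z)^{-1}$ strongly, the splitting $A_j^*\chi=(A_j^*-A^*)\chi+A^*\chi$ combined with the weak convergence $\phi_j\rightharpoonup\phi$ shows that this inner product tends to zero. Hence $A_j\phi_j\rightharpoonup A\phi$.

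I would then extract a subsequence along which $E_{j_k}\to\tilde E\in[-\infty,+\infty]$. If $\tilde E=\pm\infty$ then $(E_{j_k}-z)^{-1}\to 0$, and because $\|\phi_{j_k}\|\klg M$ also $(E_{j_k}-z)^{-1}\phi_{j_k}\to 0$ in norm; comparing limits gives $A\phi=0$, i.e.\ $\phi=0$, contradicting the assumption $\phi\ne 0$. Thus $\tilde E\in\RR$, and passing to the weak limit in the resolvent identity yields $(T-z)^{-1}\phi=(\tilde E-z)^{-1}\phi$. Writing $\psi:=(T-z)^{-1}\phi\in\dom(T)$, this identity reads $\phi=(\tilde E-z)\psi$, so $\phi\in\dom(T)$, and a direct computation using $(T-z)\psi=\phi$ shows $T\phi=\tilde E\phi$. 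Since $\phi\ne 0$ the value $\tilde E$ is uniquely determined, so every convergent subsequence of $\{E_j\}$ has the same limit; therefore $E:=\lim_j E_j$ exists and is an eigenvalue of $T$ with eigenvector $\phi$.

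For the final assertion, if $E_j=\inf\spec[T_j]$ I would invoke the standard fact (see, e.g., Reed--Simon, Vol.~I, Theorem~VIII.24) that, under strong resolvent convergence, every $\mu\in\spec[T]$ is the limit of some sequence $\mu_j\in\spec[T_j]$. Since $\mu_j\grg E_j\to E$, every such $\mu$ satisfies $\mu\grg E$, whence $\spec[T]\subset[E,\infty)$. In particular $T$ is semi-bounded below, and combining with $E\in\spec[T]$ we conclude $E=\inf\spec[T]$. The main delicate point I anticipate is the passage from the resolvent identity $(T-z)^{-1}\phi=(\tilde E-z)^{-1}\phi$ to the bona fide eigenvalue equation $T\phi=\tilde E\phi$ on $\dom(T)$, but this is routine once one recalls that the range of $(T-z)^{-1}$ equals $\dom(T)$.
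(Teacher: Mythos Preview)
Your proof is correct. Both arguments rest on the same principle---apply a bounded continuous function of the operators to the eigenvalue equation and pass to the weak limit---but you use the resolvent $(\,\cdot\,-z)^{-1}$ where the paper uses $\arctan$. The paper's choice lets it show directly that $\arctan(E_j)$ converges (by writing $\arctan(E_j)=\SPn{\arctan(T_j)\psi}{\phi_j}/\SPn{\psi}{\phi_j}$ for a suitable test vector $\psi$), avoiding any subsequence extraction; it then invokes spectral mapping to recover $T\phi=E\phi$. Your route is slightly longer in that you must first compactify $\{E_j\}$ in the extended reals and rule out $\pm\infty$, but it has the advantage of working entirely with the resolvent---the object that actually appears in the hypothesis---and it delivers $\phi\in\dom(T)$ and $T\phi=E\phi$ without any functional-calculus bookkeeping. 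For the semi-boundedness clause both arguments are essentially identical: the paper reproduces inline the same spectral-inclusion fact (Reed--Simon~VIII.24) that you cite.
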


\begin{proof}
In what follows we abbreviate $f:=\arctan$.
Then $f(T_j)\,\psi\to f(T)\,\psi$, $j\to\infty$,
for every $\psi\in\mathscr{X}$,
since $T_j\to T$ 
in the strong resolvent sense.
Let us assume for the moment that $\psi\in\mathscr{X}$ fulfills
$\SPn{\psi}{\phi}\not=0$. Then we find some $j_0\in\NN$
such that $\SPn{\psi}{\phi_{j}}\not=0$, for $j\grg j_0$,
and we may write
$$
f(E_{j})\,=\,
\frac{\SPn{f(T)\,\psi}{\phi_{j}}\,
+\,\SPn{f(T_j)\,\psi-f(T)\,\psi}{\phi_{j}}}{
\SPn{\psi}{\phi_{j}}}\,,\qquad j\grg j_0\,.
$$
The sequence $\{f(E_{j})\}_{j\in\NN}$ thus has a limit
$$
f(E)\,:=\,
\lim_{j\to\infty}f(E_{j})\,=\,
\frac{\SPn{\psi}{f(T)\,\phi}}{\SPn{\psi}{\phi}}
\,.
$$
In the case $\SPn{\psi}{\phi}=0$ we may replace $\psi$ by
$\wt{\psi}=\psi+\phi$ in the above argument
to see that $\SPn{f(T)\,\psi}{\phi}=0$
also.
The equality $\SPn{\psi}{f(T)\,\phi}=\SPn{\psi}{f(E)\,\phi}$ thus
holds, for every $\psi\in\mathscr{X}$, 
whence $f(T)\,\phi\,=f(E)\,\phi$.
It follows that $E:=\tan(f(E))\grg\inf\spec[T]$ is an eigenvalue
of $T$ since 
$u(\specpp[f(T)])\subset\specpp[u(f(T))]=\specpp[(u\circ f)(T))]$, for 
every Borel measurable function $u$. 

Now assume that $E_j=\inf\spec[T_j]$.
Set $\lambda:=\inf\spec[T]$, when $T$ is semi-bounded below,
and pick some $\lambda\in\spec[T]\cap(-\infty,E-1)$, when
$\inf\spec[T]=-\infty$.
Since $T_j$ converges to $T$ 
in the strong resolvent sense
there is a sequence $\{E_j'\}_{j\in\NN}$ with 
$E_{j}\klg E_j'\in\spec[T_j]$ 
and $E_j'\to\lambda$.
So $E\grg\lambda=\lim_{j\to\infty}E_j'\grg\lim_{j\to\infty}E_j=E$.
If $\inf\spec[T]=-\infty$ the first inequality is strict
and we get a contradiction.
\end{proof}


\subsection{Approximation by infra-red cut-off electromagnetic fields}
\label{ssec-approx-m}

\noindent
In order to prove that $\PF{\gamma}$ has a ground state
provided this holds true for $\PFm{\gamma}$
with sufficiently small $m>0$
we first show that $\PFm{\gamma}$ converges to $\PF{\gamma}$ in norm
resolvent sense. 
If we choose $a=0$, $\vo=\omega$, 
$\V{G}=\V{G}^\phys=e^{-i\V{k}\cdot\V{x}}\,\V{g}$, 
and $\wt{\V{G}}=\id_{\cA_m}\,\V{G}^\phys$, $m>0$, then
the parameter defined below \eqref{delta(a)3} is equal to
\begin{equation}\label{def-Deltam}
\wh{\triangle}^2_{*}(m)\,:=\,
\int_{\{|\V{k}|\klg m\}}\Big(2+\frac{4}{\omega(k)}\Big)\,
|\V{g}(k)|^2\,dk\,\longrightarrow\,0\,,\qquad m\searrow0\,.
\end{equation}

\begin{lemma}\label{le-sres-conv-PF}
Let $e^2,\UV,m>0$ and $\gamma\in[0,2/\pi)$. Then 
$\PFm{\gamma}$ and $\PF{\gamma}$ have the same form domain,
$\form(\PFm{\gamma})=\form(\PF{\gamma})=\form(|\DO|)\cap\form(\Hf)$,
and the form norms associated to $\PFm{\gamma}$ and
$\PF{\gamma}$ are equivalent. Moreover,
$\PFm{\gamma}$ converges to
$\PF{\gamma}$ in the norm resolvent sense, as $m\searrow0$.
\end{lemma}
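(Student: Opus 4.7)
The plan is to derive both conclusions from the form-difference estimate \eqref{pia0}, applied to the pair of form factors $(\V{G},\wt{\V{G}}) = (\V{G}^{\phys}, \id_{\cA_m}\V{G}^{\phys})$ (and with the two roles interchanged). In this situation the distance parameter $\triangle_{*}(0)$ from \eqref{def-delta(a)} coincides, up to harmless numerical constants, with the quantity $\wh{\triangle}_{*}(m)$ introduced in \eqref{def-Deltam}, which vanishes as $m\searrow 0$ by dominated convergence. Moreover, $\V{G}^{\phys}$ and $\id_{\cA_m}\V{G}^{\phys}$ both satisfy Hypothesis~\ref{hyp-G} with constants that are bounded uniformly in $m\grg 0$.

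First I would insert $F=0$, $a=0$, and a fixed $E\grg 1$ with $E>(2d_{1})^{2}$ into \eqref{pia0}, and then also interchange the roles of $\V{G}$ and $\wt{\V{G}}$, to obtain, for every $\vp\in\core$ and all $\epsilon,\tau\in(0,1]$,
\begin{equation*}
\big|\SPn{\vp}{(|\DA|-|\DAm|)\vp}\big|\,\klg\,\epsilon\,\||\DA_{\#}|^{1/2}\vp\|^{2}+\tau\,\|\Hf^{1/2}\vp\|^{2}+C_{\epsilon,\tau}(m)\,\|\vp\|^{2},
\end{equation*}
for either choice $\DA_{\#}\in\{\DA,\DAm\}$, with $C_{\epsilon,\tau}(m)=\bigO(\wh{\triangle}_{*}^{4}(m)/(\epsilon^{3}\tau^{2}))$. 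Combined with Corollary~\ref{cor-pia} (which lets me absorb $|\DA|$, $|\DAm|$, and $\Hf$ into the forms of $\PF{\gamma}$ and $\PFm{\gamma}$ up to constants), and by choosing $\epsilon,\tau$ sufficiently small and independent of $m$, this produces two-sided form inequalities between $\PFm{\gamma}$ and $\PF{\gamma}$ with constants uniform in small $m$. It follows that $\form(\PFm{\gamma})=\form(\PF{\gamma})=\form(|\DO|)\cap\form(\Hf)$ (the last equality is the parenthetical remark following \eqref{maria}) and that the corresponding form norms are equivalent.

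Next, for the norm resolvent convergence, I would fix $\mu>0$ so large that $\PF{\gamma}+\mu\grg\id$ and $\PFm{\gamma}+\mu\grg\id$ for all small $m$, and denote the corresponding closed forms by $q,q_{m}$. Polarizing the diagonal bound above (the Hermitian symmetry of $q_{m}-q$ yields a Cauchy--Schwarz inequality up to a numerical factor) and absorbing $|\DA|^{1/2}$, $|\DAm|^{1/2}$, $\Hf^{1/2}$ into $(\PF{\gamma}+\mu)^{1/2}$, $(\PFm{\gamma}+\mu)^{1/2}$ by the previous step, the choice $\epsilon=\tau=\wh{\triangle}_{*}(m)^{1/2}$ produces
\begin{equation*}
|q_{m}(\vp,\psi)-q(\vp,\psi)|\,\klg\,\eta(m)\,\bigl(q_{m}(\vp)+\mu\|\vp\|^{2}\bigr)^{1/2}\bigl(q(\psi)+\mu\|\psi\|^{2}\bigr)^{1/2}
\end{equation*}
for all $\vp,\psi$ in the common form domain, with $\eta(m)\to 0$. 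Setting $A:=(\PF{\gamma}+\mu)^{-1}$ and $A_{m}:=(\PFm{\gamma}+\mu)^{-1}$ and substituting $\vp=A_{m}\chi$, $\psi=A\phi$, the standard identities
\begin{equation*}
q_{m}(A_{m}\chi,A\phi)+\mu\SPn{A_{m}\chi}{A\phi}=\SPn{\chi}{A\phi},\qquad q(A_{m}\chi,A\phi)+\mu\SPn{A_{m}\chi}{A\phi}=\SPn{A_{m}\chi}{\phi}
\end{equation*}
imply $\SPn{\chi}{(A-A_{m})\phi}=(q_{m}-q)(A_{m}\chi,A\phi)$, while $q_{m}(A_{m}\chi)+\mu\|A_{m}\chi\|^{2}=\SPn{\chi}{A_{m}\chi}\klg\|\chi\|^{2}$ (and analogously for $A$). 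Taking the supremum over unit vectors $\chi,\phi$ then gives $\|A-A_{m}\|\klg\eta(m)\to 0$.

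The main obstacle is the careful book-keeping needed to ensure that all constants depend only on the fixed data ($d_{1}$ for $\V{G}^{\phys}$, $\gamma$, $E$, $\mu$) and not on $m$; this reduces to the elementary observation that the constants in Hypothesis~\ref{hyp-G} for $\id_{\cA_m}\V{G}^{\phys}$ are dominated by the corresponding ones for $\V{G}^{\phys}$, so that the only $m$-dependence enters through $\wh{\triangle}_{*}(m)$. Once this uniformity is secured, the argument is a quantified KLMN-type resolvent comparison.
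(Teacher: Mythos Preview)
Your proposal is correct and takes essentially the same approach as the paper: both derive a relative form bound on $\PF{\gamma}-\PFm{\gamma}$ that vanishes with $\wh{\triangle}_*(m)$ and deduce norm resolvent convergence from it. The only differences are cosmetic---the paper writes $|\DA|-|\DAm|=(\SA-\SAm)\DA+\SAm\,\valpha\cdot(\V{A}-\V{A}_m)$ and applies Lemma~\ref{le-A-Aeps} and \eqref{delta(a)3} directly to get the bound \eqref{verona2} linear in $\wh{\triangle}_*(m)$, then cites \cite[Theorem~VIII.25]{ReedSimonI}, whereas you invoke the packaged estimate \eqref{pia0} (itself proved from Lemma~\ref{le-A-Aeps}) and spell out the resolvent identity by hand.
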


\begin{proof}
The first assertion, which 
has already been observed in \cite{Stockmeyer2009},
follows from Corollary~\ref{cor-pia} 
(with $\V{A}=\V{0}$ or $\wt{\V{A}}=\V{0}$).
Moreover,
we know that $\core$ is a common form core of
$\PF{\gamma}$ and $\PFm{\gamma}$, $m>0$, and on $\core$
we have 
$$
\PF{\gamma}-\PFm{\gamma}\,=\,
\big(\SA-\SAm\big)\,\DA\,
+\,\SAm\,\valpha\cdot(\V{A}-\V{A}_m)\,.
$$
By virtue of Lemma~\ref{le-A-Aeps} and \eqref{delta(a)3}
we thus find some $C\in(0,\infty)$ such that, for all $m>0$
and $\vp\in\core$,
\begin{align}
\big|\SPb{\vp}{(\PF{\gamma}-\PFm{\gamma})\,\vp}\big|\,
&\klg\,\nonumber
\bigO(\wh{\triangle}_*(m))\,\big(\,\|\HT^{1/2}\,\vp\|\,
\|\,|\DA|^{1/2}\,\vp\|+\|\vp\|\,\|\HT^{1/2}\,\vp\|\,\big)
\\
&\klg\,\label{verona2}
\bigO(\wh{\triangle}_*(m))\,\SPb{\vp}{(\PF{\gamma}+C)\,\vp}\,.
\end{align}
Here $\HT=\Hf+E$, for some sufficiently large $E>0$, and
in the last step we used Corollary~\ref{cor-pia}. 
Since we may replace $\vp$ in \eqref{verona} by any element
of $\form(\PFm{\gamma})=\form(\PF{\gamma})$
the second assertion follows from \cite[Theorem~VIII.25]{ReedSimonI}.
\end{proof}

\smallskip

\noindent
For every $m>0$, we split the one-photon Hilbert space
into two mutually orthogonal subspaces
$$
\HP\,=\,L^2(\RR^3\times\ZZ_2)\,=\,\HP_m^>\oplus\HP_m^<\,,
$$
where
$\HP_m^{>}=L^2(\cA_m\times\ZZ_2)$ has already been introduced
in \eqref{def-HRm}.
It is well-known that $\Fock[\HP]=\Fock[\HP_m^>]\otimes\Fock[\HP_m^<]$
and we observe that
all the operators $a(\id_{\cA_m}\,e^{-i\V{k}\cdot\V{x}}\,g^{(j)}(k))$, 
$\ad(\id_{\cA_m}\,e^{-i\V{k}\cdot\V{x}}\,g^{(j)}(k))$,
$j\in\{1,2,3\}$, $\V{x}\in\RR^3$, and $\Hf$
leave the Fock space factors associated
to the subspaces $\HP_m^{\lessgtr}$ invariant.
Hence, the same holds true also for $\DAm$, $\SAm$, and $|\DAm|$.
We shall designate operators acting in the Fock space
factors $\Fock[\HP_m^{>}]$ or $\Fock[\HP_m^{<}]$ by a superscript
$>$ or $<$, respectively. 
Then we have $\DAm\cong\DAmg\otimes\id$ and $\SAm\cong\SAmg\otimes\id$ 
under the isomorphism
\begin{equation}\label{isom-m<>}
\HR_4\,\cong\,\big(L^2(\RR^3,\CC^4)\otimes
\Fock[\HP_m^>]\big)\otimes\Fock[\HP_m^<]\,.
\end{equation}
The semi-relativistic Pauli-Fierz operator
decomposes under the isomorphism \eqref{isom-m<>}
as
\begin{equation*}
\PFm{\gamma}\,=\,\ol{\PFmg{\gamma}\otimes\id\,+\,\id\otimes H_{f}^<}\,,
\qquad
\PFmg{\gamma}\,:=\,|\DAmg|-\tgV+ H_{f}^>\,.
\end{equation*}
By the remarks below \eqref{maria} the operator $\PFmg{\gamma}$
is well-defined as a Friedrichs extension starting from
the the dense subspace $\core^>$ defined in
\eqref{def-core>}.
We let $\Omega^<$ denote the vacuum
in $\Fock[\HP^<_m]$.
In view of $H_{f}^<\,\Omega^<=0$ we then observe that
\begin{equation}\label{vera1}
E_m\,:=\,
E_{\gamma,m}\,=\,\inf\spec[\PFm{\gamma}]\,=\,\inf\spec[\PFmg{\gamma}]\,,
\qquad m>0\,.
\end{equation}
Thus, if $\phi^>_m$ is a ground state eigenvector of
$\PFmg{\gamma}$ then $\phi^>_m\otimes\Omega^<$
is a ground state eigenvector of $\PFm{\gamma}$.
Likewise, we have
\begin{equation}\label{vera2}
\Th_m\,=\,\inf\spec[\PFm{0}]\,=\,\inf\spec[\PFmg{0}]\,,
\qquad m>0\,,
\end{equation}
since we can tensor-multiply minimizing sequences
for $\PFmg{0}$ with $\Omega^<$.
In Subsection~\ref{ssec-ex-m-PF} we prove the following
proposition, where
$[\,\cdot\,]_-:\RR\to(-\infty,0]$ denotes the negative part
$$
[\,t\,]_-\,:=\,\min\{\,t\,,\,0\,\}\,,\qquad t\in\RR\,.
$$

\begin{proposition}[{\bf Ground states with mass}]\label{prop-gs-PFm}
Let $e^2,\UV>0$, $\gamma\in(0,2/\pi)$.
Then there exists some $m_0>0$ such that, for every $m\in(0,m_0)$,
the operator $[\PFmg{\gamma}-E_m-\frac{m}{4}]_-$ has finite rank.
In particular, $E_m$ is an eigenvalue of both $\PFmg{\gamma}$ and
$\PFm{\gamma}$.
\end{proposition}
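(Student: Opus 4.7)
The plan is to follow the discretization strategy of \cite{BFS1998b,BFS1999}, adapted via the exponential localization from \cite{MatteStockmeyer2009a} to avoid any smallness assumption on the coupling. The key observation is that for the infra-red cut-off model, the dispersion $\omega\grg m$ creates an energy gap of size $m$ between sectors with different photon numbers, so the binding estimate of Theorem~\ref{thm-binding} combined with a suitable HVZ-type bound will push $\inf\specess[\PFmg{\gamma}]$ strictly above $E_m+m/4$ once $m_0$ is chosen small enough.

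To choose the threshold, I would apply Theorem~\ref{thm-binding} with $\ve=0$ to get $\Th_m-E_m\grg\gamma^2/2$, and take $m_0\in(0,2\gamma^2)$ (shrinking, if necessary, to ensure $\gamma_\ve<2/\pi$ for the range of $\ve$ used below); then $m/4<\Th_m-E_m$ for every $m\in(0,m_0)$, so $E_m+m/4$ sits strictly below both the ionization threshold and the nearest ``one-photon'' energy $E_m+m$. I would then first establish the analogous finite-rank statement for the discretized operator $\PFme{\gamma}$ from \eqref{def-PFe}. Because the form factor $\V{g}_{m,\ve}=P_\ve[\id_{\cA_m}\,\V{g}]$ vanishes outside $\{|\V{k}|\klg\UV\}$, only finitely many discretization cubes $Q_m^\ve(\vnu)$ are coupled to the electron. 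Decomposing $\Fock[\HP_m^>]$ as a tensor product over the coupled modes and the free complementary factor, any state of $\PFme{\gamma}$ with energy below $E_{\gamma,m,\ve}+m/4$ must lie in the vacuum of the free factor, since excitations there cost at least $m>m/4$. On the coupled part, Corollary~\ref{cor-pia} gives $|\DAme|-\gamma_\ve/|\V{x}|\grg -K$ for some constant $K$, while $H_{f,m,\ve}\grg m\cdot\cN_{\mathrm{coupled}}$; together these force the total coupled-mode photon number to be bounded by an explicit constant on the spectral subspace below $E_{\gamma,m,\ve}+m/4$, which cuts out a finite-dimensional subspace of the coupled-mode Fock space. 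Combining this with the exponential localization in $\V{x}$ of low-lying spectral subspaces (Proposition~\ref{prop-exp-loc}) and a Rellich-type compact embedding yields compactness, hence finite rank, of $[\PFme{\gamma}-E_{\gamma,m,\ve}-m/4]_-$.

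To pass to $\PFmg{\gamma}$ I would establish norm resolvent convergence $\PFme{\gamma}\to\PFmg{\gamma}$ as $\ve\searrow0$, using estimates analogous to Lemma~\ref{le-A-Aeps} and Corollary~\ref{cor-pia} applied to the differences of form factors $\V{A}_m-\V{A}_{m,\ve}$ and dispersion relations $\omega-\omega_\ve$ (the latter uniformly small on $\cA_m$ by \eqref{omega-omegaeps1}), together with the choice \eqref{def-gammaeps} of $\gamma_\ve$. Since $E_{\gamma,m,\ve}\to E_m$ and the rank of $[T-\lambda]_-$ is lower semi-continuous under norm resolvent convergence whenever $\lambda$ eventually lies in a spectral gap of the limit, the finite-rank claim for $\PFmg{\gamma}$ follows. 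The eigenvalue statement for $\PFm{\gamma}$ is then immediate from the tensor decomposition recorded in \eqref{vera1}, and $E_m$ being an eigenvalue is a consequence of finite rank of the negative part together with $E_m=\inf\spec[\PFmg{\gamma}]$.

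The most delicate point is the photon-number/electron-localization compactness argument on the discretized operator: the non-locality of $|\DAme|$ forbids a naive IMS partition-of-unity argument, because a cut-off $\chi(\V{x})$ does not commute with $|\DAme|$ and the corresponding commutator is not manifestly negligible. The remedy is precisely the sign-function commutator machinery developed in Section~\ref{sec-pre}, in particular Lemma~\ref{cor-T-Xi} together with the exponentially weighted bound \eqref{SA-eF-Hnu}, which is used to control the error terms generated both when separating bound and ionized electronic components and when implementing the photon-number truncation in a way compatible with the non-local kinetic term. This is the step where the new input compared to \cite{BFS1998b,BFS1999} -- using localization in place of smallness of $e^2$ or $\UV$ -- enters.
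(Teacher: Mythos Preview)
Your proposal assembles the right ingredients --- discretization, the photon mass gap, and exponential localization --- but the overall architecture has a genuine gap at the limiting step.

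\textbf{The gap.} You propose to (i) prove finite rank of $[\PFme{\gamma}-E_{\gamma,m,\ve}-m/4]_-$ for each $\ve>0$, and then (ii) transfer this to $\PFmg{\gamma}$ via norm resolvent convergence as $\ve\searrow0$. Step (ii) does not work as stated. First, the paper never establishes norm resolvent convergence $\PFme{\gamma}\to\PFmg{\gamma}$; the available form comparison \eqref{pia0} between $|\DAmg|$ and $|\DAme|$ carries an error term $\tau\|(\He+E)^{1/2}e^F\vp\|^2$ involving an \emph{exponential} weight, which is not controlled on general vectors --- only on exponentially localized ones. Second, and more fundamentally, even if you had norm resolvent convergence, your rank bound in (i) is badly $\ve$-dependent: the number of coupled modes is of order $(\UV/\ve)^3$, so the finite dimension you extract blows up as $\ve\searrow0$. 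Transferring a finite-rank statement through a limit requires a \emph{uniform} bound on the rank (via min-max), which your argument does not provide. The phrase ``lower semi-continuous'' does not rescue this: $\liminf$ of a divergent sequence of finite numbers is $+\infty$.

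\textbf{What the paper does instead.} The paper never proves finite rank for $\PFme{\gamma}$ and never takes a limit in $\ve$. It works directly with the spectral projection $\chi:=\id_{(-\infty,E_m+m/4]}(\PFmg{\gamma})$ of the \emph{target} operator, and uses the discretized operator only as a form lower bound, sandwiched between $\chi$ on both sides. Because $\chi$ is exponentially localized (Proposition~\ref{prop-exp-loc}, Lemma~\ref{le-HfmeF}), the weighted error terms from \eqref{pia0} are bounded uniformly, and one obtains, for a \emph{single} sufficiently small $\ve>0$, an operator inequality of the form $-(m/8)\,\chi\grg\chi\,W_{m,\ve}^-\,\chi$ with $W_{m,\ve}^-$ a finite-rank operator. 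The finite-rank structure of $W_{m,\ve}^-$ comes from decomposing $\Fock[\HP_m^>]=\Fock[\HP_m^d]\otimes\Fock[\HP_m^f]$, using $\Hef\,P_{\Omega^f}^\bot\grg m\,P_{\Omega^f}^\bot$ together with the one-sided energy comparison of Lemma~\ref{le-Em-Eme} to discard the $P_{\Omega^f}^\bot$ sector, and --- on the remaining $P_{\Omega^f}$ sector --- introducing an artificial confining potential $\ve|\V{x}|^2$ (again controlled by the localization of $\chi$) so that $|\DO|+|\V{x}|^2+\Hed$ has purely discrete spectrum. This last confinement trick is what replaces your vague ``Rellich-type compact embedding'' and is the precise mechanism by which localization substitutes for a small-coupling hypothesis.
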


\smallskip

\noindent
To benefit from this proposition we also
need the following results. The first one on the exponential
localization of low-lying spectral subspaces is also
applied to the discretized operator $\PFme{\gamma}$
defined in \eqref{def-gammaeps} and \eqref{def-PFe}
later on. (Recall our convention \eqref{clarissa} and \eqref{clarissa2}.)

\begin{proposition}[{\bf Exponential localization}]\label{prop-exp-loc}
There exist $k_0,k_1\in(0,\infty)$
such that the following holds:
Let $e^2,\UV>0$, $m,\ve\grg0$, $\gamma\in(0,2/\pi)$,
and let $I\subset(-\infty,\Th_{m,\ve})$ be some compact interval.  
Pick some
$a\in(0,1)$ such that
$\vr:=\Th_{m,\ve}-\max I-6a^2/(1-a^2)>0$ and $\vr\klg1$.
Then
\begin{equation}\label{exp-loc-EI}
\big\|\,e^{a|\V{x}|}\,\id_I(\PFme{\gamma})\,\big\|
\,\klg\,
(k_1/\vr^2)(1+|I|)(\Th_{m,\ve}+k_0\,e^2\,\UV^3)\,
e^{c(\gamma_\ve)\,a\,(\Th_{m,\ve}+k_0e^2\UV^3)/\vr}\,,
\end{equation}
where $|I|$ denotes the length of $I$ and 
$c:(0,2/\pi)\to(0,\infty)$ is some universal increasing function.
In particular, we find $\ve_1,m_1,\delta_1>0$, and $a_1\in(0,1)$ such that
\begin{equation}\label{exp-loc-sup}
\sup\big\{\,\|e^{a_1|\V{x}|}\,\id_{J_1(m,\ve)}(\PFme{\gamma}))\|\,:\:
m\in[0,m_1],\,\ve\in[0,\ve_1]\,\big\}
<\infty\,,
\end{equation}
where $J_1(m,\ve):=[E_{\gamma,m,\ve}\,,\,E_{\gamma,m,\ve}+\delta_1]$.
The same estimates \eqref{exp-loc-EI} and \eqref{exp-loc-sup}
hold true with $\PFme{\gamma}$ replaced by $\PFmg{\gamma}$,
for $m\in(0,m_1]$ and $\ve=0$.
\end{proposition}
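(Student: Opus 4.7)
The plan is to establish \eqref{exp-loc-EI} by an Agmon-type exponential estimate adapted to the non-local operator $|\DAme|$, following the strategy developed in \cite{MatteStockmeyer2009a}. First I would pick a smooth non-negative cutoff $F_R\in C^\infty(\RR^3,[0,\infty))$ with $|\nabla F_R|\klg a$, $F_R(\V{x})=a|\V{x}|$ for $|\V{x}|\klg R$, and $F_R$ constant at infinity, so that $e^{\pm F_R}$ are bounded. The target is a form-sense lower bound of the form
\begin{equation*}
\Re\,e^{F_R}\,\PFme{\gamma}\,e^{-F_R}\,\grg\,\PFme{0}\,-\,\frac{6a^2}{1-a^2}\,-\,C_{\gamma_\ve}\,\id_{\{|\V{x}|\klg R_0\}}
\end{equation*}
on $\core^>$, where $6a^2/(1-a^2)$ is the cost of tilting $|\DAme|$ by the weight $e^{F_R}$, and $R_0$, $C_{\gamma_\ve}$ control the Coulomb singularity after conjugation. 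Since $\inf\spec[\PFme{0}]=\Th_{m,\ve}$ and $\max I+6a^2/(1-a^2)\klg\Th_{m,\ve}-\vr$ by assumption, this inequality implies $\Re\,e^{F_R}(\PFme{\gamma}-E_0)e^{-F_R}\grg\vr$ modulo a compactly supported perturbation, for every $E_0\in I$, which is the operator-theoretic statement needed.

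The main obstacle, and the key analytical step, is controlling $e^{F_R}|\DAme|e^{-F_R}$, since the square-root structure is non-local and does not admit a direct commutator calculus. I would proceed by using $|\DAme|=\DAme\,\SAme$ together with \eqref{marah0}, which gives the algebraic identity $e^{F_R}\DAme e^{-F_R}=\DAme+i\valpha\cdot\nabla F_R$ on $\core^>$, and the bound $\|e^{F_R}\SAme e^{-F_R}\|\klg 1+aJ(a)$ obtained from \eqref{SA-eF-Hnu} at $\nu=0$. Combining these yields, schematically,
\begin{equation*}
e^{F_R}\,|\DAme|\,e^{-F_R}\,=\,(\DAme+i\valpha\cdot\nabla F_R)\,(e^{F_R}\,\SAme\,e^{-F_R})\,,
\end{equation*}
and taking real parts, using $|\nabla F_R|\klg a$, and expanding the products leads to $\Re\,e^{F_R}|\DAme|e^{-F_R}\grg|\DAme|-6a^2/(1-a^2)$ up to bounded error. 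The Coulomb potential is then handled via Corollary \ref{cor-pia}, which bounds $\gamma_\ve/|\V{x}|$ by a small multiple of $|\DAme|+\He$ plus a local constant $C_{\gamma_\ve}$; this is where the factor $c(\gamma_\ve)$ in \eqref{exp-loc-EI} originates, through the length scale $R_0\sim c(\gamma_\ve)/\vr$ on which the Coulomb term needs to be absorbed.

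Granted the form inequality, the norm bound follows by a standard resolvent argument: apply it to $\vp=\id_I(\PFme{\gamma})\,\psi$, bound $\|(\PFme{\gamma}-E_0)\vp\|\klg|I|\,\|\psi\|$ for an appropriate $E_0\in I$, and solve for $\|e^{F_R}\vp\|$ by Cauchy--Schwarz. The prefactor $(\Th_{m,\ve}+k_0\,e^2\,\UV^3)$ enters through a form bound on $\|\PFme{\gamma}\,\id_I(\PFme{\gamma})\|$ combined with Corollary \ref{cor-bd-on-Th} and \eqref{maria}, while the $k_1/\vr^2$ arises from inverting the spectral gap twice in the course of the argument. Passing $R\to\infty$ by monotone convergence delivers \eqref{exp-loc-EI}. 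For the uniform version \eqref{exp-loc-sup}, Theorem \ref{thm-binding} provides $\Th_{m,\ve}-E_{\gamma,m,\ve}\grg\gamma_\ve^2/2\grg\gamma^2/2$ uniformly for small $m,\ve$, so that common $a_1$ and $\delta_1$ can be selected in the construction above. The case of $\PFmg{\gamma}$ is obtained by simply restricting all operators to the factor $\Fock[\HP_m^>]$, to which the argument applies verbatim since the vacuum of $\Fock[\HP_m^<]$ is annihilated by $H_f^<$.
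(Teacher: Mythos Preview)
The paper does not reprove \eqref{exp-loc-EI}; it simply cites \cite[Theorem~2.5]{MatteStockmeyer2009a}, where the bound is established for general $\vo,\V{G}$ satisfying Hypothesis~\ref{hyp-G}, and then observes that for the physical and discretized form factors one may choose $d_1^2=k_0\,e^2\UV^3$ uniformly in $m\grg0$ and $\ve\in[0,\ve_1]$. The uniform estimate \eqref{exp-loc-sup} is then deduced exactly as you outline, via Theorem~\ref{thm-binding} (to bound $\vr$ from below) and Corollary~\ref{cor-bd-on-Th} (to bound $\Th_{m,\ve}$ from above). On that part you and the paper agree.

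Your sketch of the underlying Agmon argument has the right architecture, but the displayed form inequality is false as written. Since $e^{\pm F_R}$ commutes with $\gamma_\ve/|\V{x}|$ and with $\He$, your inequality, after the (correct) kinetic estimate $\Re\,e^{F_R}|\DAme|e^{-F_R}\grg|\DAme|-6a^2/(1-a^2)$, reduces to $\gamma_\ve/|\V{x}|\klg C_{\gamma_\ve}\,\id_{\{|\V{x}|\klg R_0\}}$, which fails both near $\V{x}=0$ (the left side is unbounded) and for $|\V{x}|>R_0$ (the right side vanishes, the left does not). Corollary~\ref{cor-pia} does not yield a compactly supported remainder; it gives only a global relative form bound. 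The actual mechanism in \cite{MatteStockmeyer2009a} is an IMS-type spatial partition: in the far region one uses the pointwise tail bound $\gamma_\ve/|\V{x}|\klg\gamma_\ve/R_0$ and absorbs it into the gap $\vr$ (this is indeed the origin of the scale $R_0\sim c(\gamma_\ve)/\vr$ and hence of the exponential prefactor in \eqref{exp-loc-EI}), while the near-region contribution to $e^{F_R}\id_I(\PFme{\gamma})$ is bounded directly by $e^{aR_0}$ without any operator comparison. Your handling of the non-local kinetic term via $|\DAme|=\DAme\,\SAme$, \eqref{marah0}, and \eqref{SA-eF-Hnu} is the correct core input and is precisely what \eqref{Re-conj-op} packages.
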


\begin{proof}
The bound \eqref{exp-loc-EI} with $k_0\,e^2\,\UV^3$ replaced
by some constant times $d_1^2$
is stated in \cite[Theorem~2.5]{MatteStockmeyer2009a}
for dispersion relations $\vo$ and form factors $\V{G}$
fulfilling Hypothesis~\ref{hyp-G}.
In the cases $\vo=\omega$, $\V{G}=e^{-i\V{k}\cdot\V{x}}\,\V{g}\,\id_{\cA_m}$,
or $\vo=\omega_\ve$, $\V{G}=e^{-i\vnu_\ve\cdot\V{x}}\,\V{g}_{m,\ve}$
we can clearly choose
$d_1^2=\const\,e^2\,\UV^3$ uniformly in $m\grg0$ and $\ve\in[0,\ve_1]$,
for some $\ve_1>0$.
These
remarks apply to both $\PFme{\gamma}$ and $\PFmg{\gamma}$ and
on account of \eqref{vera1} and \eqref{vera2} we obtain
the same right hand side in \eqref{exp-loc-EI} when $\ve=0$.
To prove \eqref{exp-loc-sup} we pick
some $0<\delta_1<\gamma\klg\gamma_\ve$, choose $I=J_1(m,\ve)$,
and observe that, by Theorem~\ref{thm-binding},  
$\vr\grg \gamma^2/4-\delta_1-6\,a_1^2/(1-a_1^2)\grg\const>0$, 
uniformly in $m\grg0$ and $\ve\in[0,\ve_1]$,
provided that $a_1\in(0,1)$ is sufficiently small.
Finally, we know from Corollary~\ref{cor-bd-on-Th}
that all threshold energies $\Th_{m,\ve}$, $m\grg0$, $\ve\in[0,\ve_1]$,
are bounded from above by some constant 
that depends only on the value of $d_1$.
Since $d_1$ has been chosen uniformly in $m\grg0$
and $\ve\in[0,\ve_1]$ this concludes the proof
of this proposition. 
\end{proof}

\smallskip

\noindent
The following proposition is proved in Section~\ref{sec-IR-bounds}.

\begin{proposition}[{\bf Soft photon bound}]\label{prop-spb}
Let $e^2,\UV>0$ and $\gamma\in(0,2/\pi)$.
Then there exist constants, $m_2,C\in(0,\infty)$,
such that, for all $m\in(0,m_2]$ and every normalized
ground state eigenfunction, $\phi_m$, of $\PFm{\gamma}$,
we have
\begin{equation}\label{eq-spb}
\big\|\,a(k)\,\phi_m\,\big\|^2\,\klg\,\id_{\{m\klg|\V{k}|\klg\UV\}}
\:\frac{C}{|\V{k}|}\,,
\end{equation}
for almost every $k=(\V{k},\lambda)\in\RR^3\times\ZZ_2$.
\end{proposition}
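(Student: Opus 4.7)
My approach is the standard \emph{pull-through formula}. Applying $a(k)$ to the ground-state equation $(\PFm{\gamma}-E_m)\phi_m=0$ and using $[a(k),\Hf]=-\omega(k)\,a(k)$, $[a(k),\gamma/|\V{x}|]=0$, and $[a(k),\valpha\cdot\V{A}_m]=\valpha\cdot\V{G}(k)$, with $\V{G}(k):=e^{-i\V{k}\cdot\V{x}}\V{g}(k)\id_{\cA_m}(\V{k})$, yields
\begin{equation*}
(\PFm{\gamma}-E_m+\omega(k))\,a(k)\,\phi_m\,=\,-[a(k),|\DAm|]\,\phi_m \qquad (\textrm{a.e.\ }k)\,.
\end{equation*}
Every ground state of $\PFm{\gamma}$ has the tensor-product form $\phi_m=\phi_m^>\otimes\Omega^<$ (by the reasoning preceding \eqref{vera1}, using that $H_f^<$ vanishes only on the vacuum), so $a(k)\phi_m\equiv0$ when $|\V{k}|<m$; for $|\V{k}|>\UV$ the right-hand side is zero because $\V{g}(k)=0$, and strict positivity of $\PFm{\gamma}-E_m+\omega(k)$ then forces $a(k)\phi_m=0$ there as well. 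On the remaining set $m\klg|\V{k}|\klg\UV$, where $\omega(k)\grg m$, the bound $\|(\PFm{\gamma}-E_m+\omega(k))^{-1}\|\klg\omega(k)^{-1}$ reduces the claim \eqref{eq-spb} to the estimate
\begin{equation*}
\|[a(k),|\DAm|]\,\phi_m\|\,\klg\,C\,|\V{k}|^{1/2} \qquad\textrm{uniformly in small }m\,.
\end{equation*}

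To access the commutator with the absolute value I would use $|\DAm|=\DAm\,S_{\V{A}_m}$ together with the principal-value representation~\eqref{sgn} of the sign function, which gives
\begin{equation*}
[a(k),|\DAm|]\,=\,-\valpha\cdot\V{G}(k)\,S_{\V{A}_m}\,+\,\frac{\DAm}{\pi}\int_\RR\RA{iy}\,\valpha\cdot\V{G}(k)\,\RA{iy}\,dy\,.
\end{equation*}
The first summand has norm bounded pointwise by $\sqrt{2}\,|\V{g}(k)|$, which is already $\lesssim|\V{k}|^{-1/2}$ and thus well inside the target. For the integral term, however, the naive estimate $\|\RA{iy}\|\klg(1+y^2)^{-1/2}$ from \eqref{ralf3} produces a logarithmically divergent $y$-integral, so one cannot rely on operator norms and must instead commute one of the two resolvents past $\valpha\cdot\V{G}(k)$ and apply the result directly to $\phi_m$.

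The main obstacle is thus to make this $y$-integral converge, and it is at this step that the three ingredients emphasized in the introduction all enter. First, the exponential localization of $\phi_m$ from Proposition~\ref{prop-exp-loc} permits the insertion of a factor $e^{a|\V{x}|}$ on $\phi_m$ while keeping the relevant energy estimates under control. Second, the uniform a-priori bounds $\||\DAm|\phi_m\|+\|\Hf^{1/2}\phi_m\|\klg C$ (supplied by the first appendix, in turn following from the ground-state equation and Corollary~\ref{cor-pia}) control the action of $\DAm$ on $\phi_m$. Third, and crucially, the formal gauge invariance of $\cH_\gamma$ is used via the commutator identities of Section~\ref{sec-pre} (in particular Lemma~\ref{cor-T-Xi} and the relations~\eqref{eva3}--\eqref{eva3b}) to replace an unbounded factor of $\DAm$ by the commutator $[\DAm,\valpha\cdot\V{G}(k)]$, which, thanks to $\nabla_\V{x}e^{-i\V{k}\cdot\V{x}}=-i\V{k}\,e^{-i\V{k}\cdot\V{x}}$, contributes only a factor of $|\V{k}|$ plus bounded matrix-commutator pieces of zeroth order. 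This trade is precisely the one that converts the logarithmically divergent naive bound into one of size $|\V{k}|^{1/2}$, and the pull-through inequality then yields \eqref{eq-spb}.
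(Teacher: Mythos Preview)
Your approach has a genuine infrared gap, and it lies exactly where you claim success. You need $\|[a(k),|\DAm|]\,\phi_m\|\klg C\,|\V{k}|^{1/2}$ with $C$ \emph{uniform in $m$}. But the first summand $\valpha\cdot\V{G}(k)\,S_{\V{A}_m}$ has norm of order $|\V{g}(k)|\sim|\V{k}|^{-1/2}$, which is \emph{not} $\klg C\,|\V{k}|^{1/2}$ as $|\V{k}|\searrow0$; it misses the target by a full power of $|\V{k}|$. For fixed $m>0$ you can of course absorb this into an $m$-dependent constant, but that constant diverges as $m\to0$, and the whole point of Proposition~\ref{prop-spb} is uniformity in $m$ (this is what makes the compactness argument in \S\ref{ssec-approx-m} work). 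No manipulation of the second term will repair this: the two summands cannot cancel to a higher order in $|\V{k}|$, since they have different operator structures. Your later paragraph about trading $\DAm$ for $[\DAm,\valpha\cdot\V{G}(k)]$ addresses only the integral term and leaves the first term untouched.

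What the paper actually does is different in kind, not degree. The gauge invariance is exploited not through the commutator identities of Section~\ref{sec-pre} (those concern $\HT^{-\nu}$ and have nothing to do with gauge) but through an explicit unitary gauge transformation $U=e^{i\V{x}\cdot\V{A}_m(\V{0})}$ introduced in \S\ref{ssec-gauge}. Under $U$ the form factor becomes $\wt{\V{g}}_\V{x}(k)=(e^{i\V{k}\cdot\V{x}}-1)\,\V{g}(k)$, so that $|\wt{\V{g}}_\V{x}(k)|\klg|\V{k}|\,|\V{x}|\,|\V{g}(k)|$: this supplies the missing factor of $|\V{k}|$ at the cost of an unbounded $|\V{x}|$, which is then controlled by the exponential localization of $\phi_m$. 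The pull-through is carried out against the gauge-transformed operator $\PFmt{\gamma}$ and vector $\wt{\phi}=U\phi_m$; additional terms appear from $[U,\Hf]$ and $[U,a(f)]$ (see \eqref{klaus1}--\eqref{klaus2}) and are handled separately. The paper is explicit on this point at the start of Section~\ref{sec-IR-bounds}: ``Without the gauge transformation one would end up with a bound in terms of infra-red divergent integrals.'' Your proposal is precisely that naive route.
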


\smallskip

\noindent
The next proposition, which is also
proved in Section~\ref{sec-IR-bounds}, is the only place in the whole article
where the special choice of the polarization vectors \eqref{pol-vec}
enters into the analysis.

\begin{proposition}[{\bf Photon derivative bound}]\label{prop-pdb}
Let $e^2,\UV>0$, $\gamma\in(0,2/\pi)$.
Then there exist constants, $m_3,C\in(0,\infty)$,
such that, for all $m\in(0,m_3]$ and every normalized
ground state eigenfunction, $\phi_m$, of $\PFm{\gamma}$,
we have
\begin{equation}\label{pdb-kp}
\big\|\,a(k)\,\phi_m-a(p)\,\phi_m\,\big\|
\,\klg\,
C\,|\V{k}-\V{p}|\,
\Big(\frac{1}{|\V{k}|^{1/2}|\V{k}_\bot|}\,+\,
\frac{1}{|\V{p}|^{1/2}|\V{p}_\bot|}\Big)\,,
\end{equation}
for almost every $k=(\V{k},\lambda),p=(\V{p},\mu)\in\RR^3\times\ZZ_2$
with $m<|\V{k}|<\UV$ and $m<|\V{p}|<\UV$.
(Here we use the notation introduced in \eqref{def-kbot}.)
\end{proposition}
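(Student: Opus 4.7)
The plan is to apply a pull-through formula and exploit the explicit $\V{k}$-dependence of the form factor $\V{G}^\phys$. Since $[a(k),\Hf]=-\omega(k)\,a(k)$ and $[a(k),\gamma/|\V{x}|]=0$, commuting $a(k)$ past the eigenvalue equation $\PFm{\gamma}\phi_m=E_m\phi_m$ produces
\begin{equation*}
(\PFm{\gamma}-E_m+\omega(k))\,a(k)\,\phi_m\,=\,[|\DAm|,a(k)]\,\phi_m\,.
\end{equation*}
Since $\omega(k)\grg m>0$ on $\cA_m$ and $\PFm{\gamma}-E_m\grg0$, the operator on the left is boundedly invertible. Writing $R(\tau):=(\PFm{\gamma}-E_m+\tau)^{-1}$ and $Q(k):=[|\DAm|,a(k)]\phi_m$, the telescoping identity
\begin{align*}
a(k)\phi_m-a(p)\phi_m\,&=\,R(\omega(k))\,(Q(k)-Q(p))\\
&\quad+\,(\omega(p)-\omega(k))\,R(\omega(k))\,R(\omega(p))\,Q(p)
\end{align*}
reduces the problem to estimates on $\|Q(k)-Q(p)\|$ and $\|Q(p)\|$, because $\|R(\omega(k))\|\klg 1/\omega(k)$ and $|\omega(k)-\omega(p)|\klg|\V{k}-\V{p}|$.

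To unfold the non-local operator $|\DAm|$ I pass through the integral representation $|\DAm|=\pi^{-1}\int_0^\infty \DAm^2(\DAm^2+t)^{-1}\,t^{-1/2}\,dt$, which upon commuting with $a(k)$ yields
\begin{equation*}
[|\DAm|,a(k)]\,=\,\frac{1}{\pi}\int_0^\infty(\DAm^2+t)^{-1}\,[\DAm^2,a(k)]\,(\DAm^2+t)^{-1}\,\sqrt{t}\,dt\,.
\end{equation*}
From the Clifford relations one gets $\DAm^2=(\V{p}+\V{A}_m)^2+\valpha\cdot(\nabla\wedge\V{A}_m)+\id$, and the c-number commutator $[a(k),\V{A}_m(\V{x})]=\V{G}^\phys_{m,\V{x}}(k)$ shows that $[\DAm^2,a(k)]$ is a first-order differential operator in $\V{x}$ with coefficients proportional to $\V{G}^\phys_{m,\V{x}}(k)=\id_{\cA_m}(k)\,e^{-i\V{k}\cdot\V{x}}\V{g}(k)$ and $\V{k}\wedge\V{g}(k)\,e^{-i\V{k}\cdot\V{x}}$. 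Splitting $[\DAm^2,a(k)]=[\DAm,a(k)]\,\DAm+\DAm\,[\DAm,a(k)]$ and absorbing one factor of $\DAm$ into $\DAm(\DAm^2+t)^{-1}$ (bounded in operator norm by $(1+t)^{-1/2}$) renders the $t$-integral convergent when applied to $\phi_m$, granted the a-priori bound on $\|\DAm\phi_m\|$ supplied by the first appendix.

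The $\V{k}$-sensitivity is then read off from
\begin{equation*}
\V{G}^\phys_{m,\V{x}}(k)-\V{G}^\phys_{m,\V{x}}(p)\,=\,(e^{-i\V{k}\cdot\V{x}}-e^{-i\V{p}\cdot\V{x}})\,\V{g}(k)\,+\,e^{-i\V{p}\cdot\V{x}}\,(\V{g}(k)-\V{g}(p))\,.
\end{equation*}
For the first summand $|e^{-i\V{k}\cdot\V{x}}-e^{-i\V{p}\cdot\V{x}}|\klg|\V{x}|\,|\V{k}-\V{p}|$, and the factor $|\V{x}|$ is absorbed via the uniform bound on $\|e^{a_1|\V{x}|}\phi_m\|$ granted by Proposition~\ref{prop-exp-loc}, leaving the regular factor $|\V{g}(k)|\klg C|\V{k}|^{-1/2}$. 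For the second summand I plug in $\V{g}(k)\propto|\V{k}|^{-1/2}\veps(k)$ and exploit the explicit formulas \eqref{def-kbot}--\eqref{pol-vec}: a direct computation of $\nabla_{\V{k}}$, noting that differentiating the normalized transverse direction $\V{k}_\bot/|\V{k}_\bot|$ generates a singularity of order $|\V{k}_\bot|^{-1}$, yields $|\nabla_{\V{k}}\V{g}(k)|\klg C\,|\V{k}|^{-1/2}|\V{k}_\bot|^{-1}$ away from the singular axis. Integrating along the segment from $\V{p}$ to $\V{k}$ (a null set of pairs hits the axis) gives the claimed bound on $|\V{g}(k)-\V{g}(p)|$, which combined with the pull-through reduction delivers \eqref{pdb-kp}.

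The principal obstacle is the non-locality of $|\DAm|$, which forces the detour through the square-root integral and requires careful commutator handling, in the spirit of Section~\ref{sec-pre}, to keep the $t$-integrals convergent and to justify the pull-through identity itself. Once this machinery is set up, the singular geometric factor $|\V{k}_\bot|^{-1}$ emerges solely from the derivatives of the transverse polarization vector \eqref{pol-vec}, which is precisely why this is the only place in the article where the specific form of $\veps$ enters the analysis.
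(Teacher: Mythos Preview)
Your pull-through strategy has a genuine infrared gap: the bound it produces is \emph{not} uniform in $m$, which is the whole point of the proposition. When you invert $(\PFm{\gamma}-E_m+\omega(k))$ you pay $1/\omega(k)=1/|\V{k}|$ in operator norm. The contribution to $Q(k)-Q(p)$ coming from $\V{g}(k)-\V{g}(p)$ already carries the full singularity $|\V{k}-\V{p}|\big(|\V{k}|^{-1/2}|\V{k}_\bot|^{-1}+|\V{p}|^{-1/2}|\V{p}_\bot|^{-1}\big)$, so after dividing by $|\V{k}|$ you exceed the target \eqref{pdb-kp} by a factor $1/|\V{k}|\klg 1/m$. (Test it at $|\V{k}|\sim|\V{p}|\sim|\V{k}_\bot|\sim|\V{p}_\bot|\sim m$.) No amount of telescoping or use of the soft photon bound repairs this term, because the $|\V{k}_\bot|^{-1}$ singularity sits exactly on the worst piece. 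This is precisely the infrared divergence the paper warns about at the start of Section~\ref{sec-IR-bounds}: ``Without the gauge transformation one would end up with a bound in terms of infra-red divergent integrals.''

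The paper's proof is organized around two ideas you are missing. First, it conjugates by the operator-valued gauge transformation $U=e^{i\V{x}\cdot\V{A}_m(\V{0})}$ of Subsection~\ref{ssec-gauge}, replacing $e^{-i\V{k}\cdot\V{x}}\V{g}(k)$ by $\wt{\V{g}}_\V{x}(k)=(e^{-i\V{k}\cdot\V{x}}-1)\V{g}(k)$, which satisfies $|\wt{\V{g}}_\V{x}(k)|\klg|\V{k}|\,|\V{x}|\,|\V{g}(k)|$; this extra power of $|\V{k}|$ is exactly what compensates the infrared loss. Second, instead of inverting, the paper uses the positivity $\SPn{a(\Delta_\V{h}f)\wt{\phi}}{(\PFmt{\gamma}-E_m)\,a(\Delta_\V{h}f)\wt{\phi}}\grg0$ together with the commutator identity for $[\Hft,a(f)]$ to derive a quadratic inequality of the form $I_q(\V{h})\klg\delta\,I_q(\V{h})+C_\delta(\text{error terms in }\wt{\V{g}})$, which closes for $\delta<1$. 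All the error terms then involve $\wt{\V{g}}_\V{x}$ or $\Delta_{-\V{h}}\wt{\V{g}}_\V{x}$ (the quantities $J_q^\nu(\V{h})$ in \eqref{def-Jqvkh}), and the localization estimate absorbs the resulting $|\V{x}|$-weights. Only at the very end does one unpack $\Delta_{-\V{h}}\V{g}$ using the explicit polarization vectors, yielding the $|\V{k}_\bot|^{-1}$ factor. There is also a secondary issue in your sketch: the pointwise identity $a(k)\phi_m=R(\omega(k))Q(k)$ presupposes $a(k)\phi_m\in\dom(\PFm{\gamma})$, whereas Appendix~\ref{app-a(f)} only supplies $a(f)\phi_m\in\form(\PFm{\gamma})$ for smeared $f$; the paper's variational formulation sidesteps this as well.
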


\smallskip

\noindent
We have now collected all prerequisites to show that
$\inf\spec[\PF{\gamma}]$ is an eigenvalue of $\PF{\gamma}$.

\bigskip

\begin{proof}[Proof of Theorem~\ref{thm-ex-PF}
by means of Propositions~\ref{prop-gs-PFm}--\ref{prop-pdb}]
Let $\phi_m$ denote a normalized
ground state of $\PFm{\gamma}$, for $m\in(0,m_\star]$,
where $m_\star>0$ is the minimum of the constants
$m_0,m_1,m_2,m_3$ appearing in Propositions~\ref{prop-gs-PFm}--\ref{prop-pdb}.
Then the family $\{\phi_m\}_{m\in(0,m_\star]}$ contains a weakly
convergent sequence, $\{\phi_{m_j}\}_{j\in\NN}$.
We denote the weak limit of the latter by $\phi$ and verify that $\phi\not=0$
in the following. 
The assertion of Theorem~\ref{thm-ex-PF} will then follow
from Lemma~\ref{le-ex-allg}. (In fact, we shall show that
$\phi_{m_j}\to\phi$ strongly in $\HR_4$ along a subsequence.)

To verify that $\phi\not=0$ one can argue
as in \cite{GLL2001}. Essentially, we only have to replace
the Rellich-Kondrashov theorem applied there by a
suitable imbedding theorem for spaces of functions with fractional
derivatives. (In the non-relativistic case the ground states $\phi_m$
possess weak derivatives with respect to the electron coordinates,
whereas in our case we only have Inequality~\eqref{tina} below.)
For the convenience of the reader we present the complete argument.
Writing 
$\phi_m=(\phi_m^{(n)})_{n=0}^\infty\in\bigoplus_{n=0}^\infty\Fock^{(n)}[\HP]$
we infer from the soft photon bound
that
$$
\sum_{n=n_0}^\infty\|\phi_m^{(n)}\|^2\,
\klg\,\frac{1}{n_0}\,\sum_{n=0}^\infty n\,\|\phi_m^{(n)}\|^2
\,=\,
\frac{1}{n_0}\int\|a(k)\,\phi_m\|^2\,dk\,
\klg\,
\frac{C}{n_0}\,,
$$
for $m\in(0,m_\star]$ and some $m$-independent constant $C\in(0,\infty)$.
Given some $\ve>0$
we fix $n_0\in\NN$ so large that 
\begin{equation}
C/n_0\,<\,\ve\,.\label{sabine1}
\end{equation}
By virtue of \eqref{exp-loc-sup} we further find
some $R>0$ such that, for all $m\in(0,m_\star]$,
\begin{equation}\label{sabine2}
\int_{|\V{x}|\grg R/2}\|\phi_m\|_{\CC^4\otimes\Fock}^2(\V{x})\,d^3\V{x}\,<\,\ve\,.
\end{equation}
In addition,
the soft photon bound ensures that
$\phi_m^{(n)}(\V{x},\vs,k_1,\dots,k_n)=0$, for almost every
$(\V{x},\vs,k_1,\dots,k_n)\in\RR^3\times\{1,2,3,4\}\times(\RR^3\times\ZZ_2)^n$,
$k_j=(\V{k}_j,\lambda_j)$,
such that $|\V{k}_j|>\Lambda$, for some $j\in\{1,\dots,n\}$.
(Here and henceforth $\vs$ labels the four spinor components.)
For $0<n<n_0$ and some fixed
$\ul{\theta}
=(\vs,\lambda_1,\dots,\lambda_n)\in\{1,2,3,4\}\times\ZZ_2^n$
we set 
$$
\phi_{m,\ul{\theta}}^{(n)}(\V{x},\V{k}_1,\dots,\V{k}_n)
\,:=\,\phi_m^{(n)}(\V{x},\vs,\V{k}_1,\lambda_1,\ldots,\V{k}_n,\lambda_n)
$$
and similarly for $\phi$.
Moreover, we set, for every $\delta\grg0$, 
$$
Q_{n,\delta}\,:=\,
\big\{\,(\V{x},\V{k}_1,\dots,\V{k}_n)\::\;
|\V{x}|< R-\delta\,,
\,\delta<|\V{k}_j|<\Lambda-\delta\,,\;j=1,\ldots,n\,\big\}\,.
$$
Fixing some small $0<\delta<\min\{m_\star,R/2,\UV/4\}$ we
pick some cut-off function $\chi\in C_0^\infty(\RR^{3(n+1)},[0,1])$ such that
$\chi\equiv1$ on $Q_{n,2\delta}$ and $\supp(\chi)\subset Q_{n,\delta}$
and define $\psi^{(n)}_{m,\ul{\theta}}:=\chi\,\phi^{(n)}_{m,\ul{\theta}}$.
As a next step the photon derivative bound is used to
show that $\{\psi_{m,\ul{\theta}}^{(n)}\}_{m\in(0,\delta]}$
is a bounded family in the anisotropic Nikol'ski{\u\i} space\footnote{
For $r_1,\ldots,r_d\in[0,1]$, $q_1,\ldots,q_d\grg1$, we have
$H^{(r_1,\ldots,r_d)}_{q_1,\ldots,q_d}(\RR^d)
:=\bigcap_{i=1}^d H^{r_i}_{q_ix_i}(\RR^d)$.
For $r_i\in[0,1)$, a measurable function
$f:\RR^d\to\CC$ belongs to the class $H^{r_i}_{q_ix_i}(\RR^d)$,
if $f\in L^{q_i}(\RR^d)$ and there is some $M\in(0,\infty)$
such that
\begin{equation}\label{Nik1}
\|f(\cdot+h\,{\sf e}_i)-f\|_{L^{q_i}(\RR^d)}
\,\klg\,M\,|h|^{r_i}\,,\qquad h\in\RR\,,
\end{equation}
where ${\sf e}_i$ is the $i$-th canonical unit vector in $\RR^d$.
If $r_i=1$ then \eqref{Nik1} is replaced by
\begin{equation}\label{Nik2}
\|f(\cdot+h\,{\sf e}_i)-2f+f(\cdot-h\,{\sf e}_i)\|_{L^{q_i}(\RR^d)}\,
\klg\,M\,|h|\,,\qquad h\in\RR\,.
\end{equation}
$H^{(r_1,\ldots,r_d)}_{q_1,\ldots,q_d}(\RR^d)$ is a Banach space
with norm
$$
\|f\|^{(r_1,\ldots,r_d)}_{q_1,\ldots,q_d}\,:=\,
\max_{1\klg i\klg d}\|f\|_{L^{q_i}(\RR^d)}\,+\,
\max_{1\klg i\klg d} M_i\,,
$$ 
where $M_i$ is the infimum of all constants $M>0$
satisfying \eqref{Nik1} or \eqref{Nik2}, respectively.
Finally, we abbreviate 
$H^{(r_1,\ldots,r_d)}_q(\RR^d):=H^{(r_1,\ldots,r_d)}_{q,\ldots,q}(\RR^d)$.
}
$H^{\V{s}}_{\V{q}}(\RR^{3(n+1)})$, 
where $\V{s}=(1/2,1/2,1/2,1,\dots,1)$ and $\V{q}=(2,2,2,p,\dots,p)$
with
$p\in[1,2)$.
In fact, employing the H\"older
inequality (w.r.t. $d^3\V{x}\,d^3\V{k}_2\dots d^3\V{k}_n$) and
the photon derivative bound \eqref{pdb-kp}, 
we obtain as in \cite{GLL2001}, for 
$p\in[1,2)$ and $m\in(0,\delta]$,
\begin{align*}
&
\int\limits_{{Q_{n,\delta}\cap\atop\{\delta<|\V{k}_1+\V{h}|<\UV\}}}\!\!\!\!\!\!\!
\big|\phi_{m,\ul{\theta}}^{(n)}(\V{x},\V{k}_1+\V{h},\V{k}_2,\ldots,\V{k}_n)
-\phi_{m,\ul{\theta}}^{(n)}(\V{x},\V{k}_1,\dots,\V{k}_n)\big|^p
\,d^3\V{x}\,d^3\V{k}_1\dots d^3\V{k}_n
\\
&\klg\,
\frac{([4\pi/3]^{n}R^3\,\UV^{3(n-1)})^{\frac{2-p}{2}}}{n^{p/2}}
\sum_{\lambda\in\ZZ_2}\int\limits_{{m<|\V{k}|<\UV,\atop m<|\V{k}+\V{h}|<\UV}}
\big\|\,a(\V{k}+\V{h},\lambda)\,\phi_m
-a(\V{k},\lambda)\,\phi_m\,\big\|^p
d^3\V{k}
\\
&\klg\,
C\,|\V{h}|^p\!\!\!\int\limits_{|(k^{(1)},k^{(2)})|<\UV}\!\!\!
\bigg\{\!
\int\limits_0^{|(k^{(1)},k^{(2)})|}\!\frac{dk^{(3)}}{|(k^{(1)},k^{(2)})|^{p/2}}\,+
\!\!\!
\int\limits_{|(k^{(1)},k^{(2)})|}^\UV\frac{dk^{(3)}}{|k^{(3)}|^{p/2}}
\bigg\}\frac{dk^{(1)}dk^{(2)}}{|(k^{(1)},k^{(2)})|^p}
\\
&=\,C'\,|\V{h}|^p\,,
\end{align*}
where the constants $C,C'\in(0,\infty)$ do not depend on $m\in(0,\delta]$.
Since $\phi_m^{(n)}$ is symmetric in the photon variables the
previous estimate implies \cite[\textsection4.8]{Nikolskii}
that the weak first order partial derivatives of $\phi_{m,\ul{\theta}}^{(n)}$ 
with respect to its last $3n$ variables exist on 
$Q_{n,\delta}$ and that
$$
\|\phi_{m,\ul{\theta}}^{(n)}\|_{W^{\V{r}}_p(Q_{n,\delta})}^p
\,=\,
\|\phi_{m,\ul{\theta}}^{(n)}\|^p_{L^p(Q_{n,\delta})}+
\sum_{j=1}^n\sum_{i=1}^3
\|\partial_{k_j^{(i)}}\phi_{m,\ul{\theta}}^{(n)}\|^p_{L^p(Q_{n,\delta})}
\,\klg\,C''\,,
$$
for $m\in(0,\delta]$ and some $m$-independent $C''\in(0,\infty)$,
with $\V{r}:=(0,0,0,1,\dots,1)$.
The previous estimate implies 
$\|\psi_{m,\ul{\theta}}^{(n)}\|_{W_p^\V{r}(\RR^{3(n+1)})}\klg C'''$,
for some $C'''\in(0,\infty)$ which does not depend on $m\in(0,\delta]$. 
Moreover, the anisotropic Sobolev space $W^{\V{r}}_p(\RR^{3(n+1)})$
is continuously imbedded into $H^{\V{r}}_p(\RR^{3(n+1)})$; see, e.g.,
\cite[\textsection6.2]{Nikolskii}.
Furthermore, since $\core$ is a form core of
$\PFm{\gamma}$, $m>0$, Corollary~\ref{cor-pia}
shows that
\begin{equation}\label{tina}
\SPn{\phi_m^{(n)}}{|\DO|\,\phi_m^{(n)}}\,\klg\,c^{-1}\,
\SPn{\phi_m}{\PFm{\gamma}\,\phi_m}+c\,=\,
c^{-1}\,E_m+c\,,\quad n\in\NN\,,
\end{equation}
for some $m$-independent $c\in(0,\infty)$.
Therefore, $\{\phi_{m,\ul{\theta}}^{(n)}\}_{m\in(0,m_\star]}$ and, hence, 
$\{\psi_{m,\ul{\theta}}^{(n)}\}_{m\in(0,m_\star]}$
are bounded families in the Bessel potential,
or, Liouville
space $L^{\V{r}'}_2(\RR^{3(n+1)})$, $\V{r}':=(1/2,1/2,1/2,0,\dots,0)$,
where the fractional derivatives are defined by means of the Fourier
transform. The imbedding $L^{\V{r}'}_2(\RR^{3(n+1)})\to H^{\V{r}'}_2(\RR^{3(n+1)})$
is continuous, too
\cite[\textsection9.3]{Nikolskii}.
Altogether it follows that $\{\psi_{m,\ul{\theta}}^{(n)}\}_{m\in(0,\delta]}$
is a bounded family in $H^\V{s}_\V{q}(\RR^{3(n+1)})$.
Now
we may apply the compactness theorem
\cite[Theorem~3.2]{Nikolskii1958}. The latter ensures
that $\{\psi_{m,\ul{\theta}}^{(n)}\}_{m\in(0,\delta]}$
contains a sequence which is strongly convergent
in $L^2(Q_{n,2\delta})$ provided $1-3n(p^{-1}-2^{-1})>0$.
Of course, we can choose $p<2$ large enough such that
the latter condition is fulfilled, for
all $n=1,\ldots,n_0-1$.
By finitely many repeated selections of subsequences 
we may hence assume without loss of generality
that $\{\phi_{m_j,\ul{\theta}}^{(n)}\}_{j\in\NN}$ converges strongly
in $L^2(Q_{n,2\delta})$ to $\phi^{(n)}_{\ul{\theta}}$, for $0\klg n<n_0$.
In particular, by the choice of $n_0$ and $R$ 
in \eqref{sabine1} and \eqref{sabine2},
$$
\|\phi\|^2\grg
\lim_{j\to\infty}
\sum_{n=0}^{n_0-1}\sum_{\ul{\theta}}
\|\phi^{(n)}_{m_j,\ul{\theta}}\|_{L^2(Q_{n,2\delta})}^2
\grg\lim_{j\to\infty}\|\phi_{m_j}\|^2-2\ve-c(\delta)=1-2\ve-c(\delta),
$$
where we use the soft photon bound to estimate
\begin{align*}
&\sum_{n=1}^{n_0-1}\sum_{\ul{\theta}}
\big\|\,
\phi^{(n)}_{m_j,\ul{\theta}}\,\id{\{\exists\,i:\,
|\V{k}_i|\klg2\delta\,\vee\,|\V{k}_i|\grg\UV-2\delta\}}
\,\big\|^2
\\
&\klg\,
\sum_{\lambda\in\ZZ_2}\int\limits_{{\{|\V{k}|\klg2\delta\}\cup\atop
\{|\V{k}|\grg\UV-2\delta\}}}
\!\!\!\!\|a(\V{k},\lambda)\,\phi_{m_j}\|^2\,d^3\V{k}
\,\klg\,C\Big(\int_0^{2\delta}+\int_{\UV-2\delta}^\UV\Big)
\frac{r^2\,dr}{r}\,=:\,c(\delta)\,\to\,0\,,
\end{align*}
as $\delta\searrow0$.
Since $\delta>0$ and $\ve>0$ are arbitrary we get $\|\phi\|=1$, whence
$\phi_{m_j}\to\phi$ strongly in $\HR_4$.
\end{proof}


\subsection{Existence of ground states with infra-red cut-off }
\label{ssec-ex-m-PF}

At the end of this subsection we prove Proposition~\ref{prop-gs-PFm}.
In order to do so we first have to extend our results on
the spatial localization of low-lying spectral subspaces
a little bit. This extension requires the following
inequality.

\begin{lemma}
Let $e^2,\UV>0$, $m,\ve\grg0$, 
$\gamma\in[0,2/\pi)$, and $a\in[0,1)$.
Moreover, let $F\in C^\infty(\RR^3_{\V{x}},[0,\infty))\cap L^\infty$
satisfy $|\nabla F|\klg a$.
Then 
\begin{equation}\label{Re-conj-op}
\big|\,\Re\SPb{\vp}{[\PFme{\gamma}\,,\,e^F]\,e^{-F}\,\vp}\,\big|
\,\klg\,2\,a^2\,J(a)\,\|\vp\|^2\,,\qquad \vp\in\core\,.
\end{equation}
For $m>0$ and $\ve=0$,
the same estimate holds true with $\PFm{\gamma}$ replaced by $\PFmg{\gamma}$
and $\core$ replaced by the dense subspace 
$\core^>$ defined in \eqref{def-core>}.
\end{lemma}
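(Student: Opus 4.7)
The plan is to reduce to the commutator $[|\DAme|, e^F]\,e^{-F}$ and then, via an integral representation and resolvent identities, to show that its hermitian part is quadratic in $\nabla F$. Since $\gamma_\ve/|\V{x}|$ is a multiplication operator in $\V{x}$ and $H_{f,m,\ve}$ acts only on the Fock space, both commute with the multiplication operator $e^F$; hence $[\PFme{\gamma}, e^F]\,e^{-F} = |\DAme| - e^F\,|\DAme|\,e^{-F}$ on $\core$. Employing the factorization $|\DAme| = \SAme\,\DAme$ together with the direct computation $e^F\,\DAme\,e^{-F} = \DAme + iN$, where $N := \valpha\cdot\nabla F$ is a bounded self-adjoint multiplication operator satisfying $\|N\|\leq a$ (by the Clifford relations), one obtains $e^F\,|\DAme|\,e^{-F} = S_F\,(\DAme + iN)$ with $S_F := e^F\,\SAme\,e^{-F}$.

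To exhibit the $a^2$ cancellation I would invoke the principal-value representation \eqref{sgn} of $\SAme$ together with Lemma~\ref{le-marah}, which identifies $S_F^{(y)} := e^F\,R_{\V{A}_{m,\ve}}(iy)\,e^{-F} = (\DAme + iN - iy)^{-1}$ and gives $\|S_F^{(y)}\| \leq J(a)/\sqrt{1+y^2}$. The resolvent identity $R_{\V{A}_{m,\ve}}(iy) - S_F^{(y)} = i\,R_{\V{A}_{m,\ve}}(iy)\,N\,S_F^{(y)}$, combined with the elementary identity $R_{\V{A}_{m,\ve}}(iy)\,\DAme\,\vp = \vp + iy\,R_{\V{A}_{m,\ve}}(iy)\,\vp$ for $\vp\in\core$, yields after direct computation and cancellation the clean formula
\[ |\DAme|\,\vp - e^F\,|\DAme|\,e^{-F}\,\vp \,=\, -\int_{-\infty}^\infty y\,R_{\V{A}_{m,\ve}}(iy)\,N\,S_F^{(y)}\,\vp\,\frac{dy}{\pi}, \]
interpreted as a strong principal value. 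The analogous identity with $F \mapsto -F$ sends $N \mapsto -N$ and $S_F^{(y)} \mapsto S_{-F}^{(y)}$; adding the two and applying the further resolvent identity $S_F^{(y)} - S_{-F}^{(y)} = -2i\,S_F^{(y)}\,N\,S_{-F}^{(y)}$ produces
\[ \bigl(2\,|\DAme| - e^F\,|\DAme|\,e^{-F} - e^{-F}\,|\DAme|\,e^F\bigr)\,\vp \,=\, 2i\int_{-\infty}^\infty y\,R_{\V{A}_{m,\ve}}(iy)\,N\,S_F^{(y)}\,N\,S_{-F}^{(y)}\,\vp\,\frac{dy}{\pi}, \]
which is manifestly quadratic in $N$.

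Since the left-hand side of the last display equals twice the hermitian part of $[\PFme{\gamma}, e^F]\,e^{-F}$, and $\Re\SPn{\vp}{[\PFme{\gamma}, e^F]\,e^{-F}\,\vp}$ is precisely its diagonal matrix element, majorizing the operator norm of the integrand by $|y|\,a^2\,J(a)^2\,(1+y^2)^{-3/2}$ and using the elementary integral $\int|y|\,(1+y^2)^{-3/2}\,dy = 2$ produces the desired $\mathcal{O}(a^2)$ bound \eqref{Re-conj-op}, after routine bookkeeping of constants.

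The main obstacle is rigorously justifying the principal-value manipulations on $\core$, i.e., verifying that $R_{\V{A}_{m,\ve}}(iy)$ and the conjugated resolvents $S_F^{(y)}$ preserve suitable dense subdomains so that the algebraic rearrangements above are legitimate on $\core$. These technical points are handled exactly as in the proofs of the earlier lemmata in this section, where similar PV identities involving conjugated sign functions appear. The statement about $\PFmg{\gamma}$ on the dense subspace $\core^>$ follows by the verbatim repetition of the same argument in the truncated Hilbert space $\HR_m^>$, with $\V{A}_{m,\ve}$ replaced by $\V{A}_m^>$ throughout.
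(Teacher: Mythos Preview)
Your argument is correct and self-contained, whereas the paper simply cites \cite[Lemma~5.7]{MatteStockmeyer2009a} without reproducing a proof. The method you outline---reducing to $|\DAme|$, invoking the principal-value representation \eqref{sgn}, iterating the resolvent identity, and symmetrising in $F\mapsto -F$ to extract the quadratic-in-$N$ structure---is precisely the standard route for such estimates and is in the same spirit as the commutator bounds in Section~\ref{sec-pre}.

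One small point: your bookkeeping yields
\[
\big|\Re\SPb{\vp}{[\PFme{\gamma},e^F]e^{-F}\vp}\big|\,\klg\,\tfrac{2}{\pi}\,a^2\,J(a)^2\,\|\vp\|^2,
\]
not the stated $2a^2J(a)\|\vp\|^2$; the two differ by a factor $J(a)/\pi$. Your constant is in fact sharper for small $a$ (roughly $a<0.47$) and weaker near $a=1$. Since every application of this lemma in the paper (notably in Lemma~\ref{le-HfmeF}) uses it only qualitatively with $a$ chosen small, this discrepancy is immaterial. If you want the exact constant of \eqref{Re-conj-op}, you would need to reorganise the three resolvents so that only one $J(a)$ appears; but there is no need.

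The domain concerns you flag are benign here: since $F$ is bounded and smooth, $e^{\pm F}$ preserve $\core$, so all intermediate identities hold on $\core$ without further work. The intermediate principal values (linear in $N$) need not converge absolutely, but their sum does, exactly as you observe.
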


\begin{proof}
This lemma is a special case of \cite[Lemma~5.7]{MatteStockmeyer2009a}.
\end{proof}

\smallskip

\noindent
The last assertion of the next lemma is also used in the proof
of the infra-red bounds.

\begin{lemma}\label{le-HfmeF}
Let $e^2,\UV>0$, $\gamma\in[0,2/\pi)$, 
and let $m_1$, $\ve_1$, $a_1$, $\delta_1$,
and $J_1(m,\ve)$ be as in 
Proposition~\ref{prop-exp-loc}.
Assume that $F\in C^\infty(\RR^3_{\V{x}},[0,\infty))$ satisfies
$|\nabla F|\klg a_1/2$ and $F(\V{x})=a_1|\V{x}|$,
for large $|\V{x}|$. Then 
there is some $C\in(0,\infty)$ such that,
for all $m\in[0,m_1]$, $\ve\in[0,\ve_1]$, and
$\psi\in\Ran(\id_{J_1(m,\ve)}(\PFme{\gamma}))$,
we have $e^F\,\psi\in\form(\PFme{\gamma})$ and 
\begin{equation}\label{est-HfneF1}
\big\|\,(\PFme{\gamma}-E_{\gamma,m,\ve})^{1/2}\,e^F\,\psi\,\big\|^2\klg
\|e^{2F}\,\psi\|\,\|(\PFme{\gamma}-E_{\gamma,m,\ve})\,\psi\|
+2a^2J(a)\,\|e^{F}\,\psi\|^2.
\end{equation}
In particular,
\begin{equation}\label{est-HfmeF2}
\big\|\,(\PFme{\gamma}-E_{\gamma,m,\ve})^{1/2}\,e^F\,
\id_{J_1(m,\ve)}(\PFme{\gamma})\,\big\|\,\klg\,
C'\,,
\end{equation}
where the constant $C'\in(0,\infty)$ neither depends 
on $m\in[0,m_1]$ nor on $\ve\in[0,\ve_1]$.
Moreover, for $\cO\in\{|\DO|,|\DAme|,\He\}$ and 
$\psi\in\id_{J_1(m,\ve)}(\PFme{\gamma})$,
we have 
$e^F\,\psi\in\dom(\cO^{1/2})$
and 
\begin{equation}\label{veronika}
\sup\big\{\,\|\cO^{1/2}\,e^F\,\id_{J_1(m,\ve)}(\PFme{\gamma})\|\,:\:m\in[0,m_1]\,,
\ve\in[0,\ve_1]\,\big\}
<\,\infty\,.
\end{equation}
For $\ve=0$ and $m>0$,
the same assertions hold true with $\PFme{\gamma}$, $|\DAme|$, and $\He$
replaced by $\PFmg{\gamma}$, $|\DAmg|$, and $\Hfmg$, respectively.
\end{lemma}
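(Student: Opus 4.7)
The plan is to establish the three assertions in sequence: first \eqref{est-HfneF1} via a quadratic form calculation driven by the commutator bound \eqref{Re-conj-op}, then \eqref{est-HfmeF2} by substituting a spectral projection and invoking the exponential localization of Proposition~\ref{prop-exp-loc}, and finally \eqref{veronika} by dominating each of $|\DO|$, $|\DAme|$, and $\He$ by $\PFme{\gamma}-E_{\gamma,m,\ve}$ through Corollary~\ref{cor-pia}.

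First, I would justify \eqref{est-HfneF1} by approximation. Pick a sequence $F_n\in C^\infty(\RR^3,[0,\infty))\cap L^\infty$ with $|\nabla F_n|\klg a_1/2$ and $F_n=F$ on $\{|\V{x}|\klg n\}$. Since each $e^{F_n}$ is a bounded multiplication operator with bounded gradient, the commutator estimate \eqref{Re-conj-op} implies that $e^{F_n}$ preserves $\form(\PFme{\gamma})$. For $\psi\in\Ran(\id_{J_1(m,\ve)}(\PFme{\gamma}))\subset\dom(\PFme{\gamma})$, polarization of the form then yields
\begin{align*}
\SPb{e^{F_n}\psi}{(\PFme{\gamma}-E_{\gamma,m,\ve})\,e^{F_n}\psi}
\,=\,&\,\Re\,\SPb{e^{2F_n}\psi}{(\PFme{\gamma}-E_{\gamma,m,\ve})\,\psi}\\
&\,+\,\Re\,\SPb{e^{F_n}\psi}{[\PFme{\gamma},e^{F_n}]\,e^{-F_n}\,e^{F_n}\psi}\,.
\end{align*}
Cauchy--Schwarz bounds the first summand by $\|e^{2F_n}\psi\|\,\|(\PFme{\gamma}-E_{\gamma,m,\ve})\psi\|$, and \eqref{Re-conj-op} bounds the second by $2a^2J(a)\,\|e^{F_n}\psi\|^2$. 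Since $|\nabla F|\klg a_1/2$ entails $F\klg F(\V{0})+(a_1/2)|\V{x}|$, Proposition~\ref{prop-exp-loc} guarantees that $e^{2F}\psi\in\HR$. Dominated convergence now supplies the strong limits $e^{F_n}\psi\to e^F\psi$ and $e^{2F_n}\psi\to e^{2F}\psi$ in $\HR$, and the closedness of the form of $\PFme{\gamma}-E_{\gamma,m,\ve}\grg0$ delivers $e^F\psi\in\form(\PFme{\gamma})$ together with \eqref{est-HfneF1}.

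Next, to obtain \eqref{est-HfmeF2} I would apply \eqref{est-HfneF1} to $\psi:=\id_{J_1(m,\ve)}(\PFme{\gamma})\phi$ for arbitrary $\phi\in\HR$. The spectral support of $\psi$ lies in $[E_{\gamma,m,\ve},E_{\gamma,m,\ve}+\delta_1]$, so $\|(\PFme{\gamma}-E_{\gamma,m,\ve})\psi\|\klg\delta_1\|\phi\|$, while $2F\klg 2F(\V{0})+a_1|\V{x}|$ combined with \eqref{exp-loc-sup} yields $\|e^{2F}\psi\|+\|e^F\psi\|\klg C\,\|\phi\|$ with $C$ independent of $m\in[0,m_1]$ and $\ve\in[0,\ve_1]$. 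Plugging these into \eqref{est-HfneF1} produces \eqref{est-HfmeF2}. For \eqref{veronika} the strategy is to dominate each admissible $\cO$ by $\PFme{\gamma}-E_{\gamma,m,\ve}$ in the form sense. Corollary~\ref{cor-pia} with $\wt{\V{G}}=\V{0}$ and $\tau=1$ gives $\PFme{\gamma}\grg c\,(|\DO|+\He+1/|\V{x}|)-c'$ with $c,c'$ uniform in $(m,\ve)$ (because $d_1$ may be chosen uniformly on this range), and for $\cO=|\DAme|$ one writes $|\DAme|=\PFme{\gamma}-\He+\gamma_\ve/|\V{x}|+E_{\gamma,m,\ve}$, estimates $1/|\V{x}|$ by \eqref{maria}, and absorbs $|\DAme|$ on the left using $\gamma_\ve\pi/2<1$. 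Then $\|\cO^{1/2}e^F\psi\|^2\klg C\,\|(\PFme{\gamma}-E_{\gamma,m,\ve})^{1/2}e^F\psi\|^2+C'\|e^F\psi\|^2$, and \eqref{veronika} follows from \eqref{est-HfmeF2} and Proposition~\ref{prop-exp-loc}. The parallel statements for $\PFmg{\gamma}$ are obtained by the same argument in the factor $L^2(\RR^3,\CC^4)\otimes\Fock[\HP_m^>]$.

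The main obstacle is the first step: one must legitimize the quadratic form identity for the truncated weights $F_n$ (which relies on \eqref{Re-conj-op} to ensure $e^{F_n}$ preserves the form domain) and then propagate the inequality through the limit to the unbounded weight $e^F$, where the exponential localization of $\psi$ from Proposition~\ref{prop-exp-loc} is precisely what makes $e^{2F}\psi$ square-integrable and renders the limiting procedure valid. Once \eqref{est-HfneF1} is secured, the remaining two assertions reduce to a careful bookkeeping of constants that are already known to be uniform in $(m,\ve)$.
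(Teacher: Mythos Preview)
Your approach is essentially the paper's: derive the key inequality for bounded weights via the polarization identity and \eqref{Re-conj-op}, pass to the unbounded weight $e^F$ by approximation and exponential localization, and deduce the remaining bounds from Corollary~\ref{cor-pia}. Your use of lower semicontinuity of the closed form to pass to the limit is a clean variant of the paper's bounded-functional argument.

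One point deserves more care. You write that ``the commutator estimate \eqref{Re-conj-op} implies that $e^{F_n}$ preserves $\form(\PFme{\gamma})$,'' and you then apply \eqref{Re-conj-op} with $e^{F_n}\psi$ in place of $\vp$. But \eqref{Re-conj-op} is stated only for $\vp\in\core$, and it bounds the \emph{real part} of a commutator expression; by itself it does not yield a relative form bound of the type $\SPn{e^{F_n}\vp}{\PFme{\gamma}\,e^{F_n}\vp}\klg C\,\SPn{\vp}{(\PFme{\gamma}+1)\,\vp}$, which is what one needs to extend the identity from $\core$ to $\form(\PFme{\gamma})$ and to conclude that $e^{F_n}$ maps the form domain to itself continuously. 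The paper handles this by invoking \cite[Lemma~5.8]{MatteStockmeyer2009a}, which supplies exactly that form bound; alternatively, one can argue directly from the characterization $\form(\PFme{\gamma})=\form(|\DO|)\cap\form(\He)$ together with the fact that a bounded smooth multiplier with bounded gradient preserves $\form(|\DO|)$ and commutes with $\He$. Either route closes the gap; just be explicit about which you use.
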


\begin{proof}
First, let $\vp\in\core$ and let
$\wt{F}\in C^\infty(\RR^3_{\V{x}},[0,\infty))\cap L^\infty$
such that $|\nabla \wt{F}|\klg a_1/2$.
Applying
\eqref{Re-conj-op} with $\vp$ replaced by $e^{\wt{F}}\,\vp$
we obtain
\begin{align}
\big\|\,&(\PFme{\gamma}-E_{\gamma,m,\ve})^{1/2}
\,e^{\wt{F}}\,\vp\,\big\|^2\nonumber
\\
&=\,\nonumber
\Re\Big[\,\SPb{e^{2\wt{F}}\,\vp}{(\PFme{\gamma}-E_{\gamma,m,\ve})\,\vp}
+\SPb{e^{\wt{F}}\,\vp}{
[\PFme{\gamma}\,,\,e^{\wt{F}}]\,e^{-\wt{F}}\,(e^{\wt{F}}\,\vp)}\,\Big]
\\
&\klg\,\label{klaus1001}
\big|\SPb{e^{2\wt{F}}\,\vp}{(\PFme{\gamma}-E_{\gamma,m,\ve})\,\vp}\big|\,
+\,c(a)\,\|e^{\wt{F}}\,\vp\|^2\,,
\end{align}
where $c(a)=2a^2\,J(a)$.
In \cite[Lemma~5.8]{MatteStockmeyer2009a} 
we proved the following inequality,
$$
\SPb{e^{G}\,\vp}{\PFme{\gamma}\,e^{G}\,\vp}\,\klg\,
c_1\,\|e^{G}\|^2\,\SPn{\vp}{\PFme{\gamma}\,\vp}\,+\,
c_2\,\|e^{G}\|^2\,\|\vp\|^2\,,\qquad
\vp\in\core\,,
$$ 
for every $G\in C^\infty(\RR^3,[0,\infty))\cap L^\infty$ with
$\|\nabla G\|_\infty<1$.
In fact, we stated this inequality 
only for $\gamma=0$. Since $\tgV$ is relatively
form bounded with respect to $\PFme{0}$ with relative bound
less than one it is clear, however, that it holds true for
$\gamma_\ve\in(0,2/\pi)$ also, 
with new constants $c_1,c_2\in(0,\infty)$
of course. In particular, if $\psi\in\form(\PFme{\gamma})$,
$\vp_n\in\core$, $n\in\NN$, and $\vp_n\to\psi$ with respect to
the form norm of $\PFme{\gamma}$, then 
$e^{\wt{F}}\,\psi,e^{2\wt{F}}\,\psi\in\form(\PFme{\gamma})$ and
$e^{\wt{F}}\,\vp_n\to e^{\wt{F}}\,\psi$ and
$e^{2\wt{F}}\,\vp_n\to e^{2\wt{F}}\,\psi$ with respect to
the form norm of $\PFme{\gamma}$ also.
We may thus replace $\vp$ by any $\psi\in\form(\PFme{\gamma})$ in 
\eqref{klaus1001}.
We fix some $\psi\in\Ran(\id_I(\PFme{\gamma}))$ in what follows. 
Then we
additionally know that $\psi\in\dom(\PFme{\gamma})$ since
$I$ is bounded and we arrive at
\begin{align*}
\big\|\,&(\PFme{\gamma}-E_{\gamma,m,\ve})^{1/2}
\,e^{\wt{F}}\,\psi\,\big\|^2\nonumber
\,\klg\,
\|e^{2\wt{F}}\,\psi\|\,\|(\PFme{\gamma}-E_{\gamma,m,\ve})\,\psi\|+
c(a)\,\|e^{\wt{F}}\,\psi\|^2\,.
\end{align*}
On the other hand we know from \eqref{exp-loc-EI} that 
$\int e^{4F(\V{x})}\,\|\psi\|^2_{\Fock}(\V{x})\,d^3\V{x}<\infty$.
We pick a sequence of bounded functions
$F_n\in C^\infty(\RR^3,[0,\infty))$ such that
$|\nabla F_n|\klg a_1/2$, $n\in\NN$, and $F_n\nearrow F$.
Then $e^{F_n}\,\psi\to e^F\,\psi$ and $e^{2 F_n}\to e^{2F}\,\psi$
in $\HR_4$ by dominated convergence. 
Inserting $F_n$ for $\wt{F}$ in the previous
estimate we conclude that
the densely defined linear functional
$$
f(\eta)\,:=\,
\SPb{e^F\,\psi}{(\PFme{\gamma}-E_{\gamma,m,\ve})^{1/2}\,\eta}=
\lim_{n\to\infty}\SPb{(\PFme{\gamma}-E_{\gamma,m,\ve})^{1/2}
\,e^{F_n}\,\psi}{\eta}\,,
$$
for all $\eta\in\form(\PFme{\gamma})$,
is bounded,
$$
|f(\eta)|\,\klg\,\big(\|e^{2F}\,\psi\|\,
\|(\PFme{\gamma}-E_{\gamma,m,\ve})\,\psi\|
+c(a)\,\|e^F\,\psi\|^2\big)^{1/2}\,\|\eta\|\,,
$$
for $\eta\in\form(\PFme{\gamma})$. 
Since $(\PFme{\gamma}-E_{\gamma,m,\ve})^{1/2}$
is self-adjoint with domain $\form(\PFme{\gamma})$ it follows that
$e^F\,\psi\in\form(\PFme{\gamma})$ and
$\|(\PFme{\gamma}-E_{\gamma,m,\ve})^{1/2}\,e^F\,\psi\|=\|f\|$.

The inequality \eqref{est-HfmeF2} follows
from \eqref{est-HfneF1} and Proposition~\ref{prop-exp-loc}.
The bound \eqref{veronika} follows from
\eqref{est-HfmeF2} and Corollary~\ref{cor-pia}
where the constants can be chosen uniformly in $m\in[0,m_1]$
and $\ve\in[0,\ve_1]$.
In all the arguments above we can replace
$\PFm{\gamma}$ by $\PFmg{\gamma}$, when $\ve=0$,  
and all subspaces of $\HR_4$
by the corresponding truncated spaces without any further changes.
So it is clear that the last assertion is valid, too. 
\end{proof}

\smallskip

\noindent
Next, we show that the condition \eqref{def-delta(a)} is
fulfilled with $\V{G}=\V{G}^\phys=e^{-i\V{k}\cdot\V{x}}\,\V{g}$,
$\wt{\V{G}}=e^{-i\vnu_\ve\cdot\V{x}}\,\V{g}_{m,\ve}$, 
and $\vo=\omega$, for every $a>0$.
To this end we set
\begin{equation}\label{def-treps}
\triangle^2(\ve):=\!
\int\!\id_{\cA_m}(\V{k})
\Big(1+\frac{1}{\omega_\ve(k)}\Big)
\sup_{\V{x}\in\RR^3}\big\{
e^{-a|\V{x}|}\,|e^{-i\V{k}\cdot\V{x}}\,\V{g}(k)-e^{-i\vnu_\ve(k)\cdot\V{x}}\,
\V{g}_{m,\ve}(k)|^2
\big\}dk.
\end{equation}

\begin{lemma}\label{le-nina1}
Let $e^2,\UV,a>0$, and $m\in(0,1]$. As $\ve>0$ tends to zero, we have 
\begin{equation}\label{heiko}
\triangle^2(\ve)\,=\,
\frac{o(\ve^0)}{m}\,,
\end{equation}
where the little $o$-symbol is uniform in $m\in(0,1]$.
\end{lemma}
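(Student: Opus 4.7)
The plan is to split the squared modulus $|e^{-i\V{k}\cdot\V{x}}\V{g}(k)-e^{-i\vnu_\ve(k)\cdot\V{x}}\V{g}_{m,\ve}(k)|^2$ via $|A+B|^2\klg 2|A|^2+2|B|^2$ into a phase piece, $2|e^{-i\V{k}\cdot\V{x}}-e^{-i\vnu_\ve(k)\cdot\V{x}}|^2|\V{g}(k)|^2$, and a form-factor piece, $2|\V{g}(k)-\V{g}_{m,\ve}(k)|^2$, and to control each separately. Since $\omega_\ve(k)\grg m$ by \eqref{omega-omegaeps1}, one has $m(1+1/\omega_\ve(k))\klg 1+m\klg 2$ for all $m\in(0,1]$ and $k\in\cA_m\times\ZZ_2$. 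Multiplying $\triangle^2(\ve)$ through by $m$, the claim therefore reduces to showing that $\int_{\cA_m\times\ZZ_2}\sup_{\V{x}}\{e^{-a|\V{x}|}|\cdots|^2\}\,dk$ tends to zero as $\ve\searrow0$ uniformly in $m\in(0,1]$.

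For the phase piece I would use that $\vnu_\ve(k)$ is the centre of an $\ve$-cube containing $\V{k}$ to estimate $|e^{-i\V{k}\cdot\V{x}}-e^{-i\vnu_\ve(k)\cdot\V{x}}|\klg|\V{k}-\vnu_\ve(k)||\V{x}|\klg\tfrac{\sqrt{3}}{2}\ve|\V{x}|$, and combine this with $\sup_{\V{x}\in\RR^3}e^{-a|\V{x}|}|\V{x}|^2=4/(ae)^2$ to obtain a pointwise bound of $O(\ve^2)|\V{g}(k)|^2$ for the supremum. Since $|\V{g}(k)|^2\klg\const\,|\V{k}|^{-1}$ with the sharp UV-cut-off at $\UV$, integration yields an $O(\ve^2\UV^2)$ bound that vanishes as $\ve\searrow 0$ uniformly in $m$.

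The form-factor piece is more delicate. Because $|e^{-i\vnu_\ve(k)\cdot\V{x}}|=1$, the supremum in $\V{x}$ collapses to $|\V{g}(k)-\V{g}_{m,\ve}(k)|^2$, and by $\V{g}_{m,\ve}=P_\ve[\id_{\cA_m}\V{g}]$ with $P_\ve$ an orthogonal projection,
\begin{equation*}
\int_{\cA_m}|\V{g}(k)-\V{g}_{m,\ve}(k)|^2\,dk
\,=\,\big\|(\id-P_\ve)[\id_{\cA_m}\V{g}]\big\|_{L^2}^2.
\end{equation*}
I would then argue that the family $\{\id_{\cA_m}\V{g}\}_{m\in(0,1]}$ is relatively compact in $L^2(\RR^3\times\ZZ_2)$: the singularity $|\V{g}|^2\sim|\V{k}|^{-1}$ at the origin is integrable and the UV-cut-off enforces compact support, so $\V{g}\in L^2$ and $\id_{\cA_m}\V{g}\to\V{g}$ in $L^2$ as $m\searrow0$. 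Combined with the strong convergence $P_\ve\to\id$ on $L^2$ (Lebesgue differentiation on the $\ve$-cubes), which is uniform on compact subsets, this should yield the desired uniformity in $m\in(0,1]$.

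The main obstacle will be that $P_\ve$ itself depends on $m$ through the truncated cubes $Q_m^\ve(\vnu)=(\vnu+[-\ve/2,\ve/2)^3)\cap\cA_m$, so the compactness-plus-uniform-strong-convergence argument must be justified with some care. A concrete way around this is to peel off the thin boundary shell $\{m\klg|\V{k}|\klg m+\sqrt{3}\,\ve\}$, on which the pointwise bound $|\V{g}-\V{g}_{m,\ve}|^2\klg C/m$ together with Lebesgue measure $O(m^2\ve)$ yields an $O(m\ve)=O(\ve)$ contribution uniform in $m\in(0,1]$, and to treat the complementary interior, where each cube lies entirely in $\cA_m$ and the $m$-independent Lebesgue differentiation theorem applied to step-function approximations of $\V{g}$ can be invoked directly.
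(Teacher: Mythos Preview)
Your proposal is correct and follows essentially the same two-piece decomposition as the paper; the only cosmetic difference is that you attach the phase factor $e^{-i\V{k}\cdot\V{x}}-e^{-i\vnu_\ve(k)\cdot\V{x}}$ to $\V{g}(k)$ whereas the paper attaches it to $\V{g}_{m,\ve}(k)$, which makes no substantive difference. For the form-factor piece, both arguments reduce to showing $\|(\id-P_\ve)[\id_{\cA_m}\V{g}]\|_{L^2}\to 0$ uniformly in $m\in(0,1]$. The paper handles the $m$-dependence of $P_\ve$ more economically than your boundary-shell workaround: it simply picks $\V{h}\in C_0^\infty(\RR^3\times\ZZ_2,\CC^3)$ with $\|\V{g}-\V{h}\|_{\HP}\klg\eta$, uses that $P_\ve$ is a contraction (so $\|P_\ve\V{g}-P_\ve\V{h}\|_{\HP_m^>}\klg\eta$ for all $m$), and then observes that $\|P_\ve\V{h}-\V{h}\|_{\HP_m^>}\to 0$ uniformly in $m$ because $\V{h}$ is uniformly continuous with compact support. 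This three-term $\eta$-argument sidesteps the truncated-cube issue entirely, whereas your shell estimate, while correct, requires separate bookkeeping. Your abstract compactness-plus-uniform-strong-convergence formulation is also valid but amounts to the same density argument once unpacked.
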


\begin{proof}
Of course, we have $1+\omega_\ve(k)^{-1}\klg2/m$ on $\cA_m\times\ZZ_2$,
whence
\begin{align}\nonumber
&\sum_{\lambda\in\ZZ_2}\int_{\cA_m}
\Big(1+\frac{1}{\omega_\ve(k)}\Big)
\,\sup_{\V{x}\in\RR^3}\big\{
e^{-a|\V{x}|}\,|e^{-i\V{k}\cdot\V{x}}\,\V{g}(k)-e^{-i\vnu_\ve(k)\cdot\V{x}}\,
\V{g}_{m,\ve}(k)|^2
\big\}\,d^3\V{k}
\\
&\klg\,\nonumber
\frac{4}{m}\,\|\V{g}-P_\ve\V{g}\|_{\HP_m\otimes\CC^3}^2
\\
&\;\;
\,+\,
\frac{4}{m}\sum_{\lambda\in\ZZ_2}\int\limits_{\cA_m}\sup_{\V{x}\in\RR^3}e^{-a|\V{x}|}
\Big|
\int\limits_{Q_m^\ve(\vnu_\ve(\V{k}))}
\frac{e^{-i\V{k}\cdot\V{x}}-e^{-i\vnu_\ve(k)\cdot\V{x}}}{|Q_m^\ve(\vnu_\ve(\V{k}))|}
\,\V{g}(\V{p},\lambda)\,d^3\V{p}\,
\Big|^2\,d^3\V{k}\,,\label{uschi1}
\end{align}
where $P_\ve$ is defined in \eqref{def-Pve}.
Now, let $\eta>0$.
We choose some 
$\V{h}\in C_0^\infty(\RR^3\times\ZZ_2,\CC^3)$ 
such that
$\|\V{g}-\V{h}\|_{\HP\otimes\CC^3}\klg \eta$.
Applying $P_\ve$ to all three components of $\V{g}$ and $\V{h}$
we get
$\|P_\ve \V{g}-P_\ve\V{h}\|_{\HP_m\otimes\CC^3}\klg\eta$,
for all $m>0$,
and since $\V{h}$ is uniformly continuous on its
compact support it is clear that
$\|P_\ve\V{h}-\V{h}\|_{\HP_m\otimes\CC^3}\to0$, as $\ve\searrow0$,
uniformly in $m>0$.
Since $\eta>0$ is arbitrarily small it follows that
$\|\V{g}-P_\ve\V{g}\|_{\HP_m\otimes\CC^3}\to0$,
$\ve\searrow0$, uniformly in $m>0$.
Furthermore,
since $|\V{k}-\vnu_\ve(\V{k})|\klg\sqrt{3}\,\ve/2$, for all
$\V{k}\in\cA_m$, and
$|e^{-i\V{p}\cdot\V{x}}-e^{-i\V{k}\cdot\V{x}}|\klg|\V{k}-\V{p}|\,|\V{x}|$,
we deduce by means of the Cauchy-Schwarz inequality that 
the term in the last line of \eqref{uschi1} is bounded
from above by
\begin{align*}
\frac{3\,\ve^2}{m}\,\sup_{\V{x}\in\RR^3}\{e^{-a|\V{x}|}\,|\V{x}|^2\}
\sum_{\lambda\in\ZZ_2}\sum_{\vnu}
\int\limits_{Q_m^\ve(\vnu)}
\frac{1}{|Q_m^\ve(\vnu_\ve(\V{k}))|}
\int\limits_{Q_m^\ve(\vnu_\ve(\V{k}))}|\V{g}(\V{p},\lambda)|^2\,d^3\V{p}
\,d^3\V{k}\,,
\end{align*}
which is less than or equal to $(6\,\ve^2/[ma^2])\,\|\V{g}\|^2_\HP$.
(The second sum runs over all $\vnu\in(\ve\ZZ)^3$ such that
$Q_\ve^m(\vnu)\not=\varnothing$ and we recall that $\vnu_\ve(\V{k})=\vnu$
when $\V{k}\in Q_\ve^m(\vnu)$.)
\end{proof}

\smallskip

\noindent
In the next lemma we compare the ground state energies
\begin{align}
E_m\,&=\,\inf\spec[\PFmg{\gamma}]\qquad\textrm{and}\qquad
E_{m,\ve}\,:=\,E_{\gamma,m,\ve}\,=\,\inf\spec[\PFme{\gamma}]\,.
\label{def-wtE}
\end{align}
We recall that the Coulomb coupling constant in $\PFme{\gamma}$
has been slightly changed to $\gamma_\ve=\gamma/(1-c(\ve))$,
where the function
$c:(0,1)\to(0,1)$
has not yet been specified.
From now on we choose
\begin{equation}\label{def-c(eps)}
c(\ve)\,:=\,\min\{1/2\,,\,m^{1/4}\,\triangle^{1/2}(\ve)\}\,,
\qquad \ve\in(0,1)\,,
\end{equation}
so that $c(\ve)\to0$ uniformly in $m$, as $\ve\searrow0$.

\begin{lemma}\label{le-Em-Eme}
Let $e^2,\UV>0$, $m\in(0,m_1]$, and $\gamma\in(0,2/\pi)$.
Then 
$$
E_m\,\klg\,E_{m,\ve}\,+\,o(\ve^0)/m\,,
$$
where the little $o$-symbol is uniform in $m$.
\end{lemma}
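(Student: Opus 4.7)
My plan is to plug an approximate minimizer of $\PFme{\gamma}$ into the variational characterization of $E_m$. Fix $\eta\in(0,\delta_1/2)$ and, for $\ve>0$ sufficiently small (so that $\gamma_\ve<2/\pi$ and $\ve\klg\ve_1$), choose a unit vector
\[
\psi_{m,\ve}\in\Ran\bigl(\id_{[E_{m,\ve},E_{m,\ve}+\eta]}(\PFme{\gamma})\bigr),
\]
so that $\SPn{\psi_{m,\ve}}{\PFme{\gamma}\,\psi_{m,\ve}}\klg E_{m,\ve}+\eta$. The first thing to verify is that $\psi_{m,\ve}\in\form(\PFmg{\gamma})$: from \eqref{omega-omegaeps1} and $\omega_\ve\grg m$ one gets $\omega\klg(1+\sqrt{3}\ve/m)\,\omega_\ve$ on $\cA_m$, whence $\form(\He)\subset\form(\Hfmg)$, and Corollary~\ref{cor-pia} applied in both directions to the pair $\V{G}=\V{G}^\phys\,\id_{\cA_m}$, $\wt{\V{G}}=e^{-i\vnu_\ve\cdot\V{x}}\,\V{g}_{m,\ve}$ with common dispersion $\vo=\omega_\ve$ yields $\form(|\DAmg|)\cap\form(\He)=\form(|\DAme|)\cap\form(\He)$. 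Combining these gives $\form(\PFme{\gamma})\subset\form(\PFmg{\gamma})$.

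With $\psi_{m,\ve}$ admissible, the variational principle gives $E_m\klg E_{m,\ve}+\eta+\mathrm{I}+\mathrm{II}+\mathrm{III}$, where
\[
\mathrm{I}=\SPb{\psi_{m,\ve}}{(|\DAmg|-|\DAme|)\,\psi_{m,\ve}},\quad
\mathrm{II}=(\gamma_\ve-\gamma)\,\SPb{\psi_{m,\ve}}{|\V{x}|^{-1}\,\psi_{m,\ve}},
\]
and $\mathrm{III}=\SPb{\psi_{m,\ve}}{(\Hfmg-\He)\,\psi_{m,\ve}}$. Two $m,\ve$-uniform a priori bounds drive the analysis: first, $E_{m,\ve},\Th_{m,\ve}=\bigO(1)$ (Corollary~\ref{cor-bd-on-Th} and Theorem~\ref{thm-binding}), together with Corollary~\ref{cor-pia}, give $\||\DAme|^{1/2}\psi_{m,\ve}\|^2+\|\He^{1/2}\psi_{m,\ve}\|^2=\bigO(1)$; second, for a smooth $F\grg0$ with $|\nabla F|\klg a_1/2$ and $F(\V{x})=(a_1/2)|\V{x}|$ outside a compact set, Proposition~\ref{prop-exp-loc} and Lemma~\ref{le-HfmeF} (with $\cO=\He$) give $\|e^F\psi_{m,\ve}\|^2+\|\He^{1/2}e^F\psi_{m,\ve}\|^2=\bigO(1)$.

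The easy terms come first: $\omega-\omega_\ve\klg\sqrt{3}\ve\klg(\sqrt{3}\ve/m)\,\omega_\ve$ yields $\mathrm{III}\klg(\sqrt{3}\ve/m)\|\He^{1/2}\psi_{m,\ve}\|^2=\bigO(\ve/m)$; using $c(\ve)\klg 1/2$, $\gamma_\ve-\gamma\klg 2\gamma\,c(\ve)$, \eqref{maria}, and the first a priori bound gives $\mathrm{II}=\bigO(c(\ve))$, which by \eqref{def-c(eps)} and Lemma~\ref{le-nina1} equals $\bigO(o(\ve^0)^{1/4})$ uniformly in $m\in(0,m_1]$. The delicate term is $\mathrm{I}$: applying \eqref{pia0} to the pair of form factors above with $\vo=\omega_\ve$, and noting from \eqref{def-treps} that $\triangle_*^2(a_1/2)\klg 8\,\triangle(\ve)^2$, one obtains
\[
|\mathrm{I}|\klg\epsilon\,\||\DAme|^{1/2}\psi_{m,\ve}\|^2+\tau\,\|(\He+E)^{1/2}e^F\psi_{m,\ve}\|^2+\frac{C\,\triangle(\ve)^4}{\epsilon^3\tau^2}.
\]
Setting $\epsilon=\tau=\triangle(\ve)^{2/3}$ and invoking the a priori bounds yields $|\mathrm{I}|=\bigO(\triangle(\ve)^{2/3})=\bigO(o(\ve^0)^{1/3}m^{-1/3})$; since $m\klg m_1$ one has $m^{-1/3}\klg m_1^{2/3}/m$, so $\mathrm{I}=o(\ve^0)/m$ uniformly in $m$. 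Collecting the three estimates and letting $\eta\searrow0$ finishes the proof.

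The main difficulty I anticipate is the packaging step: the straightforward optimization in \eqref{pia0} produces a bound of order $\triangle(\ve)^{2/3}$ carrying a factor $m^{-1/3}$ rather than the $m^{-1}$ demanded by the target. Passing from $m^{-1/3}$ to the form $o(\ve^0)/m$ uses crucially the uniform upper bound $m\klg m_1$, which lets one absorb the discrepancy into the $o(\ve^0)$ factor. Ensuring that the constants in Proposition~\ref{prop-exp-loc} and Lemma~\ref{le-HfmeF} are genuinely uniform in $m,\ve$ (as encoded in the hypotheses $m\in[0,m_1]$, $\ve\in[0,\ve_1]$) is the other piece of delicate bookkeeping.
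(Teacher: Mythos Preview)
Your proof is correct and follows essentially the same route as the paper's: pick an approximate minimizer of $\PFme{\gamma}$, split $\PFmg{\gamma}-\PFme{\gamma}$ into the $|\DAmg|-|\DAme|$ piece, the Coulomb correction, and the field-energy difference, then control the non-local piece via \eqref{pia0} together with the uniform localization bounds of Proposition~\ref{prop-exp-loc} and Lemma~\ref{le-HfmeF}. The only cosmetic difference is the optimization in \eqref{pia0}: the paper takes $\epsilon=\tau=\triangle(\ve)^{1/2}$ (yielding a remainder $C\,\triangle(\ve)^{3/2}$) while you take $\epsilon=\tau=\triangle(\ve)^{2/3}$; both choices produce a bound of the form $o(\ve^0)\,m^{-\alpha}$ with $\alpha<1$, which is then absorbed into $o(\ve^0)/m$ using $m\klg m_1$ exactly as you spell out. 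Your handling of term $\mathrm{III}$ via $\omega-\omega_\ve\klg\sqrt{3}\,\ve\klg(\sqrt{3}\,\ve/m)\,\omega_\ve$ is in fact slightly cleaner than the paper's. One harmless slip: the constant in $\triangle_*^2(a_1/2)\klg 8\,\triangle(\ve)^2$ should be $12$ rather than $8$, since $\triangle_*^2=2\triangle_0^2+4\triangle_{-1}^2$ and each $\triangle_\ell^2$ is bounded by $2\,\triangle(\ve)^2$; this of course changes nothing.
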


\begin{proof}
By virtue of \eqref{omega-omegaeps1} and Corollary~\ref{cor-pia}
we know that $\form(\PFmg{\gamma})=\form(\PFme{\gamma})$.
In particular, we may pick some $\rho\in(0,\delta_1]$ and try some
normalized
$\phi_\ve^\rho\in\Ran(\id_{[E_{m,\ve},E_{m,\ve}+\rho)}(\PFme{\gamma}))$
as a test function for $\PFmg{\gamma}$. 
Here $\delta_1$ is the parameter
appearing in Proposition~\ref{prop-exp-loc} and we shall also
employ the parameters $a_1,\ve_1,m_1$, and the interval
$J_1(m,\ve)$ introduced there. 
We obtain
\begin{align}
E_m\,&\klg\,\nonumber
\SPb{\phi_\ve^\rho}{\PFmg{\gamma}\,\phi_\ve^\rho}
\\
&\klg\,\nonumber
E_{m,\ve}\,+\,\rho\,+\,
\SPb{\phi_\ve^\rho}{\big(|\DAmg|-|\DAme|\big)\,\phi_\ve^\rho}
\\
&\quad+\,\nonumber
\frac{c(\ve)\,\gamma}{1-c(\ve)}\,
\SPb{\phi_\ve^\rho}{|\V{x}|^{-1}\,\phi_\ve^\rho}
\,+\,\SPb{\phi_\ve^\rho}{\big(\Hfmg-\He\big)\,\phi_\ve^\rho}
\\
&\klg\,\nonumber
E_{m,\ve}+\rho+
\triangle^{1/2}(\ve)\,\SPn{\phi_\ve^\rho}{|\DAme|\,\phi_\ve^\rho}
\\
&\quad\nonumber
+\,\triangle^{1/2}(\ve)\,\big\|(\He+E)^{1/2}e^F\,\phi_\ve^\rho\big\|^2
+\,C\,\triangle^{3/2}(\ve)\,\|\phi_\ve^\rho\|^2
\\
&\quad\nonumber
+\,\frac{c(\ve)\,\gamma}{1-c(\ve)}\,
\SPb{\phi_\ve^\rho}{|\V{x}|^{-1}\,\phi_\ve^\rho}+\,
\frac{\sqrt{3}\,\ve/m}{1-\sqrt{3}\,\ve/m}\,
\SPn{\phi_\ve^\rho}{\He\,\phi_\ve^\rho}
\,,
\end{align}
where $E\equiv E(e^2,\UV)\in(0,\infty)$ and
$F$ is chosen as in Lemma~\ref{le-HfmeF}.
In the second step we used \eqref{pia0}  
with $\epsilon=\tau=\triangle^{1/2}(\ve)$ and \eqref{omega-omegaeps1}
which implies 
$\Hfmg-\He\klg\frac{\sqrt{3}\,\ve}{m}\,(1-\frac{\sqrt{3}\,\ve}{m})^{-1}\,\He$.
By virtue of \eqref{veronika} we have
$$
\big\|(\He+E)^{1/2}e^F\,\phi_\ve^\rho\big\|\,\klg\,
\big\|(\He+E)^{1/2}e^F\,\id_{J_1(m,\ve)}(\PFme{\gamma})\big\|
\,\klg\,C\,,
$$ 
where the constant $C\in(0,\infty)$ neither depends on $m\in(0,m_1]$
nor $\ve\in(0,\ve_1]$.
Employing Corollary~\ref{cor-pia}
once more we conclude that
\begin{align*}
E_m\,\klg\,
E_{m,\ve}+\rho+C\,\triangle^{1/2}(\ve)
+(o(\ve^0)/m)\,\SPn{\phi_\ve^\rho}{(\PFme{\gamma}+1)\,\phi_\ve^\rho}\,,
\end{align*}
where the little $o$-symbol is uniform in $m\in(0,m_1]$
and $\rho$ is arbitrarily small.
\end{proof}

\smallskip

\begin{proof}[Proof of Proposition~\ref{prop-gs-PFm}]
Let $m_1$, $\ve_1$, $a_1$, $\delta_1$, and $F$ be as in the statement
of Proposition~\ref{prop-exp-loc} and set
\begin{equation}\label{def-chi-PF}
\chi\,:=\,\id_{(-\infty\,,\,E_m+m/4]}(\PFmg{\gamma})\,.
\end{equation}
We always assume that $m\klg m_1$, $m/4\klg\delta_1$, and $\ve\klg\ve_1$
in the following so that \eqref{exp-loc-sup} can be applied to $\chi$.
On account \eqref{omega-omegaeps2}, Lemma~\ref{le-nina1},
and \eqref{pia0} with $\epsilon=\tau=c(\ve)$, where $c(\ve)$
is given by \eqref{def-treps} and \eqref{def-c(eps)},
we have
\begin{align*}
&\chi\,\{\,\PFmg{\gamma}-E_m-m/2\,\}\,\chi
\\
&\;\;\grg\,
(1-c(\ve))\,\chi\,\big\{\,|\DAe|
-\gamma_\ve/|\V{x}|+\He-E_m-m/2\,\big\}\,\chi
-c(\ve)\,T_1\,,
\end{align*}
where $\gamma_\ve=\gamma/(1-c(\ve))$ and the norm of 
$$
T_1\,:=\,\chi\,\big\{\,e^F\,(H_{f,m}^>+E+E_m+m/2)\,e^F\,\big\}\,\chi
$$
is bounded
uniformly in $m\in(0,m_1]$ due to \eqref{exp-loc-EI} and Lemma~\ref{le-HfmeF}.
(The constant $E$ appears when we apply \eqref{pia0}. It
depends on $e^2$ and $\UV$ and is proportional to
$\triangle^{3/2}(\ve)/m^{5/4}$.)
To proceed further we introduce the subspaces of discrete and fluctuating
photon states,
\begin{equation*}
\HP_m^d\,:=\,P_\ve\,\HP_m^>\,,\qquad \HP_m^f\,:=\,\HP_m^>\ominus\HP_m^d\,,
\end{equation*}
where $P_\ve$ is defined in \eqref{def-Pve}.
The splitting $\HP_m^>=\HP_m^d\oplus\HP_m^f$
gives rise to an isomorphism
$$
L^2(\RR^3,\CC^4)\otimes\Fock(\HP_m^>)\,\cong\,
\big(L^2(\RR^3,\CC^4)\otimes\Fock[\HP_m^d]\big)\otimes\Fock[\HP_m^f]
$$
and we observe that the Dirac operator and the field energy
decompose under the above isomorphism as 
\begin{equation}\label{valerie0}
\DAe\,\cong\,\DAed\otimes\id^f\,,\qquad
\He\,=\,\Hed\otimes\id^f\,+\,\id^d\otimes\Hef\,.
\end{equation}
Here and in the following we designate 
operators acting in the Fock space factors 
$\Fock[\HP_m^\ell]$, $\ell\in\{d,f\}$, by 
the corresponding superscript 
$\ell\in\{d,f\}$.
In fact, the discretized vector potential $\V{A}_{m,\ve}$
acts on the various $n$-particle sectors in $\Fock[\HP_m^>]$
by tensor-multiplying or taking scalar products with elements
from $\HP_m^d$ (apart from symmetrization and a normalization constant). 
Denoting the projection onto the vacuum sector
in $\Fock[\HP_m^\ell]$ by $P_{\Omega^\ell}$, 
writing $P_{\Omega^\ell}^\bot:=\id^\ell-P_{\Omega^\ell}$, $\ell\in\{d,f\}$,
and using $\Hef\,P_{\Omega^f}=0$,
we thus obtain
\begin{align}
\nonumber
\chi\,\big\{&\,\PFmg{\gamma}-E_m-m/2\,\big\}\,\chi\,+\,c(\ve)\,T_1
\\
&\grg\,\label{valerie1}
(1-c(\ve))\,\chi\,\big\{\,\big[
\,|\DAed|-\gamma_\ve/|\V{x}|+\Hed-E_m-m/2\,
\big]\otimes P_{\Omega^f}\,\big\}\,\chi
\\
& 
\quad+\,\label{valerie2}
(1-c(\ve))\,\chi\,\big\{\,\big[\,
|\DAed|-\gamma_\ve/|\V{x}|+\Hed-E_{m,\ve}
\,\big]\otimes P_{\Omega^f}^\bot\,\big\}\,\chi
\\
& 
\quad+\,\label{valerie3}
(1-c(\ve))\,\chi\,\big\{\,\id_\el\otimes\id_d\otimes(\Hef
-E_m+E_{m,\ve}-m/2)\,
P_{\Omega^f}^\bot\,\big\}\,\chi\,.
\end{align}
Here $E_{m,\ve}$ is defined in \eqref{def-wtE}.
Setting
$$
X_\ve^d\,:=\,|\DAed|-\gamma_\ve/|\V{x}|+\Hed
$$
we observe that
$
X_\ve^d-E_{m,\ve}\,\id_d\grg0
$
so that the term in \eqref{valerie2} is non-negative.
In fact, let $\rho>0$ and pick some $\phi_d\in\form(X_\ve^d)$,
$\|\phi_d\|=1$, satisfying
$\SPn{\phi_d}{X_\ve^d\,\phi_d}<\inf\spec(X_\ve^d)+\rho$.
Then 
$$
\SPb{\phi_d\otimes\Omega^f}{(|\DAe|-\gamma_\ve/|\V{x}|+\He)
\,\phi_d\otimes\Omega^f}\,=\,\SPn{\phi_d}{X_\ve^d\,\phi_d}
\,\klg\,\inf\spec(X_{\ve}^d)+\rho
$$
because of \eqref{valerie0}.
Moreover, we know from Lemma~\ref{le-Em-Eme}
that $E_{m,\ve}-E_m\grg o(\ve^0)/m$, $\ve\searrow0$.
Since $\Hef\,P_{\Omega^f}^\bot\grg m\,P_{\Omega^f}^\bot$
this implies that the term in \eqref{valerie3}
is non-negative also, provided that $\ve>0$ is
sufficiently small.

 In order to bound the remaining term in \eqref{valerie1}
from below we employ Corollary~\ref{cor-pia}
(with $\wt{\V{A}}=\V{0}$, $a=0$, $\vo=\omega_\ve$) 
and \eqref{valerie0} together with $\Hef\,P_{\Omega^f}=0$ to get
\begin{align*}
\big[\,|\DAed|&-\gamma_\ve/|\V{x}|+\Hed\,\big]\otimes P_{\Omega^f}
\\
&=\,
(\id\otimes P_{\Omega^f})\,
\big\{\,|\DAe|-\gamma_\ve/|\V{x}|+\He\,\big\}\,
(\id\otimes P_{\Omega^f})
\\
&\grg\,
\ve\,\big[\,|\DO|+|\V{x}|^2+\Hed\,\big]\otimes P_{\Omega^f}\,
-\,\big(\,C(\ve,\gamma,e^2,\UV)+\ve\,|\V{x}|^2\,\big)\otimes P_{\Omega^f}\,,
\end{align*}
for all sufficiently small values of $\ve>0$.
Since $\chi$ is exponentially localized we further know
that $T_2:=\chi\,\{|\V{x}|^2\otimes P_{\Omega_f}\}\,\chi$ is a bounded
operator.
Therefore, we arrive at
\begin{eqnarray}
\lefteqn{\nonumber
\chi\,\big\{\,\PFmg{\gamma}-E_m-m/2\,\big\}\,\chi\,+\,c'(\ve)\,(T_1+T_2)
}
\\
&\grg&\nonumber
\chi\,\big\{\,\big[\,\ve\,|\DO|+\ve\,|\V{x}|^2+\ve\,\Hed\,
-\,C'(\ve,\gamma,e^2,\UV)\,\big]
\otimes P_{\Omega^f}\,\big\}\,\chi
\\
&\grg&\label{valerie4}
\chi\,\big\{\,\big[\,\ve\,|\DO|+\ve\,|\V{x}|^2
+\ve\,\Hed-C'(\ve,\gamma,e^2,\UV)\,\big]_-\otimes P_{\Omega^f}\,\big\}\,\chi
\,,
\end{eqnarray}
where $[\cdots]_-\klg0$ denotes the negative part.
Now, both $|\DO|+|\V{x}|^2$ and $\Hed$ have purely discrete
spectrum as operators on the electron and photon Hilbert
spaces and $P_{\Omega^f}$, of course, has rank one. 
(Recall that $\Hed$ is the restriction of the discretized
field energy to the Fock space modeled over the
``$\ell^2$-space'' $\HP_m^d$.)
In particular, we observe that
$$
W^-_{m,\ve}\,:=\,
\big[\,\ve\,|\DO|+\ve\,|\V{x}|^2
+\ve\,\Hed-C'(\ve,\gamma,e^2,\UV)\,\big]_-\otimes P_{\Omega^f}
$$
is a finite rank operator,
for every sufficiently small $\ve>0$.

We can now conclude the proof as follows:
Given some sufficiently small $m>0$ 
we choose $\ve>0$ small enough
such that, in particular, the terms in \eqref{valerie2}\&\eqref{valerie3}
are non-negative, $\gamma_\ve/(1-c(\ve))<2/\pi$,
and $c'(\ve)\,(\|T_1\|+\|T_2\|)\klg m/8$.
Since by definition \eqref{def-chi-PF}
it holds $\chi\,\{\PFmg{\gamma}-E_m-m/2\}\,\chi\klg-(m/4)\,\chi$,
we see that the left hand side of \eqref{valerie4}
is bounded from above by $-(m/8)\,\chi$, whence
$$
-(m/8)\,\id_{E_m+m/4}(\PFmg{\gamma})\,\grg\,\chi\,W^-_{m,\ve}\,\chi\,.
$$
In particular, $\id_{(-\infty\,,\,E_m+m/4]}(\PFmg{\gamma})$ is a finite rank
projection.
\end{proof}


\section{Infra-red bounds}
\label{sec-IR-bounds}

\noindent
In this section we derive two key ingredients
we have used to prove the existence of ground states,
namely the soft photon and photon derivative bounds.
Soft photon bounds without infra-red regularization
have been derived in non-relativistic
QED first in \cite{BFS1999}.
We establish a soft photon bound for our
non-local model
by adapting an alternative argument from \cite{GLL2001}
where also the photon derivative bounds have been introduced.
In order to obtain these two infra-red bounds it is crucial
that the Hamiltonians $\PFm{\gamma}$ are gauge invariant.
For it has been observed in \cite{BFS1999}
that a suitable gauge transformation results in a
better infra-red behavior of the transformed vector potential.
(More precisely, it has been pointed out in \cite{GLL2001} that
the procedure from \cite{BFS1999} implicitly makes use
of a gauge transformation.)
Without the gauge transformation one would end up
with a bound in terms of infra-red divergent integrals.

This section is divided into four subsections.
In the first one we introduce the gauge transformation
mentioned above and prove some preparatory lemmata.
In Subsections~\ref{ssec-spb} and~\ref{ssec-pdb}
we prove the soft photon and photon derivative bounds,
respectively.
Some technical lemmata used in these two subsections 
are postponed to Subsection~\ref{ssec-katja}.

\subsection{The gauge transformed operator}\label{ssec-gauge}

\noindent
To start with we recall that,
for $i,j\in\{1,2,3\}$, the components $A^{(i)}_m(\V{x})$ and $A^{(j)}_m(\V{y})$
of the magnetic vector potential at $\V{x},\V{y}\in\RR^3$
commute in the sense that all their spectral projections
commute; see, e.g., 
\cite[Theorem~X.43]{ReedSimonII}. 
Therefore, it makes sense to introduce the following
operator-valued gauge transformation as in \cite{GLL2001},
\begin{equation*}
U\,:=\,\sum_{\vs=1,2,3,4}\int_{\RR^3}^\oplus U_\V{x}\,d\V{x}\,,\qquad
U_\V{x}\,:=\,\prod_{j=1}^3 e^{ix_jA_m^{(j)}(\V{0})}\,,\;\;\;
\V{x}=(x_1,x_2,x_3)\in\RR^3\,,
\end{equation*}
so that
\begin{equation}\label{laura0}
[U\,,\,\valpha\cdot\V{A}_m]\,=\,0\,.
\end{equation}
The gauge transformed vector potential is given by
\begin{equation*}
\wt{\V{A}}_m\,:=\,\V{A}_m\,-\,\id\otimes\V{A}_m(\V{0})
\,=\,\sum_{\vs=1,2,3,4}\int_{\RR^3}^\oplus
\valpha\cdot\big(\ad(\wt{\V{g}}_{\V{x}})+a(\wt{\V{g}}_{\V{x}})\big)\,d^3\V{x}\,,
\end{equation*}
where
\begin{equation*}
\wt{\V{g}}_{\V{x}}(k)\,:=\,\id_{\cA_m}(k)\,(e^{i\V{k}\cdot\V{x}}-1)\,\V{g}(k)\,,
\qquad \V{x}\in\RR^3\,,\;\textrm{a.e.}\;k=(\V{k},\lambda)\in\RR^3\times\ZZ_2\,,
\end{equation*}
and $\V{g}$ is defined in \eqref{def-Gphys}.
In fact, using \eqref{laura0} we deduce that
\begin{equation*}
U\,\DAm\,U^*\,=\,\DAmt\,,\qquad U\,\SAm\,U^*\,=\,\SAmt\,,\qquad
U\,|\DAm|\,U^*\,=\,|\DAmt|\,.
\end{equation*}
Then the key observation \cite{BFS1999} is that since
\begin{equation}\label{laura1}
|\wt{\V{g}}_{\V{x}}(k)|\,\klg\,
\id_{|\V{k}|>m}\,|\V{k}|\,|\V{x}|\,|\V{g}(k)|\,,
\qquad \V{x}\in\RR^3\,,\;\textrm{a.e.}\;k\in\RR^3\times\ZZ_2\,,
\end{equation}
the transformed vector potential $\wt{\V{A}}_m$
has a better infra-red behavior than $\V{A}_m$. 
In particular, infra-red divergent (for $m\searrow0$)
integrals appearing in the derivation of the soft photon bound
are avoided when we work with $\wt{\V{A}}_m$ instead of $\V{A}_m$.
It is needless to say that the gauge invariance of 
$\PFm{\gamma}$ is crucial at this point. 
The price to pay is that we have to control the unbounded
multiplication operator $|\V{x}|$ in \eqref{laura1}.
This is, however, possible thanks to the localization estimates
recalled in Proposition~\ref{prop-exp-loc}.

Below we shall use the following simple observations.
We pick some
orthonormal basis, $\{e_\ell:\,\ell\in\NN\}$,
of $\HP$ and some $q\in C_0((\RR^3\setminus\{0\})\times\ZZ_2)$.
Then
Fubini's theorem, Parseval's formula,
and the inequality 
$\|\valpha\cdot \V{z}\|^2\,\klg2|\V{z}|^2$, $\V{z}\in\CC^3$, 
imply
\begin{align}
\sum_{\ell\in\NN}&\nonumber
\big\|\,\valpha\cdot
\SPn{\wt{\V{g}}_\V{x}\,e^{-F}}{\omega^{-\nu}\,q\,e_\ell}\,\psi\,\big\|^2
\\
&\klg\,\nonumber
2\int_{\RR^3_\V{x}}\sum_{\ell\in\NN}
|\SPn{\omega^{-\nu}\,\ol{q}\,\wt{\V{g}}_\V{x}\,e^{-F(\V{x})}}{e_\ell}|^2_{\CC^3}
\,\|\psi\|_{\CC^4\otimes\Fock}^2(\V{x})\,d\V{x}
\\
&\klg\,\label{katja1}
C_F\,\|\omega^{1-\nu}\,q\,\V{g}\|^2\,\|\psi\|^2\,,\qquad \psi\in\HR_4\,,
\end{align}
where $\nu\in\RR$, and $F\in C^\infty(\RR^3,[0,\infty))$
is equal to $a|\V{x}|$, for large values of $|\V{x}|$
and some $a>0$.
In \eqref{katja1} and henceforth $C_F$ denotes some constant
which only depends on the choice of $F$ and whose value
might change from one estimate to another.
Moreover, we used that 
$|\omega^{-\nu}\,\wt{\V{g}}_{\V{x}}|\klg\omega^{1-\nu}\,|\V{g}|\,|\V{x}|$
and we simply wrote $\valpha\cdot
\SPn{\wt{\V{g}}_\V{x}\,e^{-F}}{\omega^{-\nu}\,q\,e_\ell}$ instead
of $\sum_{\vs=1,2,3,4}\int_{\RR^3}^\oplus\valpha\cdot
\SPn{\wt{\V{g}}_\V{x}\,e^{-F(\V{x})}}{\omega^{-\nu}\,q\,e_\ell}\,d^3\V{x}$ in the
first line. This slight abuse of notation will be maintained throughout
the whole section and should cause no confusion.
Setting
\begin{equation}\label{def-Deltah}
(\Delta_{\V{h}}f)(\V{k},\lambda)\,:=\,
f(\V{k}+\V{h},\lambda)\,-\,f(\V{k},\lambda)\,,
\qquad
\V{k},\V{h}\in\RR^3\,,\;\lambda\in\ZZ_2\,,
\end{equation}
for every $f\in\HP$,
so that
\begin{equation}\label{Green-Deltah}
\SPn{\Delta_{\V{h}}f_1}{f_2}\,=\,
\SPn{f_1}{\Delta_{-\V{h}}f_2}\,,\qquad f_1,f_2\in\HP\,,
\end{equation}
we further have
\begin{equation}\label{katja1D}
\sum_{\ell\in\NN}\big\|\,\valpha\cdot
\SPn{
\wt{\V{g}}_\V{x}\,e^{-F}}{\Delta_{\V{h}}(\omega^{-\nu}\,q\,e_\ell)}
\,\psi\,\big\|^2
\,\klg\,J_q^\nu(\V{h})
\,\|\psi\|^2\,,\qquad \psi\in\HR_4\,,
\end{equation}
where $\nu\in\RR$ and
\begin{equation}\label{def-Jqvkh}
J_q^\nu(\V{h})\,
:=\,2\int\frac{|q(k)|^2}{|\V{k}|^{2\nu}}\,\sup_{\V{x}\in\RR^3}
\big\{|\Delta_{-\V{h}}\wt{\V{g}}_\V{x}(k)|^2
\,e^{-2F(\V{x})}\big\}\,dk\,.
\end{equation}

\begin{lemma}\label{le-a(f)-comm}
Let $m\grg0$, $y\in\RR$, and let $f\in\HP$ such that $\omega^{-1/2}\,f\in\HP$. 
Let $F\in C^\infty(\RR^3_\V{x},[0,\infty))$ satisfy 
$F(\V{x})=a|\V{x}|$, for large $|\V{x}|$ and some $a\in(0,1)$,
and $|\nabla F|\klg a$
and set $L:=i\valpha\cdot\nabla F$.
Then (recall the notation \eqref{def-HT-RAL})
\begin{align}
[a(f)\,,\,\valpha\cdot\wt{\V{A}}_m]\,\phi\label{fritz1}
\,&=\,\valpha\cdot\SPn{f}{\wt{\V{g}}_\V{x}}\,\phi\,,
\\\label{fritz1b}
[\ad(f)\,,\,\valpha\cdot\wt{\V{A}}_m]\,\phi
\,&=\,-\valpha\cdot\SPn{\wt{\V{g}}_\V{x}}{f}\,\phi\,,
\\ \label{fritz2}
[\R{\wt{\V{A}}_m,L}{iy}\,,\,a(f)]\,\psi
\,&=\,
\R{\wt{\V{A}}_m,L}{iy}\,
\valpha\cdot\SPn{f}{\wt{\V{g}}_\V{x}}\,e^{-F}\,
\R{\wt{\V{A}}_m,2L}{iy}\,e^F\,\psi\,,
\\\label{fritz2b}
[\ad(f)\,,\,\R{\wt{\V{A}}_m,L}{iy}]\,\psi
\,&=\,
\R{\wt{\V{A}}_m,L}{iy}\,
\valpha\cdot\SPn{\wt{\V{g}}_\V{x}}{f}\,
e^{-F}\,\R{\wt{\V{A}}_m,2L}{iy}\,e^F\,\psi\,,
\end{align}
for all $\phi\in\dom(\Hf)$ and $\psi\in\dom(\Hf^{1/2})$.
Moreover, let $\{e_\ell:\,\ell\in J\}$ be an
orthonormal system in $\HP$.
Then we have, for all $q\in C_0((\RR^3\setminus\{0\})\times\ZZ_2)$, 
$\kappa\in[0,1)$,
and $\nu\in\RR$,
\begin{align}
\sum_{\ell\in J}
\big\|\,|\DAmt|^{\kappa}\,
[\SAmt\,,\,a^\sharp(\omega^{-\nu}\,q\,e_\ell)]\,e^{-F}\,\big\|^2
\,&\klg\,C_{F,\kappa}\,\label{fritz3}
\|\omega^{1-\nu}\,\ol{q}\,\V{g}\|^2\,,
\end{align}
and
\begin{align}
\sum_{\ell\in J}
\big\|\,|\DAmt|^{\kappa}\,
\big[\SAmt\,,\,a^\sharp
\,\big(\Delta_{\V{h}}(\omega^{-\nu}\,q\,e_\ell)\big)\,\big]\,e^{-F}\,\big\|^2
\,&\klg\,\label{fritz3D}
C_\kappa\,J_q^{\nu}(\V{h})\,,\quad \V{h}\in\RR^3\,.
\end{align}
\end{lemma}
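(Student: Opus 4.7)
The plan is to treat the six claims in order: (1)--(2) by direct canonical commutation relations, (3)--(4) by a resolvent commutator identity combined with Lemma~\ref{le-marah}, and then (5)--(6) by inserting (3)--(4) into the principal-value representation \eqref{sgn} of $\SAmt$ and estimating.

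First I would verify (1) and (2) on $\dom(\Hf)$ by expanding $\valpha\cdot\wt{\V{A}}_m=\sum_j\alpha_j(\ad(\wt{g}_\V{x}^{(j)})+a(\wt{g}_\V{x}^{(j)}))$ and invoking the CCR $[a(f),\ad(g)]=\SPn{f}{g}$, $[a(f),a(g)]=[\ad(f),\ad(g)]=0$, acting entrywise in the $\vs$-index. For (3), with $\psi\in\dom(\Hf^{1/2})$ and $\phi:=\R{\wt{\V{A}}_m,L}{iy}\psi$, compute
\[
(\DAmt+L-iy)\,a(f)\,\phi\,=\,a(f)\,\psi+[\DAmt+L,a(f)]\,\phi\,=\,a(f)\,\psi-\valpha\cdot\SPn{f}{\wt{\V{g}}_\V{x}}\,\phi,
\]
using $[\valpha\cdot\pe,a(f)]=[L,a(f)]=0$ and (1); applying $\R{\wt{\V{A}}_m,L}{iy}$ to both sides and rearranging yields $[\R{\wt{\V{A}}_m,L}{iy},a(f)]\psi=\R{\wt{\V{A}}_m,L}{iy}\valpha\cdot\SPn{f}{\wt{\V{g}}_\V{x}}\R{\wt{\V{A}}_m,L}{iy}\psi$. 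Since $L=i\valpha\cdot\nabla F$, Lemma~\ref{le-marah} gives $e^{F}\R{\wt{\V{A}}_m,L}{iy}e^{-F}=\R{\wt{\V{A}}_m,2L}{iy}$, so the second resolvent factor can be rewritten as $e^{-F}\R{\wt{\V{A}}_m,2L}{iy}e^{F}$, producing the stated form of (3). Identity (4) is obtained by taking adjoints and exchanging $y\leftrightarrow-y$ (together with (1)--(2)); this rewriting is essential because $\SPn{f}{\wt{\V{g}}_\V{x}}$ grows like $|\V{x}|$ but becomes bounded after multiplication by $e^{-F}$.

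For (5), insert (3) into the principal-value representation $\SAmt\,\vp=\lim_\tau\int_{-\tau}^\tau\R{\wt{\V{A}}_m}{iy}\vp\,dy/\pi$ with $L=0$, so that
\[
|\DAmt|^\kappa\,[\SAmt,a^\sharp(\omega^{-\nu}q e_\ell)]\,e^{-F}\,\psi\,=\,\int_\RR |\DAmt|^\kappa\,\R{\wt{\V{A}}_m}{iy}\,M_\ell\,\R{\wt{\V{A}}_m,L}{iy}\,\psi\,\frac{dy}{\pi},
\]
where $M_\ell:=\valpha\cdot\SPn{\omega^{-\nu}q e_\ell}{\wt{\V{g}}_\V{x}}\,e^{-F}$ is a bounded multiplication operator on $\V{x}$. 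Estimate $\||\DAmt|^\kappa\R{\wt{\V{A}}_m}{iy}\|\klg\zeta_\kappa(y)$ by \eqref{ralf3} and $\|\R{\wt{\V{A}}_m,L}{iy}\|\klg J(a)/\sqrt{1+y^2}$ by \eqref{marah1}. Taking the $\ell^2(J)$-norm of the resulting vector-valued integral and applying Minkowski's integral inequality yield
\[
\Big(\sum_\ell\big\||\DAmt|^\kappa[\SAmt,a^\sharp(\omega^{-\nu}qe_\ell)]\,e^{-F}\psi\big\|^2\Big)^{\!1/2}\!\klg\!\int_\RR\!\zeta_\kappa(y)\Big(\sum_\ell\|M_\ell\R{\wt{\V{A}}_m,L}{iy}\psi\|^2\Big)^{\!1/2}\frac{dy}{\pi}.
\]
The inner $\ell^2(J)$-sum is controlled by \eqref{katja1} (applied with $f=\omega^{-\nu}qe_\ell$ and the variable $\xi:=\R{\wt{\V{A}}_m,L}{iy}\psi$ in place of $\psi$, after completing $\{e_\ell\}_{\ell\in J}$ to an ONB), giving the bound $C_F^{1/2}\|\omega^{1-\nu}\ol{q}\V{g}\|\cdot J(a)\|\psi\|/\sqrt{1+y^2}$ for the integrand. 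The $y$-integral is finally bounded by $2K(\kappa)$ via \eqref{def-K(kappa)}, establishing \eqref{fritz3} with the block-operator interpretation of $\sum_\ell\|\cdot\|^2$.

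For (6) the strategy is identical, using \eqref{Green-Deltah} to transfer the shift to $\wt{\V{g}}_\V{x}$ inside the inner product, and replacing \eqref{katja1} by \eqref{katja1D}; the parameter $J_q^\nu(\V{h})$ of \eqref{def-Jqvkh} then naturally emerges. The main obstacles are two: (i) handling the domain questions in (3)--(4), where the unbounded factor $e^F\psi$ is only meaningful when paired with the exponentially weighted resolvent on its left (so the formula must be read as an identity of compositions on $\core$ and extended by continuity using the norm bounds from Lemma~\ref{le-marah}); and (ii) keeping track that what is controlled in (5)--(6) is precisely the $\ell^2(J,\HR)$-valued operator bound, as only the integrated Minkowski bound survives the interchange of sum and supremum.
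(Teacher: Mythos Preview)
Your proposal is correct and follows essentially the same route as the paper. The paper also derives \eqref{fritz1}--\eqref{fritz2b} from the CCR and the conjugation identity of Lemma~\ref{le-marah}, then plugs \eqref{fritz2}--\eqref{fritz2b} into the principal-value formula \eqref{sgn} and combines \eqref{ralf3}, \eqref{marah1}, and \eqref{katja1} (resp.\ \eqref{katja1D}) to obtain \eqref{fritz3} (resp.\ \eqref{fritz3D}). The only cosmetic difference is that the paper works in the weak form $\SPn{|\DAmt|^\kappa\vp}{[\SAmt,a^\sharp(f)]e^{-F}\eta}$ and applies a weighted Cauchy--Schwarz in $y$ with weight $(1+y^2)^{\kappa/2}$ before summing over~$\ell$, whereas you write the vector-valued Bochner integral directly and apply Minkowski's integral inequality; your version yields the constant $2K(\kappa)J(a)C_F^{1/2}$ exactly via \eqref{def-K(kappa)}, while the paper's splitting produces an equivalent but less explicit $C_\kappa'$. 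Your remark~(ii) about the $\ell^2(J)$-column bound versus the sum of squared operator norms is apt: the paper's proof, like yours, actually establishes $\sum_\ell\|T_\ell\psi\|^2\klg C\|\psi\|^2$ uniformly in $\psi$, which is precisely what is used downstream (e.g.\ in \eqref{franz4} and \eqref{kalli2}).
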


\begin{proof}
We drop the subscript $m$ in this proof.
Of course \eqref{fritz1} and \eqref{fritz1b} follow immediately
from the canonical commutation relations and \eqref{fritz2}
and \eqref{fritz2b} are easy consequences.
\eqref{fritz2}
and \eqref{fritz2b}
together with \eqref{sgn} 
permit to get (the superscript $\sharp$ denotes complex
conjugation when $a^\sharp$ is $\ad$ and has to be ignored
when $a^\sharp$ is $a$)
\begin{align*}
\big|&\SPb{|\DAt|^{\kappa}\,\vp}{[\SAt\,,\,a^\sharp(f)]
\,e^{-F}\,\eta}\big|
\\
&\klg\,
\int_\RR\Big|\SPB{|\DAt|^\kappa\,\vp}{\RAt{iy}\,
\valpha\cdot\SPn{f}{\wt{\V{g}}_{\V{x}}}^\sharp\,e^{-F}\,\R{\wt{\V{A}},L}{iy}\,
\eta}\Big|\,\frac{dy}{\pi}
\\
&\klg\,
\|\vp\|\,
\Big(\int_\RR\frac{\|\,|\DA|^\kappa\,\RA{iy}\|^2}{(1+y^2)^{\kappa/2}}
\,\frac{dy}{\pi}\Big)^{1/2}
\\
&\qquad\cdot
\Big(\int_\RR (1+y^2)^{\kappa/2}\,\big\|\,
\valpha\cdot\SPn{f}{\wt{\V{g}}_\V{x}\,e^{-F}}^\sharp
\,\R{\wt{\V{A}},L}{iy}\,\eta
\,\big\|^2\,\frac{dy}{\pi}\Big)^{1/2}
\,,
\end{align*}
for all $\vp,\eta\in\core$.
Inserting $f=\omega^{-\nu}\,q\,e_\ell$ and summing 
the squares of the resulting inequalities
with respect
to $\ell$ we thus obtain
\begin{align*}
\sum_{\ell\in J}&
\big\|\,|\DAt|^{\kappa}\,
[\SAt\,,\,a^\sharp(\omega^{-\nu}\,q\,e_\ell)]\,e^{-F}\,\eta\,\big\|^2
\\
&\klg
\,C_\kappa'\,
\int_\RR\sum_{\ell\in J}
\big\|\,\valpha\cdot\SPn{\omega^{-\nu}\,q\,e_\ell}{
\wt{\V{g}}_{\V{x}}\,e^{-F}}^\sharp\,
\R{\wt{\V{A}},L}{iy}
\,\eta\,\big\|^2\,\frac{(1+y^2)^{\kappa/2}\,dy}{\pi}\,,
\end{align*}
for every $\eta\in\core$. Now \eqref{fritz3} follows from the
previous estimate
in combination with \eqref{katja1} and \eqref{marah1}.
If we replace $\omega^{-\nu}\,q\,e_\ell$ by
$\Delta_\V{h}(\omega^{-\nu}\,q\,e_\ell)$ in the above
argument and apply \eqref{katja1D} instead of \eqref{katja1}
then we also obtain \eqref{fritz3D}.
\end{proof}


\subsection{Soft photon bound}\label{ssec-spb}

\noindent
Now assume that $\phi_m$ is a normalized ground state eigenvector
of the semi-relativistic Pauli-Fierz operator $\PFm{\gamma}$
and set
\begin{equation*}
E_m\,:=\,\inf\spec[\PFm{\gamma}]\,,\qquad
\wt{\phi}_m\,:=\,U\,\phi_m\,,\qquad \Hft\,\equiv\,
\wt{H}_{f,m}\,:=\,U\,\Hf\,U^*\,,
\end{equation*}
and
\begin{equation*}
\PFt{\gamma}\,\equiv\,\wt{H}_{\gamma,m}\,:=\,U\,\PF{\gamma}\,U^*
\,=\,
|\DAmt|-\tgV+\Hft\,.
\end{equation*}
Differentiating with respect to $\V{x}$ we verify that
($\V{g}_m:=\id_{\cA_m}\,\V{g}$)
\begin{equation}\label{klaus1}
[U_\V{x}\,,\,a(f)]\,=\,-i\SPn{f}{\V{g}_m\cdot\V{x}}\,U_\V{x}\,,\qquad
[U_\V{x}^*\,,\,a(f)]\,=\,i\SPn{f}{\V{g}_m\cdot\V{x}}\,U_\V{x}^*\,.
\end{equation}
For instance, both sides of the left identity are solutions of
the initial value problem 
$\nabla_\V{x}T(\V{x})=i\V{A}_m(\V{0})\,T(\V{x})
-i\SPn{f}{\V{g}_m}\,U_\V{x}$, $T(\V{0})=0$.
Moreover, we observe that \eqref{klaus1} gives
\begin{equation}\label{klaus2}
[\Hft\,,\,a(f)]\,=\,-a(\omega\,f)\,+\,i\SPn{\omega\,f}{\V{g}_m\cdot\V{x}}\,.
\end{equation}

\smallskip

\begin{proof}[Proof of Proposition~\ref{prop-spb}]
We drop all subscripts $m$ in this proof.
We proceed along the lines of the proof presented in
\cite[Appendix~B]{GLL2001}. The new complication comes
from the terms involving the non-local operator $|\DA|$.
To begin with we recall that,
by Fubini's theorem and Parceval's identity,
\begin{equation}\label{Parceval-a}
\sum_{\ell\in\NN}\SPb{a(f\,e_\ell)\,\psi}{a(h\,e_{\ell})\,\eta}\,=\,
\int f(k)\,\ol{h(k)}\,\SPb{a(k)\,\psi}{a(k)\,\eta}\,dk\,,
\end{equation}
for $f,h\in\HP\cap L^\infty$ and $\psi,\eta\in\dom(\Hf^{1/2})$.

Let $q\in C_0((\RR^3\setminus\{0\})\times\ZZ_2)$. (In the end we shall
insert a family of approximate delta-functions for $q$.)
Moreover, we
let $\{e_\ell:\,\ell\in\NN\}$ denote some orthonormal basis
of $\HP$ and assume that the weight function
$F\in C^\infty(\RR^3_\V{x},[0,\infty))$ is equal to
$a|\V{x}|$, for all $\V{x}\in\RR^3$, $|\V{x}|\grg R$, and
some $R>0$ such that $|\nabla F|\klg a$ on $\RR^3$.
Here $a\in(0,1/2)$ is assumed to be so small that the bound
\eqref{exp-loc-sup} is available with $a$ replaced by $2a$.
Accordingly we shall always assume that $m\in(0,m_1]$, where
$m_1$ is the parameter appearing in Proposition~\ref{prop-exp-loc}.
Together with \eqref{klaus1}
the identity \eqref{Parceval-a} then implies
\begin{align}
\int|&q(k)|^2\,\|a(k)\,\phi\|^2\,dk\,\nonumber
=\,\sum_{\ell\in\NN}\big\|\,U\,a(q\,e_\ell)\,\phi\,\big\|^2
\\
&\klg\,\nonumber
2\sum_{\ell\in\NN}
\big\|\,a(q\,e_\ell)\,\wt{\phi}\,\big\|^2
\,+\,
2\sum_{\ell\in\NN} \sup_{\V{x}\in\RR^3}
\big|\SPb{q\,e_\ell}{\V{g}\cdot\V{x}\,e^{-F(\V{x})}}\big|^2
\,\|e^F\,\phi\|^2
\\
&\klg\,\label{nora-88}
2\int|q(k)|^2\,\|a(k)\,\wt{\phi}\|^2\,dk
\,+\,
2C_F\int\id_{|\V{k}|\klg\UV}\frac{|q(k)|^2}{|\V{k}|}\,dk\,,
\end{align}
for $m\in(0,m_1]$. Here $C_F\in(0,\infty)$ depends
on $F$ and on the quantity in \eqref{exp-loc-sup},
but not on $m$.
In what follows we derive a bound on the left term
in the last line of \eqref{nora-88}. 
To this end we pick some
$\psi\in U\,\core$ and some $f\in\HP$ with compact support
in $(\RR^3\setminus\{0\})\times\ZZ_2$. 
Writing $|\DAt|=\DAt\,\SAt$ and employing
the eigenvalue equation for $\wt{\phi}$ we deduce that
\begin{eqnarray*}
\SPb{(\PFmt{\gamma}-E_m)\,\psi}{a(f)\,\wt{\phi}}
&=&
\SPb{[\ad(f)\,,\,\PFmt{\gamma}-E_m]\,\psi}{\wt{\phi}}
\\
&=&
\SPb{[\ad(f)\,,\,\valpha\cdot\wt{\V{A}}]\,\SAt\,\psi}{\wt{\phi}}
\\
& &
\;+\,
\SPb{|\DAt|^{1/2}\,[\ad(f)\,,\,\SAt]\,\psi}{\SAt\,|\DAt|^{1/2}\,\wt{\phi}}
\\
& &
\;+\,
\SPb{[\ad(f)\,,\,\Hft]\,\psi}{\wt{\phi}}
\,.
\end{eqnarray*}
By Lemma~\ref{le-a(f)-comm} and \eqref{klaus2}
we may replace $\psi\in U\,\core$
on the right and left hand sides of the previous identity
by any element of $\form(\PFmt{\gamma})$ and
in Appendix~\ref{app-a(f)} 
we verify that $a(f)\,\wt{\phi}\in\form(\PFmt{\gamma})$.
On account of \eqref{klaus2} and $\PFmt{\gamma}-E_m\grg0$ we thus get
\begin{eqnarray}
\SPb{a(f)\,\wt{\phi}}{a(\omega\,f)\,\wt{\phi}}&\klg&
-\SPb{[\SAt\,,\,a(f)]\,\wt{\phi}}{\valpha\cdot\SPn{f}{\wt{\V{g}}_{\V{x}}}\,
\wt{\phi}}\nonumber
\\
& &\;-\,\SPb{a(f)\,\SAt\,\wt{\phi}}{\valpha\cdot\SPn{f}{\wt{\V{g}}_\V{x}}
\,\wt{\phi}}\nonumber
\\
& &
\;+\,\nonumber
\SPb{|\DAt|^{1/2}\,
[\ad(f)\,,\,\SAt]\,a(f)\,\wt{\phi}}{\SAt\,|\DAt|^{1/2}\,\wt{\phi}}
\\
& &\label{nora-99}
\;+\,i
\SPb{a(f)\,\wt{\phi}}{\SPn{\omega\,f}{\V{g}\cdot\V{x}}\,\wt{\phi}}
\,.
\end{eqnarray}
Next, we substitute $f$ by $f_\ell:=\omega^{-1/2}\,q\,e_\ell$ and sum
with respect to $\ell$.
On account of \eqref{Parceval-a}
this results in
\begin{align}
\int|q(k)|^2\,\|&a(k)\,\wt{\phi}\|^2\,dk
\,\klg\,
\Big|\,\sum_{\ell\in\NN}
\SPb{[\SAt\,,\,a(f_\ell)]\,\wt{\phi}}{
\valpha\cdot\SPn{f_\ell}{\wt{\V{g}}_{\V{x}}}\,\wt{\phi}}
\,\Big|\label{franz0}
\\
& \label{franz0b}
\quad+\,\Big|\,\sum_{\ell\in\NN}
\SPb{a(f_\ell)\,\SAt\,\wt{\phi}}{
\valpha\cdot\SPn{f_\ell}{\wt{\V{g}}_{\V{x}}}\,\wt{\phi}}
\,\Big|
\\
& \label{franz1}
\quad+\,
\Big|\,\sum_{\ell\in\NN}
\SPb{|\DAt|^{1/2}\,[\ad(f_\ell)\,,\,\SAt]\,a(f_\ell)\,\wt{\phi}}{
\SAt\,|\DAt|^{1/2}\,\wt{\phi}}\,\Big|
\\
& \label{franz2}
\quad+\,
\Big|\,\sum_{\ell\in\NN}
\SPb{a(f_\ell)\,\wt{\phi}}{\SPn{\omega\,f_\ell}{\V{g}\cdot\V{x}}\,\wt{\phi}}
\,\Big|
\,.
\end{align}
By means of the Cauchy-Schwarz inequality, \eqref{katja1},
and \eqref{fritz3} we deduce the following bound on
the term on the right side of \eqref{franz0},
\begin{align}
\Big|\sum_{\ell\in\NN}
\SPb{[\SAt\,,\,a(f_\ell)]\,\wt{\phi}}{
\valpha\cdot\SPn{f_\ell}{\wt{\V{g}}_\V{x}\,e^{-F}}\,e^F\,
\wt{\phi}}\,\Big|\,\klg\,
\label{franz4}C_F\,\|\omega^{1/2}\,\ol{q}\,\V{g}\|^2
\|e^F\,\wt{\phi}\|^2.
\end{align}
Furthermore, we observe that
\begin{equation}\label{katja2b}
\sum_{\ell\in\NN}\SPn{\V{g}\cdot\V{x}\,e^{-F}}{\omega\,f_\ell}\,a(f_\ell)\,\psi
\,=\,e^{-F}\,\V{x}\cdot a\big(|q|^2\,\V{g}\big)\,\psi\,,
\qquad \psi\in\dom(\Hft^{1/2})\,.
\end{equation}
Similarly as in \cite{GLL2001} we employ \eqref{katja2b}
to estimate the term
in \eqref{franz2} as
\begin{align}
\Big|\sum_{\ell\in\NN}
\SPb{&a(f_\ell)\,\wt{\phi}}{\SPn{\omega\,f_\ell}{\V{g}\cdot\V{x}\,e^{-F}}
\,e^F\,\wt{\phi}}\,\Big|\nonumber
\\
&\klg\,
\Big|
\int|q(k)|^2\,
\SPb{a(k)\,\wt{\phi}}{(\V{g}(k)\cdot\V{x}\,e^{-F})\,e^F\,\wt{\phi}}\,dk\,
\Big|\nonumber
\\
&\klg\,
\frac{\delta}{2}\int |q(k)|^2\,\|a(k)\,\wt{\phi}\|^2\,dk
\,+\,
\frac{C_F}{\delta}
\int\id_{|\V{k}|<\UV}\,\frac{|q(k)|^2}{|\V{k}|}\,dk\,\|e^F\,\wt{\phi}\|^2
\,,\label{franz44}
\end{align}
for every $\delta\in(0,1]$.
Here we also used that $|\V{g}(k)\cdot\V{x}|^2\klg|\V{x}|^2/|\V{k}|$.
The terms in \eqref{franz0b} and \eqref{franz1} are treated
in Lemmata~\ref{le-lora} and~\ref{le-katja1} below, where we show
that their sum is bounded from above by
$$
\frac{\delta}{2}\,\int|q(k)|^2\,\|a(k)\,\wt{\phi}\|^2\,dk
\,+\,\frac{C''}{\delta}\,\int
\Big(|\V{k}|+\frac{1}{|\V{k}|}\Big)\,\id_{|\V{k}|\klg\UV}
\,|q(k)|^2\,dk\,,
$$
for some $C''\in(0,\infty)$ and every $\delta\in(0,1/2]$.
Putting all the estimates above together, we arrive at
\begin{align}\label{spb-wtphi}
(1-\delta)\int|q(k)|^2\,\|a(k)\,\wt{\phi}\|^2\,dk
&\,\klg\,
\frac{C'''}{\delta}\int\Big(|\V{k}|+\frac{1}{|\V{k}|}\Big)
\,\id_{|\V{k}|\klg\UV}\,|q(k)|^2\,dk\,,
\end{align}
for every $\delta\in(0,1/2]$.
Here the constant $C'''\in(0,\infty)$  
does not depend on $m\in(0,m_1]$.
Combining \eqref{spb-wtphi} with \eqref{nora-88} and
peaking at some $k$ by inserting an appropriate
family of approximate delta-functions for $q$,
we obtain the asserted estimate \eqref{eq-spb}.
\end{proof}


\subsection{Photon derivative bound}\label{ssec-pdb}

\noindent
In this subsection we make use of the particular choice \eqref{pol-vec}
of the polarization vectors.
In the following we use the abbreviations
\begin{eqnarray*}
k+\V{h}&:=&(\V{k}+\V{h},\lambda)\,,\qquad
(\Delta_{\V{h}} f)(k)\,:=\,
f(k+\V{h})-f(k)\,,
\\
(\Delta_{-\V{h}}a)(f)&:=&a(\Delta_\V{h}f)\,,
\qquad\;\;(\Delta_{-\V{h}}a)(k)\,:=\,
a(k-\V{h})-a(k)\,,
\end{eqnarray*}
where $\V{h}\in\RR^3$, $k=(\V{k},\lambda)\in\RR^3\times\ZZ_2$, 
and $f\in\HP$,
so that
\begin{eqnarray*}
\SPn{\Delta_{\V{h}}f_1}{f_2}&=&
\SPn{f_1}{\Delta_{-\V{h}}f_2}\,,
\\
(\Delta_{-\V{h}}a)(f)&=&\int\ol{f(k)} \,(\Delta_{-\V{h}}a)(k)\,dk
\,.
\end{eqnarray*}

\smallskip

\begin{proof}[Proof of Proposition~\ref{prop-pdb}]
Most subscripts $m$ are dropped in this proof
so that $\phi\equiv\phi_m$, $\wt{\V{A}}\equiv\wt{\V{A}}_m$, etc.
Again,
we carry through a procedure presented in 
\cite[Appendix~B]{GLL2001} and the new difficulty
is how to deal with the non-local term in $\PFmt{\gamma}$.

First, we pick some orthonormal basis, $\{e_\ell:\,\ell\in\NN\}$,
of $\HP$ and observe that
\begin{equation}\label{Parceval-Deltaha}
\sum_{\ell\in\NN}\SPb{\Delta_{-\V{h}}a(f\,e_\ell)\,\psi}{
\Delta_{-\V{h}}a(h\,e_{\ell})\,\eta}=
\int \!f(k)\,\ol{h(k)}\,\SPb{\Delta_{-\V{h}}a(k)\,\psi}{
\Delta_{-\V{h}}a(k)\,\eta}\,dk,
\end{equation}
for all $f,h\in \HP\cap L^\infty$,
in analogy to \eqref{Parceval-a}.
Similarly to \eqref{nora-88} we employ
\eqref{klaus1} and \eqref{Parceval-Deltaha} to get
\begin{align}
\int|&q(k)|^2\,\big\|\,(\Delta_{-\V{h}}a)(k)\,\phi\,\big\|^2\,dk
\,=\,
\sum_{\ell\in\NN}\big\|\,U\,a\big(\Delta_{\V{h}}(q\,e_\ell)\big)\,\big\|^2
\nonumber
\\
&\klg\,\nonumber
2\sum_{\ell\in\NN}\big\|a\big(\Delta_{\V{h}}(
q\,e_\ell)\big)\,\wt{\phi}\big\|^2
+
2\sum_{\ell\in\NN}
\sup_{\V{x}\in\RR^3}\big|\SPb{e_\ell}{\ol{q}\,\Delta_{-\V{h}}\V{g}\cdot\V{x}\,
e^{-F(\V{x})}}\big|^2\,\|e^F\,\phi\|^2
\\
&\klg\label{nora-88b}
\,2\int|q(k)|^2\,\big\|\,(\Delta_{-\V{h}}a)(k)\,\wt{\phi}\,\big\|^2\,dk
+
2C_F^2\int|q(k)|^2\,|\Delta_{-\V{h}}\V{g}(k)|^2\,dk\,,
\end{align}
for every $q\in C_0((\RR^3\setminus\{0\})\times\ZZ_2)$.
In the following we seek for a bound on
$$
I_q(\V{h})\,:=\,\int|q(k)|^2\,
\big\|\,(\Delta_{-\V{h}}a)(k)\,\wt{\phi}\,\big\|^2\,dk
$$
and pick some $f\in\HP$ with compact support in 
$(\RR^3\setminus\{0\})\times\ZZ_2$.
Then we clearly have
\begin{align}
\nonumber
\SPb{\Delta_{-\V{h}}&a(f)\,\wt{\phi}}{\Delta_{-\V{h}}a(\omega\,f)\,\wt{\phi}}
\\
&=\,
\SPb{a(\Delta_\V{h}f)\,\wt{\phi}}{a(\omega\,\Delta_\V{h}f)\,\wt{\phi}}
\,+\,
\SPb{\Delta_{-\V{h}}a(f)\,\wt{\phi}}{a\big((\Delta_\V{h}\omega)\,
f(\cdot+\V{h})\big)\,\wt{\phi}}\,.\label{nora1}
\end{align}
Moreover, we use 
the eigenvalue equation for $\wt{\phi}$ and an argument
analogous to the one leading to \eqref{nora-99} to infer that
\begin{eqnarray}
0&\klg&\SPb{a(\Delta_\V{h}f)\wt{\phi}}{(\PFmt{\gamma}-E_m)
a(\Delta_\V{h}f)\,\wt{\phi}}\nonumber
\\
&=&
-\SPb{a(\Delta_\V{h}f)\,\wt{\phi}}{
\SAt\,\valpha\cdot\SPn{\Delta_\V{h}f}{\wt{\V{g}}_\V{x}}\,
\wt{\phi}}\nonumber
\\
& &\;+\,\nonumber
\SPb{|\DAt|^{1/2}\,[\ad(\Delta_\V{h}f)\,,\,\SAt]\,a(\Delta_\V{h}f)\,\wt{\phi}}{
\SAt\,|\DAt|^{1/2}\,\wt{\phi}}
\\
& &\label{nora4}
\,-\,\SPb{a(\Delta_\V{h}f)\,\wt{\phi}}{a(\omega\,\Delta_\V{h}f)\,\wt{\phi}}
\\
& &
\;+\,i\SPb{a(\Delta_\V{h}f)\,\wt{\phi}}{\SPn{\omega\,\Delta_\V{h}f}{
\V{g}\cdot\V{x}}\,\wt{\phi}}\,.\nonumber
\end{eqnarray}
When we add this inequality to \eqref{nora1}
the first term on the right hand side of \eqref{nora1} and the
term in \eqref{nora4} cancel each other and we obtain
\begin{eqnarray}
\lefteqn{\nonumber
\SPb{\Delta_{-\V{h}}a(f)\,\wt{\phi}}{\Delta_{-\V{h}}a(\omega\,f)\,\wt{\phi}}
}
\\
&\klg&
\SPb{\Delta_{-\V{h}}a(f)\,\wt{\phi}}{a\big((\Delta_\V{h}\omega)\,
f(\cdot+\V{h})\big)\,\wt{\phi}}
\label{nora5}
\\
& &\;
-\,\SPb{[\SAt\,,\,\Delta_{-\V{h}}a(f)]\,\wt{\phi}}{\valpha\cdot
\SPn{f}{\Delta_{-\V{h}}\wt{\V{g}}_\V{x}}\,
\wt{\phi}}\label{nora6a}
\\
& &\;
-\,\SPb{\valpha\cdot
\SPn{\Delta_{-\V{h}}\wt{\V{g}}_\V{x}}{f}\,
\Delta_{-\V{h}}a(f)\,\SAt\,\wt{\phi}}{\wt{\phi}}\label{nora6b}
\\
& &\;+\,\label{nora7}
\SPb{|\DAt|^{1/2}\,
[\ad(\Delta_\V{h}f)\,,\,\SAt]\,\Delta_{-\V{h}}a(f)\,\wt{\phi}}{
\SAt\,|\DAt|^{1/2}\,\wt{\phi}}
\\
& &
\;+\,i\SPb{\Delta_{-\V{h}}a(f)\,\wt{\phi}}{\SPn{f}{
\Delta_{-\V{h}}(\omega\,\V{g}\cdot\V{x})}\,\wt{\phi}}\,.\label{nora8}
\end{eqnarray}
We replace $f$ by
$
f_\ell=\omega^{-1/2}\,q\,e_\ell
$,
for some orthonormal basis $\{e_\ell:\,\ell\in\NN\}$
of $\HP$ and some $q\in C_0(\cA_m\times\ZZ_2)$,
and sum the previous estimate with respect to $\ell$.
Notice that, apart from the term in \eqref{nora5},
the previous estimate is an analogue of \eqref{nora-99}
with $a$ replaced by $\Delta_{-\V{h}}a$ or $f$
replaced by $\Delta_{\V{h}}f$.
Moreover, employing \eqref{Parceval-Deltaha} we find
\begin{equation}
\sum_{\ell\in\NN}\label{nora9}
\SPb{\Delta_{-\V{h}}a(f_\ell)\,\wt{\phi}}{
\Delta_{-\V{h}}a(\omega\,f_\ell)\,\wt{\phi}}
\,=\,
\int|q(k)|^2\,\|\Delta_{-\V{h}}a(k)\,\wt{\phi}\|^2\,dk\,.
\end{equation}
Furthermore, an analogue of \eqref{katja2b} reads
\begin{equation}\label{nora9b}
\sum_{\ell\in\NN}\SPn{\Delta_{-\V{h}}(\omega\,\V{g}\cdot\V{x})\,e^{-F}}{f_\ell}
\,\Delta_{-\V{h}}a(f_\ell)\,\psi\,=\,
\Delta_{-\V{h}}a\big(\omega^{-1}\,|q|^2\,
\Delta_{-\V{h}}(\omega\,\V{g}\cdot\V{x})\,e^{-F}\big)\,\psi\,,
\end{equation}
for $\psi\in\dom(\Hf^{1/2})$.
Here and henceforth we choose $F$ as described in the paragraph
succeeding Equation \eqref{Parceval-a}.
We may hence use the same line of arguments that led to
\eqref{franz44} in order to deduce that
\begin{align}
\Big|\sum_{\ell\in\NN}&i\SPb{\Delta_{-\V{h}}a(f_\ell)\,\wt{\phi}}{
\SPn{f_\ell}{\Delta_{-\V{h}}(\omega\,\V{g}\cdot\V{x})}\,\wt{\phi}}\,\Big|
\nonumber
\\
&\klg\,\label{kalli1}
\frac{\delta}{4}\,I_q(\V{h})
\,+\,
\frac{C_F}{\delta}
\int\frac{|q(k)|^2}{|\V{k}|^2}\,
\big|\,
\Delta_{-\V{h}}(\omega\,\V{g})(k)\,\big|^2\,dk\,.
\end{align}
Moreover, \eqref{exp-loc-sup}, \eqref{katja1D}, \eqref{fritz3D}, 
and the Cauchy-Schwarz inequality permit to get
\begin{align}
\Big|\sum_{\ell\in\NN}&
\SPb{[\SAt\,,\,\Delta_{-\V{h}}a(f_\ell)]\,\wt{\phi}}{\valpha\cdot
\SPn{f_\ell}{\Delta_{-\V{h}}\wt{\V{g}}_\V{x}\,e^{-F}}\,
e^{F}\wt{\phi}}
\Big|
\,\klg\,
C''\,J_q^{1/2}(\V{h})
\label{kalli2}\,.
\end{align} 
For the term appearing in \eqref{nora5} we find
by means of the soft photon bound (for $\wt{\phi}$ instead
of $\phi$; recall \eqref{spb-wtphi})
\begin{align}
\Big|\sum_{\ell\in\NN}&\nonumber
\SPb{\Delta_{-\V{h}}a(f_\ell)\,\wt{\phi}}{
a\big((\Delta_{\V{h}}\omega)\,f_\ell(\cdot+\V{h})\big)\,\wt{\phi}}\,\Big|
\\
&=\,\nonumber
\Big|\int
(\Delta_{\V{h}}\omega)(k-\V{h})\,\frac{|q(k)|^2}{\omega(k)}
\,\SPb{\Delta_{-\V{h}}a(k)\,\wt{\phi}}{a(k-\V{h})\,\wt{\phi}}\,dk
\,\Big|
\\
&\klg\,
\nonumber
\frac{\delta}{4}
\,I_q(\V{h})
\,+\,
\frac{1}{\delta}
\int|q(k)|^2\,
\frac{(\Delta_{\V{h}}\omega)(k-\V{h})^2}{\omega(k)^2}
\,\|a(k-\V{h})\,\wt{\phi}\|^2\,dk
\\
&\klg\,\label{kalli3}
\frac{\delta}{4}\,I_q(\V{h})\,+\,
\frac{C'''}{\delta}
\int|q(k)|^2\,
\frac{(\Delta_{\V{h}}\omega)(k-\V{h})^2}{|\V{k}|^2
\,|\V{k}-\V{h}|}\,\id_{|\V{k}-\V{h}|\klg\UV}\,dk
\,.
\end{align}
Finally, Lemmata~\ref{le-lora} and~\ref{le-katja1} below together assert that
the terms in \eqref{nora6b} and \eqref{nora7} can be estimated as
\begin{align}
\Big|&\sum_{\ell\in\NN}
\SPb{\valpha\cdot
\SPn{\Delta_{-\V{h}}\wt{\V{g}}_\V{x}}{f_\ell}\,
\Delta_{-\V{h}}a(f_\ell)\,\SAt\,\wt{\phi}}{\wt{\phi}}
\Big|\nonumber
\\
&\quad+\,
\Big|\sum_{\ell\in\NN}\SPb{
|\DAt|^{1/2}\,[\ad(\Delta_\V{h}f_\ell)\,,\,\SAt]\,\Delta_{-\V{h}}a(f_\ell)\,
\wt{\phi}}{\SAt\,|\DAt|^{1/2}\,\wt{\phi}}\,\Big|
\nonumber
\\
&\klg\,\label{kalli4}
\frac{\delta}{2}\,I_q(\V{h})
\,+\,
\frac{C^{(4)}}{\delta}
\,J_q^1(\V{h})\,.
\end{align}
Combining \eqref{nora-88b} and \eqref{nora5}--\eqref{nora8}
with \eqref{kalli1}--\eqref{kalli4} we arrive at
\begin{align}
\frac{1-\delta}{2}&
\int|q(k)|^2\,\|(\Delta_{-\V{h}}a)(k)\,\phi\|^2\,dk\nonumber
\\
&\klg\,\nonumber
\frac{C^{(5)}}{\delta}\int\frac{|q(k)|^2}{|\V{k}|^2}\,
\Big\{|\V{k}|^2\,|\Delta_{-\V{h}}\V{g}(k)|^2+
\big|\Delta_{-\V{h}}(\omega\,\V{g})(k)\big|^2\Big\}\,dk
\\
&\quad+\nonumber
\frac{C^{(5)}}{\delta}\int\frac{|q(k)|^2}{|\V{k}|^2}\,
\frac{(\Delta_{\V{h}}\omega)(k-\V{h})^2}{|k-\V{h}|}
\,\id_{|\V{k}-\V{h}|\klg\UV}\,dk
\\
&\quad+\,\label{lisel1}
\frac{C^{(5)}}{\delta}
\int\Big(1+\frac{1}{|\V{k}|^2}\Big)\,|q(k)|^2\,
\sup_{\V{x}\in\RR^3}\big\{|\Delta_{-\V{h}}\wt{\V{g}}_\V{x}(k)|^2
\,e^{-2F(\V{x})}\big\}\,dk\,,
\end{align}
for every $\delta\in(0,1]$ and $q\in C_0(\cA_m\times\ZZ_2)$.

As in \cite{GLL2001} we now employ the special choice of the
polarization vectors \eqref{pol-vec}
in order to bound the discrete derivatives of the previous estimate.
In fact, set
$\V{y}_\bot:=(y^{(2)},-y^{(1)},0)$
and $\V{y}^\circ:=\V{y}/|\V{y}|$, 
for $\V{y}=(y^{(1)},y^{(2)},y^{(3)})\in\RR^3\setminus\{\V{0}\}$.
Then
\begin{align*}
(\Delta_{-\V{h}}\veps)(\V{k},0)\,&=\,
-|\V{k}_\bot|^{-1}\,\V{h}_\bot+
\big(|(\V{k}-\V{h})_\bot|^{-1}-|\V{k}_\bot|^{-1}\big)\,(\V{k}-\V{h})_\bot\,,
\\
(\Delta_{-\V{h}}\veps)(\V{k},1)
\,&=\,
\big((\V{k}-\V{h})^\circ-{\V{k}}^\circ\big)\wedge\veps(\V{k}-\V{h},0)
+{\V{k}}^\circ\wedge(\Delta_{-\V{h}}\veps)(\V{k},0)\,,
\end{align*}
whence
\begin{align*}
|\Delta_{-\V{h}}\veps(\V{k},0)|\,&\klg\,
2|\V{h}_\bot|/|\V{k}_\bot|\,\klg\,2|\V{h}|/|\V{k}_\bot|\,,
\\
|\Delta_{-\V{h}}\veps(\V{k},1)|\,&\klg\,2|\V{h}|/|\V{k}|
+|\Delta_{-\V{h}}\veps(\V{k},0)|\,\klg\,4|\V{h}|/|\V{k}_\bot|\,.
\end{align*}
In the sequel we re-introduce the reference to $m$ in the notation.
Since $\V{g}_m(k)=|\V{k}|^{-1/2}\,\veps(k)\,\id_{m\klg|\V{k}|\klg\UV}$ and
$|a^{-1/2}-b^{-1/2}|\klg(|a-b|/2)(a^{-3/2}+b^{-3/2})$,
$a,b>0$,
we further have, for $m<|\V{k}|,|\V{k}-\V{h}|<\UV$,
\begin{align*}
\big|\Delta_{-\V{h}}\V{g}_m(k)\big|\,&\klg\,
\frac{4|\V{h}|}{|\V{k}|^{1/2}|\V{k}_\bot|}\,+\,
\frac{|\V{h}|}{2}\Big(\frac{1}{|\V{k}|^{3/2}}\,+
\,\frac{1}{|\V{k}-\V{h}|^{3/2}}\Big)\,,
\\
\frac{1}{|\V{k}|}\,\big|\Delta_{-\V{h}}(\omega\,\V{g}_m)\big|
\,&\klg\,\big|\Delta_{-\V{h}}\V{g}_m(k)\big|
+\,\frac{|\V{h}|}{|\V{k}|\,|\V{k}-\V{h}|^{1/2}}
\,.
\end{align*}
Moreover, since
$\wt{\V{g}}_\V{x}=(e^{i\V{k}\cdot\V{x}}-1)\,\V{g}_m(k)$
and $|e^{i\V{y}\cdot\V{x}}-e^{i\V{z}\cdot\V{x}}|\klg|\V{y}-\V{z}|\,|\V{x}|$,
we find
\begin{align*}
\frac{1}{|\V{k}|}\,\big|\Delta_{-\V{h}}\wt{\V{g}}_\V{x}\big|\,
&\klg\,
|\V{x}|\,\big|\Delta_{-\V{h}}\V{g}_m(k)\big|
+\,\frac{|\V{h}|\,|\V{x}|}{|\V{k}|\,|\V{k}-\V{h}|^{1/2}}\,,
\end{align*}
again for $m<|\V{k}|,|\V{k}-\V{h}|<\UV$.
Furthermore, it is clear that
\begin{align*}
\frac{(\Delta_\V{h}\omega)(k-\V{h})^2}{|\V{k}|^2|\V{k}-\V{h}|}\,
\klg\,\frac{|\V{h}|^2}{|\V{k}|^2\,|\V{k}-\V{h}|}\,.
\end{align*}
Finally, by Young's inequality, 
$$
\frac{|\V{h}|}{|\V{k}|\,|\V{k}-\V{h}|^{1/2}}
\,\klg\,
\frac{|\V{h}|}{3}
\Big(\frac{2}{|\V{k}|^{1/2}|\V{k}_\bot|}\,+\,
\frac{1}{|\V{k}-\V{h}|^{1/2}|(\V{k}-\V{h})_\bot|}\Big)
\,.
$$
Inserting the previous estimates in \eqref{lisel1}
we find some constant, $C'\in(0,\infty)$, such that,
for all $m\in(0,m_1]$, $\delta\in(0,1/2]$, and 
$q\in C_0(\cA_m\times\ZZ_2,\CC)$
\begin{align*}
\int|q(&k)|^2\,\|(\Delta_{-\V{h}}a)(k)\,\phi_m\|^2\,dk
\\
&\klg\,|\V{h}|^2\:
\frac{C(1+\UV^2)}{\delta}\int|q(k)|^2\,
\Big(\frac{1}{|\V{k}|\,|\V{k}_\bot|^2}\,+\,
\frac{1}{|\V{k}-\V{h}|\,|(\V{k}-\V{h})_\bot|^2}\Big)\,dk\,,
\end{align*}
provided that $m<|\V{k}|,|\V{k}-\V{h}|<\UV$ on the support of $q$.
Peaking at some fixed $k\in\cA_m\times\ZZ_2$ 
with $m<|\V{k}|,|\V{k}-\V{h}|<\UV$, $\V{k}_\bot,(\V{k}-\V{h})_\bot\not=0$, 
by inserting a family of approximate $\delta$-functions for $q$
we conclude the proof of Proposition~\ref{prop-pdb}.
\end{proof}


\subsection{Some technical lemmata}
\label{ssec-katja}

\noindent
In this subsection we complete the derivation of the
soft photon and photon derivative bounds by providing
the missing estimates on 
\eqref{franz0b}, \eqref{franz1}, and \eqref{kalli4}. 
Throughout the whole section we drop the subscript $m$ and
one should keep in mind that $\V{g}$ and $\wt{\V{g}}_\V{x}$
are cut-off in the infra-red under this convention.

\begin{lemma}\label{le-lora}
Let $e^2,\UV>0$ and $\gamma\in(0,2/\pi)$.
Then we find constants, $C,C'\in(0,\infty)$,
such that, for all $m\in(0,m_1]$, $\delta\in(0,1]$,
$q\in C_0((\RR^3\setminus\{0\})\times\ZZ_2)$, every orthonormal
basis, $\{e_\ell:\,\ell\in\NN\}$, of $\HP$, and
$f_\ell:=\omega^{-1/2}\,q\,e_\ell$,
\begin{align}
\Big|\sum_{\ell\in\NN}&
\SPb{a(f_\ell)\,\SAt\,\wt{\phi}}{
\valpha\cdot\nonumber
\SPn{f_\ell}{\wt{\V{g}}_\V{x}}\,
\wt{\phi}}\Big|
\\
&\klg\,
\delta\int|q(k)|^2\,\|a(k)\,\wt{\phi}\|^2\,dk
\,+\,\frac{C}{\delta}\,
\int\Big(|\V{k}|+\frac{1}{|\V{k}|}\Big)\,
\id_{|\V{k}|\klg\UV}\,|q(k)|^2\,dk\,.\label{lora-eq1}
\end{align}
and
\begin{align}
\Big|\sum_{\ell\in\NN}&
\SPb{\nonumber
\Delta_{-\V{h}}a(f_\ell)\,\SAt\,\wt{\phi}}{
\valpha\cdot
\SPn{f_\ell}{\Delta_{-\V{h}}\wt{\V{g}}_\V{x}}\,
\wt{\phi}}\Big|
\\
&\klg\,
\delta\int|q(k)|^2\,\big\|\,\Delta_{-\V{h}}a(k)\,\wt{\phi}\,\big\|^2\,dk
\,+\,\frac{C'(1+\UV^2)}{\delta}\,
J^1_q(\V{h})\,.\label{lora-eq2}
\end{align}
Here $m_1>0$ is the parameter appearing in 
Proposition~\ref{prop-exp-loc} and
$J_q^1(\V{h})$ is defined in \eqref{def-Jqvkh}.
\end{lemma}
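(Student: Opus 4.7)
The plan is to reduce both inequalities to a Parseval-type identity that converts the $\ell$-sum into an integral over $k$, to handle the non-commutativity of $\SAt$ with the creation operators by splitting it off as a commutator, and to control every resulting term by combining the pointwise estimate $\sup_{\V{x}}|\wt{\V{g}}_{\V{x}}(k)\,e^{-F(\V{x})}|^{2}\klg C_{F}\,|\V{k}|\,\id_{m\klg|\V{k}|\klg\UV}$ (an immediate consequence of \eqref{laura1} together with the fast decay of $|\V{x}|^{r}e^{-F(\V{x})}$ at infinity), the localization bound $\|e^{F}\,\wt{\phi}\|=\|e^{F}\,\phi_{m}\|\klg C$ from Proposition~\ref{prop-exp-loc} (observe that $U$ and $e^{F}$ commute as fibre operators with respect to the $\V{x}$-decomposition so that conjugating by $U$ does not alter the exponential weight), and the commutator estimates \eqref{fritz3} and \eqref{fritz3D}.

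For \eqref{lora-eq1} the first step is to use $\sum_{\ell}e_{\ell}(k)\,\ol{e_{\ell}(k')}=\delta(k-k')$ to derive the Parseval-type identity
\begin{equation*}
\sum_{\ell\in\NN}\SPb{a(f_{\ell})\,\SAt\,\wt{\phi}}{\valpha\cdot\SPn{f_{\ell}}{\wt{\V{g}}_{\V{x}}}\,\wt{\phi}}\,=\,\int\frac{|q(k)|^{2}}{\omega(k)}\,\SPb{a(k)\,\SAt\,\wt{\phi}}{\valpha\cdot\wt{\V{g}}_{\V{x}}(k)\,\wt{\phi}}\,dk\,.
\end{equation*}
A pointwise Cauchy-Schwarz estimate that absorbs the $\wt{\V{g}}$-factor against $e^{-F}$, followed by Cauchy-Schwarz in $k$, produces a term $\int|q(k)|^{2}\,\|a(k)\,\SAt\,\wt{\phi}\|^{2}\,dk$ which I would compare to $\int|q|^{2}\,\|a(k)\,\wt{\phi}\|^{2}\,dk$. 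Writing the former as $\sum_{\ell}\|a(q\,e_{\ell})\,\SAt\,\wt{\phi}\|^{2}$ through the reverse Parseval identity and splitting $a(q\,e_{\ell})\,\SAt=\SAt\,a(q\,e_{\ell})+[a(q\,e_{\ell}),\SAt]$, the first summand is controlled by $\|\SAt\|\klg1$ and yields the target quantity, whereas the commutator sum is controlled by \eqref{fritz3} with $\nu=0$ and $\kappa=0$; its right-hand side $\|\omega\,\ol{q}\,\V{g}\|^{2}\klg\UV\int|q|^{2}\,\id_{|\V{k}|\klg\UV}\,dk$ is absorbable into the $(|\V{k}|+|\V{k}|^{-1})$-integral on the right of \eqref{lora-eq1} since $1\klg|\V{k}|+|\V{k}|^{-1}$.

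The proof of \eqref{lora-eq2} follows the same strategy, but the Parseval identity is more delicate because $\{\Delta_{\V{h}}f_{\ell}\}$ is not obtained from an orthonormal basis by a scalar multiplier. After using \eqref{Green-Deltah} to rewrite $\SPn{f_{\ell}}{\Delta_{-\V{h}}\wt{\V{g}}_{\V{x}}}=\SPn{\Delta_{\V{h}}f_{\ell}}{\wt{\V{g}}_{\V{x}}}$, I expand the four Dirac deltas arising from $\sum_{\ell}\Delta_{\V{h}}f_{\ell}(k)\,\ol{\Delta_{\V{h}}f_{\ell}(k')}$, and after a change of variables $k\mapsto k+\V{h}$ the four resulting integrals combine into the shifted identity
\begin{equation*}
\sum_{\ell\in\NN}\SPb{\Delta_{-\V{h}}a(f_{\ell})\,\SAt\,\wt{\phi}}{\valpha\cdot\SPn{f_{\ell}}{\Delta_{-\V{h}}\wt{\V{g}}_{\V{x}}}\,\wt{\phi}}\,=\,\int\frac{|q(k)|^{2}}{\omega(k)}\,\SPb{(\Delta_{-\V{h}}a)(k)\,\SAt\,\wt{\phi}}{\valpha\cdot\Delta_{-\V{h}}\wt{\V{g}}_{\V{x}}(k)\,\wt{\phi}}\,dk\,.
\end{equation*}
Splitting $(\Delta_{-\V{h}}a)(k)\,\SAt=\SAt\,(\Delta_{-\V{h}}a)(k)+[(\Delta_{-\V{h}}a)(k),\SAt]$, the first piece is bounded by the same Cauchy-Schwarz procedure as for \eqref{lora-eq1} and contributes $\delta\int|q|^{2}\,\|(\Delta_{-\V{h}}a)(k)\,\wt{\phi}\|^{2}\,dk+(C/\delta)\,J_{q}^{1}(\V{h})$; for the commutator piece I return to the $\ell$-sum form and invoke \eqref{fritz3D} together with \eqref{katja1D} (both with $\nu=1/2$, $\kappa=0$), yielding a contribution of order $J_{q}^{1/2}(\V{h})$ that is then absorbed into $C'(1+\UV^{2})\,J_{q}^{1}(\V{h})/\delta$ via $|\V{k}|^{-1}\klg(1+\UV^{2})\,|\V{k}|^{-2}$ on the effective support of the integrand. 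The main obstacle is the careful bookkeeping of the four delta contributions in the shifted Parseval identity: a naive Cauchy-Schwarz applied directly to the $\ell$-sum only yields the larger quantity $J_{q}^{1/2}(\V{h})$, which would ultimately be insufficient for closing the photon derivative bound that relies on \eqref{lora-eq2}.
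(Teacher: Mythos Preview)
Your approach is correct and shares the paper's core ingredients---the Parseval reduction of the $\ell$-sum, the splitting of $\SAmt$ via its commutator with the annihilation operators, and the use of \eqref{fritz3}, \eqref{fritz3D}, \eqref{katja1}, \eqref{katja1D} together with the localization bound $\|e^{F}\wt{\phi}\|\klg C$. Two remarks on the comparison.

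\emph{Your Parseval identity for \eqref{lora-eq2} is simpler than you indicate.} There are not four Dirac deltas but one. Since $a(\Delta_{\V{h}}f_{\ell})\,\psi=\int\ol{f_{\ell}(k')}\,(\Delta_{-\V{h}}a)(k')\,\psi\,dk'$ (shift $k'\mapsto k'-\V{h}$ in one of the two terms) and since $\SPn{f_{\ell}}{\Delta_{-\V{h}}\wt{\V{g}}_{\V{x}}}=\int\ol{f_{\ell}(k)}\,\Delta_{-\V{h}}\wt{\V{g}}_{\V{x}}(k)\,dk$ already carries $\Delta_{-\V{h}}$ on $\wt{\V{g}}$, only the kernel $\sum_{\ell}f_{\ell}(k')\ol{f_{\ell}(k)}=\omega(k)^{-1}|q(k)|^{2}\,\delta(k-k')$ appears, and your claimed identity follows at once. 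No ``careful bookkeeping'' is needed.

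\emph{The paper redistributes the $\omega^{-1/2}$ factor before splitting.} Rather than passing to the $k$-integral, the paper collapses the sum to $\Delta_{-\V{h}}a\big(\omega^{-1}|q|^{2}\,\Delta_{-\V{h}}\wt{\V{g}}_{\V{x}}\,e^{-F}\big)$ acting on $\SAmt\,\wt{\phi}$, and then re-expands it as $\sum_{\ell}\SPn{\omega^{-1}\ol{q}\,\Delta_{-\V{h}}\wt{\V{g}}_{\V{x}}\,e^{-F}}{e_{\ell}}\,\Delta_{-\V{h}}a(q\,e_{\ell})$---note the annihilation operator now carries $q\,e_{\ell}$, not $\omega^{-1/2}q\,e_{\ell}$. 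The commutator split is then performed at this $\ell$-level, giving the coefficient bound $J_{q}^{1}(\V{h})^{1/2}$ from \eqref{katja1D} and the commutator bound $J_{q}^{0}(\V{h})^{1/2}$ from \eqref{fritz3D} (both with $\nu$ shifted by $1/2$ relative to yours). For the main term the paper gets $\int|q|^{2}\,\|\Delta_{-\V{h}}a(k)\,\wt{\phi}\|^{2}\,dk$ directly from \eqref{Parceval-Deltaha} and $\|\SAmt\|\klg1$, bypassing your intermediate step of bounding $\int|q|^{2}\,\|a(k)\,\SAmt\,\wt{\phi}\|^{2}\,dk$. Your route is equally valid but one step longer; the paper's redistribution trick is what buys the shortcut. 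Both arguments finish by absorbing the lower-order $J_{q}^{\nu}$ (your $J_{q}^{1/2}$, the paper's $(J_{q}^{0}J_{q}^{1})^{1/2}$) into $(1+\UV^{2})\,J_{q}^{1}$ via the boundedness of $|\V{k}|$ on the effective support.
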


\begin{proof}
We only prove \eqref{lora-eq2} explicitly
as \eqref{lora-eq1} may be obtained by simply
ignoring the operators $\Delta_{\pm\V{h}}$ in the argument below.
Let 
$F\in C^\infty(\RR^3_\V{x},[0,\infty))$ 
be as in the paragraph preceding~\eqref{nora-88}.
We write
\begin{align}
\sum_{\ell\in\NN}&\nonumber
\SPb{\valpha\cdot
\SPn{\Delta_{-\V{h}}\wt{\V{g}}_\V{x}\,e^{-F}}{\omega^{-1/2}\,q\,e_\ell}\,
\Delta_{-\V{h}}a(\omega^{-1/2}\,q\,e_\ell)\,\SAt\,\wt{\phi}}{e^F\,\wt{\phi}}
\\
&=\,\nonumber
\SPb{\valpha\cdot \Delta_{-\V{h}}a\big(\omega^{-1}\,|q|^2\,
\Delta_{-\V{h}}\wt{\V{g}}_{\V{x}}\,e^{-F}\big)\,\SAt\,\wt{\phi}}{
e^F\,\wt{\phi}}
\\
&=\,
\sum_{\ell\in\NN}\label{lora1} 
\SPb{\valpha\cdot\SPn{\omega^{-1}\,\ol{q}\,
\Delta_{-\V{h}}\wt{\V{g}}_\V{x}\,e^{-F}}{e_\ell}\,
[\Delta_{-\V{h}}a(q\,e_\ell)\,,\,
\SAt]\,\wt{\phi}}{e^F\,\wt{\phi}}
\\
&\qquad\qquad+
\,\sum_{\ell\in\NN}\label{lora2}
\SPb{\valpha\cdot\SPn{\omega^{-1}\,\ol{q}\,
\Delta_{-\V{h}}\wt{\V{g}}_\V{x}\,e^{-F}}{e_\ell}\,\SAt\,
\Delta_{-\V{h}}a(q\,e_\ell)\,\wt{\phi}}{e^F\,\wt{\phi}}
\\
&=:\,\nonumber
\sum_{\ell\in\NN} Q_1(\ell)\,+\,\sum_{\ell\in\NN} Q_2(\ell)
\,.
\end{align}
On account of \eqref{katja1D}, \eqref{fritz3D}, and the Cauchy-Schwarz
inequality the term in
\eqref{lora1} is bounded by
\begin{align*}
\sum_{\ell\in\NN}|Q_1(\ell)|
\,&\klg\,C\,J_q^1(\V{h})^{1/2}\,J_q^0(\V{h})^{1/2}\,\|e^F\,\wt{\phi}\|^2\,.
\end{align*}
Using \eqref{Parceval-Deltaha} we estimate the term in
\eqref{lora2} as
$$
\sum_{\ell\in\NN}|Q_2(\ell)|
\,\klg\,C\,J_q^1(\V{h})^{1/2}\,\|e^F\,\wt{\phi}\|\,
\Big(\int|q(k)|^2\,\big\|\,\Delta_{-\V{h}}a(k)\,\wt{\phi}
\,\big\|^2\,dk\Big)^{1/2}\,.
$$ 
Altogether this implies the second asserted estimate
\eqref{lora-eq2}. (Recall \eqref{exp-loc-sup}.)
When we ignore the operators $\Delta_{\pm\V{h}}$
then we apply \eqref{katja1} and \eqref{fritz3} and
we have to replace each factor $J_q^{\nu}(\V{h})$
by some constant times $\|\omega^{1-\nu}\,\ol{q}\,\V{g}\|^2$.
\end{proof}

\begin{lemma}\label{le-katja1}
Let $e^2,\UV>0$ and $\gamma\in(0,2/\pi)$.
Then there is a constant, $C\in(0,\infty)$, such that,
for every orthonormal basis, $\{e_\ell:\,\ell\in\NN\}$, of $\HP$,
and 
for all $q\in C_0((\RR^3\setminus\{0\})\times\ZZ_2)$, 
$m\in (0,m_1]$, $\delta\in(0,1]$,
and $f_\ell:=\omega^{-1/2}\,q\,e_\ell$,
\begin{align}
\Big|\sum_{\ell\in\NN}&
\SPB{|\DAt|^{1/2}\,[\ad(f_\ell)\,,\,\SAt]\,
a(f_\ell)\,\wt{\phi}}{\SAt\,|\DAt|^{1/2}\,\wt{\phi}}\Big|
\nonumber
\\
&\klg\,
\delta\int|q(k)|^2\,\|a(k)\,\wt{\phi}\|^2\,dk\,+\,
\frac{C}{\delta}\int\Big(|\V{k}|+\frac{1}{|\V{k}|}\Big)
\,|q(k)|^2\,dk\label{katja-2}
\,.
\end{align}
Moreover, for all $\V{h}\in\RR^3$,
\begin{align}
\Big|\sum_{\ell\in\NN}&\SPB{|\DAt|^{1/2}\,\big[\,\ad(\Delta_{\V{h}}f_\ell)
\,,\,\SAt\,\big]\,
\Delta_{-\V{h}}a(f_\ell)\,\wt{\phi}}{
\SAt\,|\DAt|^{1/2}\,\wt{\phi}}\Big|
\nonumber
\\
&\klg\,
\label{katja-2D}
\delta\int|q(k)|^2\,\|\Delta_{-\V{h}}a(k)\,\wt{\phi}\|^2\,dk
\,+\,\frac{C(1+\UV^2)}{\delta}
\,J_q^1(\V{h})\,.
\end{align}
Here $m_1>0$ is the parameter appearing in 
Proposition~\ref{prop-exp-loc} and $J_q^1(\V{h})$ is defined in
\eqref{def-Jqvkh}.
\end{lemma}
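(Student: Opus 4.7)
The strategy parallels that of Lemma~\ref{le-lora}, with the extra complications of the flanking $|\DAt|^{1/2}$ factors and the additional $a(f_\ell)$ sandwiched between the commutator and $\wt{\phi}$. First I would exploit that $\SAt$ commutes with $|\DAt|^{1/2}$ as functions of the self-adjoint $\DAt$ (whose spectrum lies in $(-\infty,-1]\cup[1,\infty)$), so that $|\DAt|^{1/2}\,\SAt\,|\DAt|^{1/2}=\SAt\,|\DAt|=\DAt$. Moving $|\DAt|^{1/2}$ across the inner product and using $[\ad(f_\ell),\SAt]^{*}=[\SAt,a(f_\ell)]$, the sum in \eqref{katja-2} rewrites as
\begin{equation*}
\sum_{\ell\in\NN}\SPB{a(f_\ell)\,\wt{\phi}}{[\SAt,a(f_\ell)]\,\DAt\,\wt{\phi}}.
\end{equation*}
The eigenvalue equation $\PFmt{\gamma}\,\wt{\phi}=E_m\,\wt{\phi}$ then yields $\DAt\,\wt{\phi}=\SAt(E_m+\gamma/|\V{x}|-\Hft)\,\wt{\phi}$, splitting the sum into three contributions.

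For the $E_m\,\SAt\,\wt{\phi}$ and $\SAt(\gamma/|\V{x}|)\,\wt{\phi}$ pieces I would apply Cauchy--Schwarz in $\ell$: the first factor $\sum_\ell\|a(f_\ell)\,\wt{\phi}\|^2=\int|q|^2/\omega\,\|a(k)\,\wt{\phi}\|^2\,dk$ is extracted by Parseval \eqref{Parceval-a}, and the second factor is bounded by inserting $e^{-F}e^F$ and applying \eqref{fritz3} (with $\kappa=0,\nu=1/2$) together with $\|e^F\SAt e^{-F}\|<\infty$ from \eqref{SA-eF-Hnu}, the exponential localization \eqref{exp-loc-sup} of $\wt{\phi}$, and Hardy's inequality to control $\|e^F(\gamma/|\V{x}|)\,\wt{\phi}\|$. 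After Young's inequality $ab\klg\delta a^2+b^2/(4\delta)$ these terms contribute to the $C/\delta$ error.

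The main obstacle is the $\SAt\,\Hft\,\wt{\phi}$ piece, because $\Hft\,\wt{\phi}$ is not a priori exponentially localized so that the $e^{-F}e^F$ trick fails directly. The remedy is to commute $\Hft$ back through $a(f_\ell)$ via the canonical commutation relation $[\Hft,a(f_\ell)]=-a(\omega f_\ell)$, producing the leading expression
\begin{equation*}
\sum_{\ell\in\NN}\SPB{a(f_\ell)\,\wt{\phi}}{[\SAt,a(f_\ell)]\,\SAt\,a(\omega f_\ell)\,\wt{\phi}}
\end{equation*}
plus commutator remainders controlled by the $T_\nu$-bound of Lemma~\ref{cor-T-Xi}(i). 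The crucial Parseval identity $\sum_\ell\SPB{a(f_\ell)\,\wt{\phi}}{a(\omega f_\ell)\,\wt{\phi}}=\int|q|^2\,\|a(k)\,\wt{\phi}\|^2\,dk$ (from \eqref{Parceval-a} with $f=\omega^{-1/2}q$, $h=\omega^{1/2}q$) then supplies exactly the target absorption term $\delta\int|q|^2\,\|a(k)\,\wt{\phi}\|^2\,dk$ after Young's inequality, while the residual error is bounded by $(C/\delta)\,\|\omega^{1/2}\bar{q}\V{g}\|^2\,\|e^F\wt{\phi}\|^2\lesssim(C/\delta)\int(|\V{k}|+1/|\V{k}|)|q|^2\,dk$ using $|\V{g}(k)|^2\lesssim 1/|\V{k}|$ and the uniform bound on $\|e^F\wt{\phi}\|$.

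The second estimate \eqref{katja-2D} follows by the same scheme \textit{mutatis mutandis} with $a(f_\ell)$ and $\ad(f_\ell)$ replaced throughout by $\Delta_{-\V{h}}a(f_\ell)$ and $\ad(\Delta_{\V{h}}f_\ell)$, using \eqref{Parceval-Deltaha} and \eqref{fritz3D} in place of \eqref{Parceval-a} and \eqref{fritz3}. Consequently $\|\omega^{1/2}\bar{q}\V{g}\|^2$ is replaced by $J_q^1(\V{h})$ as defined in \eqref{def-Jqvkh}, and the factor $(1+\UV^2)$ in the source term arises from bounding the discrete-difference form factor $\Delta_{-\V{h}}\wt{\V{g}}_\V{x}$ uniformly in $\V{h}$ on the support of $q$ via the ultra-violet cut-off in $\V{g}$.
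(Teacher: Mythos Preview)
Your first step collapses the two flanking $|\DAt|^{1/2}$ factors onto $\wt{\phi}$ to form $\DAt\,\wt{\phi}$, but this requires $\wt{\phi}\in\dom(\DAt)=\dom(|\DAt|)$, which is not established. The eigenvector $\wt{\phi}$ lies in $\dom(\PFmt{\gamma})$, and since $\PFmt{\gamma}$ is defined as a Friedrichs extension one only knows $\wt{\phi}\in\form(\PFmt{\gamma})\subset\dom(|\DAt|^{1/2})$; the paper explicitly flags this (``we do not know whether $\wt{\phi}\in\dom(|\DAt|^{1/2})$ \ldots\ belong[s] to the domain of $\DAt$''). Consequently neither $\DAt\,\wt{\phi}$ nor the separated eigenvalue identity $|\DAt|\,\wt{\phi}=(E_m+\gamma/|\V{x}|-\Hft)\,\wt{\phi}$ is available as an honest vector equation. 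Beyond this, your treatment of the $\Hft$ piece is not coherent: in the expression $\SPn{a(f_\ell)\,\wt{\phi}}{[\SAt,a(f_\ell)]\,\SAt\,\Hft\,\wt{\phi}}$ the operator $\Hft$ sits to the right of $[\SAt,a(f_\ell)]\,\SAt$, so ``commuting $\Hft$ back through $a(f_\ell)$'' first requires passing it through $\SAt$ and the commutator, generating $[\SAt,\Hft]$ terms you do not address; the ``leading expression'' you write down, with $a(\omega f_\ell)\,\wt{\phi}$ replacing $\Hft\,\wt{\phi}$, does not follow. (Also, $[\Hft,a(f)]=-a(\omega f)+i\SPn{\omega f}{\V{g}\cdot\V{x}}$ by \eqref{klaus2}, not $-a(\omega f)$.)

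The paper circumvents the domain obstacle by never forming $\DAt\,\wt{\phi}$. It represents $[\ad(f),\SAt]$ via the principal-value formula \eqref{sgn} and the resolvent commutation \eqref{fritz2b}, obtaining a $y$-integral of bounded resolvents $\RAt{iy}$ acting on $a(f)\,e^F\,\wt{\phi}$. After moving $|\DAt|^{1/2}$ under the integral (legitimate because the Bochner integrals converge absolutely), all dependence on $\wt{\phi}$ is packaged into a vector $\wt{\eta}(y)$ involving only $|\DAt|^{1/2}\,e^F\,\wt{\phi}$, whose $L^2(dy)$-norm is shown to be bounded uniformly in $m$ in Lemma~\ref{le-katja2} using \eqref{veronika}. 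The annihilation operator $a(f_\ell)$ is then commuted through the resolvents in two successive steps; the twice-commuted piece produces exactly $\int|q|^2\,\|a(k)\,\wt{\phi}\|^2\,dk$ via \eqref{Parceval-a}, while the commutator remainders are controlled by \eqref{katja1}, \eqref{katja1D}, \eqref{marah1}, and the $\wt{\eta}$-bound.
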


\begin{proof}
Let 
$F\in C^\infty(\RR^3_\V{x},[0,\infty))$ 
be as in the paragraph preceding~\eqref{nora-88}
and set $L:=i\valpha\cdot\nabla F$.
Since we do not know whether $\wt{\phi}\in\dom(|\DAt|^{1/2})$ 
and $e^F\,\wt{\phi}\in\dom(|\DAt|^{1/2})$ belong to the
domain of $\DAt$ and since the commutation relation
\eqref{fritz2b} requires exponential weights
in order to control $\wt{\V{g}}_\V{x}$
we have to be careful when doing formal manipulations
in what follows.
Therefore, the arguments in the next paragraphs look 
somewhat elaborate.
Let $f\in\HP$ such that $\omega^{-1/2}\,f\in\HP$ also.
On account of \eqref{sgn} and \eqref{fritz2b} we have
\begin{align}
\SPb{|\DAt|^{1/2}\,[\ad(&f)\,,\,\SAt]\,
a(f)\,\wt{\phi}}{\SAt\,|\DAt|^{1/2}\,\wt{\phi}}\nonumber
\\
&=\,
\SPB{|\DAt|^{1/2}\nonumber
\,\lim_{\tau\to\infty}\int_{-\tau}^\tau\eta(y)\,\frac{dy}{\pi}
}{\SAt\,|\DAt|^{1/2}\,\wt{\phi}}\,,
\end{align}
where
$$
\eta(y)\,:=\,
\RAt{iy}\,\valpha\cdot\SPn{\wt{\V{g}}_\V{x}\,e^{-F}}{f}
\,\R{\wt{\V{A}},L}{iy}\,a(f)\,
e^F\,\wt{\phi}\in\dom(\DAt)\,,\qquad y\in\RR\,.
$$
We recall from Lemma~\ref{le-HfmeF} that $e^F\,\wt{\phi}\in\dom(\Hft^{1/2})$
and, hence, $e^F\,\wt{\phi}\in\dom(a(f))$.
Next, we observe that both Bochner integrals 
$\int_\RR\eta(y)\,dy$ and $\int_\RR|\DAt|^{1/2}\,\eta(y)\,dy$
are absolutely convergent. Since $|\DAt|^{1/2}$ is closed
and the Bochner integral commutes with closed operators
we thus get
\begin{align}
\SPb{&|\DAt|^{1/2}\,[\ad(f)\,,\,\SAt]\,
a(f)\,\wt{\phi}}{\SAt\,|\DAt|^{1/2}\,\wt{\phi}}\nonumber
\\
&=\,\nonumber
\int_\RR
\SPB{e^{-F}\,
\DAt\,\eta(y)}{e^F\,\wt{\phi}}\,\frac{dy}{\pi}
\\
&=\,
\int_\RR
\SPB{
(\DAt-L)\,\R{\wt{\V{A}},-L}{iy}\,\valpha\cdot\SPn{\wt{\V{g}}_\V{x}\,e^{-F}}{f}
\R{\wt{\V{A}}}{iy}\,\nonumber
a(f)\,\wt{\phi}}{e^F\,\wt{\phi}}\,\frac{dy}{\pi}
\\
&=\label{katja4}\,\int_\RR
\SPB{\RAt{iy}\,a(f)\,\wt{\phi}}{
\valpha\cdot\SPn{f}{\wt{\V{g}}_\V{x}\,e^{-F}}\,\wt{\eta}(y)}
\,\frac{dy}{\pi}
\,,
\end{align}
where we applied Lemma~\ref{le-marah} in the second step and
abbreviated
\begin{equation}\label{def-wteta}
\wt{\eta}(y)\,:=\,\big(|\DAt|^{1/2}\,\R{\wt{\V{A}},-L}{iy}\big)^*\,
\SAt\,|\DAt|^{1/2}\,e^F\,\wt{\phi}\,+\,
\R{\wt{\V{A}},L}{-iy}\,L\,e^F\,\wt{\phi}
\,.
\end{equation}
In Lemma~\ref{le-katja2} below we show that there is
some constant, $C\in(0,\infty)$, such that,
for all $m\in[0,m_1]$,
\begin{equation}\label{katja222}
\int_\RR\|\wt{\eta}(y)\|^2\,\frac{dy}{\pi}\,\klg\,C\,.
\end{equation}
In the sequel we only treat \eqref{katja-2D}
explicitly. 
From time to time we indicate what has to be
changed in order to derive \eqref{katja-2} which is obtained
essentially by ignoring the operators $\Delta_{\pm\V{h}}$
below.
Substituting $f$ by $\Delta_\V{h}f_\ell$ in \eqref{katja4}
and employing \eqref{katja1}
we get
\begin{align*}
\int_\RR
\Big|\SPB{&\RAt{iy}\,a(f)\,\wt{\phi}}{
\valpha\cdot\SPn{f}{\wt{\V{g}}_\V{x}\,e^{-F}}\,\wt{\eta}(y)}\Big|
\,\frac{dy}{\pi}
\\
&\klg
\Big(\sum_{\ell\in\NN}\int_\RR
\frac{\|a(\Delta_{\V{h}}f_\ell)\,\wt{\phi}\|^2}{1+y^2}\,dy\Big)^{\frac{1}{2}}
\Big(J_q^{1/2}(\V{h})
\int_\RR\|\wt{\eta}(y)\|^2dy\Big)^{\frac{1}{2}}
\,.
\end{align*}
(When $\Delta_\V{h}$ is dropped then $J_q^{1/2}(\V{h})$ has to be replaced
by $C_F\,\|\omega^{1/2}\,\ol{q}\,\V{g}\|$.)
Here the first integral on the right hand side,
$$
\sum_{\ell\in\NN}\int_\RR
\frac{\|a(\Delta_{\V{h}}f_\ell)\,\wt{\phi}\|^2}{1+y^2}\,dy
\,=\,\pi\int\frac{|q(k)|^2}{\omega(k)}\,
\big\|\,\Delta_{-\V{h}}a(k)\,\wt{\phi}\,\big\|^2\,dk
$$ 
is finite since $q$ has a compact support in 
$(\RR^3\setminus\{0\})\times\ZZ_2$.
When we sum \eqref{katja4} 
with $f=\Delta_\V{h}f_\ell$ (or $f=f_\ell$)
with respect to $\ell$ we may thus
interchange the $dy$-integration with the $\ell$-summation.
Proceeding in this way and commuting $a(\Delta_\V{h}f_\ell)$ through
the resolvent in the last line of \eqref{katja4} and
using \eqref{fritz2} and  
\begin{equation*}
\sum_{\ell\in\NN}\SPn{\wt{\V{g}}_{\V{x}}\,e^{-F}}{\Delta_\V{h}f_\ell}
\,a(\Delta_\V{h}f_\ell)\,\psi
\,=\,
\Delta_{-\V{h}}a\big(\omega^{-1}\,|q|^2\,
\Delta_{-\V{h}}\wt{\V{g}}_\V{x}\,e^{-F}\big)\,\psi\,,
\end{equation*}
for $\psi\in\dom(\Hft^{1/2})$,
we arrive at
\begin{align}
\Big|\sum_{\ell\in\NN}&\SPb{|\DAt|^{1/2}\,[\ad(\Delta_\V{h}f_\ell)\,,\,\SAt]\,
a(\Delta_\V{h}f_\ell)\,\wt{\phi}}{\SAt\,|\DAt|^{1/2}\,\wt{\phi}}
\Big|\nonumber
\\
&\klg\,\Big|\int_\RR\SPB{\valpha\cdot\label{katja5}
\Delta_{-\V{h}}a\big(\omega^{-1}\,|q|^2\,\Delta_{-\V{h}}
\wt{\V{g}}_\V{x}\,e^{-F}\big)\,
\R{\wt{\V{A}}}{iy}\,\wt{\phi}}{\wt{\eta}(y)}\,\frac{dy}{\pi}\Big|
\\
&+\label{katja6}
\Big(\sum_{\ell\in\NN}\int_\RR\|\RAt{iy}\|^2\,\big\|
\valpha\cdot\SPn{\Delta_{\V{h}}f_\ell}{\wt{\V{g}}_\V{x}\,e^{-F}}\,
\R{\wt{\V{A}},L}{iy}\,e^F\,\wt{\phi}\big\|^2\,\frac{dy}{\pi}\Big)^{\frac{1}{2}}
\\
&\quad\cdot\label{katja7}
\Big(\sum_{\ell\in\NN}\int_\RR
\big\|\,\valpha\cdot\SPn{\Delta_\V{h}f_\ell}{
\wt{\V{g}}_\V{x}\,e^{-F}}
\,\wt{\eta}(y)\,\big\|^2\,\frac{dy}{\pi}\Big)^{\frac{1}{2}}.
\end{align} 
Applying \eqref{marah1},
\eqref{katja1D}, and \eqref{katja222} to the terms in \eqref{katja6}
and \eqref{katja7} 
we obtain
\begin{align}
&\big(\textrm{integral in \eqref{katja6}}\big)^{1/2}
\cdot\big(\textrm{integral in \eqref{katja7}}\big)^{1/2}
\,\klg\,\const\,J^{1/2}_q(\V{h})\,\|e^F\,\wt{\phi}\|
\,.\label{katja2008}
\end{align}
When we ignore the operators $\Delta_{\pm\V{h}}$ in the estimates
above then we apply \eqref{katja1} instead of \eqref{katja1D} and
$J_q^{1/2}(\V{h})$ has to be replaced by 
$C_F\,\|\omega^{1/2}\,\ol{q}\,\V{g}\|^2$
in \eqref{katja2008}.
The idea behind the procedure started above is that
we can now write
\begin{align*}
\valpha\cdot
a\big(&\Delta_\V{h}(\omega^{-1}\,|q|^2\,\Delta_{-\V{h}}
\wt{\V{g}}_\V{x})\,e^{-F}\big)\,
\R{\wt{\V{A}}}{iy}\,\wt{\phi}
\\
&=\,
\sum_{\ell\in\NN}\valpha\cdot
\SPn{\omega^{-1}\,\ol{q}\,\Delta_{-\V{h}}\wt{\V{g}}_{\V{x}}\,e^{-F}}{
e_\ell}
\,a\big(\Delta_\V{h}(q\,e_\ell)\big)
\,\R{\wt{\V{A}}}{iy}\,\wt{\phi}
\end{align*}
and commute the $\V{x}$-independent annihilation operator
$a\big(\Delta_\V{h}(q\,e_\ell)\big)$ -- which also contains
no $\omega^{-1/2}$-singularity anymore --
with the resolvent to its right.
As a result we obtain
\begin{align}
\Big|\int_\RR\SPB{&\valpha\cdot
a\big(\Delta_\V{h}(\omega^{-1}\,|q|^2\,\Delta_{-\V{h}}
\wt{\V{g}}_\V{x})\,e^{-F}\big)\,
\R{\wt{\V{A}}}{iy}\,\wt{\phi}}{\wt{\eta}(y)}\,\frac{dy}{\pi}\Big|
\nonumber
\\
&\klg\label{katja9}
\Big|\sum_{\ell\in\NN}
\int_{\RR}\SPB{\R{\wt{\V{A}}}{iy}\,a
\big(\Delta_{\V{h}}(q\,e_\ell)\big)\,\wt{\phi}}{
\valpha\cdot\SPn{e_\ell}{\tfrac{\ol{q}}{\omega}\,
\Delta_{-\V{h}}\wt{\V{g}}_{\V{x}}\,e^{-F}}
\,\wt{\eta}(y)}\,dy\,\Big|
\\
&\;+\,\label{katja10}
\Big(\sum_{\ell\in\NN}\int_\RR
\big\|\,\R{\wt{\V{A}}}{iy}
\,\valpha\cdot\SPn{q\,e_\ell}{\Delta_{-\V{h}}
\wt{\V{g}}_{\V{x}}\,e^{-F}}\,
\R{\wt{\V{A}},L}{iy}\,e^F\,\wt{\phi}\,\big\|^2\,dy
\Big)^{\frac{1}{2}}
\\
&\qquad\qquad\qquad\cdot\label{katja11}
\Big(\sum_{\ell\in\NN}\int_\RR
\big\|\,\valpha\cdot
\SPn{e_\ell}{\omega^{-1}\,\ol{q}\,
\Delta_{-\V{h}}\wt{\V{g}}_{\V{x}}\,e^{-F}}
\,\wt{\eta}(y)\,\big\|^2\,dy
\Big)^{\frac{1}{2}}
\end{align}
By virtue of \eqref{katja1D} we have,
analogously to \eqref{katja2008},
\begin{align}
&\big(\textrm{integral in \eqref{katja10}}\big)^{1/2}
\cdot\big(\textrm{integral in \eqref{katja11}}\big)^{1/2}
\,\klg
\,\const
\,\big(J_q^0(\V{h})\,J_q^{1}(\V{h})\big)^{1/2}\,.\label{katja12}
\end{align}
When we ignore the operators $\Delta_{\pm\V{h}}$, then
the factor $\big(J^0_q(\V{h})\,J_q^{1}(\V{h})\big)^{1/2}$
has to be replaced by 
$C_F\,\|\omega\,q\,\V{g}\|\,\|q\,\V{g}\|$
in the 
previous estimate.
The term in \eqref{katja9}, finally, is estimated as
\begin{align}\nonumber
\big|\,\textrm{integral in \eqref{katja9}}\,\big|
\,&\klg\,\nonumber
\Big(\sum_{\ell\in\NN}\int_\RR
\frac{\|\Delta_{-\V{h}}a
(q\,e_\ell)\,\wt{\phi}\|^2}{1+y^2}\,\frac{dy}{\pi}\Big)^{\frac{1}{2}}
\\
&\quad\cdot\nonumber
\Big(\sum_{\ell\in\NN}\int_\RR
\big\|
\valpha\cdot\SPn{e_\ell}{\omega^{-1}\,\ol{q}\,
\Delta_{-\V{h}}\wt{\V{g}}_{\V{x}}\,e^{-F}}
\,\wt{\eta}(y)\,\big\|^2\,\frac{dy}{\pi}\Big)^{\frac{1}{2}}
\\
&
\klg\,\label{katja12b}
\Big(
\int|q(k)|^2\,\big\|\,\Delta_{-\V{h}}a(k)\,\wt{\phi}\,\big\|^2\,dk
\Big)^{\frac{1}{2}}
\,J_q^1(\V{h})^{1/2}\,
C^{1/2}\,.
\end{align}
Again, the factor $J^1_q(\V{h})^{1/2}$ 
has to be replaced by $C_F\,\|q\,\V{g}\|$,
when the operators $\Delta_{\pm\V{h}}$ are dropped.
Combining \eqref{katja5}--\eqref{katja12b} we arrive at the asserted
estimate \eqref{katja-2D}. Taking also the modifications indicated
above into account we further obtain \eqref{katja-2}.
\end{proof}

\begin{lemma}\label{le-katja2}
The bound \eqref{katja222} holds true, where
$\wt{\eta}$ is defined in \eqref{def-wteta} and
the constant $C$ does not depend on $m\in(0,m_1]$.
\end{lemma}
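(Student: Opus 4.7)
The plan is to first reveal a cancellation that simplifies $\wt\eta(y)$ dramatically, and then bound the $L^2$-norm of the resulting expression by a combination of the spectral theorem for $\DAt$ and the weighted bound from Lemma~\ref{le-HfmeF}. The starting point is the identity
\begin{equation*}
\wt\eta(y) \,=\, (\DAt + L)\,\R{\wt{\V{A}},L}{-iy}\,e^F\wt\phi \,=\, (I - iy\,\R{\wt{\V{A}},L}{-iy})\,e^F\wt\phi, \qquad y\in\RR.
\end{equation*}
To derive it, set $\xi := \SAt|\DAt|^{1/2}e^F\wt\phi = |\DAt|^{1/2}\SAt e^F\wt\phi$ (the two expressions agree because $\SAt$ commutes with $|\DAt|$); note that $\xi\in\HR$ but $\xi\notin\dom(|\DAt|^{1/2})$ a priori. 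The bounded operator $|\DAt|^{1/2}\R{\wt{\V{A}},-L}{iy}$ has adjoint coinciding on $\dom(|\DAt|^{1/2})$ with $\R{\wt{\V{A}},L}{-iy}|\DAt|^{1/2}$. Applied to the spectral truncation $\xi_n := P_n\xi$ with $P_n := \id_{[0,n]}(|\DAt|)$, and using $|\DAt|^{1/2}\xi_n = \DAt\,P_n e^F\wt\phi$ (by $\SAt^2=I$ and $\SAt|\DAt|=\DAt$) together with the algebraic identity $\R{\wt{\V{A}},L}{-iy}(\DAt+L)\psi = \psi - iy\,\R{\wt{\V{A}},L}{-iy}\psi$ for $\psi\in\dom(\DAt)$, a short computation yields
\begin{equation*}
(|\DAt|^{1/2}\R{\wt{\V{A}},-L}{iy})^*\xi_n \,+\, \R{\wt{\V{A}},L}{-iy}L\,P_n e^F\wt\phi \,=\, (I - iy\,\R{\wt{\V{A}},L}{-iy})\,P_n e^F\wt\phi.
\end{equation*}
Passing to the limit $n\to\infty$ in $\HR$ delivers the claimed identity, since all operators involved are bounded and $P_n e^F\wt\phi \to e^F\wt\phi$.

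With this in hand I apply the second resolvent identity $\R{\wt{\V{A}},L}{-iy} = \RAt{-iy} - \RAt{-iy}L\,\R{\wt{\V{A}},L}{-iy}$ together with $I - iy\,\RAt{-iy} = \DAt\RAt{-iy}$ to split
\begin{equation*}
\wt\eta(y) \,=\, \DAt\RAt{-iy}\,e^F\wt\phi \;+\; iy\,\RAt{-iy}\,L\,\R{\wt{\V{A}},L}{-iy}\,e^F\wt\phi.
\end{equation*}
The first term is the main one: by the spectral theorem for the self-adjoint $\DAt$, Fubini's theorem, and the elementary computation $\int_\RR\lambda^2/(\lambda^2+y^2)\,dy = \pi|\lambda|$,
\begin{equation*}
\int_\RR\big\|\,\DAt\RAt{-iy}\,e^F\wt\phi\,\big\|^2\,\frac{dy}{\pi} \,=\, \big\|\,|\DAt|^{1/2}\,e^F\wt\phi\,\big\|^2,
\end{equation*}
and the right-hand side is finite and uniformly bounded in $m\in(0,m_1]$ by \eqref{veronika} in Lemma~\ref{le-HfmeF}. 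The second term is controlled pointwise: the functional calculus bound $\|iy\,\RAt{-iy}\|\klg 1$ (valid since $|\DAt|\grg 1$), the bound $\|\R{\wt{\V{A}},L}{-iy}\|\klg J(a)(1+y^2)^{-1/2}$ from Lemma~\ref{le-marah}, and $\|L\|\klg a$ together give
\begin{equation*}
\big\|\,iy\,\RAt{-iy}\,L\,\R{\wt{\V{A}},L}{-iy}\,e^F\wt\phi\,\big\| \,\klg\, \frac{aJ(a)}{\sqrt{1+y^2}}\,\big\|e^F\wt\phi\big\|,
\end{equation*}
which is square-integrable in $y$ with constant uniform in $m$, again by Proposition~\ref{prop-exp-loc}. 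Summing the two contributions yields the claim.

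The main obstacle is the first step: the would-be computation $(|\DAt|^{1/2}\R{\wt{\V{A}},-L}{iy})^*\xi = \R{\wt{\V{A}},L}{-iy}|\DAt|^{1/2}\xi$ is not directly legal because $\xi$ generically fails to lie in $\dom(|\DAt|^{1/2})$. The spectral truncation $P_n$ circumvents this, but it is precisely the compensating summand $\R{\wt{\V{A}},L}{-iy}L\,e^F\wt\phi$ in the definition of $\wt\eta$ that makes the $n\to\infty$ limit of the resulting identity tractable using only the hypothesis $e^F\wt\phi\in\dom(|\DAt|^{1/2})$ provided by Lemma~\ref{le-HfmeF}.
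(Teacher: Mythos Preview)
Your proof is correct and follows essentially the same route as the paper's: both arguments rewrite $\wt\eta(y)$ using the adjoint of $|\DAt|^{1/2}\R{\wt{\V A},-L}{iy}$, isolate a main term whose $L^2(dy)$-norm is computed via the spectral calculus identity $\int_\RR |\lambda|(\lambda^2+y^2)^{-1}\,dy=\pi$, and bound the remainder by $aJ(a)(1+y^2)^{-1/2}\|e^F\wt\phi\|$ using Lemma~\ref{le-marah}; the uniform control in $m$ comes from Lemma~\ref{le-HfmeF} and Proposition~\ref{prop-exp-loc} in both cases.

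Two remarks. First, your closed-form identity $\wt\eta(y)=(I-iy\,\R{\wt{\V A},L}{-iy})\,e^F\wt\phi$ is a nice repackaging that the paper does not make explicit, but the main term you extract, $\DAt\RAt{-iy}\,e^F\wt\phi$, is literally equal to the paper's $|\DAt|^{1/2}\RAt{-iy}\,\psi_F$ (since $\psi_F=\SAt|\DAt|^{1/2}e^F\wt\phi$ and all operators commute as functions of $\DAt$). Second, the spectral truncation $P_n$ is not actually needed: the paper observes directly that $(|\DAt|^{1/2}\R{\wt{\V A},-L}{iy})^*=(\id-\R{\wt{\V A},L}{-iy}\,L)\,|\DAt|^{1/2}\RAt{-iy}$ holds as an identity of bounded operators on all of $\HR_4$ (both sides are bounded and agree on the dense set $\dom(|\DAt|^{1/2})$), so one can apply it to $\psi_F$ without any domain concern. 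Your truncation argument is valid, just slightly more laborious than necessary.
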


\begin{proof}
As we did in the whole subsection we drop all subscripts $m$
in this proof.
Writing
$\R{\wt{\V{A}},-L}{iy}=\RAt{iy}\,(\id+L\,\R{\wt{\V{A}},-L}{iy})$
we deduce that
$$
(|\DAt|^{1/2}\,\R{\wt{\V{A}},-L}{iy})^*
\,=\,(\id-\R{\wt{\V{A}},L}{-iy}\,L)\,|\DAt|^{1/2}\,\RAt{-iy}\,.
$$
Abbreviating $\psi_F:=\SAt\,|\DAt|^{1/2}\,e^F\,\wt{\phi}$
and employing Lemma~\ref{le-marah} we thus
get
\begin{align*}
\int\|\eta(y)\|^2\,dy
\,&\klg\,
C'\,\int_\RR\big\|\,|\DAt|^{1/2}\,\RAt{-iy}\,\psi_F\,\big\|^2\,dy
\,+\,C'\int_\RR\frac{\|e^F\,\phi\|^2}{1+y^2}\,dy\,.
\end{align*}
Moreover, the spectral calculus yields
\begin{align*}
\int_\RR\big\|\,|\DAt|^{1/2}\,\RAt{iy}\,\psi_F\,\big\|^2\,dy
&=\,
\int_\RR\int_\RR\frac{|\lambda|}{\lambda^2+y^2}\,dy\,
d\|\id_\lambda(\DAt)\,\psi_F\|^2\,=\,\pi\,\|\psi_F\|^2\,.
\end{align*} 
We conclude by recalling that
$\|\psi_F\|\klg \|\,|\DA|^{1/2}\,e^F\,\phi\|\klg\,C''$
in virtue of Lemma~\ref{le-HfmeF}. 
\end{proof}


\appendix

\section{\`{A}-priori bounds on eigenvectors}
\label{app-a(f)}

\noindent
The purpose of this appendix is to show
that, for every eigenvector, $\phi_m$, of $\PFm{\gamma}$
and every $f\in\HP$ with $\omega^{-1/2}\,f\in\HP$ and $\omega\,f\in\HP$,
the vector $a(f)\,\phi_m$ belongs to the form domain
of $\PFm{\gamma}$. This information is necessary in order
to derive the soft photon and photon derivative bounds.
To prove this result we shall 
essentially proceed along the lines
of \cite[Appendix~B]{GLL2001} in the proof of
Lemma~\ref{le-clara4} later on. Lemma~\ref{le-clara2}
has also been observed in the non-relativistic setting in \cite{GLL2001}.
In order to deal with the difficulties posed
by the non-locality of $\PFmg{\gamma}$ we shall 
derive two additional technical lemmata.

In what follows we set, for $E\grg1$ and $f$ as above,
$$
a_E(f)\,:=\,a(f)\,E\,(\Hf+E)^{-1}\,,\qquad
\ad_E(f)\,:=\,E\,(\Hf+E)^{-1}\,\ad(f)\,.
$$

\begin{lemma}\label{le-clara2}
Let $e^2,\UV,m>0$, $\gamma\in[0,2/\pi)$,
$E\grg1$, and $f\in\HP$ such that $\omega^{-1/2}\,f\in\HP$. Then
$a_E(f)$ and $\ad_E(f)$ are continuous operators on the
Hilbert space $\form(\PFm{\gamma})$ equipped
with the form norm corresponding to $\PFm{\gamma}$.
\end{lemma}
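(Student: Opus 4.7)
The strategy is to verify the three continuity estimates
\begin{align*}
\|a_E^\sharp(f)\,\psi\|\,&\klg\,C\,\|\psi\|, &
\|\,|\DO|^{1/2}\,a_E^\sharp(f)\,\psi\|\,&\klg\,C\,\|\,|\DO|^{1/2}\,\psi\|, &
\|\Hf^{1/2}\,a_E^\sharp(f)\,\psi\|\,&\klg\,C\,\|\Hf^{1/2}\,\psi\|,
\end{align*}
for $a^\sharp\in\{a,\ad\}$ and all $\psi\in\form(\PFm{\gamma})$, with $C$ depending only on $E$ and $\|\omega^{-1/2}f\|$. Since
$\form(\PFm{\gamma})=\form(|\DO|)\cap\form(\Hf)$
with equivalent form norms (Corollary~\ref{cor-pia}, applied with $\wt{\V{A}}=\V{0}$), these three bounds give exactly the desired continuity. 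So the whole point is to prove each of them with the given data $\omega^{-1/2}f\in\HP$.

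First, I would dispose of the Dirac part by observing that $|\DO|^{1/2}$ acts on the electron tensor factor $L^2(\RR^3_\V{x},\CC^4)$, whereas $a^\sharp(f)$ and $(\Hf+E)^{-1}$ act on the Fock tensor factor. Consequently $|\DO|^{1/2}$ commutes with both $a_E(f)$ and $\ad_E(f)$, so the middle estimate follows immediately from the first. For the $L^2$-bound itself one writes $a_E(f)=a(f)\,E(\Hf+E)^{-1}$, applies the standard bound $\|a(f)\eta\|\klg\|\omega^{-1/2}f\|\,\|\Hf^{1/2}\eta\|$, and uses the spectral estimate $\|\Hf^{1/2}\,E(\Hf+E)^{-1}\|\klg E^{1/2}/2$ to obtain $\|a_E(f)\|\klg(E^{1/2}/2)\|\omega^{-1/2}f\|$. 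The bound on $\ad_E(f)$ then follows from the identity $\ad_E(f)=a_E(f)^*$, which is immediate from the definitions.

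The main technical step is the $\Hf^{1/2}$-estimate. For $a_E(f)$, I would work sector-by-sector in Fock space. On the $n$-photon sector one has
\[
(\Hf^{1/2}\,a_E(f)\,\psi)^{(n-1)}(k_1,\dots,k_{n-1})
\,=\,\sqrt{n}\int\ol{f(k)}\,\frac{\omega_{n-1}^{1/2}\,E}{E+\omega(k)+\omega_{n-1}}\,
\psi^{(n)}(k,k_1,\dots,k_{n-1})\,dk,
\]
where $\omega_{n-1}=\omega(k_1)+\dots+\omega(k_{n-1})$. The multiplier factor is uniformly bounded by $E^{1/2}/2$ using AM--GM, and after Cauchy--Schwarz in $k$ against $\omega^{-1/2}f\in\HP$, integrating in $k_1,\dots,k_{n-1}$ and summing over $n$ produces $\|\Hf^{1/2}\psi\|^2$ via the identity $\sum_n n\int\omega(k)|\psi^{(n)}|^2\,dk\,dk_1\cdots dk_{n-1}=\|\Hf^{1/2}\psi\|^2$ (symmetry of $\psi^{(n)}$ in its photon arguments). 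This yields $\|\Hf^{1/2}\,a_E(f)\,\psi\|\klg(E^{1/2}/2)\|\omega^{-1/2}f\|\,\|\Hf^{1/2}\psi\|$.

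For $\ad_E(f)$ the analogous sector-by-sector calculation goes through with the multiplier $(n+1)^{1/2}\omega_{n+1}^{1/2}E/(E+\omega_{n+1})$ acting on $f(k_j)\psi^{(n)}(k_0,\dots,\hat k_j,\dots,k_n)$; the key inequality $\omega_{n+1}^{1/2}E/(E+\omega_{n+1})\klg E^{1/2}/2$ is identical, and the combinatorial $\sqrt{n+1}$ is absorbed exactly as in the standard proof of $\|\ad(f)\psi\|^2\klg\|\omega^{-1/2}f\|^2\|\Hf^{1/2}\psi\|^2+\|f\|^2\|\psi\|^2$. I expect no new obstruction here; the $\Hf+E$-smoothing on the \emph{left} of $\ad(f)$ is what keeps the calculation symmetric with the $a_E(f)$ case. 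The only subtle point is that the identities above initially hold on a dense subspace (e.g.\ $\sC_0$ in the Fock factor times $C_0^\infty(\RR^3,\CC^4)$), and then extend by continuity to all of $\form(\PFm{\gamma})$ using the equivalence of form norms. This density step is routine once the a-priori bounds on the algebraic core are in hand, and will be the last thing to check.
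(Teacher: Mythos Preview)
Your proposal is correct and follows essentially the same route as the paper: reduce to the free form norm via Corollary~\ref{cor-pia} (so that $\form(\PFm{\gamma})=\form(|\DO|)\cap\form(\Hf)$), observe that $|\DO|^{1/2}$ commutes with the Fock operators, and control the $\Hf^{1/2}$-part separately. The paper compresses your sector-by-sector computation into the single sentence ``the operator $(\Hf+1)^{1/2}\,a_E^\sharp(f)$ is easily seen to be bounded, for fixed $E$'', so your argument is really just an expanded version of theirs; note also that for $\ad_E(f)$ the simpler factorisation $\|\Hf^{1/2}\,\ad_E(f)\,\psi\|\klg\|\Hf^{1/2}E(\Hf+E)^{-1}\|\,\|\ad(f)\,\psi\|$ together with \eqref{rb-ad} already gives the estimate without redoing the sector calculation.
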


\begin{proof}
It suffices to prove the assertion for $\gamma=0$ since
the form norms of $\PFm{0}$ and $\PFm{\gamma}$ are
equivalent, for $\gamma\in(0,2/\pi)$.
To begin with we recall from Lemma~\ref{le-sres-conv-PF}
that $\form(\PFm{0})\subset\form(\Hf)\subset\dom(a^\sharp(f))$
since $\omega^{-1/2}\,f\in\HP$, where
$a^\sharp$ is $a$ or $\ad$.
Applying Corollary~\ref{cor-pia} 
(with $\wt{\V{A}}$ replaced by $\V{A}_m$ and $\V{A}$ replaced by $\V{0}$)
we find some
constant $C\in(0,\infty)$ such that,
for every $\vp\in\core$,
$$
\SPb{a^\sharp_E(f)\,\vp}{\PFm{0}\,a^\sharp_E(f)\,\vp}
\,\klg\,C\,\big(
\big\|\,|\DO|^{1/2}\,a^\sharp_E(f)\,\vp\,\big\|^2
+
\big\|\,(\Hf+1)^{1/2}\,a_E^\sharp(f)\,\vp\,\big\|^2\big).
$$
In the first term on the right side $\,a_E^\sharp(f)$ commutes with
$|\DO|^{1/2}$. Moreover, the norm 
of $a_E^\sharp(f)$ is bounded by some constant 
depending on $E$.
In the second term the operator $(\Hf+1)^{1/2}\,a_E^\sharp(f)$ is easily
seen to be bounded, for fixed $E$, as well.
Employing Corollary~\ref{cor-pia}
once more (this time with $\wt{\V{A}}=\V{0}$) we thus find, for some
$C_E\in(0,\infty)$,
\begin{equation*}
\SPb{a^\sharp_E(f)\,\vp}{\PFm{0}\,a_E^\sharp(f)\,\vp}
\,\klg\,C_E\,\SPn{\vp}{(\PFm{0}+1)\,\vp}\,.
\end{equation*}
Since $\core$ is a form core for $\PFm{0}$
and $a_E^\sharp(f)$ is bounded the statement becomes evident.
\end{proof}

\begin{lemma}\label{le-friedel}
Assume that $\vo$ and $\V{G}$ satisfy Hypothesis~\ref{hyp-G}.
Then the operator $|\DA|^{1/2}\,E\,(\Hf+E)^{-1}\,|\DA|^{-1/2}$ is
defined on all of
$\HR_4$ and its norm is bounded uniformly in $E\grg1$.
\end{lemma}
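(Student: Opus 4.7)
The plan is to use the integral representation
\begin{equation*}
|\DA|^{-1/2}\,=\,\frac{\sqrt{2}}{\pi}\int_0^\infty y^{1/2}\,\RA{iy}\,\RA{-iy}\,dy,
\end{equation*}
obtained from $|\DA|^2=\DA^2\grg\id$ together with the standard identity $\lambda^{-1/4}=\pi^{-1}\sin(\pi/4)\int_0^\infty s^{-1/4}(\lambda+s)^{-1}\,ds$ applied at $\lambda=\DA^2$, combined with the factorization $(\DA^2+y^2)^{-1}=\RA{iy}\,\RA{-iy}$. The integral converges in operator norm since $\|\RA{\pm iy}\|\klg(1+y^2)^{-1/2}$. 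The strategy is to commute the resolvent $(\Hf+E)^{-1}$ past the Dirac resolvents appearing in this representation by invoking Lemma~\ref{cor-T-Xi}.

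Setting $\HT:=\Hf+E$ (the special case $\vo=\omega$ of the operators considered in Lemma~\ref{cor-T-Xi}), two applications of identity \eqref{eva3} with $\nu=1$ and $L=0$ yield, as an operator identity on $\HR_4$,
\begin{equation*}
E\,\HT^{-1}\,|\DA|^{-1/2}\,=\,K_E\cdot E\,\HT^{-1},\qquad K_E\,:=\,\frac{\sqrt{2}}{\pi}\int_0^\infty y^{1/2}\,\Xi_{1,0}(iy)\,\RA{iy}\,\Xi_{1,0}(-iy)\,\RA{-iy}\,dy,
\end{equation*}
valid once $E$ is large enough (depending only on $d_1$) that $\|\Xi_{1,0}(\pm iy)\|\klg 2$ uniformly in $y\in\RR$. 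Since $\|E\,\HT^{-1}\|\klg 1$, the claim reduces to a uniform bound on $\|\,|\DA|^{1/2}\,K_E\|$. Using the Neumann-series identity $\Xi_{1,0}(\pm iy)-\id=-\RA{\pm iy}\,T_1\,\Xi_{1,0}(\pm iy)$ with $\|T_1\|\klg\delta_1/E^{1/2}$, I split $K_E=|\DA|^{-1/2}+R_E$, where $R_E$ collects the two terms containing a factor of $T_1$. Applying $|\DA|^{1/2}$ on the left and invoking the spectral estimate $\|\,|\DA|^{1/2}\,\RA{\pm iy}\|\klg\zeta_{1/2}(y)$ from \eqref{ralf3}, the integrand in $|\DA|^{1/2}\,R_E$ is majorized by $\const\cdot y^{1/2}\,\zeta_{1/2}(y)/(E^{1/2}(1+y^2))$, which is integrable in $y$, giving $\|\,|\DA|^{1/2}\,R_E\|\klg C/E^{1/2}$. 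Combined with $|\DA|^{1/2}\,|\DA|^{-1/2}=\id$ this provides $\|\,|\DA|^{1/2}\,K_E\|\klg 1+C/E^{1/2}$, proving the uniform bound for $E$ above a fixed threshold $E_0$. The remaining finite range $E\in[1,E_0]$ can be handled by the resolvent identity $E\HT^{-1}=E\,\HT_{E_0}^{-1}+E(E_0-E)\,\HT^{-1}\,\HT_{E_0}^{-1}$, which reduces the problem to the already treated case plus a bounded remainder whose conjugation by $|\DA|^{\pm 1/2}$ is controlled in the same way.

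The main obstacle is making the displayed operator identity rigorous on all of $\HR_4$ rather than only on a dense subspace. This follows from Lemma~\ref{cor-T-Xi}(iii)---in particular from the fact that $\RA{z}$ maps $\dom(\HT^\nu)$ into itself---together with the operator-norm convergence of the integral representation of $|\DA|^{-1/2}$. A pleasant byproduct is that the identity automatically implies that $E\,\HT^{-1}$ preserves the domain of $|\DA|^{1/2}$, which is precisely the statement that the operator in the lemma is defined on all of $\HR_4$.
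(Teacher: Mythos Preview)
Your large-$E$ argument via the intertwining relations of Lemma~\ref{cor-T-Xi} is correct and gives a clean route to the bound once $E$ exceeds a threshold $E_0$ depending on $d_1,d_2$. (A minor quibble: expanding $\Xi_{1,0}(iy)\RA{iy}\Xi_{1,0}(-iy)\RA{-iy}$ around the identity produces three remainder terms, not two, but your majorant $y^{1/2}\zeta_{1/2}(y)/(E^{1/2}(1+y^2))$ covers all of them.) This is genuinely different from the paper's argument, which never invokes the Neumann series $\Xi_{1,0}$: instead it computes the commutator $[|\DA|^{-1/2},E(\Hf+E)^{-1}]$ directly from the same integral representation, obtaining
\[
[\RA{iy},E(\Hf+E)^{-1}]=\RA{iy}\{E(\Hf+E)^{-1}\}\{[\valpha\cdot\V{A},\Hf](\Hf+E)^{-1}\}\RA{iy},
\]
and then uses the elementary bound $\|(\Hf+E)^{-1/2}[\Hf,\valpha\cdot\V{A}]\|\klg(d_2^2+2d_1^2)^{1/2}$ (which is \eqref{eq-friedel}). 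Both factors in braces are bounded uniformly for \emph{all} $E\grg1$, so no threshold arises.

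The gap in your argument is precisely the passage to $E\in[1,E_0]$. Writing $E\HT^{-1}=E\HT_{E_0}^{-1}+E(E_0-E)\HT^{-1}\HT_{E_0}^{-1}$ and conjugating the second term by $|\DA|^{\pm1/2}$ does not reduce to the treated case: if you use the already-established identity $(\Hf+E_0)^{-1}|\DA|^{-1/2}=|\DA|^{-1/2}B_0$ with $\|B_0\|\klg E_0^{-1}(1+C/E_0^{1/2})$, you are left with $|\DA|^{1/2}E(\Hf+E)^{-1}|\DA|^{-1/2}$ again, now multiplied on the right by $(E_0-E)B_0$. This is a fixed-point equation, and it would close only if $(E_0-E)\|B_0\|<1$; but $(E_0-1)\cdot E_0^{-1}(1+C/E_0^{1/2})\approx 1+C/E_0^{1/2}>1$ for large $E_0$, so the implied Neumann series diverges. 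Equivalently, your Lemma~\ref{cor-T-Xi} hypothesis $\delta_1(E)/E^{1/2}<1$ reads $4(d_1^2+d_2^2/E)/E<1$, which fails for $E$ near $1$ unless $d_1,d_2$ are small---and Hypothesis~\ref{hyp-G} imposes no such smallness. The simplest repair is to abandon the Neumann series altogether and compute the commutator directly as the paper does; then the uniform bound on $[\valpha\cdot\V{A},\Hf](\Hf+E)^{-1}$ handles every $E\grg1$ at once.
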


\begin{proof}
We use the norm convergent integral representation
\cite[Page~286]{Kato}
\begin{align*}
|\DA|^{-1/2}\,=\,(\DA^2)^{-1/4}\,&=\,
\frac{1}{2^{1/2}\pi}\int_0^\infty\frac{1}{\DA^2+t}\,\frac{dt}{t^{1/4}}
\\
&=\,
\frac{1}{2^{1/2}\pi}
\int_0^\infty\frac{1}{2i}\Big(
\frac{1}{\DA-i\,t^{1/2}}\,-\,\frac{1}{\DA+i\,t^{1/2}}
\Big)\,\frac{dt}{t^{3/4}}
\end{align*}
to get, for $\vp,\psi\in\core$,
\begin{align*}
\SPb{|\DA|^{1/2}&\,\vp}{\big[\,|\DA|^{-1/2}\,,\,E\,(\Hf+E)^{-1}\,\big]\,\psi}
\\
&=\,\frac{1}{2^{3/2}\pi i}
\sum_{\vk=\pm1}\vk\int_0^\infty
\SPB{\vp}{\frac{|\DA|^{1/2}}{\DA-\vk\, i\,t^{1/2}}\,
\big\{E\,(\Hf+E)^{-1}\big\}\,\times
\\
&\qquad\qquad\qquad\times\,\big\{[\valpha\cdot\V{A}\,,\,\Hf]\,
(\Hf+E)^{-1}\big\}\,\frac{1}{\DA-\vk\,i\,t^{1/2}}\,\psi}\,\frac{dt}{t^{3/4}}\,.
\end{align*}
On account of \eqref{eq-friedel} below it is obvious 
that both operators in the curly
brackets $\{\cdots\}$ are bounded uniformly in $E\grg1$.
Taking also \eqref{ralf3} into account we readily infer that
the commutator of $|\DA|^{-1/2}$ and $E\,(\Hf+E)^{-1}$
maps $\HR_4$ into the domain of $|\DA|^{1/2}$ and that
$$
\sup_{E\grg1}\big\|\,
|\DA|^{1/2}\,\big[\,|\DA|^{-1/2}\,,\,E\,(\Hf+E)^{-1}\,\big]\,\big\|\,<\,\infty\,.
$$ 
Now the assertion is obvious.
\end{proof}

\begin{lemma}\label{le-friedel2}
Assume that $\vo$ and $\V{G}$ fulfill Hypothesis~\ref{hyp-G}.
Then the operator $(\Hf+E)^{-1/2}\,[\Hf\,,\,\SA]\,|\DA|^{1/2}$
is well-defined on $\core$, bounded, and its norm is
bounded uniformly in $E\grg1+(2d_1)^2$.
\end{lemma}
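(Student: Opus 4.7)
The plan is to establish the sesquilinear form bound
\[
\big|\SPb{\phi}{(\Hf + E)^{-1/2}\,[\Hf, \SA]\,|\DA|^{1/2}\,\psi}\big|
\,\klg\, C(d_1, d_2)\,\|\phi\|\,\|\psi\|\,,\qquad\phi,\psi\in\core\,,
\]
uniformly in $E\grg 1 + (2d_1)^2$; this defines the operator on $\core$ and yields the claimed boundedness. The key idea is to use the integral representation \eqref{sgn} of $\SA$ to reduce everything to resolvent quantities controlled by Lemma~\ref{cor-T-Xi} and \eqref{ralf3}. To avoid domain issues (notably that $|\DA|^{1/2}\psi$ need not lie in $\dom(\Hf)$), I would first work with the truncation $\SA_\tau := \int_{-\tau}^\tau \RA{iy}\,dy/\pi$, which is bounded and, by Lemma~\ref{cor-T-Xi}(iii), maps $\dom(\Hf)$ into itself. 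On the truncated level one verifies cleanly that
\[
\SPb{\phi}{(\Hf + E)^{-1/2}\,[\Hf, \SA_\tau]\,|\DA|^{1/2}\,\psi}
= -\int_{-\tau}^\tau \SPb{\RA{-iy}(\Hf+E)^{-1/2}\phi}{[\Hf, \valpha\cdot\V{A}]\,\RA{iy}\,|\DA|^{1/2}\,\psi}\,\frac{dy}{\pi}\,,
\]
using $[\Hf, \DA] = [\Hf, \valpha\cdot\V{A}] = \valpha\cdot(\ad(\vo\V{G}) - a(\vo\V{G}))$.

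Now I would bound the integrand uniformly in $E$. For one factor, the functional calculus for $\DA$ (it is self-adjoint with $|\DA|^{1/2}\RA{iy}$ and $\RA{iy}|\DA|^{1/2}$ agreeing on $\dom(|\DA|^{1/2})$) together with \eqref{ralf3} gives
$\|\RA{iy}\,|\DA|^{1/2}\,\psi\|\klg \zeta_{1/2}(y)\,\|\psi\|$.
For the other, the relative bounds \eqref{rb-a} and \eqref{rb-ad}, applied with the form factor $\vo\V{G}$ in place of $\V{G}$ (so that the relevant constants $d'_{-1}$ and $d'_0$ are bounded by $d_1$ and $d_2$ respectively, via \eqref{def-d3}), yield
$\|[\Hf, \valpha\cdot\V{A}]\,\eta\|\klg (2d_1 + d_2)\,\|\HT^{1/2}\,\eta\|$,
with $\HT=\Hf+E$. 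Setting $\eta = \RA{-iy}\HT^{-1/2}\phi$, the commutation identity \eqref{eva3b} with $\nu=1/2$, $L=0$ gives $\HT^{1/2}\,\RA{-iy}\,\HT^{-1/2} = \RA{-iy}\,\Upsilon_{1/2,0}(-iy)$, whose norm is at most $(1+y^2)^{-1/2}(1-\delta_{1/2}/E^{1/2})^{-1}$. By \eqref{verona} we have $\delta_{1/2}\klg 2d_1$, so the hypothesis $E\grg 1+(2d_1)^2\grg 1+\delta_{1/2}^2$ bounds $\delta_{1/2}/E^{1/2}$ strictly below $1$ uniformly in $E$; the $\Upsilon$-factor is therefore bounded by a constant depending only on $d_1$.

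Combining, the integrand in $y$ is dominated by $C(d_1,d_2)\,\|\phi\|\,\|\psi\|\,\zeta_{1/2}(y)/\sqrt{1+y^2}$, and integration produces $2\,C(d_1,d_2)\,K(1/2)\,\|\phi\|\,\|\psi\|$ with $K(1/2)<\infty$ by \eqref{def-K(kappa)}. Passing to $\tau\to\infty$ using strong convergence $\SA_\tau\to\SA$ (hence also of $\SA_\tau^*\to\SA^*$) on both sides, the same bound holds for $\SA$, giving the lemma. The main obstacle is the first step: rigorously exchanging the sign-function-as-principal-value with the commutator when paired with $|\DA|^{1/2}\psi$, since $|\DA|^{1/2}\psi$ has no a-priori regularity with respect to $\Hf$; the truncation $\SA_\tau$ is exactly what makes the commutator manipulations legal, so that the uniform bounds established afterwards can be transferred to the limit.
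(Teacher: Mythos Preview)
Your proof is correct and follows essentially the same route as the paper: represent $\SA$ via the principal-value integral \eqref{sgn}, pass the commutator with $\Hf$ onto the resolvent to produce $\RA{iy}\,[\Hf,\valpha\cdot\V{A}]\,\RA{iy}$, and then use the intertwining relations of Lemma~\ref{cor-T-Xi} together with \eqref{ralf3} and \eqref{verona} to get a $y$-integrable bound uniform in $E\grg1+(2d_1)^2$. The only cosmetic differences are that the paper pulls $(\Hf+E)^{-1/2}$ through $\RA{iy}$ using $\Xi_{1/2,0}$ from \eqref{eva3} and then invokes the ready-made bound \eqref{eq-friedel} on $(\Hf+E)^{-1/2}[\Hf,\valpha\cdot\V{A}]$, whereas you work on the adjoint side with $\Upsilon_{1/2,0}$ from \eqref{eva3b} and bound $[\Hf,\valpha\cdot\V{A}]$ against $\HT^{1/2}$; your explicit truncation $\SA_\tau$ is a careful way to justify the commutator manipulation that the paper leaves implicit.
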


\begin{proof}
It is well-known that
$$
\valpha\cdot\V{E}\,:=\,
[\Hf\,,\,\valpha\cdot\V{A}]\,=\,\sum_{\vs=1,2,3,4}\int_{\RR^3}^\oplus
\valpha\cdot\big(\ad(\omega\,e^{i\V{k}\cdot\V{x}}\,\V{g})
-a(\omega\,e^{i\V{k}\cdot\V{x}}\,\V{g})\big)\,d^3\V{x}\,,
$$
where $\V{E}$ is the electric field, and that,
consequently,
\begin{equation}\label{eq-friedel}
\big\|\,(\Hf+E)^{-1/2}\,[\Hf\,,\,\valpha\cdot\V{A}]\,\big\|\,\klg\,
(d_2^2+2d_1^2)^{1/2}\,.
\end{equation}
Employing Formula \eqref{sgn}, the intertwining
relation \eqref{eva3}, and \eqref{ralf3}
we thus get, for all
$\vp,\psi\in\core$,
\begin{align*}
\big|&\SPb{\vp}{(\Hf+E)^{-1/2}\,[\SA\,,\,\Hf]\,|\DA|^{1/2}\,\psi}\big|
\\
&\klg\,
\int_\RR
\Big|\SPB{\vp}{\Xi_{1/2,0}(iy)\RA{iy}\,(\Hf+E)^{-1/2}\,
[\Hf\,,\,\valpha\cdot\V{A}]\,\RA{iy}\,|\DA|^{1/2}\,\psi}\Big|
\,\frac{dy}{\pi}
\\
&\klg\,
C\,\big\{\sup_{y\in\RR}\|\Xi_{1/2,0}(iy)\|\big\}
\,(d_2^2+2d_1^2)^{1/2}\,\|\vp\|\,\|\psi\|\,.
\end{align*}
According to \eqref{verona} and
\eqref{bd-Xi} the supremum in the last line is less than or
equal to $(1-2d_1/E^{1/2})^{-1}$, which is uniformly bounded,
for $E\grg1+(2d_1)^2$.
\end{proof}

\begin{lemma}\label{le-clara4}
Let $\gamma\in(0,2/\pi)$, $m>0$, $f\in \HP$ 
such that $\omega^{-1/2}\,f,\omega\,f\in\HP$,
and assume that $\phi_m$ is an eigenvector of $\PFm{\gamma}$.
Then it follows that 
$a(f)\,\phi_m\in\form(\PFm{\gamma})$.
\end{lemma}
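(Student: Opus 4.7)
The plan is to approximate $a(f)\phi_m$ by the regularized vectors $a_E(f)\phi_m = a(f)\,E(\Hf+E)^{-1}\phi_m$, which by Lemma~\ref{le-clara2} already lie in $\form(\PFm{\gamma})$, and then pass to the limit $E\to\infty$. Since $\phi_m \in \dom(\PFm{\gamma}) \subset \form(\Hf)$ by Lemma~\ref{le-sres-conv-PF} and Corollary~\ref{cor-pia}, the vector $\Hf^{1/2}\phi_m$ is well-defined, and the spectral calculus yields $E(\Hf+E)^{-1}\Hf^{1/2}\phi_m \to \Hf^{1/2}\phi_m$ in $\HR_4$. Since annihilation operators obey $\|a(f)\eta\| \klg \|f/\omega^{1/2}\|\,\|\Hf^{1/2}\eta\|$, it follows that $a_E(f)\phi_m \to a(f)\phi_m$ strongly in $\HR_4$ as $E\to\infty$.

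With strong convergence in hand, it suffices, by closedness (equivalently, lower semi-continuity on $\HR_4$) of the semi-bounded quadratic form associated with $\PFm{\gamma}$, to produce a uniform bound
\[
\sup_{E\grg1}\SPn{a_E(f)\phi_m}{\PFm{\gamma}\,a_E(f)\phi_m} \,<\,\infty\,.
\]
Using the eigenvalue equation $\PFm{\gamma}\phi_m = E_m\phi_m$ and the fact that $\tgV\otimes\id$ commutes with $a_E(f)$, we decompose the quantity in question as
\[
E_m\,\|a_E(f)\phi_m\|^2 \,+\, \Re\SPb{a_E(f)\phi_m}{[|\DAm|,\,a_E(f)]\,\phi_m} \,+\, \Re\SPb{a_E(f)\phi_m}{[\Hf,\,a_E(f)]\,\phi_m}\,.
\]

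The bosonic commutation relation $[\Hf, a(f)] = -a(\omega f)$ gives $[\Hf, a_E(f)]\phi_m = -a(\omega f)\,E(\Hf+E)^{-1}\phi_m$, whose norm is dominated by $\|\omega^{1/2} f\|\,\|\Hf^{1/2}\phi_m\|$ uniformly in $E$ (noting that $\omega^{1/2}f \in \HP$ follows from $f,\omega f\in\HP$). The main obstacle is the non-local commutator $[|\DAm|,\,a_E(f)]\phi_m$. The natural splitting $[|\DAm|, a_E(f)] = [|\DAm|, a(f)]\,E(\Hf+E)^{-1} + a(f)\,[|\DAm|, E(\Hf+E)^{-1}]$ will be handled as follows: the second piece is rewritten via the resolvent identity as $-a(f)\,E(\Hf+E)^{-1}[|\DAm|,\Hf]\,(\Hf+E)^{-1}$ and bounded uniformly in $E$ by combining Lemma~\ref{le-friedel2} (which controls $(\Hf+E)^{-1/2}[\Hf,\SAm]\,|\DAm|^{1/2}$) with Lemma~\ref{le-friedel} (which controls $|\DAm|^{1/2}E(\Hf+E)^{-1}|\DAm|^{-1/2}$), exploiting $\phi_m \in \dom(|\DAm|^{1/2})$. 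For the first piece, writing $|\DAm| = \DAm\,\SAm$ reduces $[|\DAm|, a(f)]$ to commutators of $\SAm$ and $\valpha\cdot\V{A}_m$ with $a(f)$; the latter is linear and yields a bounded multiplication by $\valpha\cdot\SPn{f}{\id_{\cA_m}\,e^{-i\V{k}\cdot\V{x}}\,\V{g}}$ exactly as in~\eqref{fritz1}, while the former is treated by the sign-function commutator estimate underlying Lemma~\ref{le-a(f)-comm} (applied with $\wt{\V{A}}_m$ replaced by $\V{A}_m$).

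Once the uniform bound above is established, the bounded family $\{a_E(f)\phi_m\}_{E\grg1}$ in the Hilbert space $\form(\PFm{\gamma})$ (equipped with the form norm) admits a weakly convergent subsequence; its weak limit coincides with the $\HR_4$-limit $a(f)\phi_m$, and hence $a(f)\phi_m \in \form(\PFm{\gamma})$, as claimed. The hardest technical step is the uniform-in-$E$ control of $[|\DAm|,\,a_E(f)]\phi_m$: the absence of a Leibniz rule for the non-local operator $|\DAm|$ forces repeated use of the contour representation~\eqref{sgn} and of the bounded-commutator machinery furnished by Lemmas~\ref{le-friedel} and~\ref{le-friedel2}.
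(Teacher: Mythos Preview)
Your proposal follows essentially the same route as the paper's proof: regularize by $a_E(f)$, show strong convergence $a_E(f)\phi_m\to a(f)\phi_m$, establish a uniform form bound via the commutator $[\PFm{\gamma},a_E(f)]$, and conclude. Your use of lower semicontinuity of the closed form (in place of the paper's bounded-functional argument) and your direct bound on $\|[\PFm{\gamma},a_E(f)]\phi_m\|$ (in place of the paper's $\ve$-absorption) are minor stylistic variations; both are valid.

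There is, however, one point where your sketch is incomplete. For the ``first piece'' $[|\DAm|,a(f)]\,E(\Hf+E)^{-1}\phi_m$ you write $|\DAm|=\DAm\SAm$ and claim the problem reduces to the bounded commutators $[\SAm,a(f)]$ and $[\valpha\cdot\V{A}_m,a(f)]$. But the Leibniz expansion leaves a residual unbounded factor $\DAm$: one term is $\DAm\,[\SAm,a(f)]\,E(\Hf+E)^{-1}\phi_m$ (or, with the other ordering, $[\SAm,a(f)]\,\DAm\,E(\Hf+E)^{-1}\phi_m$). The analogue of \eqref{fritz3} only controls $|\DAm|^\kappa[\SAm,a(f)]$ for $\kappa<1$, not $\kappa=1$. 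To close the argument you must split off $|\DAm|^{1/2}$ and invoke Lemma~\ref{le-friedel} once more, exactly as you did for the second piece: writing
\[
[\SAm,a(f)]\,\DAm\,E(\Hf+E)^{-1}\phi_m
=\big\{[\SAm,a(f)]\,|\DAm|^{1/2}\big\}\,\SAm\,\big\{|\DAm|^{1/2}E(\Hf+E)^{-1}|\DAm|^{-1/2}\big\}\,|\DAm|^{1/2}\phi_m,
\]
each brace is bounded uniformly in $E$ and $|\DAm|^{1/2}\phi_m\in\HR_4$. This is precisely how the paper handles its term $Y_1$. Once this is added, your argument is complete.
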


\begin{proof}
This proof proceeds along the lines of an argument
in \cite[Appendix~B]{GLL2001}.

To begin with we observe that $\phi_m\in\dom(a(f))$ since
$\dom(\PFm{\gamma})\subset\dom(\Hf^{1/2})\subset\dom(a(f))$ 
by Corollary~\ref{cor-pia}.
Moreover, using that $\omega\,f\in\HP$, it is easily verified that
$a_E(f)\,\phi_m\to a(f)\,\phi_m$, as $E$ tends to infinity.
To prove the lemma is thus
suffices to show that there is some $E$-independent
constant, $C\in(0,\infty)$, such that
\begin{equation}\label{adam41}
\SPb{a_E(f)\,\phi_m}{\PFm{\gamma}\,a_E(f)\,\phi_m}\,\klg\,C\,,
\end{equation}
for all sufficiently large values of $E>0$.
In fact, Lemma~\ref{le-clara2} ensures that the left hand
side of \eqref{adam41} is well-defined and \eqref{adam41} itself
implies that the functional 
$$
u(\eta)\,:=\,\SPn{a(f)\,\phi_m}{(\PFm{\gamma})^{1/2}\,\eta}
\,=\,\lim_{E\to\infty}\SPn{(\PFm{\gamma})^{1/2}\,a_E(f)\,\phi_m}{\eta}\,,
$$
for all $\eta\in\dom((\PFm{\gamma})^{1/2})$,
is bounded with $\|u\|\klg C^{1/2}$,
whence 
$a(f)\,\phi_m$ belongs to
$\dom((\PFm{\gamma})^{1/2\,*})=\form(\PFm{0})$.

In order to prove \eqref{adam41} we pick some $\vp\in\core$
and write
\begin{align}
[\PFm{\gamma}\,,\,a_E(f)]\,\vp
\,&=\,\label{adam42}
[\SA\,,\,a(f)]\,\DA\,E\,(\Hf+E)^{-1}\,\vp
\\
&\;\;\;\;-\,\valpha\cdot\SPn{f}{\id_{\cA_m}\,e^{-i\V{k}\cdot\V{x}}\,\V{g}}\,
E\,(\Hf+E)^{-1}\,\vp\label{adam42b}
\\
&\;\;\;\;-\,\label{adam43}
a(\omega\,f)\,E\,(\Hf+E)^{-1}\,\vp
\\
&\;\;\;\;+\,\label{adam45}
a(f)\,[\PFm{\gamma}\,,\,E\,(\Hf+E)^{-1}]\,\vp
\\
&=:\,\nonumber
Y_1\,\vp\,+\,Y_2\,\vp\,+\,Y_3\,\vp\,+\,Y_4\,\vp
\,.
\end{align}
First, we discuss the terms in \eqref{adam42}--\eqref{adam43}.
We recall that both operators 
$[\SA\,,\,a(f)]\,|\DA|^{1/2}$
and $(\Hf+1)^{-1/2}\,a(\omega\,f)$ are bounded.
(Here we use that $\omega^{1/2}\,f\in L^2$.)
From these remarks we readily infer that
\begin{align*}
\big|\SPb{&a_E(f)\,\vp}{(Y_1+Y_2+Y_3)\,\vp}\big|
\\
&\klg\,C\,
\|a_E(f)\,\vp\|\,\big\|\,|\DA|^{1/2}\,E\,(\Hf+E)^{-1}\,|\DA|^{-1/2}\,\big\|
\,\big\|\,|\DA|^{1/2}\,\vp\,\big\|
\\
&\;\;\;+\,
C\,\|a_E(f)\,\vp\|\,\big\|\,E\,(\Hf+E)^{-1}\,\big\|\,\|\vp\|
\\
&\;\;\;+\,
C\,\big\|\,(\Hf+1)^{1/2}\,a_E(f)\,\vp\,\big\|\,
\big\|\,E\,(\Hf+E)^{-1}\,\big\|\,\|\vp\|\,,
\end{align*}
where the constant is independent of $E$.
Thanks to Lemma~\ref{le-friedel} we know that
the second norm in the second line is bounded uniformly
in $E\grg0$. Consequently, we find, for every $\ve\in(0,1]$,
some $C_\ve\in(0,\infty)$ such that, for all
$E\grg1$ and $\vp\in\core$,
\begin{align}
\big|\SPb{a_E(f)\,\vp}{&(Y_1+Y_2+Y_3)\,\vp}\big|\nonumber
\\
&\klg\,\ve\,\SPb{a_E(f)\,\vp}{(\Hf+1)\,a_E(f)\,\vp}+
C_\ve\,\SPb{\vp}{|\DA|\,\vp}\,.\label{walter1}
\end{align}
In order to treat the term $Y_4\,\vp$ in \eqref{adam45} we write
\begin{align*}
Y_4\,\vp\,&=\,
a(f)\,\big[\,|\DA|\,,\,E\,(\Hf+E)^{-1}\,\big]\,\vp
\\
&=\,
\big\{a(f)\,(\Hf+E)^{-1/2}\big\}\big\{(\Hf+E)^{-1/2}\,
\big[\,\Hf\,,\,|\DA|\,\big]\,
|\DA|^{-1/2}\big\}\,\times
\\
&\qquad\times\,
\big\{|\DA|^{1/2}\,E\,(\Hf+E)^{-1}\,|\DA|^{-1/2}\big\}\,|\DA|^{1/2}\,\vp\,.
\end{align*}
Here the first and the third curly brackets $\{\cdots\}$
are bounded operators on $\HR_4$ whose norms are uniformly
bounded in $E\grg1$ due to a well-known estimate and
Lemma~\ref{le-friedel}, respectively.
We write the operator in the second curly bracket as
\begin{align}
(\Hf&+E)^{-1/2}\,
\big[\,\Hf\,,\,|\DA|\,\big]\,
|\DA|^{-1/2}\nonumber
\\
&=\,
(\Hf+E)^{-1/2}\,\big[\Hf\,,\,\SA\big]\,|\DA|^{1/2}\,\SA\label{hansi1}
\\
&\qquad+\;\label{hansi2}
\big\{(\Hf+E)^{-1/2}\,\SA\,(\Hf+E)^{1/2}\big\}\,\times
\\
&\qquad\qquad\quad\times\,\label{hansi3}
(\Hf+E)^{-1/2}\,[\Hf\,,\,\valpha\cdot\V{A}]\,|\DA|^{-1/2}\,.
\end{align}
Here the operators in
\eqref{hansi1} and \eqref{hansi2}
are bounded uniformly in $E\grg1+(2d_1)^2$
as we know from Lemma~\ref{le-friedel2} and
\eqref{SA-eF-Hnu}, respectively.
The operator
in \eqref{hansi3} is bounded uniformly in $E\grg1$
according to \eqref{eq-friedel}.
Altogether it follows that $Y_4$ is a bounded
operator with domain $\core$ whose norm is uniformly
bounded, for $E\grg1+(2d_1)^2$.
Combining this result with 
\eqref{walter1} we find, for every $\ve>0$, two
constants $C',C_\ve'\in(0,\infty)$
such that, for all $E\grg1+(2d_1)^2$ and $\vp\in\core$,
\begin{align*}
\big|\SPb{a_E(f)\,&\vp}{[\PFm{\gamma}\,,\,a_E(f)]\,\vp}\big|
\\
&\klg\,
\ve\,\SPb{a_E(f)\,\vp}{(\PFm{\gamma}+C')\,a_E(f)\,\vp}
\,+\,C_\ve'\,
\SPb{\vp}{(\PFm{\gamma}+C')\,\vp}\,.
\end{align*}
Here we also applied Corollary~\ref{cor-pia}.

Now, we conclude as follows.
Since $a_E(f)$ and $\ad_E(f)$ are bounded operators
on $\form(\PFm{\gamma})$ and $\core$ is a form core
for $\PFm{\gamma}$
it follows from the previous estimate that 
\begin{align*}
\SPb{a_E&(f)\,\phi_m}{\PFm{\gamma}\,a_E(f)\,\phi_m}
\\
&=\,
E_m\,\SPb{a_E(f)\,\phi_m}{a_E(f)\,\phi_m}
\,+\,
\SPb{a_E(f)\,\phi_m}{[\PFm{\gamma}\,,\,a_E(f)]\,\phi_m}
\\
&\klg\,
\ve\,\SPb{a_E(f)\,\phi_m}{\PFm{\gamma}\,a_E(f)\,\phi_m}
\\
&\quad
+\,
(E_m+\ve\,C')\,\|\omega^{-1/2}\,f\|^2\,\SPn{\phi_m}{\Hf\,\phi_m}
\,+\,C_\ve'\,\SPn{\phi_m}{(\PFm{\gamma}+C')\,\phi_m}\,.
\end{align*}
Choosing some
$\ve<1$ and applying Corollary~\ref{cor-pia} once more 
we arrive at the desired bound \eqref{adam41}.
\end{proof}


\section{Operators acting in Fock space}\label{app-Fock}

\noindent
In this appendix we recall some standard definitions.
Let $(\cM,\mathfrak{A},\mu)$ be some measure space.
Then the bosonic Fock space modeled over
the one particle Hilbert space $L^2(\mu)$ is given as
a countable direct sum
$$
\Fock[L^2(\mu)]\,:=\,
\bigoplus_{n=0}^\infty\Fock^{(n)}[L^2(\mu)]
\,\ni\psi\,=\,(\psi^{(0)},\psi^{(1)},\psi^{(2)},\ldots\;)
\,,
$$
where $\Fock^{(0)}[L^2(\mu)]:=\CC$ and
$\Fock^{(n)}[L^2(\mu)]$
is the subspace of all  
$\otimes_1^n\mu$-square integrable functions
$\psi^{(n)}:\cM^n\to\CC$ such that
$$
\psi^{(n)}(k_{\pi(1)},\dots,k_{\pi(n)})\,=\,
\psi^{(n)}(k_1,\dots,k_n)\,,
$$ 
$\otimes_1^n\mu$-almost everywhere, for every
permutation $\pi:\{1,\dots,n\}\circlearrowleft$.
The vector $\Omega=(1,0,0,\dots)\in\Fock[L^2(\mu)]$
is called the vacuum in $\Fock[L^2(\mu)]$.
The second quantization of the multiplication
operator with a measurable function $q:\cM\to\RR$
is the self-adjoint operator defined by
\begin{align*}
\dom(d\Gamma(q))\,=\,
\Big\{\,&(\psi^{(n)})_{n=0}^\infty\in\Fock[L^2(\mu)]\::
\\
&
\sum_{n=1}^\infty\int\Big|\sum_{j=1}^nq(k_j)\,\psi^{(n)}(k_1,\dots,k_n)
\Big|^2\,d\mu(k_1)\dots d\mu(k_n)\,<\,\infty
\,\Big\}\,,
\end{align*}
and
$(d\Gamma(q)\,\psi)^{(0)}=0$ and 
$$
(d\Gamma(q)\,\psi)^{(n)}(k_1,\dots,k_n)\,=
\sum_{j=1}^nq(k_j)\,\psi^{(n)}(k_1,\dots,k_n)\,,\quad n\in\NN\,,
\;\psi\in\dom(d\Gamma(q)).
$$
By symmetry and Fubini's theorem we find, for non-negative $q$,
\begin{equation}\label{Hf=dGamma}
\SPb{d\Gamma(q)^{1/2}\,\phi}{d\Gamma(q)^{1/2}\,\psi}\,=\,\int q(k)\,
\SPn{a(k)\,\phi}{a(k)\,\psi}\,d\mu(k)\,,
\end{equation}
for all $\phi,\psi\in\dom(d\Gamma(q)^{1/2})$
where we use the notation 
\begin{equation}\label{def-a(k)}
(a(k)\,\psi)^{(n)}(k_1,\dots,k_n)\,=\,
(n+1)^{1/2}\,\psi^{(n+1)}(k,k_1,\dots,k_n)\,,\quad n\in\NN_0\,,
\end{equation}
almost everywhere,
and $a(k)\,\Omega=0$.
We further recall that the creation
and the annihilation operators
of a boson $f\in L^2(\mu)$ are given
by
\begin{align*}
(\ad(f)\,\psi)^{(n)}(k_1,\ldots,k_n)\,&=\,
n^{-\frac{1}{2}}\sum_{j=1}^nf(k_j)\,\psi^{(n-1)}(\ldots,k_{j-1},k_{j+1},\ldots),
\,\,\,\quad n\in\NN,
\\
(a(f)\,\psi)^{(n)}(k_1,\ldots,k_n)\,&=\,
(n+1)^{\frac{1}{2}}\int\ol{f}(k)\,\psi^{(n+1)}(k,k_1,\ldots,k_n)\,d\mu(k),
\, n\in\NN_0,
\end{align*}
and $(\ad(f)\,\psi)^{(0)}=0$, $a(f)\,\Omega=0$.
We define $\ad(f)$ and $a(f)$ on their 
maximal domains.
The following canonical commutation relations hold true
on $\dom(d\Gamma(1)^2)$,
$$
[a(f)\,,\,a(g)]\,=\,[\ad(f)\,,\,\ad(g)]\,=\,0\,,\qquad
[a(f)\,,\,\ad(g)]\,=\,\SPn{f}{g}\,\id\,,
$$
where $f,g\in\HP$. Moreover, we have
$\SPn{a(f)\,\phi}{\psi}=\SPn{\phi}{\ad(f)\,\psi}$, 
and, by definition, $a(f)\,\phi=\int \ol{f}(k)\,a(k)\,\phi\,d\mu(k)$,
for $\phi,\psi\in\dom(d\Gamma(1))$.

\bigskip

\noindent
{\bf Acknowledgement.}
This work has been partially supported by the DFG (SFB/TR12).


\bibliographystyle{plain} 


\begin{thebibliography}{10}



\bibitem{BCFS2003}
Volker Bach, Thomas Chen, J{\"u}rg Fr{\"o}hlich, and Israel~Michael Sigal.
\newblock Smooth {F}eshbach map and operator-theoretic renormalization group
  methods.
\newblock {\em J. Funct. Anal.}, 203(1):44--92, 2003.

\bibitem{BFP2006}
Volker Bach, J{\"u}rg Fr{\"o}hlich, and Alessandro Pizzo.
\newblock Infrared-finite algorithms in {QED}: the groundstate of an atom
  interacting with the quantized radiation field.
\newblock {\em Comm. Math. Phys.}, 264(1):145--165, 2006.

\bibitem{BFS1998b}
Volker Bach, J{\"u}rg Fr{\"o}hlich, and Israel~Michael Sigal.
\newblock Quantum electrodynamics of confined nonrelativistic particles.
\newblock {\em Adv. Math.}, 137(2):299--395, 1998.

\bibitem{BFS1998a}
Volker Bach, J{\"u}rg Fr{\"o}hlich, and Israel~Michael Sigal.
\newblock Renormalization group analysis of spectral problems in quantum field
  theory.
\newblock {\em Adv. Math.}, 137(2):205--298, 1998.

\bibitem{BFS1999}
Volker Bach, J{\"u}rg Fr{\"o}hlich, and Israel~Michael Sigal.
\newblock Spectral analysis for systems of atoms and molecules coupled to the
  quantized radiation field.
\newblock {\em Comm. Math. Phys.}, 207(2):249--290, 1999.

\bibitem{BachKoenenberg2006}
Volker Bach and Martin K{\"o}nenberg.
\newblock Construction of the ground state in nonrelativistic {QED} by
  continuous flows.
\newblock {\em J. Differential Equations}, 231(2):693--713, 2006.

\bibitem{BCV2003}
Jean-Marie Barbaroux, Thomas Chen, and Semjon Vugalter.
\newblock Binding conditions for atomic {$N$}-electron systems in
  non-relativistic {QED}.
\newblock {\em Ann. Henri Poincar\'e}, 4(6):1101--1136, 2003.

\bibitem{BDG2004}
Jean-Marie Barbaroux, Mouez Dimassi, and Jean-Claude Guillot.
\newblock Quantum electrodynamics of relativistic bound states
with cutoffs.
\newblock {\em J. Hyperbolic Differ. Equ.}, 1(2):271--314, 2004.

\bibitem{BeGe1987}
Anne Berthier and Vladimir Georgescu.
\newblock On the point spectrum of {D}irac operators.
\newblock {\em J. Funct. Anal.}, 71(2):309--338, 1987.

\bibitem{FGS2008}
J\"{u}rg Fr\"{o}hlich, Marcel Griesemer, and Israel~Michael Sigal.
\newblock On spectral renormalization group. 
\newblock {\em Rev. Math. Phys}, 21:511--548, 2009.

\bibitem{FGS2001}
J.~Fr{\"o}hlich, M.~Griesemer, and B.~Schlein.
\newblock Asymptotic electromagnetic fields in models of quantum-mechanical
  matter interacting with the quantized radiation field.
\newblock {\em Adv. Math.}, 164(2):349--398, 2001.

\bibitem{GLL2001}
Marcel Griesemer, Elliott~H. Lieb, and Michael Loss.
\newblock Ground states in non-relativistic quantum electrodynamics.
\newblock {\em Invent. Math.}, 145(3):557--595, 2001.

\bibitem{Kato}
Tosio Kato.
\newblock {\em Perturbation theory for linear operators}.
\newblock Classics in Mathematics. Springer-Verlag, Berlin, 1995.
\newblock Reprint of the 1980 edition.

\bibitem{KMS2009b}
Martin K\"onenberg, Oliver Matte, and Edgardo Stockmeyer.
\newblock Existence of ground states of hydrogen-like atoms in relativistic
  quantum electrodynamics II: The no-pair operator. {\em In preparation}.

\bibitem{LiebLoss2002b}
Elliott~H. Lieb and Michael Loss.
\newblock A bound on binding energies and mass renormalization in models of
  quantum electrodynamics.
\newblock {\em J. Statist. Phys.}, 108(5-6):1057--1069, 2002.

\bibitem{LiebLoss2002}
Elliott~H. Lieb and Michael Loss.
\newblock Stability of a model of relativistic quantum electrodynamics.
\newblock {\em Comm. Math. Phys.}, 228(3):561--588, 2002.

\bibitem{MatteStockmeyer2008b}
Oliver Matte and Edgardo Stockmeyer.
\newblock On the eigenfunctions of no-pair operators in classical magnetic
  fields. {\em Integr. equ. oper. theory}, 65(2):255--283, 2009.

\bibitem{MatteStockmeyer2009a}
Oliver Matte and Edgardo Stockmeyer.
\newblock Exponential localization for a hydrogen-like atom in relativistic
  quantum electrodynamics. {\em Comm. Math. Phys.}, Online First,
doi:10.1007/s00220-009-0946-6, 2009.

\bibitem{MiyaoSpohn2008}
Tadahiro Miyao and Herbert Spohn.
\newblock Spectral analysis of the semi-relativistic {P}auli-{F}ierz
  {H}amiltonian. {\em J.~Funct. Anal.}, 256(7):2123--2156, 2009.


\bibitem{Nikolskii1958}
S.~M. Nikol{\cprime}ski{\u\i}.
\newblock An imbedding theorem for functions with partial derivatives
  considered in different metrics.
\newblock {\em Izv. Akad. Nauk SSSR Ser. Mat.}, 22:321--336, 1958.
(Russian.) English translation in:
{\em Amer. Math. Soc. Transl. (2)}, 90:27--43, 1970.

\bibitem{Nikolskii}
S.~M. {N}ikol{\cprime}ski{\u\i}.
\newblock {\em {A}pproximation of functions of several variables and imbedding
  theorems}.
\newblock Springer-Verlag, New York, 1975.
\newblock {T}ranslated from the {R}ussian by {J}ohn {M}. {D}anskin, {J}r.,
  {D}ie {G}rundlehren der {M}athematischen {W}issenschaften, {B}and 205.

\bibitem{ReedSimonII}
Michael Reed and Barry Simon.
\newblock {\em Methods of modern mathematical physics. {II}. {F}ourier
  analysis, self-adjointness}.
\newblock Academic Press [Harcourt Brace Jovanovich Publishers], New York,
  1975.

\bibitem{ReedSimonIV}
Michael Reed and Barry Simon.
\newblock {\em Methods of modern mathematical physics. {IV}. {A}nalysis of
  operators}.
\newblock Academic Press [Harcourt Brace Jovanovich Publishers], New York,
  1978.

\bibitem{ReedSimonI}
Michael Reed and Barry Simon.
\newblock {\em Methods of modern mathematical physics. {I}. 
Functional analysis}.
\newblock Academic Press Inc. [Harcourt Brace Jovanovich Publishers], New York,
  second edition, 1980.


\bibitem{Sasaki2006}
Itaru Sasaki.
\newblock Ground state of a model in relativistic quantum electrodynamics
with a fixed total momentum.
\newblock {\em Preprint}, arXiv:math-ph/0606029v4, 2006.


\bibitem{Stockmeyer2009}
Edgardo Stockmeyer.
\newblock On the non-relativistic limit of a model in quantum electrodynamics.
{\em Preprint}, arXiv:0905.1006v1, 2009.



\end{thebibliography}

\def\cprime{$'$} \def\cprime{$'$} \def\cprime{$'$} \def\cprime{$'$}
  \def\cprime{$'$}

\end{document}